\begin{document}

\title{Composition in Differential Privacy\\ for General Granularity Notions\\ 
{\large\textit{Long version}}}

\makeatletter
\newcommand{\linebreakand}{%
  \end{@IEEEauthorhalign}
  \hfill\mbox{}\par
  \mbox{}\hfill\begin{@IEEEauthorhalign}
}
\makeatother

\author{
    \IEEEauthorblockN{Patricia Guerra-Balboa*}
    \IEEEauthorblockA{\textit{Karlsruhe Institute of Technology, Germany}\\
    patricia.balboa@kit.edu}
    
\and
    \IEEEauthorblockN{Àlex Miranda-Pascual*}
    \IEEEauthorblockA{\textit{Karlsruhe Institute of Technology, Germany}\\
    \textit{Universitat Politècnica de Catalunya, Spain}\\
    alex.pascual@kit.edu}
    
\linebreakand
    \IEEEauthorblockN{Javier Parra-Arnau}
    \IEEEauthorblockA{\textit{Universitat Politècnica de Catalunya, Spain}\\
    javier.parra@upc.edu}
    
\and
    \IEEEauthorblockN{Thorsten Strufe}
    \IEEEauthorblockA{\textit{Karlsruhe Institute of Technology, Germany}\\
    strufe@kit.edu}
}

\maketitle

\thispagestyle{plain}
\pagestyle{plain}

\nnfootnote{This is the long version of the paper accepted at the 37th IEEE Computer Security Foundations Symposium (2024). 

2020 \textit{Mathematics Subject Classification.} 68P27.

*These authors contributed equally.}

\begin{abstract} 
    The composition theorems of differential privacy (DP) allow data curators to combine different algorithms to obtain a new algorithm that continues to satisfy DP. 
    However, new granularity notions (i.e., neighborhood definitions), data domains, and composition settings have appeared in the literature that the classical composition theorems do not cover. For instance, the original parallel composition theorem does not translate well to general granularity notions. This complicates the opportunity of composing DP mechanisms in new settings and obtaining accurate estimates of the incurred privacy loss after composition. 
    
    To overcome these limitations, we study the composability of DP 
    in a general framework and for any kind of data domain or neighborhood definition. We give a general composition theorem in both independent and adaptive versions and we provide analogous composition results for approximate, zero-concentrated, and Gaussian DP\@. Besides, we study the hypothesis needed to obtain the best composition bounds.
    Our theorems cover both parallel and sequential composition settings. Importantly, they also cover every setting in between, allowing us to compute the final privacy loss of a composition with greatly improved accuracy.
\end{abstract}

\section{Introduction}\label{sec:intro}
Differential privacy ($\varepsilon$-DP)~\cite{dwork2006Differential} is a well-known privacy notion in the field of data protection. 
One advantage of DP over other privacy notions, such as, for instance, syntactic notions~\cite{soria-comas2016Big}, is that DP possesses the key property of \textit{composability}: It is possible to form a new DP mechanism by composing a finite number of given DP mechanisms. The DP composition theorems serve as a reliable measure for any privacy loss suffered in the newly composed DP mechanism. For these reasons, the advantages of DP composition are recognized throughout the privacy community. For example, composability is key for the construction of most DP algorithms; further, the privacy protection of adaptive updates (e.g., in a streaming scenario or model learning) could not be computed without composition. 

Currently, DP composition is represented by two results: \textit{sequential composition}~\cite{dwork2014algorithmic} and \textit{parallel composition}~\cite{mcsherry2009privacy}. 
Parallel composition is applied when all combined mechanisms access mutually disjoint databases, the maximum loss before combination determines the total privacy loss after composition. 
Sequential composition covers any case when arbitrary DP mechanisms with access to the entire data are combined. 
The total privacy loss in sequential composition is computed as the sum of the losses of each composed mechanism.

DP and the sequential and parallel composition theorems were originally defined for tabular databases in the \textit{unbounded}~\cite{kifer2011No} scenario. Nowadays, however, the literature works both with different \textit{database domains} (i.e., classes of the input databases of a privacy mechanism) and with different \textit{neighborhood definitions} (also called \textit{granularity notions}~\cite{dwork2014algorithmic}), such as \textit{bounded DP}~\cite{kifer2011No} or \textit{edge-DP}~\cite{hay2009Accurate}. 
Consequently, the mechanisms we compose can be defined for different domains and granularities. There also can be alternatives to accessing either the whole database or disjoint parts of it. Therefore, we need new composition rules for more general settings.

However, the existing composition theorems may not extend directly to these general settings. 
For instance, \citeauthor{li2016Differential}~\cite{li2016Differential} show that the proof of the parallel composition theorem~\cite{mcsherry2009privacy} does not hold if we change the original granularity to bounded DP.
Since composition for new domains and new granularity notions may be non-trivial or even impossible, curators need to understand how composition results work for each case and when they yield no significant results. Otherwise, curators risk misapplying DP composition, for example, by using parallel composition in a bounded scenario.

To provide a context where all granularities can be composed and where the final privacy loss can be systematically interpreted and compared with the initial ones, we set up a general mathematical framework based on the notion of $d$-privacy introduced by \citeauthor{chatzikokolakis2013Broadening}~\cite{chatzikokolakis2013Broadening}. 
Using this framework we present composition theorems~(\ref{th:ICTheoremVariableDomain} and~\ref{th:ACTheoremVariableDomain}) for when a mechanism is applied independently of the others (the \textit{independent} scenario) or using the output of a mechanism as input in the following ones (the \textit{adaptive} scenario). 
Our results allow us to obtain new composition theorems  for any domain and granularity notion, both existing and future, and even allow combining different domains and granularity notions. Consequently, we improve the understanding of how different granularity notions affect composition in DP. 
Furthermore, our results facilitate a more accurate calculation of the privacy loss upon any possible composition of DP mechanisms and showcase the effect that preprocessing has on the computation. 
For instance, if the mechanisms take as input non-necessarily disjoint subsets of the initial database, it is now possible to obtain better bounds than the sum obtained using sequential composition (see \Cref{ex:IntermediateSetting}). 

Besides, we study the settings that are common in the literature and provide the corresponding privacy estimates obtained by using our composition theorems.
Furthermore, we study sufficient conditions to obtain the ``$\max{\varepsilon_i}$'' bound when the mechanisms take as input disjoint parts of the initial database. For the cases where this bound cannot be achieved, we provide a new variation on composition (see \Cref{sec:CommonDomainIndependent}) that allows us to achieve better results. In particular, we provide a solution to the open problem of \citeauthor{li2016Differential}~\cite{li2016Differential} by giving the lowest possible privacy loss for the composition of bounded DP mechanisms executed on mutually disjoint databases (\Cref{th:BoundedParallel}).

To further showcase our results, we extend our composition theorems to other privacy notions based on DP where the granularity can be changed. These other privacy notions are \textit{approximate DP} ($(\varepsilon,\delta)$-DP), \textit{zero-concentrated DP} ($\rho$-zCDP) and \textit{Gaussian DP} ($\mu$-GDP). To the best of our knowledge, we are the first to define the $d$-private counterparts of $(\varepsilon,\delta)$-DP, $\rho$-zCDP, and $\mu$-GDP  in order to gain a more general perspective on these three notions.
Besides, we provide the first statement of the zCDP composition over disjoint databases. Moreover, we provide a tighter bound than $\max_{i\in[k]}\mu_i d$ for Gaussian DP over disjoint databases (see \Cref{ex:ultra-parallel}). 

\begin{figure}[tb!]
    \centering\small
    \begin{tikzpicture}[every text node part/.style={align=center}]
        
        \newlength{\myheight}
        \setlength{\myheight}{7.75cm}

        \newlength{\mywidth}
        \setlength{\mywidth}{0.564\columnwidth}
        
        \newlength{\boxposition}
        \setlength{\boxposition}{0.171\columnwidth}
        
        \newlength{\boxwidth}
        \setlength{\boxwidth}{0.21\columnwidth}
        
        \node[rectangle,
        fill = kit-green50,
        opacity = 0.7,
        minimum width = \mywidth, 
        minimum height = 0.4\myheight
        ] at (0,0.2\myheight) {};
        
        \node[rectangle,
        fill = kit-blue50,
        opacity = 0.7,
        minimum width = \mywidth, 
        minimum height = 0.4\myheight
        ] at (0,-0.2\myheight) {};
        
        \node[rectangle,
        draw,
        thick,
        color = kit-blue70!70,
        minimum width = \mywidth, 
        minimum height = 0.4\myheight
        ] at (0,-0.2\myheight) {};
        
        \node[rectangle,
        draw,
        thick,
        color = kit-green70!70,
        minimum width = \mywidth, 
        minimum height = 0.4\myheight
        ] at (0,0.2\myheight) {};
        
        \node[] at (-0.53\mywidth,0.2\myheight) (Independent) {\rotatebox{90}{\mybox[fill=kit-green70!50]{Independent}}};
        
        \node[] at (-0.53\mywidth,-0.2\myheight) (Adaptive) {\rotatebox{90}{\mybox[fill=kit-blue70!50]{Adaptive}}};
        
        \draw[line width=0.6pt, double distance=1.75pt, -{Classical TikZ Rightarrow[length=3.5pt]},to path={|- (\tikztotarget)}]
        (Adaptive)--(Independent);
        
        \node[] at (0,0.247\myheight) (CT) {Independent Composition CD (\ref{th:ICTheoremCommonDomain})}; 
        
        \node[] at (0,-0.247\myheight) (ACT) {Adaptive Composition CD (\ref{th:ACTheoremCommonDomain})}; 
        
        \node[] at (0,0.35\myheight) (GCT) {Independent Composition (\ref{th:ICTheoremVariableDomain})}; 
        
        \node[] at (0,-0.35\myheight) (GACT) {Adaptive Composition (\ref{th:ACTheoremVariableDomain})}; 
        
        \node[minimum height = 0] at (0.47\mywidth,0) (center) {}; 
    
        \draw[line width=0.6pt, double=kit-blue50!70, double distance=1.75pt, -]
        (GACT)-|(center.center);
        
        \draw[line width=0.6pt, double=kit-green50!70, double distance=1.75pt, -{Classical TikZ Rightarrow[length=3.5pt]}]
        (center.center)|-(GCT);
        
        \node[rectangle,
        rounded corners=10pt,
        fill = kit-blue50,
        opacity = 0.8,
        minimum width = \boxwidth, 
        minimum height = 0.36\myheight
        ] at (-\boxposition,0) {};
        
        \node[rectangle,
        fill = kit-blue70,
        minimum width = \boxwidth,
        opacity = 0.8] at (-\boxposition,0) {\textbf{Sequential} \\ {\scriptsize(\ref{th:OriginalSequentialComposition} \cite{dwork2006Our})}};

        \node[] at (-\boxposition,0.138\myheight) {Independent};

        \node[] at (-\boxposition,0.09\myheight) {\scriptsize (O: \ref{th:ISCDwork} \cite{dwork2014algorithmic}, G: \ref{th:GeneralizedISC})}; 

        \node[] at (-\boxposition,-0.09\myheight-0.002\myheight) {Adaptive};
        
        \node[] at (-\boxposition,-0.138\myheight-0.002\myheight) {\scriptsize (O: \ref{th:ASCLi} \cite{li2016Differential}, G: \ref{th:GeneralizedASC})};
        
        \node[rectangle,
        rounded corners=10pt,
        fill = kit-green50,
        opacity = 0.8,
        minimum width = \boxwidth, 
        minimum height = 0.36\myheight
        ] at (\boxposition,0) {};
        
        \node[rectangle,
        fill = kit-green70,
        minimum width = \boxwidth,
        opacity = 0.8] at (\boxposition,0) {\textbf{Disjoint Inputs} \\{\scriptsize as in parallel (\ref{th:ParallelMcSherry} \cite{mcsherry2009privacy})}};

        \node[] at (\boxposition,0.138\myheight) {Independent};

        \node[] at (\boxposition,0.09\myheight) {\scriptsize BB: (\ref{th:GeneralizedIPCVariableDomain}, CD: \ref{th:GeneralizedIPCCommonDomain})}; 

        \node[] at (\boxposition,-0.09\myheight-0.002\myheight) {Adaptive};
        
        \node[] at (\boxposition,-0.138\myheight-0.002\myheight) {\scriptsize BB: (\ref{th:GeneralizedAPCVariableDomain}, CD: \ref{th:GeneralizedAPCCommonDomain})};

        \draw[line width=0.6pt, double=kit-green50!70, double distance=1.75pt, -{Classical TikZ Rightarrow[length=3.5pt]}]
        (GCT)--(CT);
        
        \node[] at (-0.125\mywidth,0.09\myheight) (IS) {};
        
        \node[] at (0.125\mywidth,0.09\myheight) (IP) {};
        
        \node[] at (0,0.094\myheight) (Icenter) {};
        
        \draw[line width=0.6pt, double=kit-green50!70, double distance=1.75pt, {Classical TikZ Rightarrow[length=3.5pt]}-{Classical TikZ Rightarrow[length=3.5pt]},to path={|- (\tikztotarget)}]
        (IS.center)--(IP.center);
        
        \draw[line width=0.6pt, double=kit-green50!70, double distance=1.75pt, -]
        (CT)--(Icenter.center);

        \draw[line width=0.6pt, double=kit-blue50!70, double distance=1.75pt, -{Classical TikZ Rightarrow[length=3.5pt]}]
        (GACT)--(ACT);
        
        \node[] at (-0.125\mywidth,-0.138\myheight-0.002\myheight) (AS) {};
        
        \node[] at (0.125\mywidth,-0.138\myheight-0.002\myheight) (AP) {};
        
        \node[] at (0,-0.1439\myheight) (Acenter) {}; 
        
        \draw[line width=0.6pt, double=kit-blue50!70, double distance=1.75pt, {Classical TikZ Rightarrow[length=3.5pt]}-{Classical TikZ Rightarrow[length=3.5pt]},to path={|- (\tikztotarget)}]
        (AS.center)--(AP.center);
        
        \draw[line width=0.6pt, double=kit-blue50!70, double distance=1.75pt, -]
        (ACT)--(Acenter.center);
    \end{tikzpicture}
    \caption{\small Overview of the theorems proved in this paper, classified according to whether they are adaptive or independent. The theorems represented are the generalizations of sequential composition and the best bound (BB) for disjoint inputs (as in the parallel setting). In the figure, ``O'' denotes the original theorem, ``G'' our generalized version, and ``CD'' common domain. Arrows indicate that a result directly implies the other.}
    \label{fig:CompositionTheorems}
\end{figure}
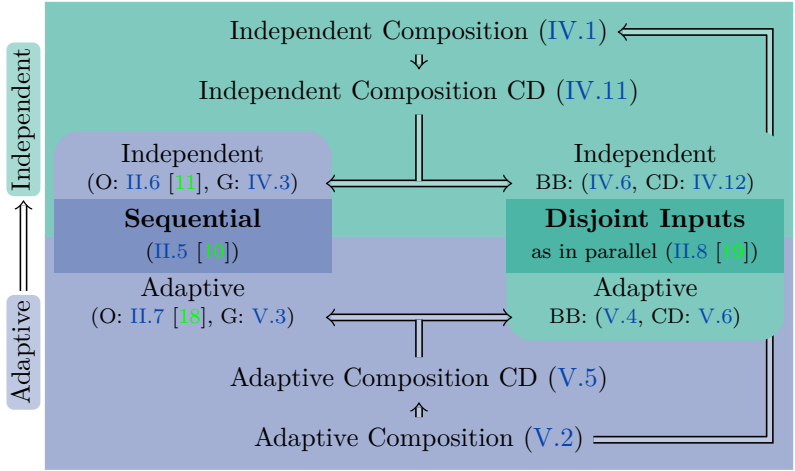

An overview of the generalized results is given in \Cref{fig:CompositionTheorems}. 
Our contributions are as follows:
\begin{itemize}
   \item We prove the independent composition (IC) and the adaptive composition (AC) theorems, two new results that allow for reducing the estimated privacy loss and designing improved DP mechanisms in general contexts. Moreover, our theorems make it possible to mix different granularity mechanisms while controlling the privacy guarantees offered. 
   \item We study particular cases of previous theorems that generalize the sequential and parallel composition to any granularity notion. This allows us to compute the minimum privacy loss for the bounded case when the mechanism processes disjoint parts of the database.
   \item We define $(d_{\D},\delta_{\D})$-privacy, $d^2_\D$-zCprivacy, and $d_\D$-Gprivacy, $d_\D$-private versions of $(\varepsilon,\delta)$-DP, $\rho$-zCDP, and $\mu$-GDP. Our definitions allow us to generalize to other domains and to provide general composition bounds. We also adapt our general composition results to $(d_{\D},\delta_{\D})$-privacy, $d^2_\D$-zCprivacy and $d_\D$-Gprivacy. Particularly, we show that the parallel composition metric bound can be improved in $d_\D$-Gprivacy.
\end{itemize}

The paper is organized as follows: Preliminaries are explored in \Cref{sec:preliminaries}, and we formalize the granularities and the generalization to $d_{\D}$-privacy in \Cref{sec:generalizingGranularityNotions}.
We present our independent composition theorem in \Cref{sec:IndependentComposition}, including interesting cases such as a generalization of the independent sequential composition and the setting where the mechanisms take as input disjoint parts of the database.
In \Cref{sec:GeneralizingAdaptive}, we discuss the analogous results for the adaptive scenario. Then we give the composition results for $(\varepsilon,\delta)$-DP, $\rho$-zCDP, and $\mu$-GDP in \Cref{sec:extending}. Finally, we discuss post-processing and the reciprocal theorems (\Cref{sec:reciprocal}) and conclude with a
brief summary of the results (\Cref{sec:conclusions}). 
All proofs of our statements can be found in Appendix~\ref{sec:proofs}.

\paragraph*{Related Work} \citeauthor{li2016Differential}~\cite{li2016Differential} analyze the composition theorems in unbounded and bounded DP, and find out that the parallel composition theorem does not necessarily hold for bounded DP mechanisms. However, they do not explore other granularities of the state of the art or attempt to provide a solution for the bounded problem.
\citeauthor{mcsherry2009privacy}~\cite{mcsherry2009privacy} gives the first distance-based formulation of DP, later generalized by \citeauthor{chatzikokolakis2013Broadening}~\cite{chatzikokolakis2013Broadening} with the definition of $d_{\D}$-privacy, which we use to set the general framework for composition. However, only sequential composition has been explored for $d_{\D}$-privacy~\cite{galli2023Group}.
Therefore, the generalization of other composition settings, such as parallel, to other granularities (metrics) is still an open question, and to the best of our knowledge, there is no work in the literature, either for DP or for $d$-privacy, that, in a general manner, computes an accurate privacy loss bound when we have other metrics, domains and composition rules.

\section{Preliminaries}\label{sec:preliminaries}
In this section, we introduce the main concepts relevant to this work. The main notation used throughout the manuscript is compiled in \Cref{tab:notation}.

\begin{table}[t]
    \centering
    \begin{tabular}{|c|c|}
        \hline
        Symbol & Meaning \\
        \hline\hline
        $\X$ & Set of possible data records \\
        $\DX$ & The universe of all databases drawn from $\X$ \\
        $\D$ & Database class \\
        $D,D'$ & A pair of databases \\
        $|D|$ & Size of $D$ (number of records) \\
        $x$ & Data record (element of $\X$)\\
        $m_{D}(x)$ & Multiplicity of $x$ in the multiset $D$\\
        $\M\colon\D\to\S$ & A randomized mechanism with domain $\D$\\
        $\S\coloneqq\Range(\M)$ & Set of possible outputs of all $\M(D)$ with $D\in\D$ \\
        $S$ & Measurable subset of $\S$ \\
        $s$ & Element of $\S$ \\
        $\G$ & Granularity notion/neighborhood definition \\
        $D\neigh_\G D'$ & $D$ and $D'$ are $\G$-neighboring \\
        $d_\D$ (or $d$) & Metric over $\D$ \\
        $d^\G_\D$ & Canonical metric of $\G$ over $\D$ \\
        $\U$, $\B$ & Unbounded and bounded granularity (resp.) \\
        $D\triangle D'$ & Symmetric difference ($(D\cup D')\backslash(D\cap D')$) \\
        $[k]$ & Set of indices $\{1,\dots,k\}$ \\
        $I_f(D,D')$ & For $f=\{f_i\}_{i\in[k]}$, $|\{i\in[k]\mid f_i(D)\neq f_i(D')\}|$ \\
        \hline
    \end{tabular}
    \caption{\small Summary of the notation used in this paper.}
    \label{tab:notation}
\end{table}

\subsection{Tabular Databases and Differential Privacy}

In the original formulation of DP, the database $D$ is assumed to be comprised of a finite number $n$ of rows, where the intuition is that each row contains data related to an individual, drawn from a universe of data records~$\X$~\cite{dwork2014algorithmic}. In this case, the data model is a tabular database, and we refer to a single data row as a \textit{record}.
We denote the universe of all the possible tabular databases drawn from $\X$ as $\DX$. In particular, $\DX$ contains the empty database $\varnothing$ and is closed under subsets (if $D'\subseteq D\in\DX$, then $D'\in\D_{\X}$) and under basic math operators: $D\cup D'$, $D\cap D', D\backslash D'\in\DX$ for all $D,D'\in\DX$. We consider all these operations as defined for multisets~\cite{syropoulos2001Mathematics} for the rest of the paper.

The first definition of $\varepsilon$-DP with precise formulation\footnote{For the literature definitions and theorems, we state them as they are defined in the cited reference but using the notation of this manuscript.} was introduced by \citeauthor{dwork2006Differential}~\cite{dwork2006Differential}.

\begin{definition}[Differential privacy~\cite{dwork2006Differential}]\label{def:firstDP}
    A randomized mechanism $\M$ with domain $\DX$ is \textit{$\varepsilon$-differentially private} ($\varepsilon$-DP) if for all $D,D'\in\DX$ differing on at most one element and all measurable $S\subseteq \Range(\M)$,
    \begin{equation}\label{eq:DPoriginal}
        \Prob\{\M(D)\in S\} \leq \e^{\varepsilon}\Prob\{\M(D')\in S\}.
    \end{equation}
\end{definition}

An important part of DP is the concept of \textit{neighborhood}, also referred to as the \textit{granularity notion} of DP~\cite{dwork2014algorithmic}. In \Cref{def:firstDP}, two databases $D,D'\in\DX$ are \textit{neighboring} if and only if they ``differ on at most one element'', i.e., $|D\triangle D'|=|(D\cup D')\backslash(D\cap D')|\leq1$. In other words, we obtain a neighboring database by removing or adding a single element or row. Assuming each row is linked to a single individual, we get the usual DP interpretation: DP aims to protect the participation of each individual in the original database up to $\varepsilon$.

Two parameters control the privacy of individuals in the DP definition, namely, the \textit{privacy budget} $\varepsilon$ and the universe of records $\mathcal{X}$. The former limits the amount of information that an attacker can extract with access to the mechanism's output. The latter encodes what information is considered public. For example, if $\mathcal{X}$ is the set of possible addresses of a city, we can discover (up to~$\varepsilon$) that a person lives in a particular city, while if $\mathcal{X}$ is the set of possible addresses of a country, we can discover (up to~$\varepsilon$) that an individual lives in the country, but not which exact city.

Furthermore, with the \textit{group privacy} property of DP~\cite{dwork2014algorithmic}, we also protect the participation of $n$ individuals with the protection degrading linearly with respect to $n$. More precisely, we have the following result:
\begin{proposition}[\cite{mcsherry2009privacy}]\label{prop:GroupPrivacy}
    A mechanism $\M$ is $\varepsilon$-DP if and only if for all $D,D'\in\D_{\X}$ and all measurable set $S\subseteq \Range(\M)$ 
    \begin{equation}\label{eq:McSherry}
        \Prob\{\M(D)\in S\} \leq \e^{\varepsilon
        |D\triangle D'|}\Prob\{\M(D')\in S\}.
    \end{equation}
\end{proposition}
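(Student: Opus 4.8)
The plan is to prove the two implications separately; the substantive one is ``$\varepsilon$-DP $\Rightarrow$ \eqref{eq:McSherry}'', while the converse is immediate. For the converse, suppose \eqref{eq:McSherry} holds for all $D,D'\in\DX$ and all measurable $S$. If $D$ and $D'$ differ on at most one element, then $|D\triangle D'|\in\{0,1\}$, so (using $\varepsilon\geq 0$) we have $\e^{\varepsilon|D\triangle D'|}\leq\e^{\varepsilon}$, and \eqref{eq:DPoriginal} follows at once. Hence $\M$ is $\varepsilon$-DP in the sense of \Cref{def:firstDP}.

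For the forward direction, fix $D,D'\in\DX$ and set $k\coloneqq|D\triangle D'|$. The key step is to exhibit a finite chain $D=D_0,D_1,\dots,D_k=D'$ of databases, all lying in $\DX$, such that consecutive members differ on exactly one element. I would build it in two stages: first remove the records of $D\backslash D'$ from $D$ one at a time, reaching $D\cap D'$ after $|D\backslash D'|$ steps; then insert the records of $D'\backslash D$ one at a time, reaching $D'$ after a further $|D'\backslash D|$ steps. Every database in the first stage is a subset of $D$ and every database in the second stage is a subset of $D'$, so the closure of $\DX$ under subsets guarantees each $D_i\in\DX$; and since $D\backslash D'$ and $D'\backslash D$ are disjoint multisets, the chain has length $|D\backslash D'|+|D'\backslash D|=|D\triangle D'|=k$ (all cardinalities counted with multiplicity).

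With the chain in hand, apply \Cref{def:firstDP} to each consecutive pair: for every $i\in\{0,\dots,k-1\}$ and every measurable $S\subseteq\Range(\M)$, $\Prob\{\M(D_i)\in S\}\leq\e^{\varepsilon}\Prob\{\M(D_{i+1})\in S\}$. Iterating these $k$ inequalities yields $\Prob\{\M(D)\in S\}\leq\e^{\varepsilon k}\Prob\{\M(D')\in S\}=\e^{\varepsilon|D\triangle D'|}\Prob\{\M(D')\in S\}$, which is \eqref{eq:McSherry}; the degenerate case $k=0$ is trivial as $D=D'$. The only point demanding care is the construction of the chain: one must check that deleting or inserting a single record changes the symmetric difference by exactly one and never leaves the domain. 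This is precisely where the hypothesis that $\DX$ is closed under subsets (and contains $\varnothing$) is used, and it is the reason the statement is phrased for the full universe $\DX$ rather than for an arbitrary database class $\D$.
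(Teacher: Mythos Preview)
Your proof is correct and follows essentially the same chain-based approach that the paper uses for its more general \Cref{th:d-privacyToDP} (of which this proposition is the special case $\G=\U$, since $d^\U_{\DX}=d^\triangle$). The only minor difference is that you construct the chain explicitly---first deleting the records of $D\backslash D'$ and then inserting those of $D'\backslash D$---whereas the paper's proof simply invokes the existence of a minimal chain from the definition of the canonical metric; your explicit construction has the advantage of making transparent why every intermediate database lies in $\DX$.
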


In this case, $d^{\triangle}(D,D')\coloneqq|D\triangle D'|$ can be thought of as the distance (or metric) between $D$ and $D'$ in~$\DX$. In this regard,  \citeauthor{mcsherry2009privacy}~\cite{mcsherry2009privacy} provides the first statement of DP from a metric perspective, which stems from the group privacy property. 
This laid the foundations of the generalization to $d$-privacy~\cite{chatzikokolakis2013Broadening}, which we will explore in \Cref{sec:generalizingGranularityNotions}.

\subsection{Differential Privacy: Unbounded vs.\ Bounded}
Nowadays, many neighborhood definitions for DP exist.
A compilation of common granularities is provided in~\cite{desfontaines2020SoK}.
Among these, unbounded and bounded DP are the most popular ones~\cite{kifer2011No}. The unbounded notion corresponds to the original definition presented by \citeauthor{dwork2006Differential}~\cite{dwork2006Differential}. 

\begin{definition}[restate = DEunbounded, name = Unbounded]\label{unbounded}
    A pair of databases $D,D'\in\DX$ are \textit{unbounded neighboring} if $D$ can be obtained from $D'$ by either adding or removing one record (i.e., $|D\triangle D'|=1$).
\end{definition}

\begin{definition}[restate = DEbounded, name = Bounded]\label{bounded}
    A pair of databases $D,D'\in\DX$ are \textit{bounded neighboring} if $D$ can be obtained from $D'$ by changing the value of exactly one record (i.e., $|D\triangle D'|=2$ and $|D|=|D'|$).
\end{definition}

These two notions of neighborhood lead to different privacy guarantees. The clearest difference concerns the privacy of the number of records: the unbounded notion protects the number of records in the database, while the bounded notion does not.

\subsection{Introduction to the Composition Theorems}
One of the most useful properties of DP mechanisms relates to composition theorems. Sequential and parallel composition are considered key components of DP and are regularly used in the field. 

The composition theorems share a common foundation. Simply put, these theorems say that given $k$ $\varepsilon_i$-DP mechanisms $\M_i$, the composed mechanism $\M$ satisfies $\varepsilon$-DP, where $\varepsilon$ depends on $\varepsilon_1,\dots,\varepsilon_k$. In other words, these theorems estimate the privacy loss (i.e., the final privacy budget) of the mechanism $\M$ composed of $\M_i$.
However, there are different ways to compose a set of mechanisms, and thus different theorems. We distinguish the following:

\textbf{Independent vs.\ adaptive:} Composition is \textit{independent} if the outputs of each $\M_i$ are independent of each other. On the other hand, it is \textit{adaptive} if $\M_i$ can use the outputs of any $\M_j$ with $j<i$ as input. More intuitively, $\M$ computes the mechanisms in order (first $\M_1$, then $\M_2$, then $\M_3$, etc.) and can take the output of previous mechanisms as input.
Note that adaptive composition is more general than independent composition, i.e., the independent theorems are cases of adaptive results.

\textbf{Sequential vs.\ parallel:} Orthogonally,
if every $\M_i$ takes as input the whole database $D$ in its computation, the composition is \textit{sequential}. Alternatively, the composition is \textit{parallel} if each $\M_i$ uses only data from a subset $D_i\subseteq D$ that is not used by any other.

The combination of these variations leads to four clear cases (see \Cref{fig:CompositionTheorems}), which we will refer to as the independent/adaptive sequential/parallel composition settings, due to the lack of consensus\footnote{For example, ISC and IPC are referred to as independent and sequential composition in~\cite{kifer2020Guidelines}; and as sequential and adaptive composition in~\cite{desfontaines2020SoK}.}.
We will also refer to them by the corresponding acronyms: ISC, IPC, ASC, and APC. 
In the current literature, we frequently find ISC~\cite{dwork2014algorithmic}, ASC~\cite{li2016Differential}, and IPC~\cite{mcsherry2009privacy}; while APC remains heavily unused.

\subsection{The Classic Composition Theorems}

The sequential and parallel composition theorems were initially stated for the original DP definition~\cite{dwork2006Differential}, unbounded DP, before the introduction of any other granularity. Nevertheless, we specify it in the following theorems. 

The first composition result of DP appeared in~\cite{dwork2006Our}.

\begin{theorem}[Sequential composition~\cite{dwork2006Our}]\label{th:OriginalSequentialComposition}
    A mechanism that permits $T$ adaptive interactions with an [unbounded] $\varepsilon$-DP mechanism ensures [unbounded] $T\varepsilon$-DP\@.
\end{theorem}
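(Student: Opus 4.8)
The plan is to prove the statement by induction on the number of interactions $T$. Model the adaptive process as a composed mechanism $\M=(\M_1,\dots,\M_T)$ on $\DX$ in which $\M_1$ is $\varepsilon$-DP, and for each $i\geq 2$ the mechanism $\M_i(\,\cdot\,;s_1,\dots,s_{i-1})$ is $\varepsilon$-DP for every fixed transcript $s_1,\dots,s_{i-1}$ of earlier outputs (this is exactly what ``$T$ adaptive interactions with an $\varepsilon$-DP mechanism'' encodes, once one also allows the query issued at step $i$ to be chosen as a function of the earlier transcript). Fix a pair of unbounded-neighboring databases $D,D'\in\DX$, i.e.\ $|D\triangle D'|=1$, and a measurable subset $S$ of the product output space $\Range(\M)$.

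The key step is a chain-rule decomposition of the joint output law. Conditioning on the first coordinate,
\[
\Prob\{\M(D)\in S\}=\int \Prob\{(\M_2,\dots,\M_T)(D;s_1)\in S_{s_1}\}\,d\mu_D(s_1),
\]
where $\mu_D$ is the law of $\M_1(D)$ and $S_{s_1}$ is the section of $S$ at $s_1$. Since $\M_2,\dots,\M_T$ remain $\varepsilon$-DP for each fixed $s_1$, the induction hypothesis applied to this $(T-1)$-fold adaptive composition bounds the integrand for $D$ by $\e^{(T-1)\varepsilon}$ times the integrand for $D'$; and because $\M_1$ is $\varepsilon$-DP, replacing the measure $\mu_D$ by $\mu_{D'}$ in the integral costs one further factor $\e^{\varepsilon}$ (formally, first bound pointwise, then compare the two integrals against $\mu_D$ and $\mu_{D'}$). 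Combining the two factors yields $\Prob\{\M(D)\in S\}\leq \e^{T\varepsilon}\Prob\{\M(D')\in S\}$, and the base case $T=1$ is just \Cref{def:firstDP}.

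The main obstacle I expect is not the inequality itself — which is a two-line multiplication of the per-step bounds — but the measure-theoretic bookkeeping that makes the chain-rule step rigorous on general, possibly continuous, output spaces: one must check that the sections $S_{s_1}$ are measurable, that $s_1\mapsto\Prob\{(\M_2,\dots,\M_T)(D;s_1)\in S_{s_1}\}$ is a measurable function, and that the Fubini/monotonicity manipulations comparing integrals against $\mu_D$ and $\mu_{D'}$ are justified. This is routinely resolved by taking the output spaces to be standard Borel and the mechanisms to be Markov kernels. A secondary point to spell out is the adaptive structure: the step-$i$ query, and hence the mechanism $\M_i$, must be a measurable function of the transcript so far, so that conditioning on $s_1$ leaves a genuine $(T-1)$-round adaptive composition of $\varepsilon$-DP mechanisms to which the induction hypothesis applies.
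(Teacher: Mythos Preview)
Your proposal is correct. Note, however, that the paper does not give a standalone proof of \Cref{th:OriginalSequentialComposition}: it is stated as a classical result from~\cite{dwork2006Our} in the preliminaries, and is recovered as the special case $d_i=\varepsilon d^\U_{\DX}$, $f_i=\id$ of the more general AC theorem (\Cref{th:ACTheoremVariableDomain}) and its corollary \Cref{th:GeneralizedASC}.

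That said, your argument matches the paper's proof of \Cref{th:ACTheoremVariableDomain} almost exactly: induction on the number of rounds, law of total probability to condition on the first output, a pointwise bound on the integrand, and a change of measure via the DP guarantee of $\M_1$. The only cosmetic differences are the order of the two bounding steps (the paper first swaps the measure using \Cref{lemma:InequalityIntegral}, then bounds the integrand via the DP of the second mechanism; you do the reverse) and the induction scheme (the paper establishes $k=2$ explicitly and reduces $k$ to $(k-1)+1$, whereas you peel off the first round directly). The measure-theoretic concerns you flag are handled in the paper by \Cref{re:probability stuff} and \Cref{lemma:InequalityIntegral}, which package the Lebesgue--Stieltjes integral manipulations and the monotonicity step $\int g\,\diff P_{\M(D)}\leq \e^{d}\int g\,\diff P_{\M(D')}$ for $g\in[0,1]$.
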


The theorem corresponds to the adaptive definition and includes independent composition as a subcase. Nowadays, these results are sometimes formulated separately with precise hypotheses and allow for different privacy budgets.

\begin{theorem}[Independent sequential composition (ISC)~\cite{dwork2014algorithmic}]\label{th:ISCDwork}
    Let $\M_i\colon\DX\to\S_i$ be an [unbounded] $\varepsilon_i$-DP mechanism for each $i\in[k]$. Consider the mechanism $\M$ with domain $\D$ such that $\M(D)=(\M_1(D),\dots,\M_k(D))$ for all $D\in\D$. Then $\M$ is [unbounded] $(\sum_{i=1}^k \varepsilon_i)$-DP\@.
\end{theorem}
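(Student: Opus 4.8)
The plan is to prove the independent sequential composition theorem (\Cref{th:ISCDwork}) directly from the definition of $\varepsilon$-DP, using the independence of the mechanisms to factor the joint output probability. First I would fix two unbounded-neighboring databases $D,D'\in\DX$ (so $|D\triangle D'|=1$) and a measurable set $S\subseteq\Range(\M)=\S_1\times\cdots\times\S_k$. The key structural fact is that, since the $\M_i$ are run independently on the same input $D$, the output distribution of $\M(D)=(\M_1(D),\dots,\M_k(D))$ is the product measure of the individual output distributions $\M_i(D)$.

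Next I would reduce to product sets and then handle general measurable $S$. For a rectangle $S=S_1\times\cdots\times S_k$ with each $S_i\subseteq\S_i$ measurable, independence gives $\Prob\{\M(D)\in S\}=\prod_{i=1}^k\Prob\{\M_i(D)\in S_i\}$, and applying the $\varepsilon_i$-DP guarantee of each $\M_i$ to the pair $D,D'$ yields $\prod_{i=1}^k\Prob\{\M_i(D)\in S_i\}\leq\prod_{i=1}^k\e^{\varepsilon_i}\Prob\{\M_i(D')\in S_i\}=\e^{\sum_i\varepsilon_i}\Prob\{\M(D')\in S\}$. To pass from rectangles to arbitrary measurable $S$ in the product $\sigma$-algebra, I would invoke a standard Dynkin/monotone-class argument (rectangles generate the product $\sigma$-algebra and form a $\pi$-system), or equivalently integrate the slice-wise bound: write $\Prob\{\M(D)\in S\}=\int \Prob\{\M_k(D)\in S_{s_1,\dots,s_{k-1}}\}\,d(\M_1(D)\otimes\cdots\otimes\M_{k-1}(D))$ via Fubini and induct on $k$, applying the DP bound on the innermost factor and then on the outer measure.

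Alternatively, and perhaps more cleanly, I would argue by induction on $k$: the case $k=1$ is immediate, and for the inductive step observe that $\M(D)=(\M'(D),\M_k(D))$ where $\M'(D)=(\M_1(D),\dots,\M_{k-1}(D))$ is $(\sum_{i=1}^{k-1}\varepsilon_i)$-DP by hypothesis and is independent of $\M_k(D)$; then it suffices to prove that the independent composition of two independent mechanisms with budgets $\alpha$ and $\beta$ is $(\alpha+\beta)$-DP, which is the two-factor version of the rectangle argument above combined with one Fubini step.

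The main obstacle — really the only nontrivial point — is the measure-theoretic passage from product sets to arbitrary measurable subsets of the product space $\S_1\times\cdots\times\S_k$; everything else is a one-line application of the definition and of independence. One must be slightly careful that the factorization $\Prob\{\M(D)\in S\}=\prod_i\Prob\{\M_i(D)\in S_i\}$ only holds verbatim for rectangles, and that the DP inequality, being an inequality rather than an equality, does not tensorize on the nose for non-rectangular $S$ without the Fubini/Dynkin bridge. I would also note in passing that this argument uses nothing specific to the unbounded granularity: it works for any fixed neighboring relation, which foreshadows the generalized statement \Cref{th:GeneralizedISC} referenced in \Cref{fig:CompositionTheorems}.
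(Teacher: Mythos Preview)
Your proposal is correct and follows essentially the same approach as the paper: the paper derives \Cref{th:ISCDwork} as the special case $f_i=\id$, $d_i=\varepsilon_i d^\U_{\DX}$ of the IC theorem (\Cref{th:ICTheoremVariableDomain}), whose proof likewise factors the joint probability over a product set via independence and applies the privacy bound componentwise. If anything you are more careful than the paper, which stops at rectangles $S=(S_1,\dots,S_k)$ and does not spell out the Dynkin/Fubini passage to general measurable $S$ that you correctly flag.
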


\begin{theorem}[Adaptive sequential composition (ASC)~\cite{li2016Differential}]\label{th:ASCLi}
    Let $\M_1,\dots,\M_k$ be $k$ mechanisms (that take auxiliary inputs) that satisfy [unbounded] $\varepsilon_1$-DP, \dots, $\varepsilon_k$-DP, respectively, with respect to the input database. Publishing $\textbf{t} = \scalar{t_1,t_2,\dots,t_k}$, where $t_1=\M_1(D)$, $t_2=\M_2(t_1, D)$, \dots, $t_k = \M_k(\scalar{t_1,\dots,t_{k-1}},D)$, satisfies [unbounded] $(\sum^k_{i=1}\varepsilon_i)$-DP.
\end{theorem}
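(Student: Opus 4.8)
The plan is to prove the statement by induction on $k$, the inductive step being reduced to a two-mechanism lemma that is then applied with the ``first'' mechanism taken to be the composition of $\M_1,\dots,\M_{k-1}$. I model every mechanism as a Markov kernel, so that an auxiliary-input mechanism $\M_i$ is a kernel $\kappa_i$ from $\S_1\times\dots\times\S_{i-1}\times\DX$ to $\S_i$; the hypothesis that $\M_i$ is $\varepsilon_i$-DP ``with respect to the input database'' means that for every fixed prefix $u=\scalar{t_1,\dots,t_{i-1}}$ the distribution of $\M_i(u,D)$ satisfies \eqref{eq:DPoriginal} as a function of $D$ alone. Writing $\mu_D$ for the law of the published tuple $\textbf{t}$ on input $D$, composition of kernels gives
\[
    \Prob\{\textbf{t}\in S\}=\mu_D(S)=\int_{\S_1}\!\cdots\!\int_{\S_k}\mathbf{1}_S(t_1,\dots,t_k)\;\kappa_k(\scalar{t_1,\dots,t_{k-1}},D,\mathrm{d}t_k)\cdots\kappa_1(D,\mathrm{d}t_1),
\]
which is well-defined because sections of a measurable set are measurable and a kernel evaluated along a measurable family of sets is a measurable function.

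First I would record an integral version of $\varepsilon$-DP: if a mechanism with output law $\nu_D$ is $\varepsilon$-DP, then $\int g\,\mathrm{d}\nu_D\le\e^{\varepsilon}\int g\,\mathrm{d}\nu_{D'}$ for every measurable $g\colon\S\to[0,1]$ and every neighboring pair $D,D'$. This follows from \eqref{eq:DPoriginal} by taking $g$ to be an indicator, extending to nonnegative simple functions by linearity, and passing to the general case by monotone convergence. This is the only genuinely analytic ingredient, and it is elementary.

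Next I would prove the two-mechanism case: if $\M_1$ is $\varepsilon_1$-DP and $\M_2(u,\cdot)$ is $\varepsilon_2$-DP for every auxiliary input $u$, then $\scalar{\M_1(D),\M_2(\M_1(D),D)}$ is $(\varepsilon_1+\varepsilon_2)$-DP. Fix neighboring $D,D'$ and a measurable $S\subseteq\S_1\times\S_2$, write $S_{t_1}$ for its $t_1$-section, and use the composed-kernel identity to get $\Prob\{\textbf{t}\in S\}=\int_{\S_1}\Prob\{\M_2(t_1,D)\in S_{t_1}\}\,\mathrm{d}\nu^1_D(t_1)$, with $\nu^1_D$ the law of $\M_1(D)$. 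Bounding the integrand pointwise in $t_1$ by $\e^{\varepsilon_2}\Prob\{\M_2(t_1,D')\in S_{t_1}\}$ via $\varepsilon_2$-DP of $\M_2(t_1,\cdot)$, and then applying the integral version of $\varepsilon_1$-DP to the measurable function $t_1\mapsto\Prob\{\M_2(t_1,D')\in S_{t_1}\}\in[0,1]$, replaces $\nu^1_D$ by $\e^{\varepsilon_1}\nu^1_{D'}$; the resulting expression is exactly $\e^{\varepsilon_1+\varepsilon_2}\Prob\{\textbf{t}'\in S\}$.

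Finally, the induction: the base case $k=1$ is immediate, and for the step I would view $\M'\coloneqq\scalar{\M_1,\dots,\M_{k-1}}$ as a single mechanism, which is $(\sum_{i=1}^{k-1}\varepsilon_i)$-DP by the inductive hypothesis, and note that $\M_k$ receives exactly $\M'(D)=\scalar{t_1,\dots,t_{k-1}}$ as its auxiliary input while remaining $\varepsilon_k$-DP in the database for each fixed value of that input; the two-mechanism lemma then gives $\bigl(\sum_{i=1}^{k-1}\varepsilon_i+\varepsilon_k\bigr)$-DP, which is the claim. I expect the main obstacle to be purely measure-theoretic bookkeeping---making the composed-kernel identity and the attendant section-measurability claims precise for arbitrary output spaces (not just $\mathbb{R}^n$), and spelling out the simple-function/monotone-convergence upgrade of \eqref{eq:DPoriginal}---rather than anything conceptually hard.
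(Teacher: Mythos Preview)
Your proposal is correct and mirrors the paper's own argument: the paper derives this statement from its AC theorem (\ref{th:ACTheoremVariableDomain}), whose proof is exactly induction on $k$ reduced to a two-mechanism case via the law of total probability, together with the same integral upgrade of the DP inequality (their \Cref{lemma:InequalityIntegral}). The only cosmetic differences are that the paper swaps the order of the two bounds in the $k=2$ step and proves the integral lemma by a signed-measure argument rather than simple functions plus monotone convergence.
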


In search of optimization, the literature has found circumstances for a better bound than the sequential one.  Databases can be composed of diverse information and most queries only need to compute values in a proper subset of data. It is these circumstances which, in fact, provide the better bound: The parallel composition theorem.

\begin{theorem}[Parallel composition~\cite{mcsherry2009privacy}]\label{th:ParallelMcSherry}
    Let $\M_i$ each provide [unbounded] $\varepsilon$-DP\@. Let $\X_i$ be arbitrary disjoint subsets of the universe of records $\X$. The sequence of $\M_i(D_i)$ provides [unbounded] $\varepsilon$-DP, where $D_i\subseteq D$ is the multiset such that element $x\in D$ has multiplicity $m_{D_i}(x)=\mathbf{1}_{\X_i}(x)\,m_{D}(x)$.
\end{theorem}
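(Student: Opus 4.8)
The plan is to reduce the statement to a single invocation of $\varepsilon$-DP by exploiting the disjointness of the $\X_i$. Take an arbitrary unbounded-neighboring pair $D\neigh_\U D'$; since $|D\triangle D'|=1$, there is a record $x_0\in\X$ with either $D'=D\cup\{x_0\}$ or $D=D'\cup\{x_0\}$, and we treat the first case, the second being symmetric. Because the $\X_i$ are pairwise disjoint, $x_0$ lies in at most one of them. If $x_0\notin\bigcup_i\X_i$, then the defining formula $m_{D_i}(x)=\mathbf 1_{\X_i}(x)\,m_D(x)$ gives $D_i=D'_i$ for every $i$, so the composed mechanism has identical output laws on $D$ and $D'$ and the claim is immediate. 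Otherwise $x_0\in\X_{i_0}$ for a unique $i_0$, and the same formula gives $D_i=D'_i$ for all $i\neq i_0$ together with $D'_{i_0}=D_{i_0}\cup\{x_0\}$, hence $D_{i_0}\neigh_\U D'_{i_0}$.

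Next I would use that the composed mechanism outputs the tuple $(\M_1(D_1),\dots,\M_k(D_k))$ of independent random variables, so its law on $\S_1\times\dots\times\S_k$ is the product measure $\mu_1^D\otimes\dots\otimes\mu_k^D$, where $\mu_i^D$ denotes the law of $\M_i(D_i)$; likewise for $D'$. Fix a measurable $S\subseteq\S_1\times\dots\times\S_k$, and let $\eta=\bigotimes_{i\neq i_0}\mu_i^D$, which by the previous paragraph equals $\bigotimes_{i\neq i_0}\mu_i^{D'}$. By Fubini's theorem for finite products of probability measures, writing $S_z\subseteq\S_{i_0}$ for the (measurable) section of $S$ at a point $z$ of the remaining coordinates,
\[
\Prob\{(\M_1(D_1),\dots,\M_k(D_k))\in S\}=\int \mu_{i_0}^{D}(S_z)\,d\eta(z).
\]
Applying $\varepsilon$-DP of $\M_{i_0}$ to the neighboring pair $D_{i_0}\neigh_\U D'_{i_0}$ gives $\mu_{i_0}^{D}(S_z)\leq\e^{\varepsilon}\mu_{i_0}^{D'}(S_z)$ for every $z$; integrating this inequality over $\eta$ and re-assembling the product-measure description of the law on $D'$ yields $\Prob\{(\M_1(D_1),\dots,\M_k(D_k))\in S\}\leq\e^{\varepsilon}\Prob\{(\M_1(D'_1),\dots,\M_k(D'_k))\in S\}$, which is exactly unbounded $\varepsilon$-DP of the composed mechanism.

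I expect the only real obstacle to be the measure-theoretic bookkeeping: justifying the product-measure description of the composed output, the Fubini factorization, and the measurability of the sections $S_z$ (all standard for finite products of probability measures, but worth stating cleanly), together with a careful treatment of the multiplicity formula $m_{D_i}(x)=\mathbf 1_{\X_i}(x)\,m_D(x)$, in particular the boundary case $x_0\notin\bigcup_i\X_i$. Everything else is a one-line application of $\varepsilon$-DP of the single mechanism $\M_{i_0}$. The same argument adapts to the adaptive parallel setting: conditioning on the previously released outputs, only the conditional law of the mechanism whose input actually changed is affected, while the remaining links of the chain have identical conditional laws under $D$ and $D'$.
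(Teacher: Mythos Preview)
Your proof is correct and shares the same core idea as the paper's derivation: for unbounded neighbors the disjointness of the $\X_i$ forces at most one $D_{i_0}$ to change, so after factorizing the output law via independence one applies $\varepsilon$-DP only to $\M_{i_0}$. The paper does not prove \Cref{th:ParallelMcSherry} directly but obtains it as the specialization of \Cref{th:GeneralizedIPCVariableDomain} (itself a corollary of the IC theorem~(\ref{th:ICTheoremVariableDomain})) to the unbounded granularity and the partition of \Cref{ex:partition}: it first bounds the composed mechanism by $\sum_i \varepsilon_i d^\G_{\D_i}(p_i(D),p_i(D'))$ and then uses $d^\U_\D$-compatibility plus $\Delta p_i\le1$ to collapse the sum to a single term. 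Your argument is the direct version of exactly this reasoning, with the minor advantage that your Fubini step handles an arbitrary measurable $S$ rather than only rectangles (the paper's proof of the IC theorem is written for product sets $S=(S_1,\dots,S_k)$ and relies implicitly on a density/extension argument to reach general $S$).
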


By abuse of notation, $D_i$ is also often denoted as $D\cap\X_i$. Nowadays, this formulation has also seen modifications. For example, \citeauthor{li2016Differential}~\cite{li2016Differential} use a partitioning function $p$ to define the disjoint subsets in the previous statement, i.e., $p_i(D)=D_i$ for all $i$ and $D\in\DX$. 

Even though \Cref{th:OriginalSequentialComposition,th:ASCLi,th:ISCDwork} were initially stated for the unbounded granularity notion, they can easily be translated for other granularities~\cite{galli2023Group}.
However, in \Cref{th:ParallelMcSherry}, if instead of unbounded, we impose $\M_i$ to be bounded $\varepsilon$-DP, then it is not generally true that the sequence of $\M_i(D_i)$ provides bounded $\varepsilon$-DP.  \citeauthor{li2016Differential}~\cite{li2016Differential} show why the proof is not applicable: even if $\M_i$ are bounded $\varepsilon$-DP, $\M'_i$ such that $\M'_i(D)=\M_i(D_i)=\M_i(D\cap\X_i)$ is not necessarily bounded $\varepsilon$-DP.
This fact is clear in the following counterexample, which we provide to complete \citeauthor{li2016Differential}'s claim~\cite{li2016Differential}:

\begin{example}[Parallel composition does not hold for bounded DP]\label{ex:BoundedParallelFail}
    Let $\D_{\X}$ be a database universe and $\X_i$ arbitrary disjoint subsets of $\X$. We show that given $k>1$ mutually independent bounded $\varepsilon_i$-DP mechanisms $\M_i\colon\DX\to\S$, it is not necessarily true that the composed mechanism $\M\colon\DX\to\S$ such that $\M(D)=(\M_1(D_1),\dots,\M_k(D_k))$ is bounded DP, where $D_i\subseteq D$ is the multiset such that element $x\in D$ has multiplicity $m_{D_i}(x)=\mathbf{1}_{\X_i}(x)\,m_{D}(x)$.
    
    To do so, we prove that we can select $k>1$ mutually independent bounded $\varepsilon_i$-DP mechanisms $\M_i\colon\DX\to\S$ such that mechanism $\M\colon\DX\to\S$ with $\M(D)=(\M_1(D_1),\dots,\M_k(D_k))$ is not bounded $\varepsilon$-DP for any $\varepsilon\geq 0$.
    
    For all $i\in[k]$, we choose $\M_i\colon\DX\to\S$ such that they output the number of elements of the input database, i.e., $\M_i(D)=\M^*(D)=|D|$ for all $D\in\DX$. It can easily be checked that this mechanism is bounded $0$-DP. Observe that in this case, $\M(D)=(\M^*(D_1),\dots,\M^*(D_k))=(|D_1|,\dots,|D_k|)$.
    
    Let $D,D'\in\D_{\X}$ be two bounded-neighboring databases such that $D\triangle D'=\{x,x'\}$ with $x\in D_j$ and $x'\in D'_l$, $j\neq l$. It is clear then that $\M^*(D_j)=|D_j|\neq|D'_j|=\M^*(D'_j)$ (analogously for $l$), so $\Prob\{\M^*(D_j) = |D_j|\}=1\not\leq 0=\Prob\{\M^*(D'_j) = |D_j|\}$. Note that this is not a contradiction with $\M^*$ being unbounded DP, since $D_j$ and $D'_j$ are not bounded-neighboring databases. 
    
    Consequently, taking $s=(|D_1|,\dots,|D_n|)\in\S$ we obtain $\Prob\{\M(D) = s\}=1$, but $\Prob\{\M^*(D'_j)=|D_j|\} = 0$. Therefore $\Prob\{\M(D)=s\}=1\not\leq 0=\e^\varepsilon\Prob\{\M(D')=s\}$ for all $\varepsilon\geq0$, so the mechanism $\M$ is not bounded~DP\@.
\end{example}

We want to showcase with this example that the composition results proved for unbounded DP in $\DX$ cannot be trivially generalized to other data domains or neighborhood definitions. The failure of bounded DP on satisfying $(\max_{i\in[k]}\varepsilon_i)$-DP when composed in parallel opens a new question about how to measure the privacy of composed mechanisms in general.

The main goal of this work is to answer this question by generalizing these composition theorems to more general scenarios, in which the domain of the mechanism is not necessary $\DX$, and the given granularity notion is not necessarily unbounded. To achieve so, we introduce new more-general composition rules that even allow composing DP mechanisms with different domains and granularity notions. These results are shown in \Cref{sec:IndependentComposition,sec:GeneralizingAdaptive}.

However, to carry out this extension of properties to general settings, we first need to define a formal structural model. Thus, we will begin by generalizing the data domain and the concept of granularity notions in the next section.

\section{Generalizing the Granularity Notion of DP}\label{sec:generalizingGranularityNotions}
As mentioned earlier, DP was designed to handle aggregated queries on tabular data. However, in many cases, mechanisms impose a maximum or minimum number of elements in the database, are only defined for databases of a fixed size, or are not defined for the empty database, which are incompatible conditions if $\DX$ is the mechanism domain. Also, the structure of the data is not necessarily tabular, such as graph databases~\cite{hay2009Accurate}. For instance, in a social network graph, each node is an individual in the database, while the edges represent the social relationship between the nodes. This means that information about individuals is not always encoded in rows or multiset elements. 

This motivates the need to generalize DP to different settings. In this section, we provide a mathematical formalization of the granularity notions and the data domain, establishing a framework in which privacy can always be measured and compared between different notions.

Databases are collections of data and can be defined as mathematical objects such as multisets (original case), sets, numbers, functions, streams, or graphs. A collection of databases forms a \textit{database class}\footnote{We define it as a mathematical class instead of a set because a collection of sets does not need to be a well-defined set~\cite{levy2002Basic}. We denote the usual inclusion of classes by $\D'\subseteq\D$.}, which we denote by $\D$.
In our setting, we will consider the cases where the domain of $\M$ is a generic class $\D$ instead of~$\DX$ (including the case where $\D\subseteq\DX$).

Moreover, DP allows many different neighborhood definitions~\cite{desfontaines2020SoK}, each with its own privacy implications and interpretability. We generalize the definition of granularity notion $\G$ as follows.

\begin{definition}[$\G$-neighborhood]\label{def:G-neighborhood}
    Given a database class $\D$, we define the \textit{$\G$-neighborhood relation} as a binary symmetric relation $\neigh_\G$ between elements in $\D$.
    We say that $D,D'\in\D$ are \textit{$\G$-neighboring} if $D\neigh_\G D'$.
\end{definition}

We will use calligraphic letters to denote certain granularity notions (e.g., $\U$ for unbounded, $\B$ for bounded). With \Cref{def:G-neighborhood}, we can establish a general framework for DP similar to that in~\cite{kifer2011No}. That is, a mechanism $\M$ with domain $\D$ is \textit{$\G$ $\varepsilon$-DP} ($\varepsilon\geq0$) if for all $\G$-neighboring $D, D'\in\D$ and all measurable $S\subseteq\Range(\M)$,
\[
    \Prob\{\M(D)\in S\} \leq \e^{\varepsilon}\Prob\{\M(D')\in S\}.
\]

Note that, given a data domain $\D$, we can construct~\cite{chatzikokolakis2013Broadening} a \textit{canonical metric $d^\G_\D$} for each granularity $\G$ over $\D$ by defining the distance between two databases $d^{\G}_\D(D,D')$ as the minimum number of neighboring databases in $\D$ you need to cross to obtain $D'$ from $D$ (with $d^\G_\D(D,D')=\infty$ if it is not possible). In particular, note that $d^{\G}_\D(D,D')=0$ if and only if $D=D'$, and $d^{\G}_\D(D,D')=1$ if and only if $D\neigh_\G D'$ (and $D\neq D'$).
See \Cref{prop:PrivacySpaceWellDefined} for more details and the proof of well-definition.

Then, from the group property of DP (\Cref{prop:GroupPrivacy}), $\M$ is $\G$ $\varepsilon$-DP if and only if for all $D,D'\in\D$ and all measurable $S\subseteq\Range(\M)$,
\[
    \Prob\{\M(D)\in S\} \leq \e^{\varepsilon d^\G_\D(D,D')}\Prob\{\M(D')\in S\}.
\]

This property motivates using metrics to measure privacy protection. As mentioned, this idea first appeared in~\cite{mcsherry2009privacy} with $d^\U_{\DX}(D,D')=|D\triangle D'|$ (note that $d^\U_{\DX}=d^\triangle_{\DX}$, but not generally over $\D$). Later, a formal generalization called $d_\D$-privacy~\cite{chatzikokolakis2013Broadening} was introduced. We consider the variant~\cite{chatzikokolakis2013Broadening} modeled by an \textit{extended pseudometric} $d_{\D}\colon\D^2\to[0,\infty]$, i.e., a metric in which the distance between two different databases can also be $0$ and $\infty$. To simplify the terminology, we will simply refer to $d_\D$ as \textit{metrics}. Note that having a metric $d_{\D}$ implies that $(\D,d_\D)$ is a (pseudo)metric space, which we will call \textit{privacy space}.
\begin{definition}[$d_{\D}$-privacy~\cite{chatzikokolakis2013Broadening}]\label{def:d-privacy}
    Let $(\D,d_{\D})$ be a privacy space. Then, a randomized mechanism $\M$ with domain $\D$ is \textit{$d_\D$-private} if for all $D, D'\in\D$ and all measurable $S\subseteq\Range(\M)$,
    \begin{equation}\label{eq:d-privacy}
        \Prob\{\M(D)\in S\} \leq \e^{d_{\D}(D,D')}\Prob\{\M(D')\in S\}.
    \end{equation}
\end{definition}

Observe that the metric absorbs the privacy budget ($\varepsilon$), i.e., $d_\D$ can be written as $d_\D=\varepsilon d'_\D$ where $d'_\D$ is also a metric. We will also denote it simply as $d$ when possible.

Additionally, we obtain the following result:

\begin{theorem}[restate = THdprivacytoDP, name = ]\label{th:d-privacyToDP}
    Let $\G$ be a granularity notion over the database class $\D$. Then, a mechanism $\M$ with domain $\D$ is $\varepsilon d^\G_\D$-private if and only if it is $\G$ $\varepsilon$-DP.
\end{theorem}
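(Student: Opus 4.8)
The plan is to prove both implications by unwinding the definition of the canonical metric $d^\G_\D$ and leaning on the group-privacy characterization already established in \Cref{prop:GroupPrivacy}, suitably generalized. Recall that $\varepsilon d^\G_\D$-privacy means $\Prob\{\M(D)\in S\}\leq\e^{\varepsilon d^\G_\D(D,D')}\Prob\{\M(D')\in S\}$ for all $D,D'\in\D$ and measurable $S$, whereas $\G$ $\varepsilon$-DP only imposes the inequality on $\G$-neighboring pairs. So the nontrivial direction is that the neighbor-only hypothesis propagates to arbitrary pairs along chains, exactly as in the classical group-privacy argument.

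First I would handle the easy implication ($\Leftarrow$ at the level of the metric inequality, i.e.\ $d_\D$-privacy $\Rightarrow$ $\G$ $\varepsilon$-DP). If $\M$ is $\varepsilon d^\G_\D$-private, then for any $\G$-neighboring $D,D'$ with $D\neq D'$ we have $d^\G_\D(D,D')=1$ by the remark in the excerpt (and if $D=D'$ the inequality is trivial with exponent $0$), so \eqref{eq:d-privacy} immediately gives $\Prob\{\M(D)\in S\}\leq\e^{\varepsilon}\Prob\{\M(D')\in S\}$; hence $\M$ is $\G$ $\varepsilon$-DP.

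For the converse, suppose $\M$ is $\G$ $\varepsilon$-DP. Fix $D,D'\in\D$ and a measurable $S\subseteq\Range(\M)$. If $d^\G_\D(D,D')=\infty$ the right-hand side of \eqref{eq:d-privacy} is $+\infty$ (as long as $\Prob\{\M(D')\in S\}>0$; the degenerate case $\Prob\{\M(D')\in S\}=0$ needs a separate sentence, which I'd get by noting that then there is no finite chain so the inequality $\Prob\{\M(D)\in S\}\leq\e^{\infty}\cdot 0$ should be read as $\Prob\{\M(D)\in S\}\leq 0$ — I must argue this cannot happen, e.g.\ because if there were a finite chain the neighbor bound would force $\Prob\{\M(D)\in S\}=0$, and if there is no finite chain the convention $\e^\infty\cdot 0 = \infty$ or a direct appeal to $d^\G_\D(D,D')=\infty$ making the statement vacuous applies; I'd state whichever convention the paper uses for $[0,\infty]$-valued metrics). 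Otherwise let $n=d^\G_\D(D,D')<\infty$ and pick, by definition of the canonical metric as a minimum chain length, a chain $D=D_0\neigh_\G D_1\neigh_\G\cdots\neigh_\G D_n=D'$ in $\D$. Applying the $\G$ $\varepsilon$-DP inequality to each consecutive pair and telescoping, $\Prob\{\M(D)\in S\}\leq\e^{\varepsilon}\Prob\{\M(D_1)\in S\}\leq\cdots\leq\e^{n\varepsilon}\Prob\{\M(D_n)\in S\}=\e^{\varepsilon d^\G_\D(D,D')}\Prob\{\M(D')\in S\}$, which is \eqref{eq:d-privacy}. This shows $\M$ is $\varepsilon d^\G_\D$-private.

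The main obstacle, and the only place real care is needed, is the bookkeeping around the extended (pseudo)metric: the cases $d^\G_\D(D,D')=0$ with $D\neq D'$ (forcing $\Prob\{\M(D)\in S\}=\Prob\{\M(D')\in S\}$ via a zero-length chain, consistent with the pseudometric framework) and $d^\G_\D(D,D')=\infty$. I would dispatch these with one or two sentences each, relying on the well-definedness of $d^\G_\D$ from \Cref{prop:PrivacySpaceWellDefined} and the stated conventions for $[0,\infty]$-valued metrics, so that the telescoping argument above is the mathematical heart of the proof.
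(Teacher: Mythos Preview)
Your proposal is correct and follows essentially the same approach as the paper: the easy direction uses $d^\G_\D(D,D')=1$ for neighbors, and the converse telescopes the neighbor inequality along a minimal chain, with the $d^\G_\D(D,D')=\infty$ case handled separately as trivially satisfied. Your extra worry about $d^\G_\D(D,D')=0$ with $D\neq D'$ is unnecessary here, since by \Cref{prop:PrivacySpaceWellDefined} the canonical metric is a genuine extended metric (not merely a pseudometric), so this case does not arise.
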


Given any granularity notion we can obtain a metric, but that not all metrics (even up to $\varepsilon$) are the canonical metric for a granularity notion. Therefore, the notion of $d_\D$-privacy is more general than $\G$ $\varepsilon$-DP. Besides, note that the restriction of $d^\G_\D$ to the subclass $\D'\subseteq\D$ is not always~$d^\G_{\D'}$
(see \Cref{RE:InducedVsIntrinsic}).
  
To understand the real privacy implications of a $d_{\D}$-privacy, we need to look at the domain and the distance.

The domain, $\D$, encodes what information we consider public knowledge and what we want to protect up to $d_\D$. The larger the domain, the greater the privacy, but it also comes with the cost of greater sensitivities and harder-to-achieve privacy protection. The distance, $d_\D$, encodes how hard it is to distinguish any pair of databases, and therefore what information we are protecting.

Additionally, it is important to select domains with compatible metrics. For example, information may be disclosed if $d_{\D}(D,D')=\infty$. Therefore, \textit{connected} privacy spaces (i.e., $d_\D(D,D')<\infty$ for all $D,D'\in\D$) are preferable because the change across connected components is not guaranteed to be protected by a DP mechanism. For example, when $\D$ is totally disconnected, we can end up with nonsensical privacy guarantees like in the following example.

\begin{example}\label{ex:UnboundedRestrictedDomainFail}
    Consider $\D\coloneqq\{D\in\DX\mid |D|=N\}$, the class of tabular databases of size $N$, and choose the unbounded granularity notion. It is clear that unbounded-neighboring databases always differ by one element. Therefore, there are no unbounded-neighboring databases in~$\D$ (i.e., the privacy space is totally disconnected). 
    
    This privacy space would imply, by \textit{reductio ad absurdum}, that \textit{any} mechanism is unbounded $\varepsilon$-DP for all $\varepsilon\geq 0$ since for all the neighbors (none) the definition holds. In particular, the identity mechanism (such that $\M(D)=D$) defined over $\D$ (which does not provide any protection) is unbounded $0$-DP. 
\end{example}

Note that choosing $\DX$ as the domain does not lead to the same problem, but as we mentioned before, relaxing the domain so that it is defined for subsets $\D$ of $\DX$ and other database types is usually more convenient, coherent, and necessary. 
Following the same line, the bounded granularity defines a connected privacy space over $\D\coloneqq\{D\in\DX\mid |D|=N\}$, but defines a disconnected one over $\DX$.

\subsection{Relationship between Metrics}\label{sec:RelationsMetrics}
So far, we have described a mathematical model to understand any metric and granularity notion. This will be necessary for the following sections to define general properties and theorems. However, we need to understand the real privacy implications of metrics given their formal definition.

The notion of $d_\D$-privacy allows us to compare the privacy level between metrics over the same domain, which also helps to extend composability notions proven for one to others. Consider two metrics, $d_1$ and $d_2$, over $\D$ such that $d_1\leq d_2$ (pointwise). In this case, we can say that $d_1$ offers more protection than $d_2$ because any mechanism $\M\colon\D\to\S$ that satisfies $d_1$-privacy also satisfies $d_2$-privacy~\cite{chatzikokolakis2013Broadening}. 

In particular, given two canonical metrics $d^{\G_1}_\D$ and $d^{\G_2}_\D$ such that
\[
    k = \TBD{\G_1}{\G_2}{\D} \coloneqq \max_{\substack{D,D'\in\D\\D\neigh_{\G_2}D'}}d^{\G_1}_\D(D,D')<\infty,
\]
we obtain $d^{\G_1}_\D\leq kd^{\G_2}_\D$ (see \Cref{prop:GranRelation}).
Therefore, if $\M\colon\D\to\S$ is $\G_1$ $\varepsilon$-DP, then $\M$ is $\G_2$ $k\varepsilon$-DP. This fact allows us to compare different granularity notions over the same domain, e.g., all information protected by $\G_1$ must also be protected by $\G_2$, while not necessarily the other way around. 

From this result, we can deduce the well-known fact that unbounded $\varepsilon$-DP implies bounded $2\varepsilon$-DP in $\DX$~\cite{li2016Differential} since $\TBD{\U}{\B}{\DX}=2$. However, $\TBD{\B}{\U}{\DX}=\infty$ because $d^{\B}_{\DX}(D,D')=\infty$ for all $D\neigh_{\U} D'$. Note that the privacy-level comparison between two granularity notions directly depends on which class we compare them in. While this result holds in $\DX$, we saw in \Cref{ex:UnboundedRestrictedDomainFail} that this is not the case for all database classes.

If the diameter of $(\D,d_1)$, $\diam(\D,d_1)\coloneqq\max d_1$, 
is bounded, we can always compare it to the other metrics over $\D$. For example, the \textit{free-lunch} granularity notion $\FL$~\cite{kifer2011No} is defined such that all pairs of databases are free-lunch neighboring, i.e., $d^{\FL}_\D(D,D')=1$ for all $D\neq D'$. Therefore, $d^{\FL}_\D\leq d^\G_\D$ verifies for any canonical metric $d^\G_\D$, and thus free-lunch DP implies all others.

\subsection{Changing the Privacy Space}
It is also interesting to understand how queries or other transformations can produce a transition from one privacy space to another and how this change can be reflected in our overall privacy.
\begin{definition}[Sensitivity~\cite{chatzikokolakis2013Broadening}]
    Let $(\D_1,d_{1})$ and $(\D_2,d_{2})$ be two privacy spaces and let $f\colon\D_1\to\D_2$ be a deterministic map. We define the \textit{sensitivity of $f$} with respect to $d_{1}$ and $d_{2}$ as the smallest value $\Delta f\in[0,\infty]$ such that 
    $d_{2}(f(D),f(D'))\leq \Delta f\,d_{1}(D,D')$ holds for all $D,D'\in\D_1$ with $d_{1}(D,D')<\infty$.
\end{definition}

\begin{proposition}[restate = PRpreprocessing, name = Preprocessing~\cite{chatzikokolakis2013Broadening}]\label{prop:preprocessing}
    Let $(\D_1,d_{1})$ and $(\D_2,d_{2})$ be two privacy spaces and let $f$ be a deterministic map with sensitivity $\Delta f<\infty$ with respect to $d_{1}$ and $d_{2}$, and let $\M\colon\D_2\to\S_2$ be a $d_{2}$-private mechanism. Then $\M\circ f$ satisfies $(\Delta f)d_{1}$-privacy.
\end{proposition}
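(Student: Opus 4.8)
The plan is to prove the statement by directly unwinding the definitions of $d_{\D}$-privacy (\Cref{def:d-privacy}) and of sensitivity, so that the only genuine content is a short case split on whether the source distance is finite. First I would fix arbitrary $D,D'\in\D_1$ and a measurable $S\subseteq\Range(\M\circ f)$. Since $f(D)\in\D_2$ for every $D\in\D_1$, we have $\Range(\M\circ f)\subseteq\Range(\M)$, so $S$ is also a measurable subset of $\Range(\M)$, and the $d_2$-privacy inequality for $\M$ may legitimately be invoked at the pair $f(D),f(D')\in\D_2$; note that $(\M\circ f)(D)=\M(f(D))$ by definition, so it suffices to bound $\Prob\{\M(f(D))\in S\}$ in terms of $\Prob\{\M(f(D'))\in S\}$.

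Next I would treat the main case $d_1(D,D')<\infty$. By the definition of sensitivity, $d_2(f(D),f(D'))\leq(\Delta f)\,d_1(D,D')$, and this is finite because $\Delta f<\infty$. Applying $d_2$-privacy of $\M$ gives $\Prob\{\M(f(D))\in S\}\leq\e^{d_2(f(D),f(D'))}\Prob\{\M(f(D'))\in S\}$. Since $\exp$ is nondecreasing, $d_2(f(D),f(D'))\leq(\Delta f)\,d_1(D,D')$, and $\Prob\{\M(f(D'))\in S\}\geq0$, we may enlarge the exponent to obtain $\Prob\{(\M\circ f)(D)\in S\}\leq\e^{(\Delta f)\,d_1(D,D')}\Prob\{(\M\circ f)(D')\in S\}$, which is exactly \eqref{eq:d-privacy} for the mechanism $\M\circ f$ with the metric $(\Delta f)d_1$. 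The case $d_1(D,D')=\infty$ is immediate, since then the right-hand side is $+\infty$ whenever the probability is positive and $\Prob\{(\M\circ f)(D)\in S\}\leq1$ always holds; this matches the convention behind \Cref{def:d-privacy} that the inequality is only restrictive for pairs at finite distance, which is also why the sensitivity definition quantifies only over such pairs.

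I do not anticipate a serious obstacle: the argument is essentially functorial, chaining ``$d_2$-private'' with ``$\Delta f$-Lipschitz'' to get ``$(\Delta f)d_1$-private''. The only points requiring a moment of care are bookkeeping ones: confirming $\Range(\M\circ f)\subseteq\Range(\M)$ so that measurability transfers, and the degenerate boundary values — $\Delta f=0$, where $(\Delta f)d_1$ is identically $0$ on finite-distance pairs and the bound still follows because $d_2(f(D),f(D'))=0$ there, and $d_1(D,D')=\infty$, handled above. These are precisely the places where the extended-pseudometric conventions ($0\cdot\infty$, $\e^{\infty}$) must be used consistently with the rest of the framework.
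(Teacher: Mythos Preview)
Your proof is correct and is the standard direct argument: apply $d_2$-privacy of $\M$ at the pair $(f(D),f(D'))$, then enlarge the exponent via the sensitivity bound $d_2(f(D),f(D'))\leq(\Delta f)\,d_1(D,D')$, handling the $d_1(D,D')=\infty$ case separately. The bookkeeping you flag (measurability of $S$ in $\Range(\M)$, the degenerate cases $\Delta f=0$ and infinite distance) is all fine.

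As for comparison with the paper: the paper does \emph{not} supply its own proof of this proposition. It is stated with attribution to~\cite{chatzikokolakis2013Broadening} and, unlike the surrounding results (e.g., \Cref{prop:SensitivityComposition}), no proof appears in the appendix. So there is nothing to compare against beyond noting that your argument is exactly the expected one-line Lipschitz-chaining proof that the cited reference also gives.
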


In the case where the metrics are the canonical metrics of granularities $\G_1$ and $\G_2$, we obtain that the sensitivity is $\Delta f\coloneqq\max_{D\neigh_{\G_1}D'}d_{2}(f(D),f(D'))$. If we then choose $f=\id\colon\D\to\D$, we obtain that $\Delta\id = \TBD{\G_1}{\G_2}{\D}$.

\begin{remark}\label{re:ReciprocalPreprocessing}
    The reciprocal of \Cref{prop:preprocessing} is not true. For example, consider $(\D_1,d_{1})=(\D_2,d_{2})=(\R,d^{\FL})$, the free-lunch metric over $\R$. Take $\M\colon\R\to\R$ such as $\M(x)=x+Z$ where $Z\sim\Lap(\frac{1}{\varepsilon})$. We can easily verify that this mechanism is not free-lunch DP by selecting two numbers $x<<y$. In other words, the sensitivity of the identity map over the real numbers is infinite. However, if we take $f\colon\R\to\R$ such that $f(x)=\frac{1}{1+\e^x}$, then $\Delta f =\|f(x)-f(y)\|_1\leq 1$. Therefore, $\M\circ f$ corresponds to a Laplace mechanism, and is free-lunch $\varepsilon$-DP\@. In conclusion, there exist $\M$ and $f$ such that $\M\circ f$ is $(\varepsilon\Delta f)d^{\FL}_\R$-private but $\M$ is not $\varepsilon' d^{\FL}_\R$-private for any $\varepsilon'>0$.
\end{remark}  

We can also apply multiple preprocessing functions to a mechanism, obtaining the following bound:

\begin{proposition}[restate = PRsensitivitycomposition, name = Sensitivity of the composition]\label{prop:SensitivityComposition}
    Let $(\D_1,d_{1})$, $(\D_2,d_{2})$ and $(\D_3,d_{3})$ be privacy spaces and let $f\colon\D_1\to\D_2$ and $g\colon\D_2\to\D_3$ be two deterministic maps. Then $\Delta(g\circ f)\leq\Delta f\,\Delta g$.
\end{proposition}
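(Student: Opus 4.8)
The plan is to chain the two defining sensitivity inequalities, paying attention to where the finiteness hypothesis on $d_1$ is allowed to propagate. First I would fix an arbitrary pair $D,D'\in\D_1$ with $d_1(D,D')<\infty$ and recall that, by definition of $\Delta f$, we have $d_2(f(D),f(D'))\le \Delta f\, d_1(D,D')$. Assuming for the moment that $\Delta f<\infty$, this inequality already forces $d_2(f(D),f(D'))<\infty$, which is exactly the precondition needed to invoke the defining inequality for $\Delta g$ on the pair $f(D),f(D')\in\D_2$, namely $d_3(g(f(D)),g(f(D')))\le \Delta g\, d_2(f(D),f(D'))$.

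Combining the two displays then yields $d_3\bigl((g\circ f)(D),(g\circ f)(D')\bigr)\le \Delta g\,\Delta f\, d_1(D,D')$ for every $D,D'\in\D_1$ at finite $d_1$-distance. Since $\Delta(g\circ f)$ is by definition the smallest constant $c\in[0,\infty]$ for which $d_3((g\circ f)(D),(g\circ f)(D'))\le c\, d_1(D,D')$ holds over all such pairs, and the value $\Delta f\,\Delta g$ is one admissible such constant, we conclude $\Delta(g\circ f)\le \Delta f\,\Delta g$.

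It then remains only to dispatch the degenerate case $\Delta f=\infty$ (and symmetrically $\Delta g=\infty$), where the claimed bound reads $\Delta(g\circ f)\le\infty$ and so is vacuous under the usual arithmetic conventions on $[0,\infty]$. The single point at which the argument could go wrong if stated carelessly — and hence the main thing to be careful about — is precisely the transport of finiteness through $f$: the sensitivity inequality for $g$ is asserted only for pairs of $\D_2$ at finite $d_2$-distance, so one must first establish $d_2(f(D),f(D'))<\infty$ (which is where finiteness of $\Delta f$ is used) before applying it. Everything beyond that is a one-line composition of inequalities.
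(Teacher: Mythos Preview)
Your proof is correct and follows essentially the same approach as the paper: handle the degenerate infinite-sensitivity case separately, then chain the two defining sensitivity inequalities and invoke minimality of $\Delta(g\circ f)$. If anything, you are slightly more careful than the paper in explicitly noting that $\Delta f<\infty$ is needed to ensure $d_2(f(D),f(D'))<\infty$ before applying the sensitivity inequality for $g$.
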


\section{The Independent Composition Theorem}\label{sec:IndependentComposition}
Now that we have a general framework for DP in arbitrary privacy spaces, we can start to explore how we can extend the properties of DP from $(\DX,d^\U_{\DX})$ to the other privacy spaces. In this section, we focus this analysis on the independent composition. To this end, we present a theorem that models all possible independent compositions of mechanisms over arbitrary privacy spaces. To begin, we first need to understand composability, in its more general form. 

\subsection{Composing Mechanisms}
Assuming the role of the curator, we have a database $D\in\D$ and we want to publish certain extracted information $s\in\S$. However, we cannot publish $s$ directly because it would compromise privacy. Therefore, we want an attacker with access only to the output $\Tilde{s}$ of our mechanism to be unable to distinguish aspects of $D$ from other databases of $\D$.
Besides, the information we need to extract can be obtained as a function of some query answers. That is, $s=h(s_1,\dots,s_k)$ where $h$ is an arbitrary deterministic function and $s_i=f_i(D)$ is the output of an arbitrary query (where $f_i$ can even be the identity). Thus, by trying to get every $\Tilde{s}_i$ (private output of $s_i$) and computing $\Tilde{s}=h(\Tilde{s}_1,\dots,\Tilde{s}_k)$, we make it possible to discretize our problem.
To do this, we use $k$ $d_i$-private mechanisms
$\M^*_i\colon\D_i\to\S_i$ such that $\M^*_i(f_i(D))=\Tilde{s}_i$. Therefore, the question arises whether the composition of the mechanisms $\M$ such that $\M(D)=(\M^*_1(f_1(D)),\dots,\M^*_k(f_k(D)))$ for all $D\in\D$ is $d_{\D}$-private, and what privacy $d_{\D}$ implies. To answer this question, we state and prove the independent/adaptive composition theorems (\ref{th:ICTheoremVariableDomain} and~\ref{th:ACTheoremVariableDomain}). Note that $\M_i\coloneqq\M^*_i\circ f_i$ defines a mechanism over $\D$ for all $i\in[k]$. 

In \Cref{sec:CommonDomainIndependent}, we will explore the scenario where, instead of imposing $\M^*_i$ to be $d_i$-private, we directly impose $\M_i$ to be $d_i$-private. Since each $\M_i$ is defined over the same domain as $\M$, we call this scenario \textit{common domain}. This change is significant because it allows us to prove alternative theorems~(\ref{th:ICTheoremCommonDomain} and~\ref{th:ACTheoremCommonDomain}) to our composition results~(\Cref{th:ICTheoremVariableDomain,th:ACTheoremVariableDomain}, respectively), and it ensures that the composed mechanism does not completely lose the privacy guarantee, as it happens in \Cref{ex:BoundedParallelFail}.
As a result, we can provide tighter bounds on the privacy loss for cases such as bounded DP, which are not covered outside the common domain.

\subsection{Independent Composition}\label{sec:IndependentCompositionSubsection}

In this section, we introduce our generalized version of independent composition. We will explore its adaptive counterpart in \Cref{sec:GeneralizingAdaptive}. Note that adaptive composition includes independent composition, but we present the results for the independent case first to simplify the notation.

Formally, independent composition refers to the case where the mechanisms $\M_1,\dots,\M_k$ are \textit{mutually independent}, i.e., $\M_1(D),\dots,\M_k(D)$ are mutually independent random elements for all $D\in\D$. In other words, the output of each of these mechanisms does not depend on the others. The \textit{independent-composed mechanism} $\M\coloneqq(\M_1,\dots,\M_k)_{\ind}$ is then defined as the mechanism with domain $\D$ such that $\M(D)=(\M_1(D),\dots,\M_k(D))$ for all $D\in\D$.

With this definition, we can state the independent composition (IC) theorem. Since the theorem does not impose any condition on the privacy metric of the initial $\M_i$, our results can be used for any privacy space and any possible independent composition strategy. 

\begin{theorem}[restate = THictheoremvariabledomain, name = IC theorem]\label{th:ICTheoremVariableDomain}
    Let $\D$ be a database class and, for all $i\in[k]$, let $(\D_i,d_{i})$ be a privacy space, and let $f_i\colon\D\to\D_i$ be a deterministic map. 
    For all $i\in[k]$, let $\M^*_i\colon\D_i\to\S_i$ be mutually independent $d_i$-private mechanisms.
    Then mechanism $\M=(\M^*_1\circ f_1,\dots,\M^*_k\circ f_k)_{\ind}$ is $d_{\D}$-private with
    \[
        d_\D(D,D') \coloneqq \sum^k_{i=1} d_i(f_i(D),f_i(D'))\quad \text{for all $D,D'\in\D$}.
    \]
\end{theorem}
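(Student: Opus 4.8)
The plan is to verify \eqref{eq:d-privacy} directly for $\M$ by combining three ingredients that are already available: the preprocessing bound (\Cref{prop:preprocessing}), the fact that $d_i$-privacy of $\M^*_i$ transfers to $\M_i = \M^*_i\circ f_i$ with the pulled-back metric, and the elementary observation that the product law for independent random elements lets one multiply the individual likelihood-ratio bounds. First I would fix an arbitrary pair $D,D'\in\D$ and an arbitrary measurable rectangle $S = S_1\times\cdots\times S_k$ with each $S_i\subseteq\S_i$ measurable. By mutual independence of $\M^*_1(f_1(D)),\dots,\M^*_k(f_k(D))$ (note $f_i$ is deterministic, so $\M^*_i\circ f_i$ are still mutually independent), we have $\Prob\{\M(D)\in S\} = \prod_{i=1}^k \Prob\{\M^*_i(f_i(D))\in S_i\}$, and likewise for $D'$.

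Next, for each $i$ I would apply the $d_i$-privacy of $\M^*_i$ to the pair $f_i(D), f_i(D')\in\D_i$ and the set $S_i$, giving $\Prob\{\M^*_i(f_i(D))\in S_i\} \leq \e^{d_i(f_i(D),f_i(D'))}\,\Prob\{\M^*_i(f_i(D'))\in S_i\}$. Equivalently, one may phrase this as: $\M_i=\M^*_i\circ f_i$ satisfies $d'_i$-privacy for the pulled-back extended pseudometric $d'_i(D,D') \coloneqq d_i(f_i(D),f_i(D'))$, which is exactly \Cref{prop:preprocessing} applied with $f=f_i$ and the observation that $\Delta f_i \cdot d_{\D} \geq d'_i$ (in fact one does not even need the sensitivity machinery here, since the inequality is immediate). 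Multiplying the $k$ inequalities together and using the two product expansions above yields
\[
    \Prob\{\M(D)\in S\} \leq \e^{\sum_{i=1}^k d_i(f_i(D),f_i(D'))}\,\Prob\{\M(D')\in S\} = \e^{d_\D(D,D')}\,\Prob\{\M(D')\in S\},
\]
which is \eqref{eq:d-privacy} for rectangles.

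The last step is to promote the bound from measurable rectangles to all measurable $S\subseteq\Range(\M)\subseteq\prod_i\S_i$. Here I would invoke a standard $\pi$–$\lambda$ / monotone class argument: the product $\sigma$-algebra on $\prod_i\S_i$ is generated by the $\pi$-system of measurable rectangles, and the class of sets $S$ for which the inequality $\Prob\{\M(D)\in S\}\leq \e^{d_\D(D,D')}\Prob\{\M(D')\in S\}$ holds is closed under the relevant operations (countable disjoint unions, and — because the constant $\e^{d_\D(D,D')}$ is $\geq 1$ — one argues via the two measures $\mu \coloneqq \Prob\{\M(D)\in\cdot\}$ and $\nu\coloneqq \e^{d_\D(D,D')}\Prob\{\M(D')\in\cdot\}$ and shows $\mu\leq\nu$ on the generating $\pi$-system forces $\mu\leq\nu$ everywhere, e.g.\ by the uniqueness/extension part of the $\pi$–$\lambda$ theorem or by outer-regularity/approximation). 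This measure-theoretic extension is the only place where any care is needed; the algebraic heart of the argument is the one-line multiplication of independent ratio bounds. I expect the main obstacle — really a bookkeeping point rather than a genuine difficulty — to be stating the $\sigma$-algebra on $\Range(\M)$ cleanly and justifying the rectangle-to-general-set passage rigorously, since $d_\D(D,D')$ may be $\infty$ (in which case the inequality is vacuous) and the spaces $\S_i$ are only assumed measurable, not, say, Polish.
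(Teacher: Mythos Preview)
Your approach matches the paper's exactly: factor the joint probability via independence, apply each $d_i$-privacy bound to the pair $f_i(D),f_i(D')$, and multiply. You are in fact more careful than the paper, whose proof stops at measurable rectangles $S_1\times\cdots\times S_k$ and declares itself complete; one caveat on your extension step is that the class $\{S:\mu(S)\le c\,\nu(S)\}$ is \emph{not} a $\lambda$-system (it is closed under countable disjoint unions but not under proper differences), so the $\pi$--$\lambda$ route does not go through as stated --- the clean fix is the Fubini/iterated-integral device the paper uses in \Cref{lemma:InequalityIntegral} for the adaptive theorem, which yields the inequality for all product-measurable $S$ directly.
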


It is important to note that the IC theorem~(\ref{th:ICTheoremVariableDomain}) provides the privacy level of the resulting mechanism by construction. This means that we cannot generally impose the privacy level of the composed mechanism $\M$, but we can compute it as we see in the following example.

\begin{example}\label{ex:IC_mixdist}
    Let $\X=\X_1\cup\X_2$ be a set of locations in $\R^2$ of two districts $i\in[2]$, each associated with hospital $i$ in location $l_i$, and consider $\D=\DX$, consisting of databases of locations from ambulances in both districts. Assume that the maximum Euclidean distance between any two points in $\X_1$ and $\X_2$ in the districts is equal and finite, $\diam(\X_1)=\diam(\X_2)=L$.
    Our goal is to compute the number of locations in each district and determine the closest ambulance to each hospital. To do so, we will compose the following $d$-private mechanisms:
    A $d^{\U}_\D$-private mechanism $\M^*_a\colon\D\to\mathbb{N}$ that outputs the noisy count of records in $D\in\D$, and a $d^{\textrm{Eu}}_{\X}$-private mechanism $\M^*_b\colon\X\to\X$, with $d^{\textrm{Eu}}_{\X}$ the Euclidean distance over $\X$, that given $x\in\X$ outputs a perturbed version of it.
    
    For all $i\in[2]$ and $D\in\D$, let $p_i(D)=D\cap\X_i$ be the subset of locations of $D$ in district $i$, and let $f_i(D)=\argmin_{x\in p_i(D)}\{\|x-l_i\|_2\}$ be the closest ambulance to hospital $i$. Thus, we can obtain the wanted information through the composed mechanism $\M$ such that $\M(D)=(\M^*_a(p_1(D)),\M^*_a(p_2(D)),\M^*_b(f_1(D)),\M^*_b(f_2(D)))$.
    
    Now using the IC theorem (\ref{th:ICTheoremVariableDomain}), we can compute the privacy that $\M$ provides. For all $D,D'\in\D$, we have a protection of $d_\D(D,D')\coloneqq \sum^2_{i=1} (d^{\U}_\D(p_i(D),p_i(D'))+d^{\textrm{Eu}}_\X(f_i(D),f_i(D')))\leq (d^{\U}_\D+2d^{\infty}_\D)(D,D')\leq d^{\U}_\D(D,D')+2L$
    with $d^{\infty}_\D(D,D')=\max_{x\in D,x'\in D'}d^{\textrm{Eu}}_\X(x,x')$ the maximum distance.
\end{example}

Note that in the IC theorem (\ref{th:ICTheoremVariableDomain}), we can end up with extreme cases where $d_\D(D,D')=\infty$ for certain $D,D'\in\D$, which does not provide privacy between these databases. However, we can still obtain reasonable $d_\D$ in general cases where $d_\D$ possesses good privacy properties.

For $\D_i=\D$ and $f_i=\id$, we obtain a result reminiscent of the sequential composition theorem:
\begin{theorem}[restate = THgeneralizedISC, name = Generalized ISC]\label{th:GeneralizedISC}
    Let $\{(\D,d_i)\}_{i\in[k]}$ be a set of privacy spaces.
    For all $i\in[k]$, let $\M_i\colon\D\to\S_i$ be mutually independent $d_i$-private mechanisms.
    Then $\M=(\M_1,\dots,\M_k)_{\ind}$ is $(\sum_{i=1}^k d_i)$-private.
\end{theorem}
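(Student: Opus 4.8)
The plan is to derive this as an immediate corollary of the IC theorem~(\ref{th:ICTheoremVariableDomain}) by specializing the choice of auxiliary privacy spaces and preprocessing maps. Concretely, for each $i\in[k]$ I would set $\D_i \coloneqq \D$, take $d_i$ to be the metric already given on $\D$, and choose $f_i \coloneqq \id_\D$, the identity map on $\D$. With these choices, the mechanism $\M^*_i \coloneqq \M_i$ is $d_i$-private by hypothesis, and the mechanisms are mutually independent as assumed. Then the composed mechanism in the statement of the IC theorem is $(\M^*_1 \circ f_1, \dots, \M^*_k \circ f_k)_{\ind} = (\M_1 \circ \id, \dots, \M_k \circ \id)_{\ind} = (\M_1,\dots,\M_k)_{\ind} = \M$, which is exactly the independent-composed mechanism of the statement.

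Applying the IC theorem directly, $\M$ is $d_\D$-private with
\[
    d_\D(D,D') = \sum_{i=1}^k d_i\bigl(f_i(D), f_i(D')\bigr) = \sum_{i=1}^k d_i(D,D') = \Bigl(\sum_{i=1}^k d_i\Bigr)(D,D')
\]
for all $D,D' \in \D$, since each $f_i$ is the identity. This is precisely the claim that $\M$ is $\bigl(\sum_{i=1}^k d_i\bigr)$-private, and it remains to note that $\sum_{i=1}^k d_i$ is again an extended pseudometric on $\D$ (a finite sum of extended pseudometrics satisfies symmetry, vanishes on the diagonal, and obeys the triangle inequality termwise), so $(\D, \sum_i d_i)$ is a legitimate privacy space and the statement is meaningful.

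I do not anticipate any genuine obstacle here: the content of the result is entirely contained in the IC theorem, and the only thing to check is that the specialization $\D_i = \D$, $f_i = \id$ is admissible, which it trivially is ($\id_\D$ is deterministic). The mild point worth stating explicitly is the well-definedness of $\sum_i d_i$ as a metric, which could in principle be folded into a one-line remark rather than a separate argument. If one wanted a self-contained proof not invoking the IC theorem, the same argument as in that theorem's proof would apply: for a product-measurable rectangle $S = S_1 \times \dots \times S_k$ one multiplies the $k$ inequalities $\Prob\{\M_i(D) \in S_i\} \le \e^{d_i(D,D')}\Prob\{\M_i(D') \in S_i\}$ using independence, then extends from rectangles to arbitrary measurable $S$ by a standard monotone-class / Dynkin-system argument; but since the IC theorem is already available, the corollary proof is the clean route.
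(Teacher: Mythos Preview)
Your proposal is correct and matches the paper's own proof exactly: the paper states ``Direct from the IC theorem~(\ref{th:ICTheoremVariableDomain}) by taking $\D_i=\D$ and $f_i=\id$.'' Your additional remarks (that $\sum_i d_i$ is again an extended pseudometric, and the sketch of a self-contained argument) are fine elaborations but not needed beyond what the paper records.
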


Note that by choosing $d_i=\varepsilon_i d$, we obtain that $\M$ is $\varepsilon d$-private with $\varepsilon=\sum^k_{i=1}\varepsilon_i$ (first proven in~\cite{galli2023Group}). Furthermore, by selecting $d$ as $d^\G_\D$, we obtain the sequential composition theorem for every granularity: If $\M_i\colon\D\to\S_i$ are mutually independent $\G$ $\varepsilon_i$-DP mechanisms, then $\M=(\M_1,\dots,\M_k)_{\ind}$ is $\G$ $(\sum_{i=1}^k \varepsilon_i)$-DP. This shows that sequential composition works as expected for every granularity. 

On the other hand, the setting in which the mechanisms take as input disjoint subsets of the initial database (as in parallel composition) does not generally yield analogous results to \Cref{th:ParallelMcSherry}.
We can model this setting by taking $f_i$ in the IC theorem~(\ref{th:ICTheoremVariableDomain}) so they define a partitioning function. More formally, we define a \textit{$k$-partitioning function} $p=\{p_1,\dots,p_k\}$ as a function where $p_i\colon\D\to p_i(\D)\eqqcolon \D_i$ such that $p_i(D)\subseteq D$ with $p_i(D)\cap p_j(D)\neq \varnothing$ for $i\neq j$\footnote{We do not require that $D=\bigcup^k_{i=1} p_i(D)$, i.e., our partition can be \textit{non-exhaustive}.}. Note, therefore, that the domains $\D_i$ of $\M_i$ might be different in this setting by construction. Let us see an example of a partitioning function, based on that of~\cite{li2016Differential}.  

\begin{example}[Partitioning function for $\D\subseteq\DX$]\label{ex:partition}
    Let $\D\subseteq\DX$. A partition $\{\X_i\}_{i\in[k]}$ of $\X$, extends naturally as a partition of the elements $D\in\D$, i.e., $p_i(D)\subseteq D$ is the multiset such that element $x\in D$ has multiplicity $m_{p_i(D)}(x)=\mathbf{1}_{\X_i}(x)\,m_{D}(x)$. In this case, the partitioning function $p$ uses only $x$ to compute the value of $p(x)$, and therefore the result is independent of the other records. 
\end{example}

In this setting, the IC theorem (\ref{th:ICTheoremVariableDomain}) yields that $\M$ is $d_\D$-private with $d_\D(D,D')=\sum^k_{i=1}d_{i}(p_i(D),p_i(D'))$ $\leq I_p(D,D')(\max_{i\in[k]}\Delta p_i)d_{i}(D,D')$ for all $D,D'\in\D$, where $I_p(D,D')\coloneqq\#\{i\mid p_i(D)\neq p_i(D')\}$. This fact is coherent with what we know: Assuming a partitioning function of \Cref{ex:partition}, if we select $\varepsilon_i d^{\U}_{\D_i}$ mechanisms then  $d_\D\leq(\max_{i\in[k]}\varepsilon_i)d^\U_\D$, since $\Delta p_i=\varepsilon_i$ and $I_p(D,D')=1$ for all $D\neigh_\U D'$. 
If we select $d_i=\varepsilon_i d^{\B}_{\D_i}$, there exist $D,D'\in\D$, as we saw in \Cref{ex:BoundedParallelFail}, such that $d_i(D,D')=d^{\B}_{\D_i}(p_i(D),p_i(D'))=\infty$ for some $i$ and therefore $d_\D(D,D')=\infty$.
In general, we have no better expression for $d_\D$ unless we add extra conditions. In \Cref{sec:GeneralizationOfIndependentParallelComposition,sec:CommonDomainIndependent}, we will explore conditions to achieve the best bound in this setting.

Furthermore, between accessing the whole database (\Cref{th:GeneralizedISC}) or a partition of it, the IC theorem (\ref{th:ICTheoremVariableDomain}) allows considering intermediate composition strategies that provide tighter, more-precise bounds, such as shown in the following example:

\begin{example}\label{ex:IntermediateSetting}
    We continue with the scenario presented in \Cref{ex:IC_mixdist}, but now we have $k>3$ hospitals and each ambulance has at least three associated hospital locations. The universe of records in this case is $\X'=(\X,[k]^{\leq3})$ and $\D=\D_{\X'}$, where $[k]^{\leq3}$ denotes the subsets of at least three elements of $[k]$. We consider the analogous $\M^*_a$ mechanism. We want to know the number of available ambulances for each hospital, so we consider $\M$ such that $\M(D)=(\M^*_a(f_1(D)),\dots,\M^*_a(f_k(D)))$ where $f_i(D)$ is the subdatabase of $D\in\D$ of ambulances assigned to hospital~$i$. Since each ambulance only collaborates with at most three hospitals, $I_f(D,D')\leq3d^\U_\D(D,D')$. Applying then the IC theorem~(\ref{th:ICTheoremVariableDomain}), we obtain that $\M$ is $d_\D$-private with $d_{\D}(D,D')=\sum_{i=1}^{k}d^{\U}_\D(f_i(D),f_i(D'))\leq3d^{\U}_\D(D,D')<kd^{\U}_\D(D,D')$. 
\end{example}

In particular, the last example showcases this intermediate setting. Function $f=\{f_i\}_{i\in[k]}$ does not define a partition, so we cannot apply \Cref{th:ParallelMcSherry}, but a single change to database $D$ affects at most three databases in $\{f_i(D)\}_{i\in[k]}$, hence the final budget is $d_\D\leq3d^\U_\D$ instead of $kd^{\U}_\D$ given by the sequential counterpart (\Cref{th:GeneralizedISC}).

\subsection{A Better Bound for Disjoint Inputs}\label{sec:GeneralizationOfIndependentParallelComposition}
Following the discussion in \Cref{sec:IndependentCompositionSubsection}, considering as input disjoint subsets of the initial database, we explore the possibility to obtain the best possible bound. For this section, we assume that mechanisms $\M_i$ are $d_i$-private, with $d_i$ ``proportional'' to a single metric type (e.g., $d_i=\varepsilon_i d^\triangle_{\D_i}$) or over a fixed granularity (i.e., $d_i=\varepsilon_i d^\G_{\D_i}$). 

\Cref{th:ParallelMcSherry} tells us that if $\M_i$ are $\varepsilon_i d^{\triangle}_{\DX}$-private, then the composed mechanism $\M=(\M_1,\dots,\M_k)_{\ind}$ is $(\max_{i\in[k]}\varepsilon_i)d^{\triangle}_{\DX}$-private. This privacy bound is the best possible bound we can get in this setting. Note that in the case where mechanisms $\M_i$ satisfy the same privacy guarantee ($\varepsilon d^{\triangle}_{\DX}$-privacy) for all $i\in[k]$, then $\M$ also satisfies it. Thus, the composition does not degrade the privacy level at all. However, as we mentioned before, the best bound cannot be obtained for all metrics. Therefore we explore in this section which additional conditions the partition must satisfy (with respect to the metric) to ensure that we obtain the best-case bound, the maximum privacy budget of $\M_i$.

The first case we consider is a metric-type $d^*$ that is well-defined over $\D$ and $\D_i$ for all $i\in[k]$\footnote{This means that metrics $d^*_\D$ and $d^*_{\D_i}$ are well-defined metrics and that $d^*(D,D')$ is constant for all domains containing $D,D'\in\D$. Examples include $d^\triangle$, which is well-defined for all $\D\subseteq\DX$.}. We can give a sufficient condition for obtaining the best bound: We say that metric $d^*$ \textit{commutes with the partition given by $p$} if, for all $D,D'\in\D$,
\begin{gather*}
    \sum^k_{i=1} d^*_{\D_i}(p_i(D),p_i(D'))=d^*_\D\bigg(\bigcup_{i=1}^k p_i(D),\bigcup_{i=1}^k p_i(D')\bigg) \leq d^*_\D(D,D').
\end{gather*}

By the IC theorem~(\ref{th:ICTheoremVariableDomain}), if $d^*$ commutes with $p$ and $\M_i$ are $\varepsilon_i d^*_{\D_i}$-private, then $\M$ is $(\max_{i\in[k]}\varepsilon_i) d^*_\D$-private.
For example, $d^\triangle$ commutes with all partitions $p$ of \Cref{ex:partition} (see \Cref{prop:commutative}),
which relates to the original result of \citeauthor{mcsherry2009privacy}~\cite{mcsherry2009privacy}.

Secondly, we can also focus on
a fixed granularity notion $\G$, and given $d_i=\varepsilon_i d^\G_{\D_i}$ for all $i\in[k]$, we study when we obtain that $\M$ is $\varepsilon d^\G_\D$-private with $\varepsilon=\max_{i\in[k]}\varepsilon_i$.
Recall that different domains define different canonical metrics, so the previous case does not apply, and checking commutativity is not an option. In this case, the corresponding equation translates to $\sum^k_{i=1} d^\G_{\D_i}(p_i(D),p_i(D'))=d^\G_\D(D,D')$. This equation can be hard to check in general, but it holds if the partition verifies:
\begin{itemize}
    \item \textit{$d^\G_\D$-compatibility}: For all $\G$-neighboring $D,D'\in\D$, there exists at most one $j\in[k]$ such that $p_i(D)=p_i(D')$ for all $i\neq j$, i.e., $I_p(D,D')=1$ for all $D\neigh_\G D'$; and
    \item $\G$ is also well-defined over $\D_i$ and the sensitivity of $p_i$ with respect to $d^\G_\D$ and $d^\G_{\D_i}$ is $\Delta p_i\leq 1$ (i.e., $d^\G_{\D_i}(p_i(D),p_i(D'))\leq 1$ if $d^\G_{\D}(D,D')=1$).
\end{itemize}

Under these conditions, we obtain the desired result (where $\M^*_i$ can have different domains):
\begin{theorem}[restate = THgeneralizedIPCvariabledomain, name = IC best bound for disjoint inputs]\label{th:GeneralizedIPCVariableDomain}
    Let $\D$ be a database class and $\G$ a granularity over $\D$. Let $p$ be a $d^\G_\D$-compatible $k$-partitioning function such that $\Delta p_i\leq 1$. 
    For all $i\in[k]$, let $\M^*_i\colon\D_i\to\S_i$ be mutually independent $\varepsilon_i d^\G_{\D_i}$-private mechanisms.
    Then mechanism $\M=(\M^*_1\circ p_1,\dots,\M^*_k\circ p_k)_{\ind}$ is $\varepsilon d^\G_{\D}$-private with $\varepsilon=\max_{i\in[k]} \varepsilon_i$.
\end{theorem}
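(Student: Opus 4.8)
The plan is to derive Theorem~\ref{th:GeneralizedIPCVariableDomain} as a corollary of the IC theorem~(\ref{th:ICTheoremVariableDomain}) together with the two hypotheses on the partition. First I would invoke \Cref{th:ICTheoremVariableDomain} with $f_i=p_i$ and $d_i=\varepsilon_i d^\G_{\D_i}$: this immediately gives that $\M=(\M^*_1\circ p_1,\dots,\M^*_k\circ p_k)_{\ind}$ is $d_\D$-private with $d_\D(D,D')=\sum_{i=1}^k \varepsilon_i\, d^\G_{\D_i}(p_i(D),p_i(D'))$. The whole task is then to bound this sum pointwise by $\varepsilon\, d^\G_\D(D,D')$ with $\varepsilon=\max_{i\in[k]}\varepsilon_i$, since $d^\G_\D$-privacy with $d^\G_\D$ the canonical metric is equivalent to $\G$ $\varepsilon$-DP by \Cref{th:d-privacyToDP}, and more privacy (a pointwise-smaller metric) is weaker-implied per \Cref{sec:RelationsMetrics}.

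The core step is the claim $\sum_{i=1}^k d^\G_{\D_i}(p_i(D),p_i(D')) \le d^\G_\D(D,D')$ for all $D,D'\in\D$. I would prove it by induction on $m\coloneqq d^\G_\D(D,D')$, the canonical distance, which by definition is the length of a shortest $\G$-neighboring path $D=D_0\neigh_\G D_1\neigh_\G\cdots\neigh_\G D_m=D'$ in $\D$ (the case $m=\infty$ is trivial since the right-hand side is then $\infty$). For a single step $D_{t}\neigh_\G D_{t+1}$, the $d^\G_\D$-compatibility hypothesis says $I_p(D_t,D_{t+1})\le 1$, so $p_i(D_t)=p_i(D_{t+1})$ for all but at most one index $j$; for that index, $\Delta p_j\le 1$ gives $d^\G_{\D_j}(p_j(D_t),p_j(D_{t+1}))\le 1$. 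Hence $\sum_{i=1}^k d^\G_{\D_i}(p_i(D_t),p_i(D_{t+1}))\le 1$. Summing along the path and applying the triangle inequality for each metric $d^\G_{\D_i}$ (valid since each $(\D_i,d^\G_{\D_i})$ is a pseudometric space, \Cref{prop:PrivacySpaceWellDefined}) yields $\sum_{i=1}^k d^\G_{\D_i}(p_i(D),p_i(D'))\le\sum_{t=0}^{m-1}\sum_{i=1}^k d^\G_{\D_i}(p_i(D_t),p_i(D_{t+1}))\le m=d^\G_\D(D,D')$.

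Combining the two displays, for all $D,D'\in\D$,
\[
    d_\D(D,D')=\sum_{i=1}^k\varepsilon_i\,d^\G_{\D_i}(p_i(D),p_i(D'))\le\varepsilon\sum_{i=1}^k d^\G_{\D_i}(p_i(D),p_i(D'))\le\varepsilon\,d^\G_\D(D,D'),
\]
so $\M$ is $d_\D$-private and hence $\varepsilon d^\G_\D$-private, i.e.\ $\G$ $\varepsilon$-DP with $\varepsilon=\max_{i\in[k]}\varepsilon_i$, as claimed. I expect the main obstacle to be the bookkeeping around the partition being potentially non-exhaustive and the $\D_i$ being genuinely different domains: one must be careful that $\Delta p_i\le 1$ is exactly the statement $d^\G_{\D_i}(p_i(D),p_i(D'))\le 1$ whenever $d^\G_\D(D,D')=1$ (the sensitivity is defined with respect to the canonical metrics, as noted after \Cref{prop:preprocessing}), and that $d^\G_{\D_i}$ genuinely refers to the intrinsic canonical metric of $\G$ over $\D_i$ rather than the restriction of $d^\G_\D$ — the two can differ (see \Cref{RE:InducedVsIntrinsic}), but the inequality direction we need still goes through because the induction only ever uses single $\G$-neighbor steps upstairs and the sensitivity bound downstairs.
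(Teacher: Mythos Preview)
Your proposal is correct and follows essentially the same approach as the paper: apply the IC theorem~(\ref{th:ICTheoremVariableDomain}) with $f_i=p_i$, then use $d^\G_\D$-compatibility and $\Delta p_i\le 1$ to bound the resulting sum. The only cosmetic difference is that the paper invokes \Cref{th:d-privacyToDP} to reduce to checking the inequality on $\G$-neighboring pairs (where the bound $d(D,D')\le\varepsilon_j\le\varepsilon$ is immediate), whereas you prove the pointwise inequality $d_\D\le\varepsilon\,d^\G_\D$ for \emph{all} pairs by summing the single-step bound along a shortest $\G$-path --- which is exactly the path-induction argument underlying \Cref{th:d-privacyToDP} itself, so you have effectively unrolled that equivalence rather than cited it.
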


As discussed, all partitions $p$ of \Cref{ex:partition} are $d^{\U}_{\D}$-compatible since the addition/removal of one record can only affect the partition this record belongs to, so $I_p(D,D')=1$ for all $D\neigh_{\U} D'$, and additionally $\Delta p_i\leq 1$. Therefore, \Cref{th:GeneralizedIPCVariableDomain} can be applied to obtain \Cref{th:ParallelMcSherry}.

Even though \Cref{th:GeneralizedIPCVariableDomain} is stated for any granularity, $d^\G_{\D}$-compatibility is a strict condition.
For example, no partitioning function of \Cref{ex:partition} (with $k>1$) is $d_{\DX}^{\B}$-compatible
(see \Cref{prop:CompatiblePartitionBounded}).
Nevertheless, we can construct compatible partitioning functions to certain bounded metrics $d^\B_\D$, as shown in the following example:

\begin{example}[A $d^\B_\D$-compatible partition]\label{ex:compatible}
    Consider a database $D$ with ordered elements, i.e., every element $(n,x)\in D$ consists of a record value $x\in\X$ and a unique identifier $n\in[|D|]$. Let $\DX^{\mathrm{ord}}$ denote the class of all such databases.
    
    Let $p$ be a $k$-partitioning function of $\N$, which induces a partition of the elements of $\D\subseteq\DX^{\mathrm{ord}}$ that divides the databases only taking the order into account, i.e., such that $p(n,x)=p(n,y)$ for all $x,y\in\X$. Then $p$ is $d_{\D}^{\B}$-compatible and verifies $\Delta p_i\leq 1$ (see proof in \Cref{prop:OrderBasedCompatiblePartitionBounded}).
    Therefore, we can obtain the best bound for bounded in this case using \Cref{th:GeneralizedIPCVariableDomain}. 
\end{example}

\subsection{Common-Domain Setting}\label{sec:CommonDomainIndependent}
The common-domain setting relates to the perspective in  which $\M_i=\M^*_i\circ f_i$ are $d_i$-private instead of $\M^*_i$, i.e., $\M_i$ and $\M$ have the same ``common'' domain $\D$. This change provides new composition rules that allow us to obtain better privacy bounds. Importantly,  when we impose the privacy constraints in $\M^*_i$, in the case where $d_i(D,D')$ are well-defined and finite, we can still end up with $d_{\D}(D,D')=\infty$, as we saw in \Cref{ex:BoundedParallelFail}. However, if $\M_i$ are $d_i$-private, we can bound the privacy loss by at least $d_\D(D,D')=\sum^k_{i=1}d_i(D,D')<\infty$, avoiding this problem.

In this scenario, while $\M_i$ can protect any database of $D\in\D$, the computation of $\M_i$ depends exclusively on the information of contained $f_i(D)$ and not the total information of $D$. 
The exclusive dependence of $\M_i$ on specific information improves the privacy guarantee and gives better privacy-loss bounds. To be able to analyze the composition in this setting, we present a coherent formalization of ``depending exclusively on $f_i(D)$'' under the notion of \textit{dependency}:

\begin{definition}[Dependency]
    Let $\M\colon\D\to\S$ be a randomized mechanism, and let $f$ be a deterministic map with domain $\D$.
    We say that $\M$ is \textit{$f$-dependent} if there exists $\M^*\colon f(\D)\to\S$ such that $\M=\M^*\circ f$.
\end{definition}
This definition implies that $\Prob\{\M^*(f(D))\in S\}=\Prob\{\M(D)\in S\}$ for all measurable $S\subseteq\S$. Since $\M^*(f(D))$ depends exclusively on $f(D)$, consequently $\M(D)$ depends exclusively on the information in $f(D)$ for all $D\in\D$ (i.e., only data in $f(D)$ affects the output of $\M(D)$). 
\begin{example}[Dependency]\label{ex:dependency}
    Let us revisit the scenario of \Cref{ex:IC_mixdist}. 
    For $i\in[2]$, we define $\M_{i}\colon\DX\to \N$ such that it outputs the noisy participants count from district $i$ in $D$, i.e., $\M_i(D)=\sum_{x\in\X_i} m_{D}(x)+z$
    with $z\sim\Lap(\frac{1}{\varepsilon_i})$ (note it is the Laplace mechanism). 
    Mechanisms $\M_{i}$ are $p_{i}$-dependent, since there exists $\M^*_i(D)=|D|+z$ such that $\M_{i}=\M^*_i\circ p_{i}$. This means that, even though $\M$ takes as input the whole database $D$, it just needs to see the information contained in subset $p_i(D)$ to know how many locations belong to district $i$. 
\end{example}

Under this definition, we arrive at the following result:
\begin{proposition}[restate = PROPmininumprivacy, name = Minimum privacy]\label{prop:MinimumPrivacy}
    Let $(\D,d_\D)$ be a privacy space, let $f$ be a deterministic map with domain $\D$, and let $\M\colon\D\to\S$ be a $d_{\D}$-private mechanism. If $\M$ is $f$-dependent, then $\M$ is $d_{\D}^{f}$-private* with
    \[
        d_{\D}^f(D,D') \coloneqq \min_{\substack{\tilde{D},\tilde{D}'\in\D \\ f(\tilde{D})=f(D)\\ f(\tilde{D}')=f(D')}} d_\D(\tilde{D},\tilde{D}').
    \]
\end{proposition}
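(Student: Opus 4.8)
The plan is to unwind the definition of $f$-dependency, observe that writing $\M=\M^*\circ f$ forces all databases lying in the same fiber of $f$ to receive the \emph{same} output distribution, and then minimize the $d_\D$-privacy guarantee of $\M$ over those fibers. No privacy machinery beyond \Cref{def:d-privacy} is needed: the argument is essentially a one-line optimization once the fiber observation is in place.

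Concretely, I would fix $D,D'\in\D$ and a measurable $S\subseteq\Range(\M)$, and take an arbitrary pair $\tilde D,\tilde D'\in\D$ with $f(\tilde D)=f(D)$ and $f(\tilde D')=f(D')$; the pair $(D,D')$ itself witnesses that this set is non-empty, so $d_\D^f(D,D')$ is well defined. Since $\M$ is $f$-dependent, write $\M=\M^*\circ f$; then
\[
    \Prob\{\M(D)\in S\}=\Prob\{\M^*(f(D))\in S\}=\Prob\{\M^*(f(\tilde D))\in S\}=\Prob\{\M(\tilde D)\in S\},
\]
and in the same way $\Prob\{\M(D')\in S\}=\Prob\{\M(\tilde D')\in S\}$. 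Applying the $d_\D$-privacy of $\M$ to the pair $(\tilde D,\tilde D')$ gives $\Prob\{\M(\tilde D)\in S\}\le\e^{d_\D(\tilde D,\tilde D')}\Prob\{\M(\tilde D')\in S\}$, and substituting the two fiber identities turns this into $\Prob\{\M(D)\in S\}\le\e^{d_\D(\tilde D,\tilde D')}\Prob\{\M(D')\in S\}$. Since this holds for \emph{every} admissible $(\tilde D,\tilde D')$, taking the infimum over such pairs on the right-hand side and using that $t\mapsto\e^t$ is nondecreasing yields $\Prob\{\M(D)\in S\}\le\e^{d_\D^f(D,D')}\Prob\{\M(D')\in S\}$, which is exactly $d_\D^f$-privacy.

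There is no real obstacle here, but two points deserve a remark. First, $d_\D^f$ need not be an extended pseudometric: it inherits symmetry and $d_\D^f(D,D)=0$ from $d_\D$ (take $\tilde D=\tilde D'=D$), but the triangle inequality can fail because the minimizing witnesses for $(D,D')$ and $(D',D'')$ need not agree on the intermediate database — this is precisely why the statement is phrased as ``$d_\D^f$-private*'' rather than claiming $d_\D^f$-privacy in the strict metric sense, and I would flag it explicitly. Second, since $(D,D')$ is always admissible we have $d_\D^f\le d_\D$ pointwise, so the conclusion is genuinely a (weakly) tighter guarantee than the hypothesis, which justifies the name ``minimum privacy''; and if one wants to read $\min$ literally rather than as $\inf$, the proof is unchanged provided the infimum is attained, and otherwise ``$\min$'' should simply be read as ``$\inf$'', which affects nothing downstream.
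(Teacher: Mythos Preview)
Your proof is correct and follows essentially the same approach as the paper's: both use $f$-dependency to identify the output distributions along fibers of $f$ and then invoke $d_\D$-privacy on a pair of fiber representatives, the only cosmetic difference being that the paper picks a minimizing pair directly while you work with an arbitrary admissible pair and infimize at the end. Your additional remarks about $d_\D^f$ failing the triangle inequality and about reading $\min$ as $\inf$ are accurate and in line with what the paper notes elsewhere.
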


Note that $d^f_\D$ is not necessarily a metric\footnote{It does not generally fulfill the triangle inequality.} (thus we call it \textit{$d$-privacy*}).
However, it gives an accurate value for the distance between the probability distributions of the output given two input databases. Since $d_{\D}^f\leq d_\D$,  having the dependency constraint in a mechanism can imply more privacy. This way, the privacy loss is chosen as the minimum with respect to the dependent data $f(D)$, and not $D$. In particular, if $f(D)=f(D')$ for a pair $D,D'\in\D$, then $d_{\D}^{f}(D,D')=0$. Furthermore, it is possible to find metrics $d$ in-between these, i.e., $d^f_\D\leq d\leq d_\D$.

Applying \Cref{prop:MinimumPrivacy} to the IC theorem~(\ref{th:ICTheoremVariableDomain}), we obtain:

\begin{theorem}[restate = THictheoremcommondomain, name = IC theorem for common domain]\label{th:ICTheoremCommonDomain}
    For $i\in[k]$, let $(\D,d_i)$ be a privacy space, and let $f_i$ be a deterministic map over $\D$.
    For all $i\in[k]$, let $\M_i\colon\D\to\S_i$ be mutually independent mechanisms satisfying $d_i$-privacy and $f_i$-dependency.
    Then mechanism $\M=(\M_1,\dots,\M_k)_{\ind}$ is $d_\D$-private* with $d_\D\coloneqq\sum^k_{i=1} d_i^{f_i}$.
\end{theorem}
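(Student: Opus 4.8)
The plan is to combine the IC theorem~(\ref{th:ICTheoremVariableDomain}) with the Minimum privacy proposition~(\ref{prop:MinimumPrivacy}) in a modular way. First I would note that each $\M_i\colon\D\to\S_i$ is $f_i$-dependent, so by definition there exists $\M^*_i\colon f_i(\D)\to\S_i$ such that $\M_i=\M^*_i\circ f_i$. The equality $\Prob\{\M^*_i(f_i(D))\in S\}=\Prob\{\M_i(D)\in S\}$ immediately transfers $d_i$-privacy of $\M_i$ on $\D$ to $d'_i$-privacy of $\M^*_i$ on $\D_i\coloneqq f_i(\D)$, where $d'_i$ is the metric on $\D_i$ defined by $d'_i(f_i(D),f_i(D'))\coloneqq\min\{d_i(\tilde D,\tilde D')\mid f_i(\tilde D)=f_i(D),\,f_i(\tilde D')=f_i(D')\}$ — this is exactly the construction underlying \Cref{prop:MinimumPrivacy}, and one checks $\M^*_i$ satisfies it because for the minimizing pair $\tilde D,\tilde D'$ one has $\Prob\{\M^*_i(f_i(D))\in S\}=\Prob\{\M_i(\tilde D)\in S\}\le\e^{d_i(\tilde D,\tilde D')}\Prob\{\M_i(\tilde D')\in S\}=\e^{d'_i(f_i(D),f_i(D'))}\Prob\{\M^*_i(f_i(D'))\in S\}$.

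Next I would apply the IC theorem~(\ref{th:ICTheoremVariableDomain}) with this data: the database class $\D$, the privacy spaces $(\D_i,d'_i)$, the maps $f_i\colon\D\to\D_i$, and the mutually independent $d'_i$-private mechanisms $\M^*_i$. (Mutual independence of $\{\M^*_i(f_i(D))\}_i$ for each fixed $D$ follows from that of $\{\M_i(D)\}_i$, since these random elements are equal in distribution.) The IC theorem then yields that $\M=(\M^*_1\circ f_1,\dots,\M^*_k\circ f_k)_{\ind}=(\M_1,\dots,\M_k)_{\ind}$ is $d_\D$-private with $d_\D(D,D')=\sum_{i=1}^k d'_i(f_i(D),f_i(D'))$. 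Finally I would observe that $d'_i(f_i(D),f_i(D'))$ is by construction precisely $d_i^{f_i}(D,D')$ as defined in \Cref{prop:MinimumPrivacy}, so $d_\D=\sum_{i=1}^k d_i^{f_i}$, which is the claimed bound. Since each $d_i^{f_i}$ may fail the triangle inequality, the result is stated as $d$-privacy* rather than $d$-privacy, consistently with the proposition.

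The only genuinely delicate point is the interplay between the minimum taken inside $d_i^{f_i}$ and the fact that the IC theorem is normally proved for honest metrics. Concretely, one must make sure that the proof of \Cref{th:ICTheoremVariableDomain} (or its statement) goes through verbatim when the "metrics" on the $\D_i$ are the possibly-non-metric functions $d'_i$; this is fine because the only property of $d_i$ used in independent composition is the defining inequality~\eqref{eq:d-privacy} for $\M^*_i$, never the triangle inequality, so the argument is insensitive to this relaxation. An equally clean alternative, which I would mention, is to skip the intermediate mechanisms entirely: apply the IC theorem directly to get $d_\D=\sum_i d_i\circ(f_i\times f_i)$-privacy of $\M$ — wait, that is not what dependency buys us — so instead apply \Cref{prop:MinimumPrivacy} separately to each $\M_i$ to get $\M_i$ is $d_i^{f_i}$-private*, then run \Cref{th:GeneralizedISC} over the common domain $\D$ with the functions $d_i^{f_i}$ in place of metrics; the same remark about only needing~\eqref{eq:d-privacy} applies. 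Either route, the substantive content is the dependency-to-minimum reduction of \Cref{prop:MinimumPrivacy}, and the composition step is then immediate.
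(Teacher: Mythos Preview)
Your proposal is correct and matches the paper's approach: the paper's proof is the one-liner ``apply \Cref{prop:MinimumPrivacy} to get each $\M_i$ is $d_i^{f_i}$-private*, then run the proof of the IC theorem~(\ref{th:ICTheoremVariableDomain}) verbatim,'' which is precisely your second route, and your first route is an equivalent repackaging of the same two ingredients. You also correctly flag the one point that needs care---that the composition step only uses the defining inequality~\eqref{eq:d-privacy} and not the triangle inequality---which is exactly why the paper says ``an analogous proof'' rather than citing \Cref{th:ICTheoremVariableDomain} directly.
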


We can also bound the result with
\[
    d_\D(D,D') = \sum^k_{i=1} d_i^{f_i}(D,D') \leq 
    \sum_{i\,:\,f_i(D)\neq f_i(D')} d_i(D,D'),
\]
which are not metrics, but are better bounds than $\sum^k_{i=1} d_i$ given by the IC theorem~(\ref{th:ICTheoremVariableDomain}). Translating this result to the case of granularities, if we take $\M_i$ to be $\G$ $\varepsilon_i$-DP (i.e., $\varepsilon_i d^\G_\D$-private), we obtain that $\M$ is $\G$ $\varepsilon$-DP (i.e., $\varepsilon d^\G_\D$-private) with
\[
    \varepsilon = \max_{D\neigh_\G D'} 
    \sum_{i\,:\,f_i(D)\neq f_i(D')} \varepsilon_i.
\]

\Cref{th:ICTheoremCommonDomain} allows us to obtain the corresponding cases, corollaries, and examples to those we obtained from the IC theorem~(\ref{th:ICTheoremVariableDomain}) for this new setting. In some cases, such as taking $f_i=\id$ for all $i\in[k]$, correspond to the same result (\Cref{th:GeneralizedISC}), since $d^{\id}_{\D}=d_{\D}$. In others, however, the change of setting leads to a different scenario and results, such as when trying to find the best bound for disjoint inputs (i.e., the counterpart of \Cref{sec:GeneralizationOfIndependentParallelComposition}).

The corresponding question of \Cref{sec:GeneralizationOfIndependentParallelComposition} translates as follows: Given $k$ mechanisms $\M_i\colon\D\to\S$ that are $d_i$-private with $d_i=\varepsilon_i d$ for a metric $d$ over $\D$ and $p_i$-dependent with $p$ an arbitrary partitioning function, we are interested in studying the conditions such that $\M=(\M_1,\dots,\M_k)_{\ind}$ is $d_\D$-private with
$d_\D=(\max_{i\in[k]}\varepsilon_i)d$. 

The natural approach is to check when metric $d$ verifies
\begin{equation}\label{eq:CommutativityCommon}
    \sum^k_{i=1} d^{p_i}(D,D') = d(D,D')
\end{equation}
for all $D,D'\in\D$, since then $d_\D=\max_{i\in[k]}\varepsilon_i d$ follows from \Cref{th:ICTheoremCommonDomain}.

\Cref{eq:CommutativityCommon} can be hard to check directly, but we can give sufficient conditions for it when $d=d^\G_\D$, the canonical distance of a granularity notion. Here, it is sufficient to ask that the partition is $d^\G_\D$-compatible.

\begin{theorem}[restate = THgeneralizedIPCcommondomain, name = IC best bound for disjoint inputs (common domain)]\label{th:GeneralizedIPCCommonDomain}
    Let $\D$ be a database class and $\G$ a granularity over $\D$. Let $p$ be a $d^\G_\D$-compatible $k$-partitioning function. For all $i\in[k]$, let $\M_i\colon\D\to\S_i$ be mutually independent mechanisms satisfying $\varepsilon_id^\G_\D$-privacy and $p_i$-dependency. 
    Then mechanism $\M=(\M_1,\dots,\M_k)_{\ind}$ is $\varepsilon d^\G_\D$-private with $\varepsilon=\max_{i\in[k]} \varepsilon_i$.

\end{theorem}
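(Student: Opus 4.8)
The plan is to derive Theorem~\ref{th:GeneralizedIPCCommonDomain} as a direct specialization of the common-domain IC theorem~(\ref{th:ICTheoremCommonDomain}), the way the surrounding text advertises. First I would invoke \Cref{th:ICTheoremCommonDomain} with $f_i = p_i$ and $d_i = \varepsilon_i d^\G_\D$: since each $\M_i$ is $\varepsilon_i d^\G_\D$-private and $p_i$-dependent, the composed mechanism $\M=(\M_1,\dots,\M_k)_{\ind}$ is $d_\D$-private* with $d_\D = \sum_{i=1}^k (\varepsilon_i d^\G_\D)^{p_i}$. The entire task then reduces to showing that this $d$-privacy* bound is dominated by the genuine metric $\varepsilon d^\G_\D$ with $\varepsilon = \max_{i\in[k]}\varepsilon_i$; once we have $\sum_{i=1}^k (\varepsilon_i d^\G_\D)^{p_i}(D,D') \le \varepsilon\, d^\G_\D(D,D')$ pointwise, $d$-privacy with the larger metric follows immediately (a mechanism satisfying a smaller $d$-privacy* bound satisfies every larger $d$-privacy, exactly as in \Cref{sec:RelationsMetrics}).

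Next I would reduce the pointwise inequality to the case of $\G$-neighboring databases and then lift it by the group property. For $D \neigh_\G D'$ we have $d^\G_\D(D,D')=1$, and $d^\G_\D$-compatibility gives $I_p(D,D')=1$, i.e.\ there is at most one index $j$ with $p_j(D)\neq p_j(D')$; for every $i\neq j$ we get $(\varepsilon_i d^\G_\D)^{p_i}(D,D')=0$ because $p_i(D)=p_i(D')$ forces the minimum in \Cref{prop:MinimumPrivacy} to be $0$ (take $\tilde D = \tilde D'$). For the single index $j$, $(\varepsilon_j d^\G_\D)^{p_j}(D,D') \le \varepsilon_j d^\G_\D(D,D') = \varepsilon_j \le \varepsilon$. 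Hence $\sum_{i=1}^k (\varepsilon_i d^\G_\D)^{p_i}(D,D') \le \varepsilon = \varepsilon\, d^\G_\D(D,D')$ for all $\G$-neighboring pairs. For a general pair $D,D'$ with $d^\G_\D(D,D')=m$, take a shortest $\G$-path $D=D_0 \neigh_\G D_1 \neigh_\G \cdots \neigh_\G D_m = D'$; applying the neighboring bound along the path together with the triangle inequality for $d$-privacy* is delicate because $d^{p_i}$ need not satisfy the triangle inequality, so instead I would argue directly at the level of probabilities: chaining \eqref{eq:d-privacy}-type inequalities for $\M$ across the path gives $\Prob\{\M(D)\in S\}\le \e^{\varepsilon m}\Prob\{\M(D')\in S\} = \e^{\varepsilon d^\G_\D(D,D')}\Prob\{\M(D')\in S\}$, which is precisely $\varepsilon d^\G_\D$-privacy. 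When $d^\G_\D(D,D')=\infty$ the statement is vacuous.

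The main obstacle I anticipate is the bookkeeping around $d$-privacy* versus genuine metrics: $\sum_i (\varepsilon_i d^\G_\D)^{p_i}$ is only a $d$-privacy* bound (no triangle inequality), so I cannot simply say ``$d$-privacy* with a smaller quantity implies $d$-privacy with a larger metric'' by pointwise comparison and be done — I need the comparison to hold against an honest metric and then re-derive the multiplicative DP guarantee along $\G$-paths from the neighboring case. The cleanest route is to prove the neighboring-pair inequality, conclude $\M$ satisfies the one-step guarantee $\Prob\{\M(D)\in S\}\le \e^{\varepsilon}\Prob\{\M(D')\in S\}$ for $D\neigh_\G D'$, i.e.\ $\M$ is $\G$ $\varepsilon$-DP by definition, and then invoke \Cref{th:d-privacyToDP} to restate this as $\varepsilon d^\G_\D$-privacy. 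A secondary point worth a line is confirming that $(\varepsilon_j d^\G_\D)^{p_j}(D,D') \le \varepsilon_j d^\G_\D(D,D')$ in general, which is just $d^{f}_\D \le d_\D$ from \Cref{prop:MinimumPrivacy} applied to the metric $\varepsilon_j d^\G_\D$ — here one should note that $d^\G_{\D}$ is used throughout (common domain), not $d^\G_{\D_i}$, so no sensitivity hypothesis on $p_i$ is needed, which is exactly why this theorem drops the $\Delta p_i\le 1$ assumption present in \Cref{th:GeneralizedIPCVariableDomain}.
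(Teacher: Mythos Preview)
Your proposal is correct and follows essentially the same approach as the paper: invoke \Cref{th:ICTheoremCommonDomain} with $f_i=p_i$ and $d_i=\varepsilon_i d^\G_\D$, use $d^\G_\D$-compatibility to show that on $\G$-neighboring pairs only one term survives and is bounded by $\varepsilon_j\le\varepsilon$, conclude $\G$ $\varepsilon$-DP, and then translate back via \Cref{th:d-privacyToDP}. The paper is slightly more streamlined in that it starts by invoking \Cref{th:d-privacyToDP} up front (so it only ever needs the neighboring case), whereas you first discuss the pointwise inequality for general pairs before realizing the neighboring-then-group-privacy route is the clean one; but the substantive argument is identical.
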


Note that in this case, it is not necessary to impose ``$\Delta p_i\leq 1$'', which was necessary for our previous  theorem~(\ref{th:GeneralizedIPCVariableDomain}). \Cref{th:GeneralizedIPCCommonDomain} is therefore also a consequence of preprocessing (\Cref{prop:preprocessing}) applied to \Cref{th:GeneralizedIPCVariableDomain}.

\subsection{A Better Composition for the Bounded Case over Disjoint Databases}\label{sec:BestBounded}

The strict conditions necessary  to obtain the $\max_{i\in[k]}\varepsilon_i$ bound in \Cref{th:GeneralizedIPCCommonDomain,th:GeneralizedIPCVariableDomain} cannot be achieved in the bounded case for partitions of \Cref{ex:partition}, because they are not $d^{\B}_\D$-compatible in general. This is also true for other granularities, especially those based on the bounded notion. However, even if \Cref{th:GeneralizedIPCCommonDomain,th:GeneralizedIPCVariableDomain} do not apply,
we can still compute the best-case bound when considering a partition of the database.

In this subsection, we briefly discuss how we can bound the minimum privacy budget consumed when taking a partition of the databases using \Cref{th:ICTheoremCommonDomain}. We thus provide a solution to the problem posed by \citeauthor{li2016Differential}~\cite{li2016Differential}, obtaining a tight bound for composition over disjoint databases in bounded DP (when taking a partition of \Cref{ex:partition}), which was previously missing.

\begin{corollary}[restate = COboundedparallel, name = ]\label{th:BoundedParallel}
    Let $p$ be a $k$-partitioning function of \Cref{ex:partition}. For all $i\in[k]$, let $\M_i\colon\D\to\S_i$ be mutually independent mechanisms satisfying bounded $\varepsilon_i$-DP and $p_i$-dependent.
    Then mechanism $\M=(\M_1,\dots,\M_k)_{\ind}$ with domain $\D$ is bounded $\varepsilon$-DP with $\varepsilon=\max_{i,j\in[k];\,i\neq j} (\varepsilon_i+\varepsilon_j)$.
\end{corollary}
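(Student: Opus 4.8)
The plan is to apply the IC theorem for common domain (Theorem~\ref{th:ICTheoremCommonDomain}) and then bound the resulting $d$-privacy* quantity $d_\D = \sum_{i=1}^k d_i^{p_i}$ explicitly when $d_i = \varepsilon_i d^\B_\D$ and $p$ is a partitioning function of \Cref{ex:partition}. First I would invoke Theorem~\ref{th:ICTheoremCommonDomain} with $f_i = p_i$: since the $\M_i$ are mutually independent, $\varepsilon_i d^\B_\D$-private, and $p_i$-dependent, the composed mechanism $\M$ is $d_\D$-private* with $d_\D = \sum_{i=1}^k (\varepsilon_i d^\B_\D)^{p_i}$. Then I would use the bound already recorded in the excerpt right after that theorem, namely $d_\D(D,D') \le \sum_{i:\,p_i(D)\neq p_i(D')} \varepsilon_i d^\B_\D(D,D')$, so that it suffices to show that for every pair of bounded-neighboring $D \neigh_\B D'$ at most two of the partition pieces change, i.e.\ $I_p(D,D') = \#\{i : p_i(D)\neq p_i(D')\} \le 2$, and hence $\sum_{i:\,p_i(D)\neq p_i(D')}\varepsilon_i \le \max_{i\neq j}(\varepsilon_i+\varepsilon_j)$.

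The combinatorial core is the claim $I_p(D,D') \le 2$ for $D \neigh_\B D'$. Here $D$ and $D'$ are bounded-neighboring, so $D\triangle D' = \{x, x'\}$ with $x \in D\setminus D'$ and $x' \in D'\setminus D$ and $|D| = |D'|$; in the partitioning function of \Cref{ex:partition}, record $x$ lies in exactly one block $\X_j$ and record $x'$ in exactly one block $\X_l$. For every index $i \notin \{j,l\}$, the multiplicities $m_{p_i(D)}(y) = \mathbf{1}_{\X_i}(y)m_D(y)$ and $m_{p_i(D')}(y) = \mathbf{1}_{\X_i}(y)m_{D'}(y)$ agree for all $y$ (since $D$ and $D'$ differ only at $x$ and $x'$, neither of which is in $\X_i$), so $p_i(D) = p_i(D')$. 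Thus only blocks $j$ and $l$ can change, giving $I_p(D,D') \le 2$, and if $j = l$ then $I_p(D,D') \le 1$.

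Finishing: from $\sum_{i:\,p_i(D)\neq p_i(D')}\varepsilon_i \le \varepsilon_j + \varepsilon_l \le \max_{i\neq j}(\varepsilon_i + \varepsilon_j) =: \varepsilon$ and $d^\B_\D(D,D') = 1$, we get $d_\D(D,D') \le \varepsilon$ for all bounded-neighboring $D, D'$; by the group-privacy characterization (\Cref{prop:GroupPrivacy} generalized via \Cref{th:d-privacyToDP}) this upgrades to $d_\D \le \varepsilon d^\B_\D$ on all of $\D$, so $\M$ is bounded $\varepsilon$-DP. I would also remark, to match the "lowest possible" phrasing in the introduction, that this bound is tight: a variant of \Cref{ex:BoundedParallelFail} restricted so that counts are released with nontrivial Laplace noise on two adjacent blocks realizes $\varepsilon_i + \varepsilon_j$ exactly (the counting mechanisms on blocks $j$ and $l$ each shift by one record when moving from $D$ to $D'$). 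The main obstacle is not any single step but making sure the $d$-privacy* object from \Cref{prop:MinimumPrivacy} is handled correctly — it is not a metric, so one must argue the final upgrade to a genuine $d^\B_\D$-privacy statement goes through the pointwise bound on bounded-neighbors plus the group-privacy equivalence rather than through any triangle-inequality argument on $d_\D$ itself.
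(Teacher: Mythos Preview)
Your proof is correct and follows essentially the same route as the paper: invoke \Cref{th:ICTheoremCommonDomain} with $d_i=\varepsilon_i d^\B_\D$ and $f_i=p_i$, then argue that for bounded-neighbors $D\triangle D'=\{x,x'\}$ at most the two blocks containing $x$ and $x'$ can change, giving $d_\D(D,D')\le\varepsilon_j+\varepsilon_l\le\max_{i\neq j}(\varepsilon_i+\varepsilon_j)$ on neighbors and hence bounded $\varepsilon$-DP. One cosmetic point: the line ``this upgrades to $d_\D\le\varepsilon d^\B_\D$ on all of $\D$'' is not literally what you obtain (nor needed)---the neighbor-level bound already gives bounded $\varepsilon$-DP by definition, and \Cref{th:d-privacyToDP} then yields $\varepsilon d^\B_\D$-privacy of $\M$, not a pointwise inequality on the non-metric $d_\D$; the paper also spells out the sub-cases where $x$ or $x'$ lie outside every $\X_i$, but for a genuine partition of $\X$ as in \Cref{ex:partition} your simplification suffices.
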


Note that this result is stated for common domain, and that the non--common-domain counterpart cannot be defined as we prove in \Cref{ex:BoundedParallelFail}. 

Also, note that returning to the DP formulation increases the tightness of the bound with respect to the direct statement using $d^{\B}_\D$-privacy. In this case, the best bound is $\sum^k_{i=1} \varepsilon_i d^{\B,p_i}_\D$. To showcase the improvement we add the following example:

\begin{example}\label{ex:better-bound}
    We continue from \Cref{ex:dependency} but considering $k>2$ districts instead of two. We already showed that $\M_{i}$ are $p_{i}$-dependent. Besides, $\M_{i}$ are $\varepsilon_id^{\B}_{\DX}$-private because they are Laplace mechanisms. Applying \Cref{th:BoundedParallel}, we have that $\M=(\M_1,\dots,\M_n)_{\ind}$ is $\varepsilon d^{\B}_{\DX}$-private for $\varepsilon=\max_{i,j\in[k];\,i\neq j} (\varepsilon_i+\varepsilon_j)$. Particularly, given $D\neigh_{\B} D'$ with $D\triangle D'=\{x,x'\}$, $D$ and $D'$ are indistinguishable up to $\varepsilon_i$ if $x$ and $x'$ are in the same district~$i$, and up to $\varepsilon_i+\varepsilon_j$ if they are in different districts $i\neq j$. Note that this is a much better bound than applying sequential composition directly, which would give us that $D\neigh_{\B}D'$ are indistinguishable up to $\sum^k_{i=1}\varepsilon_i>\max_{i,j\in[k];\,i\neq j} (\varepsilon_i+\varepsilon_j)$.
\end{example}

We conclude this section with a small result obtained by applying \Cref{prop:MinimumPrivacy} to bounded DP in~$\DX$.
\begin{corollary}[restate = COdependencyunboundedbounded, name = ]\label{co:DependencyUnboundedBounded}
    Let $\DX$ be a database universe, $\mathcal{Y}\subsetneq\X$ and $f\colon\DX\to\D_{\mathcal{Y}}$ such that $f(D)=D\cap\mathcal{Y}$. Let $\M\colon\D_{\X}\to\S$ be a $d^{\B}_{\DX}$-private mechanism that is $f$-dependent. Then,
    $\M$ is $d_{\DX}$-private* with
    \begin{align*}
        d_{\DX}(D,D')\coloneqq{}\!\min\{d^{\B}_{\DX}(D,D'), |f(D)\triangle f(D')|\}
        \leq \!\min\{d^{\B}_{\D_{\X}}(D,D'),d^{\U}_{\D_{\X}}(D,D')\}.
    \end{align*}
\end{corollary}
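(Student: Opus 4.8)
The plan is to obtain this as a direct application of the Minimum privacy proposition (\Cref{prop:MinimumPrivacy}) to the metric $d^\B_{\DX}$, followed by an elementary over-approximation of the resulting (non-metric) privacy-loss function. Since $\M$ is $f$-dependent we may write $\M=\M^*\circ f$, so \Cref{prop:MinimumPrivacy} applied with $d_\D=d^\B_{\DX}$ yields at once that $\M$ is $(d^\B_{\DX})^f$-private*, where
\[
    (d^\B_{\DX})^f(D,D') = \min\{\, d^\B_{\DX}(\tilde D,\tilde D') : \tilde D,\tilde D'\in\DX,\ f(\tilde D)=f(D),\ f(\tilde D')=f(D') \,\}.
\]
Because $d$-privacy* only asserts an inequality and is therefore monotone in its exponent, it suffices to prove the pointwise bound $(d^\B_{\DX})^f(D,D')\le d_{\DX}(D,D')=\min\{d^\B_{\DX}(D,D'),\,|f(D)\triangle f(D')|\}$; the asserted mechanism inequality and the further estimate $d_{\DX}\le\min\{d^\B_{\DX},d^\U_{\DX}\}$ then follow easily.

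For the bound on $(d^\B_{\DX})^f$ I would exhibit two concrete witness pairs. Choosing $\tilde D=D$ and $\tilde D'=D'$ gives $(d^\B_{\DX})^f(D,D')\le d^\B_{\DX}(D,D')$. For the other term, set $Y\coloneqq f(D)=D\cap\mathcal{Y}$ and $Y'\coloneqq f(D')=D'\cap\mathcal{Y}$; this is where the hypothesis $\mathcal{Y}\subsetneq\X$ is used: fix some $c\in\X\setminus\mathcal{Y}$, put $N\coloneqq\max\{|Y|,|Y'|\}$, and pad to $\tilde D\coloneqq Y\cup\{c^{\,N-|Y|}\}$ and $\tilde D'\coloneqq Y'\cup\{c^{\,N-|Y'|}\}$ (adding copies of $c$). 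Since $c\notin\mathcal{Y}$ we get $f(\tilde D)=Y$, $f(\tilde D')=Y'$, and $|\tilde D|=|\tilde D'|=N$; and since $Y$, $Y'$ and the $c$-padding have pairwise disjoint supports, $|\tilde D\triangle\tilde D'|=|Y\triangle Y'|+\bigl||Y|-|Y'|\bigr|\le 2|Y\triangle Y'|$, using $\bigl||Y|-|Y'|\bigr|\le|Y\triangle Y'|$. As $\tilde D$ and $\tilde D'$ have equal size (so $|\tilde D\triangle\tilde D'|$ is even), one can transform one into the other by changing a single record at a time, i.e.\ along a path of $\B$-neighbors of length $|\tilde D\triangle\tilde D'|/2$, whence $d^\B_{\DX}(\tilde D,\tilde D')\le|\tilde D\triangle\tilde D'|/2\le|Y\triangle Y'|=|f(D)\triangle f(D')|$. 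Combining the two witnesses gives $(d^\B_{\DX})^f(D,D')\le d_{\DX}(D,D')$, so $\M$ is $d_{\DX}$-private*.

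It then remains to check $d_{\DX}\le\min\{d^\B_{\DX},d^\U_{\DX}\}$ pointwise. The bound by $d^\B_{\DX}$ is immediate from the definition of the minimum. For $d^\U_{\DX}$, recall $d^\U_{\DX}(D,D')=|D\triangle D'|$ (\Cref{prop:GroupPrivacy}); since the multiset $f(D)\triangle f(D')=(D\cap\mathcal{Y})\triangle(D'\cap\mathcal{Y})$ has multiplicity $|m_D(x)-m_{D'}(x)|=m_{D\triangle D'}(x)$ at every $x\in\mathcal{Y}$ and multiplicity $0$ at every $x\notin\mathcal{Y}$, we obtain $|f(D)\triangle f(D')|\le|D\triangle D'|=d^\U_{\DX}(D,D')$, hence $d_{\DX}(D,D')\le|f(D)\triangle f(D')|\le d^\U_{\DX}(D,D')$.

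The step I expect to be the crux is the construction of the witness pair $(\tilde D,\tilde D')$ for the minimum defining $(d^\B_{\DX})^f$: one has to enlarge the two databases using records outside $\mathcal{Y}$ — which is possible precisely because $\mathcal{Y}\subsetneq\X$ — so that they acquire the same size (making $d^\B_{\DX}$ between them finite), while keeping the induced bounded distance no larger than $|f(D)\triangle f(D')|$. The only quantitative content is that the unavoidable size-mismatch cost $\bigl||Y|-|Y'|\bigr|$ is dominated by $|Y\triangle Y'|$; everything else is bookkeeping with multiset operations and with the already-established facts (\Cref{prop:MinimumPrivacy} and \Cref{prop:GroupPrivacy}).
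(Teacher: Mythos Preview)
Your proof is correct and follows essentially the same approach as the paper: apply \Cref{prop:MinimumPrivacy}, then exhibit witness pairs by padding $f(D)$ and $f(D')$ with an element of $\X\setminus\mathcal{Y}$ to equalize sizes, and use that equal-size databases satisfy $d^\B_{\DX}=|{\cdot}\triangle{\cdot}|/2$; the paper packages the padding step as a separate lemma (\Cref{LE:CorollaryBoundedUnbounded}) but the computation is the same. One small slip: $Y$ and $Y'$ need not have \emph{pairwise} disjoint supports (they may overlap inside $\mathcal{Y}$); what you actually use is that the $c$-padding has support disjoint from both $Y$ and $Y'$, which is enough for the identity $|\tilde D\triangle\tilde D'|=|Y\triangle Y'|+\bigl||Y|-|Y'|\bigr|$ by a direct multiplicity computation.
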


\section{The Adaptive Composition Theorem}\label{sec:GeneralizingAdaptive}
In the previous section, we elaborated on composability when we apply mechanisms that work independently from each other, obtaining the IC theorem~(\ref{th:ICTheoremVariableDomain}). However, the question remains open on how composition works in the adaptive scenario, where each mechanism can also take as input the output of the previous mechanisms. In this section, we discuss adaptive composition, which is a generalization of independent composition, and provide the adaptive counterparts to the theorems of \Cref{sec:IndependentComposition}. 

To be precise, we formalize the adaptive-composed mechanism as follows:
\begin{definition}[Adaptive-composed mechanism]\label{def:AdaptiveComposedMechanism}
    For $i\in[k]$, let $\overS_{i}\coloneqq\S_1\times\cdots\times\S_{i-1}$ (where $\overS_1=\varnothing$), and let $\M_i\colon\overS_{i}\times\D\to\S_i$ be randomized mechanisms. We define the \textit{adaptive-composed mechanism} $\M\coloneqq(\M_1,\dots,\M_k)_{\adapt}$ as the mechanism with domain $\D$ such that $\M(D)=(\MM_1(D),\dots,\MM_k(D))$ for all $D\in\D$, where $\MM_i(D)$ are defined recursively as $\MM_{i}(D)=\M_{i}(\MM_{1}(D),\dots,\MM_{i-1}(D),D)$ for $i\in[k]$ (where $\MM_1=\M_1$).
\end{definition}

In other words, given $D\in\D$, $\M$ first draws $D_1$ following the distribution of $\M_1(D)$; then, $\M$ draws $D_i$ following the distribution of $\M_i(D_{1},\dots,D_{i-1},D)$ for each $i=2,\dots,k$ in order. At the end, $\M$ outputs $\M(D)=(D_1,\dots,D_k)$.

Note that adaptive-composed mechanisms are more general than independent-composed mechanisms, corresponding to the case where $\M_i$ are mutually independent and, in particular, constant over $\overS_i$.

We directly define the adaptive composition (AC) theorem. Similar to the independent results, this result does not impose any conditions on the privacy level of the initial mechanisms $\M_i$.

\begin{theorem}[restate = THactheoremvariabledomain, name = AC theorem]\label{th:ACTheoremVariableDomain}
    Let $\D$ be a database class, and, for all $i\in[k]$, let $(\D_i,d_{i})$ be a privacy space, $f_i\colon\D\to\D_i$ a deterministic map and $f^*_i = \id_{\overS_i}\times f_i$ (with $f^*_1=f_1$). 
    
    For $i\in[k]$, let $\M^*_i\colon\overS_{i}\times\D_i\to\S_i$ be a mechanism such that $\M^*_i(\overs_{i},\cdot)\colon\D_i\to\S_i$ satisfies $d_{i}$-privacy for any $\overs_{i}\in\overS_{i}$.
    
    Then mechanism $\M=(\M^*_1\circ f^*_1,\dots,\M^*_k\circ f^*_k)_{\adapt}$ is $d_{\D}$-private with
    \[
        d_\D(D,D') \coloneqq \sum^k_{i=1} d_i(f_i(D),f_i(D'))\quad \text{for all $D,D'\in\D$}.
    \]
\end{theorem}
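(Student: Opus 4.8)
The plan is to prove the AC theorem by induction on the number $k$ of composed mechanisms, using the recursive structure of \Cref{def:AdaptiveComposedMechanism}. The base case $k=1$ is immediate: the statement reduces to the claim that $\M^*_1\circ f_1$ is $d_1(f_1(\cdot),f_1(\cdot))$-private, which is exactly \Cref{prop:preprocessing} (preprocessing) applied to the map $f_1$, whose sensitivity with respect to $d_1(f_1(\cdot),f_1(\cdot))$ on $\D$ and $d_1$ on $\D_1$ is at most $1$. Alternatively one can see the single-mechanism case directly from the definition of $d_1$-privacy of $\M^*_1$.

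For the inductive step, I would split the output of $\M=(\M^*_1\circ f^*_1,\dots,\M^*_k\circ f^*_k)_{\adapt}$ into the first $k-1$ coordinates and the last one. Fix $D,D'\in\D$ and a measurable $S\subseteq\S_1\times\cdots\times\S_k$. Write $S_{\overs_{k}}\coloneqq\{s_k\in\S_k\mid(\overs_k,s_k)\in S\}$ for the section at a prefix $\overs_k\in\overS_k$. By disintegrating the joint law of $\M(D)$ over the first $k-1$ coordinates $\overD_k=(\MM_1(D),\dots,\MM_{k-1}(D))$, we get
\[
  \Prob\{\M(D)\in S\}
  = \int_{\overS_k}
      \Prob\{\MM_k(D)\in S_{\overs_k}\mid \overD_k=\overs_k\}\,
      \mathrm{d}\mu_{D}(\overs_k),
\]
where $\mu_{D}$ is the law of $\overD_k$ under input $D$, i.e.\ the law of the adaptive composition of the first $k-1$ mechanisms. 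Now I would bound the integrand pointwise: conditioned on $\overD_k=\overs_k$, we have $\MM_k(D)=\M^*_k(\overs_k,f_k(D))$, and since $\M^*_k(\overs_k,\cdot)$ is $d_k$-private, the $k$-th factor contributes a multiplicative factor $\e^{d_k(f_k(D),f_k(D'))}$. For the outer integral, the key point is that $\mu_D$ and $\mu_{D'}$ are the laws of the $(k-1)$-fold adaptive composition applied to $D$ and $D'$ respectively; the induction hypothesis says this composed mechanism is $d'_\D$-private with $d'_\D(D,D')=\sum_{i=1}^{k-1}d_i(f_i(D),f_i(D'))$, so $\mathrm{d}\mu_D\le\e^{d'_\D(D,D')}\mathrm{d}\mu_{D'}$ on measurable sets. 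Combining the two estimates and multiplying the exponents gives $\Prob\{\M(D)\in S\}\le\e^{\sum_{i=1}^k d_i(f_i(D),f_i(D'))}\Prob\{\M(D')\in S\}$, which is the desired bound.

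The main obstacle is making the disintegration step rigorous: one must justify that the conditional distribution of $\MM_k(D)$ given $\overD_k$ exists, that the map $\overs_k\mapsto\Prob\{\MM_k(D)\in S_{\overs_k}\mid\overD_k=\overs_k\}$ is measurable so the integral is well-defined, and — most delicately — that the $d_{k}$-privacy inequality applied pointwise in $\overs_k$ can be integrated against the two different measures $\mu_D,\mu_{D'}$ while simultaneously invoking the induction hypothesis to compare $\mu_D$ with $\mu_{D'}$. The clean way to handle this is to first fix $\overs_k$ and use $d_k$-privacy of $\M^*_k(\overs_k,\cdot)$ to write $\Prob\{\MM_k(D)\in S_{\overs_k}\mid\overD_k=\overs_k\}\le\e^{d_k(f_k(D),f_k(D'))}\Prob\{\MM_k(D')\in S_{\overs_k}\mid\overD_k=\overs_k\}$, noting the conditional laws of the last mechanism depend only on the prefix $\overs_k$ and the input through $f_k$, hence are the same function of $\overs_k$ for $D$ and $D'$ up to this factor; then integrate this inequality against $\mu_D$, and finally apply the induction hypothesis to replace $\mathrm{d}\mu_D$ by $\e^{d'_\D(D,D')}\mathrm{d}\mu_{D'}$. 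A measure-theoretically lighter alternative, if one prefers to avoid disintegration, is to regard the first $k-1$ mechanisms as a single mechanism $\N\colon\D\to\overS_k$ and the last one as a post-hoc mechanism taking $(\N(D),D)$ as input, and then reduce to a two-mechanism adaptive composition lemma proved directly from \Cref{def:d-privacy}; the induction then only needs the two-step case, which is cleaner to verify.
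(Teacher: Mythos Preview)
Your proposal is correct and follows essentially the same route as the paper's proof: induction on $k$, disintegration over the prefix, and combining the $d_k$-privacy of the last mechanism with the inductive bound on the prefix. The paper in fact adopts exactly your ``lighter alternative'' (prove the two-step case directly, then reduce general $k$ to it) and packages the measure-swap step you flag as the obstacle into a short auxiliary lemma (\Cref{lemma:InequalityIntegral}), which shows that $d$-privacy on sets implies $\int g\,\diff P_{\M(D)}\le\e^{d(D,D')}\int g\,\diff P_{\M(D')}$ for any integrable $g\colon\S\to[0,1]$; with that lemma in hand the disintegration worries dissolve, and the only cosmetic difference is that the paper applies the prefix bound first and the last-step bound second, whereas you do them in the opposite order.
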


Observe that the expression of $d_\D$ does not change with respect to the IC theorem (\ref{th:ICTheoremVariableDomain}). Therefore using adaptive composition, which is more general than independent composition, does not affect the privacy bound of the resulting mechanism; or, alternatively, no improvement is gained by considering mechanisms $\M_1,\dots,\M_k$ mutually independent. 

Analogously to the independent case, particular composition rules can be derived from \Cref{th:ACTheoremVariableDomain}, as well as translated to the common domain. The same consequences are extracted from these adaptive results. We present such results, which also generalize their respective independent cases. 

First, if we impose $f_i=\id$ and $\D=\D_i$ for all $i\in[k]$, we obtain a generalization of the sequential setting as expected:

\begin{theorem}[restate = THgeneralizedASC, name = Generalized ASC]\label{th:GeneralizedASC}
    Let $\{(\D,d_i)\}_{i\in[k]}$ be a set of privacy spaces.
    For $i\in[k]$, let $\M_i\colon\overS_{i}\times\D\to\S_i$ be a mechanism such that $\M_k(\overs_{i},\cdot)\colon\D\to\S_i$ is $d_i$-private for all $\overs_{i}\in\overS_{i}$.
    Then $\M=(\M_1,\dots,\M_k)_{\adapt}$ is $(\sum_{i=1}^k d_i)$-private.
\end{theorem}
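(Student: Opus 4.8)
The plan is to obtain this statement as the special case of the AC theorem (\Cref{th:ACTheoremVariableDomain}) in which every intermediate privacy space coincides with $(\D,d_i)$ and every preprocessing map is the identity. Concretely, set $\D_i\coloneqq\D$ and $f_i\coloneqq\id_\D\colon\D\to\D$ for all $i\in[k]$. Then $f^*_i=\id_{\overS_i}\times f_i=\id_{\overS_i\times\D}$, so $\M^*_i\circ f^*_i=\M^*_i=\M_i$ for every $i$, and therefore the adaptive-composed mechanism $(\M^*_1\circ f^*_1,\dots,\M^*_k\circ f^*_k)_{\adapt}$ produced by \Cref{th:ACTheoremVariableDomain} is exactly $\M=(\M_1,\dots,\M_k)_{\adapt}$ (cf.\ \Cref{def:AdaptiveComposedMechanism}). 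The hypothesis of \Cref{th:ACTheoremVariableDomain} --- that $\M_i(\overs_i,\cdot)\colon\D_i\to\S_i$ is $d_i$-private for every $\overs_i\in\overS_i$ --- is precisely our assumption. Hence $\M$ is $d_\D$-private with $d_\D(D,D')=\sum_{i=1}^k d_i(f_i(D),f_i(D'))=\sum_{i=1}^k d_i(D,D')$, i.e., $\M$ is $(\sum_{i=1}^k d_i)$-private.

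For the conclusion to be meaningful in the sense of \Cref{def:d-privacy}, one should also note that $\sum_{i=1}^k d_i$ is itself an extended pseudometric on $\D$: nonnegativity, symmetry, vanishing on the diagonal, and the (possibly infinite) triangle inequality are all preserved under finite sums, so $(\D,\sum_i d_i)$ is again a privacy space.

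If instead one wants a self-contained argument --- essentially the proof underlying \Cref{th:ACTheoremVariableDomain} restricted to this case --- I would proceed by induction on $k$. The base case $k=1$ is the hypothesis on $\M_1$. For the inductive step, view the output of $\M$ on $D\in\D$ as: sample $\overs_k=\M'(D)\in\overS_k$, where $\M'\coloneqq(\M_1,\dots,\M_{k-1})_{\adapt}$ is $(\sum_{i=1}^{k-1}d_i)$-private by induction, then output $(\overs_k,\M_k(\overs_k,D))$. Fix $D,D'\in\D$ with $\sum_i d_i(D,D')<\infty$ and a measurable $S\subseteq\S_1\times\cdots\times\S_k$. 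Conditioning on the output of $\M'$ gives
\[
    \Prob\{\M(D)\in S\}=\int \Prob\{\M_k(\overs_k,D)\in S_{\overs_k}\}\,\mathrm{d}\mu_{\M'(D)}(\overs_k),
\]
where $S_{\overs_k}$ denotes the $\overs_k$-section of $S$ and $\mu_{\M'(D)}$ the law of $\M'(D)$. Applying $d_k$-privacy of $\M_k(\overs_k,\cdot)$ pointwise in $\overs_k$, and then replacing $\mu_{\M'(D)}$ by $\mu_{\M'(D')}$ using that $\M'$ is $(\sum_{i=1}^{k-1}d_i)$-private, yields $\Prob\{\M(D)\in S\}\le\e^{\sum_i d_i(D,D')}\Prob\{\M(D')\in S\}$, which is the claim.

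The main obstacle in the self-contained route is the measure-theoretic bookkeeping of this conditioning step: one must know that $\M_k$ is a genuine Markov kernel so that $\overs_k\mapsto\Prob\{\M_k(\overs_k,D)\in S_{\overs_k}\}$ is measurable, use a Fubini/Tonelli argument to handle the sections $S_{\overs_k}$, and --- crucially --- upgrade the event-based inequality of \Cref{def:d-privacy} for $\M'$ to the inequality $\int g\,\mathrm{d}\mu_{\M'(D)}\le\e^{c}\int g\,\mathrm{d}\mu_{\M'(D')}$ for every bounded nonnegative measurable $g$ (with $c=\sum_{i=1}^{k-1}d_i(D,D')$), which follows by approximating $g$ by simple functions (the layer-cake formula). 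Taking the corollary route sidesteps all of this, since \Cref{th:ACTheoremVariableDomain} is available.
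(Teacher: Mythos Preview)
Your proof is correct and follows exactly the paper's approach: the paper's entire proof is the one-liner ``Direct from \Cref{th:ACTheoremVariableDomain} by taking $\D_i=\D$ and $f_i=\id$.'' Your additional remarks (that $\sum_i d_i$ is again an extended pseudometric, and the self-contained induction sketch with the measure-theoretic caveats) are sound but go beyond what the paper records.
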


Second, if we study what happens when we apply adaptive composition over disjoint subsets of the input, we obtain the analogous adaptive counterpart of \Cref{th:GeneralizedIPCVariableDomain}:

\begin{theorem}[restate = THgeneralizedAPCvariabledomain, name = AC best bound for disjoint inputs]\label{th:GeneralizedAPCVariableDomain}
    Let $\D$ be a database class and $\G$ a granularity over $\D$. Let $p$ be a $d^\G_\D$-compatible $k$-partitioning function such that $\Delta p_i\leq 1$,
    and $p^*_i = \id_{\overS_i}\times p_i$ (with $p^*_1=p_1$). 
    For $i\in[k]$, let $\M^*_i\colon\overS_{i}\times\D_i\to\S_i$ be a mechanism such that $\M^*_i(\overs_{i},\cdot)\colon\D_i\to\S_i$ satisfies $\varepsilon_i d^\G_{\D_i}$-privacy for any $\overs_{i}\in\overS_{i}$.
    Then mechanism $\M=(\M^*_1\circ p^*_1,\dots,\M^*_k\circ p^*_k)_{\adapt}$ is $\varepsilon d^\G_{\D}$-private with $\varepsilon=\max_{i\in[k]} \varepsilon_i$.
\end{theorem}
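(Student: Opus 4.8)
The plan is to reduce Theorem~\ref{th:GeneralizedAPCVariableDomain} to the general AC theorem~(\ref{th:ACTheoremVariableDomain}) exactly as \Cref{th:GeneralizedIPCVariableDomain} was reduced to the IC theorem~(\ref{th:ICTheoremVariableDomain}). First I would observe that a $d^\G_\D$-compatible $k$-partitioning function $p=\{p_i\}_{i\in[k]}$ with $\Delta p_i\le 1$ fits the hypotheses of \Cref{th:ACTheoremVariableDomain}: take $\D_i\coloneqq p_i(\D)$, $f_i\coloneqq p_i$, and $f^*_i=\id_{\overS_i}\times p_i=p^*_i$. The AC theorem then gives directly that $\M=(\M^*_1\circ p^*_1,\dots,\M^*_k\circ p^*_k)_{\adapt}$ is $d_\D$-private with $d_\D(D,D')=\sum_{i=1}^k \varepsilon_i\, d^\G_{\D_i}(p_i(D),p_i(D'))$.

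The core of the argument is then purely metric: show that for all $D,D'\in\D$,
\[
    \sum_{i=1}^k \varepsilon_i\, d^\G_{\D_i}(p_i(D),p_i(D')) \;\le\; \Bigl(\max_{i\in[k]}\varepsilon_i\Bigr)\, d^\G_\D(D,D'),
\]
which by the monotonicity comparison of metrics (Section~\ref{sec:RelationsMetrics}) yields $\varepsilon d^\G_\D$-privacy with $\varepsilon=\max_i\varepsilon_i$. I would prove this in two steps. Step one: $\Delta p_i\le 1$ together with \Cref{prop:SensitivityComposition} (sensitivity is submultiplicative along a geodesic path) gives $d^\G_{\D_i}(p_i(D),p_i(D'))\le d^\G_\D(D,D')$, but that alone only yields the sequential bound $\sum_i\varepsilon_i$. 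Step two exploits $d^\G_\D$-compatibility to do better: take a shortest $\G$-path $D=D^{(0)}\neigh_\G D^{(1)}\neigh_\G\cdots\neigh_\G D^{(m)}=D'$ in $\D$ realizing $m=d^\G_\D(D,D')$. By $d^\G_\D$-compatibility, at each step $t$ there is at most one index $j=j(t)$ with $p_i(D^{(t-1)})\neq p_i(D^{(t)})$ for $i=j$ only, and $\Delta p_j\le 1$ means that step contributes at most $1$ to $d^\G_{\D_j}(p_j(D),p_j(D'))$. Summing the triangle inequality along the path, $\sum_i d^\G_{\D_i}(p_i(D),p_i(D'))\le \sum_{t=1}^m \sum_i d^\G_{\D_i}(p_i(D^{(t-1)}),p_i(D^{(t)})) \le \sum_{t=1}^m 1 = d^\G_\D(D,D')$ (each step charges a single index). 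Weighting by $\varepsilon_i\le\max_i\varepsilon_i$ gives the displayed inequality; in fact this reproves the stronger claim $\sum_i d^\G_{\D_i}(p_i(D),p_i(D')) = d^\G_\D(D,D')$ under these hypotheses, which is the equation flagged in Section~\ref{sec:GeneralizationOfIndependentParallelComposition}.

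The main obstacle is the path argument in step two: one must be careful that $d^\G_\D$-compatibility is a statement about $\G$-\emph{neighbors}, so it must be applied along the edges of a geodesic, not to the endpoints $D,D'$ directly, and one must check that a geodesic in $(\D,d^\G_\D)$ exists (finiteness of $d^\G_\D(D,D')$ — if it is $\infty$ the inequality is trivial) and that $p_i$ maps it to a path in $\D_i$ so that the triangle inequality in $(\D_i,d^\G_{\D_i})$ applies. The adaptivity itself is not an obstacle at all: once \Cref{th:ACTheoremVariableDomain} is invoked as a black box, the argument is identical to the independent case, which is precisely why the privacy bound coincides with that of \Cref{th:GeneralizedIPCVariableDomain}. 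I would close by noting, as the paper does after \Cref{th:GeneralizedIPCVariableDomain}, that partitions of \Cref{ex:partition} are $d^\U_\D$-compatible with $\Delta p_i\le 1$, so this theorem specializes to an adaptive analogue of parallel composition (\Cref{th:ParallelMcSherry}).
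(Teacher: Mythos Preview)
Your proposal is correct and follows the same overall strategy as the paper: invoke the AC theorem (\ref{th:ACTheoremVariableDomain}) with $f_i=p_i$ to obtain $d_\D$-privacy with $d_\D(D,D')=\sum_i\varepsilon_i\,d^\G_{\D_i}(p_i(D),p_i(D'))$, then use $d^\G_\D$-compatibility together with $\Delta p_i\le 1$ to bound this by $(\max_i\varepsilon_i)\,d^\G_\D(D,D')$.

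The one difference is in how the metric bound is established. The paper (via the proof of \Cref{th:GeneralizedIPCVariableDomain}, to which it defers) first invokes \Cref{th:d-privacyToDP} to reduce the goal to checking the inequality only at $\G$-\emph{neighbors} $D\neigh_\G D'$; there, compatibility gives a single index $j$ with $p_i(D)=p_i(D')$ for $i\neq j$, and $\Delta p_j\le 1$ finishes in one line. You instead prove the inequality for \emph{all} pairs $D,D'$ by running a geodesic path argument and applying compatibility edge-by-edge. Your argument is perfectly sound (and even recovers the equality $\sum_i d^\G_{\D_i}(p_i(D),p_i(D'))\le d^\G_\D(D,D')$ globally), but it is effectively re-deriving \Cref{th:d-privacyToDP} inside the proof; the paper's route is shorter because it outsources that path argument to the already-proven equivalence between $\varepsilon d^\G_\D$-privacy and $\G$ $\varepsilon$-DP.
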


Note that we cannot get around the problem that $\M_i^*$ being $d$-privacy does not imply that $\M_i=\M^*_i\circ f_i$ is $d$-private in the adaptive setting.
Therefore, we show the common-domain results that show what happens if we impose $\M_i$ to be $d_i$-private directly (i.e., the counterparts to \Cref{th:ICTheoremCommonDomain,th:GeneralizedIPCCommonDomain}):

\begin{theorem}[restate = THactheoremcommondomain, name = AC theorem for common domain]\label{th:ACTheoremCommonDomain}
    For $i\in[k]$, let $(\D,d_i)$ be a privacy space, and let $f_i$ be a deterministic map over $\D$.
    For $i\in[k]$, let $\M_i\colon\overS_{i}\times\D\to\S_i$ be a mechanism such that $\M_k(\overs_{i},\cdot)\colon\D\to\S_i$ satisfies $d_i$-privacy and $f_i$-dependency for any $\overs_{i}\in\overS_{i}$.
    Then mechanism $\M=(\M_1,\dots,\M_k)_{\adapt}$ is $d_\D$-private* with $d_\D\coloneqq\sum^k_{i=1} d_i^{f_i}$.
\end{theorem}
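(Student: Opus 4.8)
The plan is to obtain \Cref{th:ACTheoremCommonDomain} from the AC theorem~(\ref{th:ACTheoremVariableDomain}) together with the Minimum privacy proposition~(\Cref{prop:MinimumPrivacy}), mirroring the way \Cref{th:ICTheoremCommonDomain} was derived from \Cref{th:ICTheoremVariableDomain} in the independent case. The one subtlety is that each $d_i^{f_i}$ is only a ``$d$-privacy*'' object and need not satisfy the triangle inequality, so \Cref{th:ACTheoremVariableDomain} cannot be invoked as a black box; instead I would re-use its proof after observing that the argument is insensitive to the metric axioms.

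First I would reduce the per-round privacy guarantees. Fix $D,D'\in\D$. For each $i\in[k]$ and each history $\overs_i\in\overS_i$, the conditional mechanism $\M_i(\overs_i,\cdot)\colon\D\to\S_i$ is $d_i$-private and $f_i$-dependent by hypothesis, so \Cref{prop:MinimumPrivacy} yields that $\M_i(\overs_i,\cdot)$ is $d_i^{f_i}$-private*; in particular, for every measurable $S_i\subseteq\S_i$,
\[
    \Prob\{\M_i(\overs_i,D)\in S_i\}\le \e^{d_i^{f_i}(D,D')}\Prob\{\M_i(\overs_i,D')\in S_i\},
\]
uniformly in $\overs_i$. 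Hence $\M=(\M_1,\dots,\M_k)_\adapt$ is the adaptive composition over the common domain $\D$ of mechanisms whose conditional restrictions satisfy, respectively, $d_1^{f_1}$-privacy*, \dots, $d_k^{f_k}$-privacy*, i.e.\ the situation of \Cref{th:GeneralizedASC} with the (possibly non-metric) $d_i^{f_i}$ in place of metrics.

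Then I would run the adaptive-composition argument itself, by induction on $k$ (equivalently, by telescoping over rounds). For a measurable rectangle $S=S_1\times\cdots\times S_k$, writing $\nu_D(S)\coloneqq\Prob\{\M(D)\in S\}$, one conditions on the history: the contribution of round $k$ is bounded, pointwise in $\overs_k$, by the factor $\e^{d_k^{f_k}(D,D')}$ via the displayed inequality, and this factor survives integration against the earlier-rounds kernel because a pointwise privacy inequality between two sub-probability kernels is preserved under integration of nonnegative measurable integrands (simple-function approximation). Peeling off rounds $k,k-1,\dots,1$ gives $\nu_D(S)\le\e^{\sum_{i=1}^k d_i^{f_i}(D,D')}\nu_{D'}(S)$ on every rectangle, and a standard monotone-class / uniqueness-of-measure argument extends it to all measurable $S\subseteq\S_1\times\cdots\times\S_k$. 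Since $D,D'$ were arbitrary, $\M$ is $(\sum_{i=1}^k d_i^{f_i})$-private*.

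The main obstacle is exactly the reason one cannot simply cite \Cref{th:ACTheoremVariableDomain}: since $d_i^{f_i}$ may violate the triangle inequality (it is only a $d$-privacy* function), one must verify that the proof of \Cref{th:ACTheoremVariableDomain} — the round-by-round conditioning, the preservation of pointwise bounds under integration, and the extension from rectangles to general measurable sets — uses nothing about the $d_i$ beyond the per-round privacy inequality, never their symmetry or triangle inequality. The remaining measure-theoretic bookkeeping for conditioning on the adaptive history $\overs_i\in\overS_i$ is routine and identical to the proof of \Cref{th:ACTheoremVariableDomain}. Finally, the upper bound $d_\D(D,D')\le\sum_{i\,:\,f_i(D)\neq f_i(D')}d_i(D,D')$ and the DP reformulation with $\varepsilon=\max_{D\neigh_\G D'}\sum_{i\,:\,f_i(D)\neq f_i(D')}\varepsilon_i$ follow, as in \Cref{th:ICTheoremCommonDomain}, from $d_i^{f_i}\le d_i$ and $d_i^{f_i}(D,D')=0$ whenever $f_i(D)=f_i(D')$.
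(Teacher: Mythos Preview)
Your proposal is correct and matches the paper's approach exactly: apply \Cref{prop:MinimumPrivacy} to obtain that each $\M_i(\overs_i,\cdot)$ is $d_i^{f_i}$-private*, then re-run the proof of \Cref{th:ACTheoremVariableDomain} (which indeed never uses the metric axioms, only the per-round privacy inequality). The only minor deviation is that the paper's proof of \Cref{th:ACTheoremVariableDomain} works directly with an arbitrary measurable $S$ via the law of total probability and \Cref{lemma:InequalityIntegral}, rather than first treating rectangles and then invoking a monotone-class extension as you outline; both routes are standard and yield the same conclusion.
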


\begin{theorem}[restate = THgeneralizedAPCcommondomain, name = AC best bound for disjoint inputs (common domain)]\label{th:GeneralizedAPCCommonDomain}
    Let $\D$ be a database class and $\G$ a granularity over $\D$. Let $p$ be a $d^\G_\D$-compatible $k$-partitioning function. 
    For $i\in[k]$, let $\M_i\colon\overS_{i}\times\D\to\S_i$ be a mechanism such that $\M_k(\overs_{i},\cdot)\colon\D\to\S_i$ satisfies $\varepsilon_id^\G_\D$-privacy and $p_i$-dependency for any $\overs_{i}\in\overS_{i}$.
    Then mechanism $\M=(\M_1,\dots,\M_k)_{\adapt}$ is $\varepsilon d^\G_\D$-private with $\varepsilon=\max_{i\in[k]} \varepsilon_i$.
\end{theorem}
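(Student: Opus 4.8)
The plan is to reduce this to the adaptive composition theorem for common domain (\Cref{th:ACTheoremCommonDomain}) and then to a single pointwise metric inequality, exactly mirroring the proof of the independent counterpart \Cref{th:GeneralizedIPCCommonDomain}; as noted in \Cref{sec:GeneralizingAdaptive}, moving to the adaptive setting does not change the privacy bookkeeping, only which composition theorem is invoked. First I would apply \Cref{th:ACTheoremCommonDomain} with $d_i:=\varepsilon_i d^\G_\D$ and $f_i:=p_i$: the hypothesis that each $\M_i(\overs_i,\cdot)$ is $\varepsilon_i d^\G_\D$-private and $p_i$-dependent for every $\overs_i$ is precisely what that theorem requires, so it yields that $\M=(\M_1,\dots,\M_k)_{\adapt}$ is $d_\D$-private* with
\[
    d_\D \;=\; \sum_{i=1}^k (\varepsilon_i d^\G_\D)^{p_i} \;=\; \sum_{i=1}^k \varepsilon_i\,(d^\G_\D)^{p_i},
\]
where $(d^\G_\D)^{p_i}$ is the minimum pseudodistance of \Cref{prop:MinimumPrivacy}.

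It then suffices to prove the pointwise bound $d_\D(D,D')\le\varepsilon\,d^\G_\D(D,D')$ for all $D,D'\in\D$, with $\varepsilon=\max_{i\in[k]}\varepsilon_i$: combined with $d_\D$-privacy* and monotonicity of $x\mapsto\e^x$, this immediately gives $\varepsilon d^\G_\D$-privacy, and $\varepsilon d^\G_\D$ is a genuine metric. Since each $\varepsilon_i\le\varepsilon$, the bound reduces further to
\[
    \sum_{i=1}^k (d^\G_\D)^{p_i}(D,D') \;\le\; d^\G_\D(D,D'),
\]
i.e.\ the ``$\le$'' half of \Cref{eq:CommutativityCommon} under $d^\G_\D$-compatibility.

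I would establish this inequality by a shortest-path charging argument. Fix $D,D'$ with $d^\G_\D(D,D')=m<\infty$ (if $m=\infty$ there is nothing to show) and choose a $\G$-path $D=D_0\neigh_\G D_1\neigh_\G\cdots\neigh_\G D_m=D'$. For $i\in[k]$ set $T_i:=\{t\in[m]\mid p_i(D_{t-1})\neq p_i(D_t)\}$. Since $p$ is $d^\G_\D$-compatible, $I_p(D_{t-1},D_t)\le 1$ at each step, so $\sum_{i=1}^k|T_i| = \sum_{t=1}^m I_p(D_{t-1},D_t)\le m = d^\G_\D(D,D')$. The crux is then the per-cell bound $(d^\G_\D)^{p_i}(D,D')\le|T_i|$; granting it and summing over $i$ closes the argument. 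To obtain the per-cell bound one exhibits an honest $\G$-path inside $\D$, of length $|T_i|$, running from a preimage of $p_i(D)$ to a preimage of $p_i(D')$: the $|T_i|$ edges $D_{t-1}\neigh_\G D_t$ with $t\in T_i$ already provide $|T_i|$ single $\G$-steps along which $p_i$ changes, and between consecutive such edges $p_i$ is constant, which is what should let the chain be assembled into one path of the claimed length.

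The main obstacle is exactly this per-cell claim. The subtlety is that $(d^\G_\D)^{p_i}$ is only a pseudodistance (it fails the triangle inequality — hence ``$d$-privacy*''), so one cannot simply add up the per-step contributions $(d^\G_\D)^{p_i}(D_{t-1},D_t)$, which are $\le 1$ for $t\in T_i$ and $0$ otherwise, along the path: the quotient is taken only at the two endpoints, so the realignment of the witness edges has to be carried out without spending extra $\G$-steps, and this is where the partitioning structure of $p$ (disjoint cells, each $\M_i$ seeing only its own) is genuinely used. I expect the most economical write-up is the route already flagged for \Cref{th:GeneralizedIPCCommonDomain}: derive the per-cell bound as an instance of preprocessing (\Cref{prop:preprocessing}) applied to the variable-domain adaptive theorem \Cref{th:GeneralizedAPCVariableDomain}, regarding each $\M^*_i$ as a mechanism on the quotient of $(\D,d^\G_\D)$ by $p_i$ and observing that $p_i$ has sensitivity at most $1$ into that (pseudo)space by construction, with $d^\G_\D$-compatibility ensuring the per-cell budgets sum to at most $d^\G_\D$ rather than to $\sum_i d_i$.
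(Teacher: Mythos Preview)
Your overall strategy—apply \Cref{th:ACTheoremCommonDomain} and then control the resulting pseudodistance—is the right one and is exactly how the paper begins. But you then take an unnecessarily hard route and leave a genuine gap.

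The paper does \emph{not} try to prove the pointwise inequality $\sum_{i=1}^k (d^\G_\D)^{p_i}(D,D')\le d^\G_\D(D,D')$ for all pairs $D,D'$. Instead it invokes \Cref{th:d-privacyToDP}: to show $\varepsilon d^\G_\D$-privacy it suffices to verify the DP inequality only for $\G$-\emph{neighbouring} $D,D'$. At neighbours the argument is a one-liner: $d^\G_\D$-compatibility gives a single index $j$ with $p_i(D)=p_i(D')$ for all $i\neq j$, hence $(d^\G_\D)^{p_i}(D,D')=0$ for $i\neq j$ (take $\tilde D=\tilde D'=D$ in \Cref{prop:MinimumPrivacy}), while $(d^\G_\D)^{p_j}(D,D')\le d^\G_\D(D,D')=1$ trivially (take $\tilde D=D$, $\tilde D'=D'$). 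So $d_\D(D,D')\le\varepsilon_j\le\varepsilon$, and \Cref{th:d-privacyToDP} upgrades this to full $\varepsilon d^\G_\D$-privacy. No path argument, no per-cell bound.

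Your per-cell claim $(d^\G_\D)^{p_i}(D,D')\le|T_i|$ is precisely where your proof stalls, and you correctly diagnose why: $(d^\G_\D)^{p_i}$ fails the triangle inequality, so you cannot simply sum step-by-step contributions along the path. The ``realignment'' you describe—stitching the $|T_i|$ scattered neighbour-edges into a single $\G$-path of length $|T_i|$ between preimages of $p_i(D)$ and $p_i(D')$—is not justified by anything in the hypotheses, and you do not carry it out. Your proposed fallback via preprocessing and \Cref{th:GeneralizedAPCVariableDomain} does not work either: that theorem requires $\Delta p_i\le 1$, which is explicitly \emph{not} assumed in \Cref{th:GeneralizedAPCCommonDomain} (this is the whole point of the common-domain version—compare the remark after \Cref{th:GeneralizedIPCCommonDomain}). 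The missing idea is simply to pass through \Cref{th:d-privacyToDP} and work only at distance~$1$.
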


Observe that, since all results are a consequence of the AC theorem (\ref{th:ACTheoremVariableDomain}), which has the same bound as its IC counterpart, none of the results degrade their bound with respect to their IC versions.

\section{Extending to Other DP-Based Notions}\label{sec:extending}

Given that composability is not an exclusive property of $\varepsilon$-DP, but also of other DP-based notions, it is interesting to understand how composition extends to other DP-based notions. In this section, we present $d_\D$-privacy formulations of approximate DP~\cite{dwork2006Our}, zero-concentrated DP~\cite{bun2016Concentrated}, and Gaussian DP~\cite{dong2022Gaussian}, and study the corresponding adaptive composition theorems. Note that since each notion has its own group property, each extension behaves differently than that of $d_\D$-privacy, although similar patterns are present.

\subsection{Extending to Approximate DP}\label{sec:ApproximateDP}
Approximate DP~\cite{dwork2006Our}, also known as $(\varepsilon,\delta)$-DP, is an important and popular extension of DP. In this section, we introduce an adapted version of $d_\D$-privacy for the approximate scenario, $(d_\D,\delta_\D)$-privacy, which generalizes $(\varepsilon,\delta)$-DP in the same way that $d_\D$-privacy generalizes $\varepsilon$-DP. Afterward, we present the composition results for this notion.

From the original definition of $(\varepsilon,\delta)$-DP~\cite{dwork2006Our,dwork2014algorithmic}, defined for unbounded neighboring databases, we present the definition of approximate DP for any granularity: 

\begin{definition}[$\G$ $(\varepsilon,\delta)$-DP]\label{def:ApproximateDP}
    Let $\varepsilon,\delta\geq0$. A randomized mechanism $\M$ with domain $\DX$ is \textit{$\G$ $(\varepsilon,\delta)$-DP} if for all measurable $S\subseteq \Range(\M)$ and for all $\G$-neighboring $D,D'\in\DX$,
    \[
        \Prob\{\M(D)\in S\} \leq \e^{\varepsilon}\Prob\{\M(D')\in S\}+\delta.
    \]
\end{definition}

Different privacy interpretations of $\delta$ can be found in~\cite{dwork2014algorithmic,meiser2018Approximate,canonne2020discrete,desfontaines2020privacy}.
Note that having $\delta\geq1$ is meaningless and provides no privacy since any mechanism, including one that releases the raw data, is $\G$ $(\varepsilon,\delta)$-DP for $\delta\geq1$.  

Our definition of $(d_\D,\delta_\D)$-privacy is formulated so that \Cref{th:Approximated-privacytoDP} verifies, which is analogous to \Cref{th:d-privacyToDP} for the pure-DP case. The construction of $d_\D$-privacy from $\varepsilon$-DP uses the fact that the privacy budget $\varepsilon$ scales linearly with respect to distance $d_\D$. 
\begin{definition}[$(d_{\D},\delta_{\D})$-privacy]\label{def:approximate d-privacy}
    Let $d_\D$ be a metric over $\D$ and $\delta_\D\colon\D^2\to[0,\infty]$. Then, a randomized mechanism $\M$ with domain $\D$ is \textit{$(d_\D,\delta_\D)$-private} if for all $D, D'\in\D$ and all measurable $S\subseteq\Range(\M)$,
    \[
        \Prob\{\M(D)\in S\} \leq \e^{d_{\D}(D,D')}\Prob\{\M(D')\in S\} + \delta_\D(D,D').
    \]
\end{definition}

Analogously to $(\varepsilon,\delta)$-DP, if $\delta_\D(D,D')\geq1$, the indistinguishability (up to $\varepsilon$) between $D,D'\in\D$ is no longer guaranteed. Moreover, we recover $d_{\D}$-privacy when $\delta_\D=0$. 

Note that $\delta_{\D}$ does not need to be a metric. Furthermore, in $(\varepsilon,\delta)$-DP, $\delta$ does not scale linearly under group privacy, but rather ends up as $\delta\frac{\e^{\varepsilon n}-1}{\e^\varepsilon-1}$ (which can be larger than~$1$). Parameter $\delta_\D$ scales in the same way, which is shown in our next result, where we denote as $[d]_\varepsilon\colon\D^2\to[0,\infty]$ the function such that $[d]_\varepsilon(D,D')=\frac{1}{\e^\varepsilon-1}(\e^{\varepsilon d(D,D')}-1)$. 

\begin{theorem}[restate = THapproximatedprivacytoDP, name = ]\label{th:Approximated-privacytoDP}
    Let $\G$ be a granularity notion over the database class $\D$. Then, a mechanism $\M$ with domain $\D$ is $(\varepsilon d^{\G}_{\D},\delta[d^{\G}_{\D}]_{\varepsilon})$-private if and only if it is $\G$ $(\varepsilon,\delta)$-DP.
\end{theorem}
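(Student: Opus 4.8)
The plan is to prove the two directions separately: the ``only if'' direction is immediate, whereas the ``if'' direction is an approximate-DP group-privacy computation and carries essentially all of the work.

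For the forward implication, assume $\M$ is $(\varepsilon d^\G_\D,\delta[d^\G_\D]_\varepsilon)$-private and take $\G$-neighboring $D,D'\in\D$. If $D=D'$ the defining inequality of $\G$ $(\varepsilon,\delta)$-DP holds trivially; otherwise $d^\G_\D(D,D')=1$, so $\varepsilon d^\G_\D(D,D')=\varepsilon$ and $\delta[d^\G_\D]_\varepsilon(D,D')=\delta\cdot\frac{\e^\varepsilon-1}{\e^\varepsilon-1}=\delta$, and \Cref{def:approximate d-privacy} specializes verbatim to $\Prob\{\M(D)\in S\}\le\e^\varepsilon\Prob\{\M(D')\in S\}+\delta$ for every measurable $S$. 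Hence $\M$ is $\G$ $(\varepsilon,\delta)$-DP.

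For the converse, assume $\M$ is $\G$ $(\varepsilon,\delta)$-DP. I would first clear the degenerate cases: if $\delta=0$ the statement is exactly \Cref{th:d-privacyToDP}, and if $d^\G_\D(D,D')=\infty$ then (for $\delta>0$, using the continuous extension $[d]_0=d$ when $\varepsilon=0$) the right-hand side of the target inequality is $+\infty$ and there is nothing to prove. It then remains to show that for all $D,D'$ with $d^\G_\D(D,D')=n<\infty$ and every measurable $S$,
\[
    \Prob\{\M(D)\in S\}\le\e^{\varepsilon n}\Prob\{\M(D')\in S\}+\delta\,[d^\G_\D]_\varepsilon(D,D'),
\]
which I would prove by induction on $n$. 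The base case $n=0$ is trivial. For the inductive step, choose a geodesic $D=D_0\neigh_\G D_1\neigh_\G\cdots\neigh_\G D_n=D'$ realizing the distance (it exists by the very definition of $d^\G_\D$); the triangle inequality forces $d^\G_\D(D,D_{n-1})=n-1$, and $D_{n-1}$ is $\G$-neighboring to, and distinct from, $D'$. Combining $\G$ $(\varepsilon,\delta)$-DP on the last edge with the induction hypothesis applied to the pair $(D,D_{n-1})$ and substituting yields
\[
    \Prob\{\M(D)\in S\}\le\e^{\varepsilon n}\Prob\{\M(D')\in S\}+\delta\left(\e^{\varepsilon(n-1)}+\frac{\e^{\varepsilon(n-1)}-1}{\e^\varepsilon-1}\right),
\]
and the algebraic identity $\e^{\varepsilon(n-1)}+\frac{\e^{\varepsilon(n-1)}-1}{\e^\varepsilon-1}=\frac{\e^{\varepsilon n}-1}{\e^\varepsilon-1}=[d^\G_\D]_\varepsilon(D,D')$ closes the induction. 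Since the bound holds for all $D,D'\in\D$, $\M$ is $(d_\D,\delta_\D)$-private with $d_\D=\varepsilon d^\G_\D$ and $\delta_\D=\delta[d^\G_\D]_\varepsilon$.

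The main obstacle is purely bookkeeping rather than conceptual: unlike $\varepsilon$, the slack $\delta$ does not grow linearly along the path but as a geometric-type partial sum, so one must carry the exact coefficient $[d^\G_\D]_\varepsilon$ through the induction and verify the collapsing identity above, while also being careful with the boundary conventions (the continuous extension $[d]_0=d$ for $\varepsilon=0$, the reduction to \Cref{th:d-privacyToDP} when $\delta=0$, and infinite distances). A minor secondary point is to note that every sub-path of a geodesic in the $\G$-neighborhood graph is again a geodesic, which is precisely what makes the induction hypothesis available at length $n-1$.
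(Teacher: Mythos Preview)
Your proof is correct and follows essentially the same approach as the paper: both directions are handled identically in spirit, with the forward implication being the specialization $d^\G_\D(D,D')=1$, and the converse obtained by iterating the $(\varepsilon,\delta)$-DP inequality along a shortest $\G$-chain so that the $\delta$-slack accumulates as the geometric sum $\sum_{i=0}^{n-1}\e^{i\varepsilon}=\frac{\e^{\varepsilon n}-1}{\e^\varepsilon-1}$. The only difference is presentational---the paper unrolls the chain explicitly, whereas you phrase it as an induction on $n$ and are somewhat more explicit about the degenerate cases ($\delta=0$, $\varepsilon=0$, infinite distance).
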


However, please note that $\delta[d^{\G}_{\D}]_\varepsilon$ can scale to numbers greater than $1$. This can lead to weak privacy models since such values result in no privacy, as we said before. For instance with $\delta=10^{-5}$ and $\varepsilon=1$ we have $\delta[d^{\B}_{\D}]_{\varepsilon}(D,D')>1$ for all $D,D'\in\D$ such that $d^{\B}_{\D}(D,D')\geq 13$. Therefore a $( d^{\B}_{\D},10^{-5}[d^{\B}_{\D}]_{1})$-private mechanism can allow an attacker to likely distinguish outputs of two databases in which we have changed more than thirteen records.

We now present the AC result for $(d_\D,\delta_\D)$-privacy.

\begin{theorem}[restate = THapproximateACtheoremvariabledomain, name = Approximate AC theorem]\label{th:ApproximateACTheoremVariableDomain}
    Let $\D$ be a database class, and, for all $i\in[k]$, let $(\D,d_i)$ be a privacy space and $\delta_i\colon\D^2\to[0,\infty]$. Let $f_i\colon\D\to\D_i$ be a deterministic map and $f^*_i = \id_{\overS_i}\times f_i$ (with $f^*_1=f_1$). 
    
    For $i\in[k]$, let $\M^*_i\colon\overS_{i}\times\D_i\to\S_i$ be a mechanism such that $\M^*_i(\overs_{i},\cdot)\colon\D_i\to\S_i$ satisfies $(d_{i},\delta_i)$-privacy for any $\overs_{i}\in\overS_{i}$.
    
    Then mechanism $\M=(\M^*_1\circ f^*_1,\dots,\M^*_k\circ f^*_k)_{\adapt}$ is $(d_{\D},\delta_\D)$-private with
    \begin{align*}
        d_\D(D,D') \coloneqq \sum^k_{i=1} d_i(f_i(D),f_i(D'))\quad \text{and} \quad
        \delta_\D(D,D') \coloneqq \sum^k_{i=1} \delta_i(f_i(D),f_i(D'))\quad \text{for all $D,D'\in\D$}.
    \end{align*}
\end{theorem}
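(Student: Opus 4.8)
The plan is to prove the Approximate AC theorem (\ref{th:ApproximateACTheoremVariableDomain}) by induction on the number $k$ of composed mechanisms, mirroring the argument used for the pure-$d_\D$ AC theorem (\ref{th:ACTheoremVariableDomain}) but carrying along the additive $\delta$-term. The base case $k=1$ is immediate: $\M = \M^*_1\circ f_1$, and $(d_1,\delta_1)$-privacy of $\M^*_1(\cdot)\colon\D_1\to\S_1$ together with the definition of sensitivity-free composition with the deterministic preprocessing $f_1$ gives directly that $\Prob\{\M(D)\in S\}\le \e^{d_1(f_1(D),f_1(D'))}\Prob\{\M(D')\in S\}+\delta_1(f_1(D),f_1(D'))$, which is exactly the claimed bound (this is the approximate analogue of \Cref{prop:preprocessing}). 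For the inductive step, I would write $\M=(\M^*_1\circ f^*_1,\dots,\M^*_k\circ f^*_k)_{\adapt}$ and split off the last mechanism: condition on the partial output $\overs_k=(s_1,\dots,s_{k-1})\in\overS_k$ produced by the first $k-1$ mechanisms, so that the ``tail'' $(\M^*_1\circ f^*_1,\dots,\M^*_{k-1}\circ f^*_{k-1})_{\adapt}$ satisfies the induction hypothesis with budgets $\sum_{i=1}^{k-1}d_i(f_i(D),f_i(D'))$ and $\sum_{i=1}^{k-1}\delta_i(f_i(D),f_i(D'))$, while $\M^*_k(\overs_k,\cdot)\circ f_k$ contributes $d_k(f_k(D),f_k(D'))$ and $\delta_k(f_k(D),f_k(D'))$ for each fixed $\overs_k$.

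The core computation is then the standard two-mechanism approximate-composition estimate, adapted to the $d_\D$-setting. Fix $D,D'\in\D$ and a measurable $S\subseteq\S_1\times\cdots\times\S_k$. Write $\Prob\{\M(D)\in S\}$ as an integral (or sum) over the value $\overs_k$ of the first $k-1$ coordinates of the probability that the $k$-th coordinate lands in the corresponding slice $S_{\overs_k}$, times the density of the tail output at $\overs_k$. For each fixed $\overs_k$, apply $(d_k(f_k(D),f_k(D')),\delta_k(f_k(D),f_k(D')))$-privacy of $\M^*_k(\overs_k,\cdot)$ (pulled back through $f_k$) to bound the inner conditional probability; then apply the induction hypothesis to the outer integral over $\overs_k$. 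The $\delta$-terms add: the inner step adds $\delta_k(f_k(D),f_k(D'))$ (after integrating against a sub-probability measure), and the outer step adds $\sum_{i=1}^{k-1}\delta_i(f_i(D),f_i(D'))$, while the multiplicative factors multiply to $\e^{\sum_{i=1}^k d_i(f_i(D),f_i(D'))}$. Collecting terms yields exactly the stated $d_\D$ and $\delta_\D$. One can also phrase this cleanly by first proving a two-step lemma ``$(d,\delta)$-private then $(d',\delta')$-private adaptively $\Rightarrow$ $(d+d',\delta+\delta')$-private'' and then iterating.

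The main obstacle is the bookkeeping in the adaptive conditioning step, not any deep idea: one must be careful that (i) the slices $S_{\overs_k}$ are jointly measurable so Fubini/Tonelli applies, (ii) the privacy guarantee of $\M^*_k(\overs_k,\cdot)$ holds \emph{uniformly} in $\overs_k$ (which is exactly the hypothesis ``$\M^*_i(\overs_i,\cdot)$ satisfies $(d_i,\delta_i)$-privacy for any $\overs_i\in\overS_i$''), and (iii) the extra $\delta_k$ contribution is correctly bounded by $\delta_k$ rather than by something larger — this works because, after conditioning, the $\delta_k$ term is integrated against the tail output distribution of $\M$ on input $D'$, which is a probability measure, so it integrates to at most $\delta_k(f_k(D),f_k(D'))$; there is no blow-up of the form $\delta\frac{\e^{\varepsilon n}-1}{\e^\varepsilon-1}$ here because that phenomenon is specific to iterating group privacy on a \emph{single} mechanism, not to composing distinct mechanisms. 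A subtle point worth making explicit is that, unlike the pure case, we do not need any ``advanced composition'' style improvement: the statement only claims the naive summation bound, so the argument stays elementary and the induction closes cleanly. Since this is an adaptive statement, it simultaneously yields the independent version of the approximate IC theorem by specializing the $\M^*_i$ to be constant over $\overS_i$.
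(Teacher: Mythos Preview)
Your proposal is correct and follows essentially the same inductive approach as the paper's proof: reduce to the two-mechanism case, condition on the prefix output, apply the inner $(d_k,\delta_k)$-privacy guarantee, then change the outer measure using an integral-level version of $(d,\delta)$-privacy (the paper's \Cref{lemma:InequalityIntegralApproximate}). The only technical detail you gloss over is that the paper first clips the inner bound to $\min\{1,\e^{d_k}P'|_{\overs}(S)\}+\delta_k$ before applying the outer lemma, since that lemma requires the integrand to lie in $[0,1]$; apart from this bookkeeping step, the argument is identical.
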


Note that from this result here, we are able to derive all the results so far in this paper. In addition, \Cref{th:ApproximateACTheoremVariableDomain} can be used to define the approximate variations of all our main composition results, where the same consequences can be extracted as in \Cref{sec:IndependentComposition}:

\begin{theorem}[restate = THapproximateASC, name = Generalized approximate ASC]\label{th:ApproximateASC}
    Let $\D$ be a database class, and, for all $i\in[k]$, let $(\D,d_i)$ be a privacy space and $\delta_i\colon\D^2\to[0,\infty]$.
    For $i\in[k]$, let $\M_i\colon\overS_{i}\times\D\to\S_i$ be a mechanism such that $\M_k(\overs_{i},\cdot)\colon\D\to\S_i$ is $(d_i, \delta_i)$-private for any $\overs_{i}\in\overS_{i}$.
    Then $\M=(\M_1,\dots,\M_k)_{\adapt}$ is $(\sum_{i=1}^k d_i,\sum_{i=1}^k \delta_i)$-private.
\end{theorem}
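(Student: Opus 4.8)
The plan is to obtain this statement immediately from the Approximate AC theorem (\Cref{th:ApproximateACTheoremVariableDomain}) by choosing all preprocessing maps to be identities. Concretely, I would set $\D_i \coloneqq \D$ and $f_i \coloneqq \id_\D$ for every $i\in[k]$. Then the lifted maps $f^*_i = \id_{\overS_i}\times f_i$ are just $\id_{\overS_i\times\D}$, so that $\M^*_i\circ f^*_i = \M^*_i$; identifying each $\M^*_i$ with the given $\M_i$, the adaptive-composed mechanism of \Cref{th:ApproximateACTheoremVariableDomain} becomes exactly $\M = (\M_1,\dots,\M_k)_{\adapt}$.

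Next I would verify that the hypotheses of \Cref{th:ApproximateACTheoremVariableDomain} are met: for each fixed $\overs_i\in\overS_i$ the section $\M_i(\overs_i,\cdot)\colon\D\to\S_i$ is assumed $(d_i,\delta_i)$-private, which is precisely the required condition on $\M^*_i(\overs_i,\cdot)\colon\D_i\to\S_i$ once $\D_i=\D$. The conclusion of \Cref{th:ApproximateACTheoremVariableDomain} then yields $(d_\D,\delta_\D)$-privacy of $\M$ with $d_\D(D,D') = \sum_{i=1}^k d_i(f_i(D),f_i(D')) = \sum_{i=1}^k d_i(D,D')$ and $\delta_\D(D,D') = \sum_{i=1}^k \delta_i(f_i(D),f_i(D')) = \sum_{i=1}^k \delta_i(D,D')$, since each $f_i = \id_\D$, hence $f_i(D)=D$. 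This is exactly the asserted $(\sum_{i=1}^k d_i,\sum_{i=1}^k \delta_i)$-privacy.

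Since the statement is a direct corollary of a theorem already available to us, there is no genuine obstacle; the only care needed is the bookkeeping of the lifted maps $f^*_i$ and the degenerate base case ($\overS_1 = \varnothing$, $f^*_1 = f_1$). Should one prefer a self-contained argument, the alternative is an induction on $k$ that peels off $\M_k$, integrates the two-sided approximate inequality over the law of the intermediate outputs $(\MM_1(D),\dots,\MM_{k-1}(D))$, and then invokes the inductive hypothesis on $(\M_1,\dots,\M_{k-1})_{\adapt}$. The delicate point in that route is showing that the additive errors really sum to $\sum_i\delta_i$ rather than accumulating an $\e^{\varepsilon}$-type prefactor; this relies on the standard measure-theoretic lemma that $(\varepsilon,\delta)$-closeness of two laws transfers through the expectation of any $[0,1]$-valued functional, so that the outer integral (of total mass one) contributes no extra loss. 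All of this is subsumed by \Cref{th:ApproximateACTheoremVariableDomain}, so the corollary itself requires none of it.
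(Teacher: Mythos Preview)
Your proposal is correct and matches the paper's own proof exactly: the paper's entire argument is the single line ``Direct from \Cref{th:ApproximateACTheoremVariableDomain} by taking $\D_i=\D$ and $f_i=\id$.'' Your additional bookkeeping and the remark about a self-contained inductive alternative are accurate but go beyond what the paper records.
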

It is important to remark that $\sum_{i=1}^k d_i<\infty$ if and only if $d_i<\infty$, but we can still end up with no privacy guarantee if $\sum_{i=1}^k \delta_i\geq 1$, which can happen even if all $\delta_i<1$. This fact motivates further the search for tighter bounds and the introduction of the approximate counterpart of \Cref{th:GeneralizedIPCVariableDomain}:

\begin{theorem}[restate = THapproximateAPCvariabledomain, name = Approximate best bound for disjoint inputs]\label{th:ApproximateAPCVariableDomain}
    Let $\D$ be a database class and $\G$ a granularity over $\D$. Let $p$ be a $d^\G_\D$-compatible $k$-partitioning function such that
    $\Delta p_i\leq 1$,
    and $p^*_i = \id_{\overS_i}\times p_i$ (with $p^*_1=p_1$). 
    For $i\in[k]$, let $\M^*_i\colon\overS_{i}\times\D_i\to\S_i$ be a mechanism such that $\M^*_i(\overs_{i},\cdot)\colon\D_i\to\S_i$ satisfies $(\varepsilon_i d^\G_{\D_i},\delta_i [d^\G_{\D_i}]_{\varepsilon_i})$-privacy for any $\overs_{i}\in\overS_{i}$.
    Then mechanism $\M=(\M^*_1\circ p^*_1,\dots,\M^*_k\circ p^*_k)_{\adapt}$ is $(\varepsilon d^\G_{\D},\delta [d^\G_\D]_\varepsilon)$-private with $\varepsilon=\max_{i\in[k]} \varepsilon_i$ and $\delta=\max_{i\in[k]} \delta_i$.
\end{theorem}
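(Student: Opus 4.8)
The plan is to reduce Theorem~\ref{th:ApproximateAPCVariableDomain} to the general approximate adaptive composition theorem~(\ref{th:ApproximateACTheoremVariableDomain}) already established, exactly mirroring the way \Cref{th:GeneralizedAPCVariableDomain} follows from \Cref{th:ACTheoremVariableDomain} in the pure case. First I would invoke \Cref{th:ApproximateACTheoremVariableDomain} with the maps $f_i=p_i$, obtaining that $\M$ is $(d_\D,\delta_\D)$-private with
\[
    d_\D(D,D') = \sum_{i=1}^k \varepsilon_i d^\G_{\D_i}(p_i(D),p_i(D'))
    \quad\text{and}\quad
    \delta_\D(D,D') = \sum_{i=1}^k \delta_i \bigl[d^\G_{\D_i}\bigr]_{\varepsilon_i}\!\bigl(p_i(D),p_i(D')\bigr).
\]
The remaining work is purely metric bookkeeping: show that both sums collapse to the claimed single-term expressions $\varepsilon d^\G_\D(D,D')$ and $\delta[d^\G_\D]_\varepsilon(D,D')$ with $\varepsilon=\max_i\varepsilon_i$, $\delta=\max_i\delta_i$. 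Since $(d_\D,\delta_\D)$-privacy is monotone in both arguments (a weaker, i.e.\ larger, $d_\D$ and $\delta_\D$ is still satisfied — this is immediate from \Cref{def:approximate d-privacy}), it suffices to dominate each sum pointwise by the target.

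The key step is the identity $\sum_{i=1}^k d^\G_{\D_i}(p_i(D),p_i(D')) = d^\G_\D(D,D')$ for all $D,D'\in\D$, which was already extracted (as \Cref{eq:CommutativityCommon}'s analogue) from $d^\G_\D$-compatibility together with $\Delta p_i\le 1$ in the proof of \Cref{th:GeneralizedIPCVariableDomain}; I would reuse it verbatim. From it, the first sum satisfies $\sum_i \varepsilon_i d^\G_{\D_i}(p_i(D),p_i(D')) \le (\max_i\varepsilon_i)\sum_i d^\G_{\D_i}(p_i(D),p_i(D')) = \varepsilon\, d^\G_\D(D,D')$. For the $\delta$ part, the main obstacle — and the only genuinely new piece — is that the transformation $[\,\cdot\,]_\varepsilon$ is nonlinear, so I cannot split the sum as directly. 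The plan is to control it in two moves: first bound each summand by noting $\delta_i\le\delta$ and that $x\mapsto[t]_x(D,D')$ behaves well, giving $\delta_i[d^\G_{\D_i}]_{\varepsilon_i}(p_i(D),p_i(D')) \le \delta\,[d^\G_{\D_i}]_{\varepsilon}(p_i(D),p_i(D'))$ (using $\varepsilon_i\le\varepsilon$ and monotonicity of $[\,\cdot\,]_x$ in $x$ at fixed distance); then prove the superadditivity-type inequality $\sum_i [d^\G_{\D_i}]_\varepsilon(p_i(D),p_i(D')) \le [\,\sum_i d^\G_{\D_i}(p_i(D),p_i(D'))\,]_\varepsilon = [d^\G_\D]_\varepsilon(D,D')$, which reduces to the elementary fact that $n\mapsto \frac{\e^{\varepsilon n}-1}{\e^\varepsilon-1}$ is superadditive on $\mathbb{N}$ (since $\e^{\varepsilon(m+n)}-1 \ge (\e^{\varepsilon m}-1)+(\e^{\varepsilon n}-1)$ for nonnegative integers, as $\e^{\varepsilon m}\e^{\varepsilon n}\ge \e^{\varepsilon m}+\e^{\varepsilon n}-1$).

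I expect the monotonicity claim for $[\,\cdot\,]_x$ in the budget parameter $x$ to require a short verification — one checks that $\partial_x\frac{\e^{x n}-1}{\e^x-1}\ge 0$, or more cleanly that $\frac{\e^{xn}-1}{\e^x-1}=\sum_{j=0}^{n-1}\e^{xj}$ is visibly nondecreasing in $x\ge 0$ — and this last identity also makes the superadditivity transparent, since $\sum_{j=0}^{m+n-1}\e^{xj}\ge \sum_{j=0}^{m-1}\e^{xj}+\sum_{j=0}^{n-1}\e^{xj}$ when the cross terms $\e^{xj}$ for $j\ge\max(m,n)$ are at least $1$. Assembling: $\delta_\D(D,D')\le \delta\,[d^\G_\D]_\varepsilon(D,D')$ and $d_\D(D,D')\le \varepsilon\,d^\G_\D(D,D')$, and then monotonicity of the $(d_\D,\delta_\D)$-privacy guarantee yields the theorem. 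The one point to handle carefully is the edge case $\varepsilon=0$ (where $[d]_\varepsilon$ must be read as its limit $d(D,D')$, consistent with the convention in \Cref{th:Approximated-privacytoDP}), but there every $\varepsilon_i=0$ as well and the statement degenerates to \Cref{th:GeneralizedAPCVariableDomain} with an additive $\delta$, so no separate argument is really needed.
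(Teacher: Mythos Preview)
Your proposal is correct but takes a different route from the paper's proof. The paper first invokes \Cref{th:Approximated-privacytoDP} to reduce the goal to checking $\G$ $(\varepsilon,\delta)$-DP, i.e., the approximate-DP inequality only for $\G$-\emph{neighboring} $D,D'$. On neighbors, $d^\G_\D$-compatibility plus $\Delta p_i\le 1$ forces all but one summand in both $d_\D(D,D')=\sum_i\varepsilon_i d^\G_{\D_i}(p_i(D),p_i(D'))$ and $\delta_\D(D,D')=\sum_i\delta_i[d^\G_{\D_i}]_{\varepsilon_i}(p_i(D),p_i(D'))$ to vanish, and the surviving one is at most $\varepsilon_j$ (resp.\ $\delta_j$). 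Taking the max over $j$ finishes. No superadditivity of $[\cdot]_\varepsilon$, no monotonicity in the budget parameter, no global metric identity are needed.

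Your approach instead bounds $(d_\D,\delta_\D)$ pointwise over \emph{all} pairs $D,D'$ and then appeals to monotonicity of $(d_\D,\delta_\D)$-privacy. This works, and the superadditivity and budget-monotonicity arguments you sketch for $[\cdot]_\varepsilon$ are correct. Two small remarks: (i) what you actually need and use is the inequality $\sum_i d^\G_{\D_i}(p_i(D),p_i(D'))\le d^\G_\D(D,D')$ for all $D,D'$, not the equality you call an ``identity''; (ii) that inequality is \emph{not} established in the proof of \Cref{th:GeneralizedIPCVariableDomain}, which only treats neighbors --- you would need to add the one-line extension via the triangle inequality along a shortest $\G$-chain (each hop contributing at most $1$ to the sum by compatibility and $\Delta p_i\le 1$). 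The payoff of your route is that it yields the full $(\varepsilon d^\G_\D,\delta[d^\G_\D]_\varepsilon)$-privacy bound directly, without the round-trip through \Cref{th:Approximated-privacytoDP}; the paper's route is shorter because collapsing to a single nonzero term on neighbors sidesteps all the analytic facts about $[\cdot]_\varepsilon$.
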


Furthermore, the common-domain setting imposing $\M^*_i\circ f_i$ to be $(d_i,\delta_i)$-private also leads to interesting composition results for this DP variation:

\begin{theorem}[restate = THapproximateACtheoremcommondomain, name = Approximate AC theorem for common domain]\label{th:ApproximateACTheoremCommonDomain}
    For $i\in[k]$, let $(\D,d_i)$ be a privacy space, $\delta_i\colon\D^2\to[0,\infty]$, and let $f_i$ be a deterministic map over $\D$. 
    For $i\in[k]$, let $\M_i\colon\overS_{i}\times\D\to\S_i$ be a mechanism such that $\M_k(\overs_{i},\cdot)\colon\D\to\S_i$ satisfies $(d_i,\delta_i)$-privacy and $f_i$-dependency for any $\overs_{i}\in\overS_{i}$.
    Then mechanism $\M=(\M_1,\dots,\M_k)_{\ind}$ is $(\sum^k_{i=1} d_i^{f_i},\sum^k_{i=1} \delta_i^{f_i})$-private* with
    \[
        \delta_i^f(D,D') \coloneqq \min_{\substack{\tilde{D},\tilde{D}'\in\D \\ d_i(\tilde{D},\tilde{D}')=d_i^f(D,D')}} \delta_i(\tilde{D},\tilde{D}').
    \]
    
\end{theorem}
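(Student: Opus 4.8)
The plan is to obtain \Cref{th:ApproximateACTheoremCommonDomain} as a direct consequence of the two results that precede it: the approximate AC theorem (\ref{th:ApproximateACTheoremVariableDomain}) and a minimum-privacy lemma for the approximate setting analogous to \Cref{prop:MinimumPrivacy}. First I would establish the approximate counterpart of \Cref{prop:MinimumPrivacy}: if $\M\colon\D\to\S$ is $(d_\D,\delta_\D)$-private and $f$-dependent, then $\M$ is $(d_\D^f,\delta_\D^f)$-private* with $d_\D^f$ defined as in \Cref{prop:MinimumPrivacy} and $\delta_\D^f$ as in the statement. The proof mimics that of \Cref{prop:MinimumPrivacy}: write $\M=\M^*\circ f$, so that for any $D,D'\in\D$ and measurable $S$, $\Prob\{\M(D)\in S\}=\Prob\{\M^*(f(D))\in S\}$. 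Now pick any $\tilde D,\tilde D'\in\D$ with $f(\tilde D)=f(D)$, $f(\tilde D')=f(D')$; then $\Prob\{\M(D)\in S\}=\Prob\{\M(\tilde D)\in S\}\le \e^{d_\D(\tilde D,\tilde D')}\Prob\{\M(\tilde D')\in S\}+\delta_\D(\tilde D,\tilde D')=\e^{d_\D(\tilde D,\tilde D')}\Prob\{\M(D')\in S\}+\delta_\D(\tilde D,\tilde D')$. The subtlety here, and what I expect to be the main obstacle, is that $d_\D^f$ and $\delta_\D^f$ are minimized \emph{independently} over the choices of $\tilde D,\tilde D'$, so one cannot simply take the pair realizing the minimum distance and read off the corresponding $\delta$. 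I would handle this by restricting the minimization of $\delta_\D^f$ to pairs already achieving $d_\D^f$ (as the statement's formula does: the constraint is $d_i(\tilde D,\tilde D')=d_i^f(D,D')$), so that choosing such an optimal pair gives simultaneously the exponent $d_\D^f(D,D')$ and a $\delta$-term no larger than $\delta_\D^f(D,D')$, yielding the claimed inequality.

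Once this approximate minimum-privacy lemma is in place, the theorem follows by the same bootstrapping used for \Cref{th:ICTheoremCommonDomain}. Each $\M_i(\overs_i,\cdot)$ is $(d_i,\delta_i)$-private and $f_i$-dependent (for every fixed $\overs_i\in\overS_i$), so the lemma gives that $\M_i(\overs_i,\cdot)$ is $(d_i^{f_i},\delta_i^{f_i})$-private* for every $\overs_i$. I would then want to invoke \Cref{th:ApproximateACTheoremVariableDomain} with the roles of $(d_i,\delta_i)$ played by $(d_i^{f_i},\delta_i^{f_i})$ and with $f_i=\id$, concluding that $\M=(\M_1,\dots,\M_k)_{\adapt}$ is $(\sum_i d_i^{f_i},\sum_i \delta_i^{f_i})$-private*. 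The one technical gap is that \Cref{th:ApproximateACTheoremVariableDomain} is stated for $(d_i,\delta_i)$-\emph{privacy} (genuine metrics), whereas $d_i^{f_i}$ need not be a metric and $(d_i^{f_i},\delta_i^{f_i})$-privacy is only a privacy* notion; I would note that the proof of \Cref{th:ApproximateACTheoremVariableDomain} only uses the two-point inequality in \Cref{def:approximate d-privacy} together with a telescoping/chaining argument over the $k$ adaptive steps, none of which requires the triangle inequality, so it transfers verbatim to privacy*. This is the step I would flag explicitly in the proof as the place where "metric" must be weakened to "privacy*".

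For completeness I would then record the two displayed consequences exactly as in the independent case: since $d_i^{f_i}\le d_i$ and, whenever $f_i(D)=f_i(D')$ one has $d_i^{f_i}(D,D')=0$ and $\delta_i^{f_i}(D,D')=0$, we get the pointwise bounds $\sum_i d_i^{f_i}(D,D')\le \sum_{i:f_i(D)\neq f_i(D')} d_i(D,D')$ and likewise for the $\delta$-terms, which are the "better than sequential" estimates. The statement as written composes with the subscript $\M=(\M_1,\dots,\M_k)_{\ind}$ rather than $\adapt$; if that is intended to be the adaptive composition I would use \Cref{th:ApproximateACTheoremVariableDomain} as above, and if it is genuinely the independent composition the same argument applies a fortiori (independent composition being a special case of adaptive). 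Either way the core of the work — and the only place any real care is needed — is the simultaneous-minimization argument in the approximate minimum-privacy lemma.
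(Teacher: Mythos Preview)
Your proposal is correct and follows essentially the same route as the paper: the paper first states and proves an approximate minimum-privacy lemma (\Cref{prop:MinimumPrivacyApproximate}) exactly as you describe, including the constrained minimization of $\delta$ over pairs already realizing $d^f$, and then says the result follows from an analogous proof of \Cref{th:ApproximateACTheoremVariableDomain}. Your flagging of the metric-versus-privacy* gap and of the $\ind$/$\adapt$ discrepancy in the statement is accurate and matches how the paper handles (or glosses over) these points.
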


\begin{example}
    We continue from \Cref{ex:IntermediateSetting}, where we want to know the number of available ambulances for each hospital. 
    However, we instead consider $\M=(\M_1,\dots,\M_k)_{\ind}$ such that $\M_i=\M^*_a\circ f_i$ are bounded $(1,10^{-5})$-DP (i.e., $(d^{\B}_{\D},\delta_0[d^{\B}_{\D}]_{1})$-private with $\delta_0=10^{-5}$). By construction, for all $i\in[k]$, mechanism $\M_i$ outputs the perturbed number of ambulances linked to $i$ and is $f_i$-dependent (where $f_i(D)$ is the subdatabase of $D\in\D$ of ambulances assigned to hospital $i$).
    
    Applying \Cref{th:ApproximateACTheoremCommonDomain}, we obtain that mechanism $\M$ is $(\sum^k_{i=1} (d^{\B}_{\D})^{f_i},\sum^k_{i=1} (\delta_0[d^{\B}_{\D}]_{1})^{f_i})$-private*. Note that under the bounded metric, we have that $I_f(D,D')\leq 6$ for all $D\neigh_\B D'$. Therefore, we can bound the privacy parameters as follows: $\sum^k_{i=1} (d^{\B}_{\D})^{f_i}(D,D')\leq \sum_{i\in I_f(D,D')} d^{\B}_{\D}(D,D')\leq 6 d^{\B}_{\D}(D,D')$ and, analogously, $\sum^k_{i=1} (\delta_0[d^{\B}_{\D}]_{1})^{f_i}(D,D')\leq \delta_0\sum_{i\in I_f(D,D')} [d^{\B}_{\D}]_{1}(D,D')\leq 6 \delta_0[d^{\B}_{\D}]_{1}(D,D')$. 

    In conclusion, $\M$ is $(6d^{\B}_{\D},6 \delta_0[d^{\B}_{\D}]_{1})$-private (i.e., bounded $(6,6\cdot 10^{-5})$-DP).
\end{example}

The approximate variant of \Cref{th:GeneralizedAPCCommonDomain} can also be enunciated (see \Cref{th:approximateAPCCommonDomain}).

\subsection{Extending to Zero-Concentrated DP}\label{sec:ExtendingTozCDP}
Another common adaptation of DP is \textit{zero-concentrated DP} (zCDP)~\cite{bun2016Concentrated}. This privacy metric is based on a bound on the Rényi divergence:

\begin{definition}[Rényi divergence~\cite{vanerven2014Renyi,bun2016Concentrated}]
    Given two probability distributions $P$ and $Q$ defined over $\S$, the \textit{Rényi divergence of order $\alpha\in(1,\infty)$} is defined as
    \[
        D_{\alpha}(P\dline Q)\coloneqq\frac{1}{\alpha-1}\ln\int_{\S}p^{\alpha}q^{1-\alpha}\,\diff\mu 
    \]
    where $p$ and $q$ are the densities of $P$ and $Q$ with respect to measure $\mu$\footnote{Measure $\mu$ always exists in this case and its choice does not affect the results~\cite{vanerven2014Renyi}.}, respectively. For order $\alpha=\infty$, it is defined as
    \[
        D_{\infty}(P\dline Q) \coloneqq \lim_{\alpha\to\infty} D_{\alpha}(P\dline Q) = \ln\sup_{\text{$S$ meas.}}\frac{P(S)}{Q(S)}     
    \]
\end{definition}

The previous integral notation will be useful to represent both continuous and discrete cases, i.e., if $P$ and $Q$ are continuous, the integral equals $\int_{\S}p(s)^{\alpha}q(s)^{1-\alpha}\,\diff s$ with $p$ and $q$ the corresponding density functions, and if $P$ and $Q$ are discrete, it equals $\sum_{s\in\S}p(s)^{\alpha}q(s)^{1-\alpha}$ with $p$ and $q$ the corresponding probability mass functions.

Note that the Rényi divergence is not a metric for $\alpha\in(1,\infty)$, since it does not satisfy the symmetry property and the triangle inequality. It does, however, satisfy the weaker triangle inequality~\cite{bun2016Concentrated}:
For all probability distributions $P$, $Q$ and~$R$, and all $\alpha,k\geq 1$, we have
\[
    D_{\alpha}(P\dline R)\leq \frac{k\alpha}{k\alpha-1}D_{\frac{k\alpha-1}{k-1}}(P\dline Q)+ D_{k\alpha}(Q\dline R).
\]

In the subsequent results, we denote $D_\alpha(\M(D)\dline\M(D'))$ as the Rényi divergence of the distributions of $\M(D)$ and $\M(D')$. Observe that the case $\alpha=\infty$ can be used to define $d_\D$-privacy (and DP), i.e., $\M$ with domain $\D$ is $d_\D$-private if and only if for all $D,D'\in\D$
\[
    D_{\infty}(\M(D)\dline\M(D'))\leq d(D,D').
\]

We can state now the definition of zero-concentrated DP~\cite{bun2016Concentrated} directly extended for any possible granularity~$\G$. 

\begin{definition}[Zero-concentrated DP] 
    Let $\rho\geq0$. A randomized mechanism $\M$ with domain $\DX$ is $\G$ $\rho$-\textit{zero-concentrated DP} ($\G$ $\rho$-zCDP) if, for all $\G$-neighboring $D,D'\in\D$ and all $\alpha\in(1,\infty)$:
    \[
        D_{\alpha}(\M(D)\dline\M(D'))\leq \rho\alpha.
    \]
\end{definition}

The extension to metric zCDP is not trivial, since the bound of the Rényi divergence does not scale linearly for group privacy, but instead quadratically (i.e., $D_{\alpha}(\M(D)\dline\M(D'))\leq (d^{\G}_\D(D,D'))^2\rho\alpha$). In this case, bounding the divergence by a metric would be too restrictive with regard to the original notion. In particular, known zCDP mechanisms, such as the Gaussian mechanism, would not satisfy a linear privacy degradation. Therefore, knowing that the Rényi divergence scales quadratically, we define the following notion:
\begin{definition}[$d^2_{\D}$-zCprivacy]
    Let $(\D,d_\D)$ be a privacy space. Then, a randomized mechanism $\M$ with domain $\D$ is \textit{$d^2_{\D}$-zCprivacy} if for all $D, D'\in\D$ and all $\alpha\in(1,\infty)$,
    \begin{equation}\label{eq:d-zCprivacy}
        D_{\alpha}(\M(D)\dline\M(D'))\leq {d}^2_{\D}(D,D')\alpha
    \end{equation}
    where $d^2_\D(D,D')\coloneqq(d_\D(D,D'))^2$.
\end{definition}

With this definition, we obtain once again the analogous to \Cref{th:d-privacyToDP} for zCDP:

\begin{theorem}
    [restate = THdzCprivacytoDP, name  = ]\label{th:d-zCprivacyToDP}
    Let $\G$ be a granularity notion over the database class $\D$. Then, a mechanism $\M$ with domain $\D$ is $\rho (d^\G_\D)^2$-zCprivate if and only if it is $\G$ $\rho$-zCDP.
\end{theorem}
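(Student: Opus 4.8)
The plan is to prove the biconditional by unwinding both sides into statements about the Rényi divergence and the canonical metric $d^\G_\D$, mirroring the structure of the proof of \Cref{th:d-privacyToDP} but replacing the linear group-privacy scaling of pure DP by the quadratic scaling proper to zCDP. First I would recall that, by the definition of $d^2_\D$-zCprivacy, $\M$ being $\rho(d^\G_\D)^2$-zCprivate means exactly that
\[
    D_\alpha(\M(D)\dline\M(D')) \leq \rho\,(d^\G_\D(D,D'))^2\,\alpha
\]
for all $D,D'\in\D$ and all $\alpha\in(1,\infty)$, since $\big(\sqrt{\rho}\,d^\G_\D\big)^2 = \rho\,(d^\G_\D)^2$.

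For the forward implication, suppose $\M$ is $\rho(d^\G_\D)^2$-zCprivate. Take any $\G$-neighboring $D,D'\in\D$; then $d^\G_\D(D,D')\leq 1$ (indeed $=1$ unless $D=D'$, in which case the inequality is trivial), so $D_\alpha(\M(D)\dline\M(D'))\leq \rho\,\alpha$ for all $\alpha\in(1,\infty)$, which is precisely $\G$ $\rho$-zCDP.

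For the converse, suppose $\M$ is $\G$ $\rho$-zCDP and fix $D,D'\in\D$. If $d^\G_\D(D,D')=\infty$ the bound is vacuous, so assume it equals some finite $n$. By the definition of the canonical metric there is a path $D = D_0 \neigh_\G D_1 \neigh_\G \cdots \neigh_\G D_n = D'$ in $\D$. The main step is to iterate the weak triangle inequality for Rényi divergence along this path: using the stated inequality $D_\alpha(P\dline R)\leq \frac{k\alpha}{k\alpha-1}D_{\frac{k\alpha-1}{k-1}}(P\dline Q) + D_{k\alpha}(Q\dline R)$ with an appropriate choice of $k$ at each stage, together with $D_\beta(\M(D_{j})\dline\M(D_{j+1}))\leq \rho\beta$ for each consecutive pair (which holds for all orders $\beta\in(1,\infty)$ by the $\G$ $\rho$-zCDP hypothesis), one collapses the telescoped sum to $D_\alpha(\M(D)\dline\M(D'))\leq \rho\,n^2\,\alpha = \rho\,(d^\G_\D(D,D'))^2\,\alpha$. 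This is exactly the quadratic group-privacy bound for zCDP, and it is established in \cite{bun2016Concentrated} for the unbounded case; here it transfers verbatim because the only thing used is that each link of the path is $\G$-neighboring and hence satisfies the per-step $\rho\alpha$ bound for every order. Concluding, $\M$ satisfies \eqref{eq:d-zCprivacy} with $d_\D = \sqrt{\rho}\,d^\G_\D$, i.e., $\M$ is $\rho(d^\G_\D)^2$-zCprivate.

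The part I expect to be the real obstacle—or at least the only non-bookkeeping part—is the quadratic group-privacy composition along the path: getting the $n^2$ factor (rather than a crude $n\cdot(\text{something})$) requires the careful bookkeeping of the order parameters $\alpha$ in the weak triangle inequality, choosing $k = n-j$ (or a similar schedule) at the $j$-th step so that the multiplicative prefactors $\frac{k\alpha}{k\alpha-1}$ telescope correctly. Since this is precisely Lemma-level content already proved in \cite{bun2016Concentrated}, I would cite it for the group-privacy estimate and only verify that its hypotheses (a chain of neighbors, each satisfying the single-step bound at all orders) are met in our general-granularity setting, which is immediate from \Cref{def:G-neighborhood} and the construction of $d^\G_\D$. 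The remaining directions and the edge cases ($D=D'$, $d^\G_\D(D,D')=\infty$) are routine.
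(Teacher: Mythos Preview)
Your proposal is correct and follows essentially the same approach as the paper: both directions are handled by reducing to the neighbor case via the canonical metric, and the nontrivial direction uses a relational chain together with the weak triangle inequality for R\'enyi divergence to obtain the quadratic $n^2$ scaling. The only difference is organizational: the paper carries out the induction explicitly (one application of the weak triangle inequality per inductive step, with the parameter $k$ taken equal to the chain length, yielding $\rho k\alpha(k-1)+\rho k\alpha=\rho k^2\alpha$), whereas you sketch the iteration and defer the arithmetic to \cite{bun2016Concentrated}.
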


We now present the AC theorem for $d^2_\D$-zCprivacy:

\begin{theorem}[restate = THzeroconcentratedactheoremvariabledomain, name = Zero-concentrated AC theorem]\label{th:ZeroConcentratedACTheoremVariableDomain}
    Let $\D$ be a database class, and, for all $i\in[k]$, let $(\D_i,d_{i})$ be a privacy space, and $f_i\colon\D\to\D_i$ a deterministic map and $f^*_i = \id_{\overS_i}\times f_i$ (with $f^*_1=f_1$). 
    
    For $i\in[k]$, let $\M^*_i\colon\overS_{i}\times\D_i\to\S_i$ be a mechanism such that $\M^*_i(\overs_{i},\cdot)\colon\D_i\to\S_i$ satisfies $d^2_{i}$-zCprivacy for any $\overs_{i}\in\overS_{i}$.
    
    Then mechanism $\M=(\M^*_1\circ f^*_1,\dots,\M^*_k\circ f^*_k)_{\adapt}$ is $d^2_{\D}$-zCprivate with
    \[
        d^2_\D(D,D') \coloneqq \sum^k_{i=1} d^2_i(f_i(D),f_i(D')) \quad \text{for all $D,D'\in\D$}.
    \]
\end{theorem}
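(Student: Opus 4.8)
The plan is to fix an arbitrary pair $D,D'\in\D$ and an arbitrary order $\alpha\in(1,\infty)$, and to bound $D_\alpha(\M(D)\dline\M(D'))$ by $d^2_\D(D,D')\alpha=\bigl(\sum_{i=1}^k d_i^2(f_i(D),f_i(D'))\bigr)\alpha$; this is exactly what $d^2_\D$-zCprivacy asks (the endpoint $\alpha=\infty$ is not needed). Write $c_i\coloneqq d_i^2(f_i(D),f_i(D'))$. If some $c_i=\infty$ the target bound is infinite and there is nothing to prove, so I may assume every $c_i$ is finite. The starting observation is that each component mechanism, \emph{conditioned on its auxiliary input}, inherits a Rényi bound from the hypothesis: for $\M_i\coloneqq\M^*_i\circ f^*_i$ and every fixed $\overs_i\in\overS_i$,
\[
  D_\alpha\bigl(\M_i(\overs_i,D)\dline\M_i(\overs_i,D')\bigr)=D_\alpha\bigl(\M^*_i(\overs_i,f_i(D))\dline\M^*_i(\overs_i,f_i(D'))\bigr)\le c_i\alpha ,
\]
since $\M^*_i(\overs_i,\cdot)$ is $d^2_i$-zCprivate. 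The key point is that this bound is \emph{uniform} over the conditioning value $\overs_i$.

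The heart of the argument is a chain rule for the Rényi divergence under adaptive composition, which I would establish by induction on $k$. Fix base measures $\mu_i$ on $\S_i$ (they exist by the footnote on the Rényi divergence, and can be chosen so that $\M_i(\overs_i,D)$ and $\M_i(\overs_i,D')$ are simultaneously absolutely continuous with respect to $\mu_i$ for every $\overs_i$). By \Cref{def:AdaptiveComposedMechanism} the density of $\M(D)=(\MM_1(D),\dots,\MM_k(D))$ with respect to $\mu\coloneqq\mu_1\times\cdots\times\mu_k$ factors as $p^D(s_1,\dots,s_k)=\prod_{i=1}^k p^D_i(s_i\mid s_1,\dots,s_{i-1})$, where $p^D_i(\cdot\mid\overs_i)$ is the density of $\M_i(\overs_i,D)$, and similarly for $D'$. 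Hence $(p^D)^\alpha(p^{D'})^{1-\alpha}$ factors as well, and integrating out the last coordinate first (writing $\mu'\coloneqq\mu_1\times\cdots\times\mu_{k-1}$ and $\overs_k=(s_1,\dots,s_{k-1})$),
\[
  \int (p^D)^\alpha(p^{D'})^{1-\alpha}\,\diff\mu=\int (p^D_{1:k-1})^\alpha(p^{D'}_{1:k-1})^{1-\alpha}\biggl(\int_{\S_k}(p^D_k(\cdot\mid\overs_k))^\alpha(p^{D'}_k(\cdot\mid\overs_k))^{1-\alpha}\,\diff\mu_k\biggr)\diff\mu' ,
\]
where $p^D_{1:k-1}$ is the density of the adaptive composition of the first $k-1$ mechanisms. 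The inner integral equals $\e^{(\alpha-1)D_\alpha(\M_k(\overs_k,D)\dline\M_k(\overs_k,D'))}\le\e^{(\alpha-1)c_k\alpha}$ uniformly in $\overs_k$; pulling this nonnegative bound out of the outer integral and applying the induction hypothesis to $p^D_{1:k-1},p^{D'}_{1:k-1}$ gives $\int (p^D)^\alpha(p^{D'})^{1-\alpha}\,\diff\mu\le\e^{(\alpha-1)(\sum_{i=1}^k c_i)\alpha}$. Taking $\tfrac{1}{\alpha-1}\ln(\cdot)$ (monotone since $\alpha>1$) yields $D_\alpha(\M(D)\dline\M(D'))\le(\sum_{i=1}^k c_i)\alpha$, which is the claim.

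I expect the main obstacle to be the measure-theoretic bookkeeping rather than any inequality: one must justify that the adaptive-composed mechanism genuinely has the product-form density above with conditional factors $p^D_i(\cdot\mid\overs_i)$, select compatible base measures under which all the relevant conditional laws are dominated, and license the Fubini--Tonelli interchange (which is legitimate because every integrand is nonnegative). Once the chain-rule lemma is in place, everything else is routine. Two remarks make the zCDP case cleaner than it might look: we never invoke the weak triangle inequality for $D_\alpha$ (that is needed for concatenating \emph{different} pairs of distributions, not for the adaptive chain rule), and the quadratic scaling of the Rényi bound under group privacy is already absorbed into the definition $d^2_\D\coloneqq(d_\D)^2$, so — unlike the $\varepsilon$-DP proofs — no extra care is required to propagate it through the composition.
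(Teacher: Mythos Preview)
Your proposal is correct and follows essentially the same approach as the paper: an induction on $k$ in which the joint density of the adaptive composition is factored into the density of the first $k-1$ components times the conditional density of the $k$th, the inner integral over $\S_k$ is bounded uniformly by $\e^{(\alpha-1)c_k\alpha}$ using the $d^2_k$-zCprivacy hypothesis, and the outer integral is handled by the induction hypothesis. Your explicit attention to the measure-theoretic prerequisites (dominating measures, Fubini--Tonelli via nonnegativity) and your observation that the weak triangle inequality is not needed here are both accurate refinements beyond what the paper spells out.
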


As in the previous cases, \Cref{th:ZeroConcentratedACTheoremVariableDomain} can be used to formulate the corresponding corollaries. 

\begin{theorem}[restate = THzeroconcentratedACtheoremcommondomain, name = Zero-concentrated AC theorem for common domain]\label{th:ZeroConcentratedACTheoremCommonDomain}
    For $i\in[k]$, let $(\D,d_i)$ be a privacy space, and let $f_i$ be a deterministic map over $\D$.
    For $i\in[k]$, let $\M_i\colon\overS_{i}\times\D\to\S_i$ be a mechanism such that $\M_k(\overs_{i},\cdot)\colon\D\to\S_i$ satisfies $d^2_i$-zCprivacy and $f_i$-dependency for any $\overs_{i}\in\overS_{i}$.
    Then mechanism $\M=(\M_1,\dots,\M_k)_{\adapt}$ is $d^2_\D$-zCprivate* with $d^2_\D\coloneqq\sum^k_{i=1} (d_i^{f_i})^2$.
\end{theorem}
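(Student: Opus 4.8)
The plan is to mirror the route from \Cref{th:ACTheoremVariableDomain} to \Cref{th:ACTheoremCommonDomain}: first I would upgrade each mechanism's per-step guarantee using its dependency structure, and then feed the improved bounds into the adaptive-composition argument already underlying \Cref{th:ZeroConcentratedACTheoremVariableDomain}.

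\textbf{Step 1: a zCDP minimum-privacy lemma.} Fix $i\in[k]$ and an output history $\overs_i\in\overS_i$. Since $\M_i(\overs_i,\cdot)$ is $f_i$-dependent, write $\M_i(\overs_i,\cdot)=\M^*_i(\overs_i,\cdot)\circ f_i$. Then for any $D,D'\in\D$ and any $\tilde D,\tilde D'\in\D$ with $f_i(\tilde D)=f_i(D)$ and $f_i(\tilde D')=f_i(D')$, the output distributions coincide: $\M_i(\overs_i,D)=\M_i(\overs_i,\tilde D)$ and $\M_i(\overs_i,D')=\M_i(\overs_i,\tilde D')$. Hence for every $\alpha\in(1,\infty)$,
\[
    D_\alpha\bigl(\M_i(\overs_i,D)\dline\M_i(\overs_i,D')\bigr)=D_\alpha\bigl(\M_i(\overs_i,\tilde D)\dline\M_i(\overs_i,\tilde D')\bigr)\leq\bigl(d_i(\tilde D,\tilde D')\bigr)^2\alpha,
\]
by $d^2_i$-zCprivacy of $\M_i(\overs_i,\cdot)$. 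Minimizing over all admissible $\tilde D,\tilde D'$ gives $D_\alpha(\M_i(\overs_i,D)\dline\M_i(\overs_i,D'))\leq(d_i^{f_i}(D,D'))^2\alpha$, with $d_i^{f_i}$ the quantity of \Cref{prop:MinimumPrivacy}. So each $\M_i(\overs_i,\cdot)$ is $(d_i^{f_i})^2$-zCprivate*, uniformly in $\overs_i$.

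\textbf{Step 2: adaptive composition with the improved bounds.} Next I would rerun the proof of \Cref{th:ZeroConcentratedACTheoremVariableDomain}, taking $f_i=\id$ there and replacing the per-step metric bound $d^2_i$ by the per-step starred bound $(d_i^{f_i})^2$ from Step 1. That proof bounds $D_\alpha(\M(D)\dline\M(D'))$ for the adaptive composition by telescoping the R\'enyi divergence along the output history via the chain/weak-triangle inequality for $D_\alpha$, and it only ever uses that for each $i$ and each history $\overs_i$ the divergence $D_\alpha(\M_i(\overs_i,D)\dline\M_i(\overs_i,D'))$ is at most $c_i(D,D')\alpha$ for some $c_i(D,D')$ not depending on $\overs_i$ — which Step 1 supplies with $c_i=(d_i^{f_i})^2$. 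The same computation then yields, for all $D,D'\in\D$ and all $\alpha\in(1,\infty)$,
\[
    D_\alpha\bigl(\M(D)\dline\M(D')\bigr)\leq\Bigl(\textstyle\sum_{i=1}^k\bigl(d_i^{f_i}(D,D')\bigr)^2\Bigr)\alpha,
\]
which is precisely $d^2_\D$-zCprivacy* for $d^2_\D=\sum_{i=1}^k(d_i^{f_i})^2$. From here one can additionally record the coarser, metric-free bound $d^2_\D(D,D')\leq\sum_{i\,:\,f_i(D)\neq f_i(D')}(d_i(D,D'))^2$, in analogy with \Cref{th:ICTheoremCommonDomain}.

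\textbf{Main obstacle.} The delicate point is Step 2: confirming that the adaptive chain rule for R\'enyi divergence still goes through when the per-step bounds are not honest metrics (so $d^2_\D$ need not satisfy the triangle inequality). This is acceptable because that argument manipulates divergences and numeric bounds rather than metric structure, but I would need to check that the quantification over the (random) output history is handled uniformly — which it is, since the Step 1 bound holds for every fixed $\overs_i$ with a bound independent of $\overs_i$. I would also state carefully that, because $d_i^{f_i}$ and hence $d^2_\D$ may violate the triangle inequality, the conclusion is genuinely a ``zCprivate*'' statement and cannot in general be sharpened to $d^2_\D$-zCprivacy.
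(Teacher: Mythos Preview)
Your proposal is correct and matches the paper's own proof: the paper first invokes a zCDP minimum-privacy proposition (\Cref{prop:MinimumPrivacyZeroConcentrated}, which is exactly your Step~1) to get that each $\M_i(\overs_i,\cdot)$ is $(d_i^{f_i})^2$-zCprivate*, and then says the result follows by rerunning the proof of \Cref{th:ZeroConcentratedACTheoremVariableDomain} with these starred bounds. One small terminological point: the telescoping in that proof is via the chain-rule factorization of the joint density (conditioning on the output history), not the weak triangle inequality for $D_\alpha$; but your identification of what that argument actually needs---a per-step bound uniform in $\overs_i$, not any metric structure---is exactly right.
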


\begin{theorem}[restate = THzeroconcentratedASC, name = Generalized zero-concentrated ASC]\label{th:ZeroConcentratedASC}
    Let $\D$ be a database class, and, for all $i\in[k]$, let $(\D,d_i)$ be a privacy space.
    For $i\in[k]$, let $\M_i\colon\overS_{i}\times\D\to\S_i$ be a mechanism such that $\M_k(\overs_{i},\cdot)\colon\D\to\S_i$ is $d^2_i$-zCprivate for any $\overs_{i}\in\overS_{i}$.
    Then $\M=(\M_1,\dots,\M_k)_{\adapt}$ is $(\sum_{i=1}^k d_i^2)$-zCprivate.
\end{theorem}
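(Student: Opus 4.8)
The plan is to obtain \Cref{th:ZeroConcentratedASC} as the special case of the Zero-concentrated AC theorem (\Cref{th:ZeroConcentratedACTheoremVariableDomain}) in which every mechanism already sees the full database class. Concretely, I would apply \Cref{th:ZeroConcentratedACTheoremVariableDomain} with $\D_i=\D$ and $f_i=\id_\D$ for every $i\in[k]$, so that $f^*_i=\id_{\overS_i}\times\id_\D=\id_{\overS_i\times\D}$ and hence $\M^*_i\circ f^*_i=\M^*_i=\M_i$. With this choice the adaptive-composed mechanism $(\M^*_1\circ f^*_1,\dots,\M^*_k\circ f^*_k)_{\adapt}$ is literally $(\M_1,\dots,\M_k)_{\adapt}$, and the hypothesis of \Cref{th:ZeroConcentratedACTheoremVariableDomain} — that $\M^*_i(\overs_i,\cdot)\colon\D_i\to\S_i$ is $d^2_i$-zCprivate for every $\overs_i\in\overS_i$ — is exactly the hypothesis assumed here (that each $\M_i(\overs_i,\cdot)$ is $d^2_i$-zCprivate). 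The conclusion of \Cref{th:ZeroConcentratedACTheoremVariableDomain} then reads that $\M$ is $d^2_\D$-zCprivate with $d^2_\D(D,D')=\sum_{i=1}^k d^2_i(\id(D),\id(D'))=\sum_{i=1}^k d^2_i(D,D')$, which is precisely $(\sum_{i=1}^k d_i^2)$-zCprivacy. Thus the proof is a one-line specialization, and the only thing to verify is that the substitution $\D_i=\D$, $f_i=\id$ is legitimate and that the notation $d^2_i$ means $(d_i)^2$ consistently on both sides.

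For completeness — and because \Cref{th:ZeroConcentratedACTheoremVariableDomain} itself must be established — I would also keep in mind the direct argument, by induction on $k$. The case $k=1$ is immediate. For the step, fix $\alpha\in(1,\infty)$ and $D,D'\in\D$, and let $p,q$ denote the densities (with respect to a common dominating measure $\mu$) of $\M(D)$ and $\M(D')$ on $\S$. Factor the joint density of $\M(D)$ into the density $p'$ of its first $k-1$ coordinates times the conditional density of the last coordinate, which is the density of $\M_k(\overs_k,D)$ given the realized prefix $\overs_k$; doing the integral over the last coordinate first and using Fubini gives
\[
    \int_{\S} p^\alpha q^{1-\alpha}\,\diff\mu = \int p'^{\,\alpha}\, q'^{\,1-\alpha}\,\e^{(\alpha-1)D_\alpha(\M_k(\overs_k,D)\dline\M_k(\overs_k,D'))}\,\diff\mu'.
\]
Since $\M_k(\overs_k,\cdot)$ is $d^2_k$-zCprivate for \emph{every} $\overs_k$, the exponent is at most $(\alpha-1)d_k^2(D,D')\alpha$ uniformly in $\overs_k$, so the right-hand side is bounded by $\e^{(\alpha-1)d_k^2(D,D')\alpha}\int p'^{\,\alpha} q'^{\,1-\alpha}\,\diff\mu' = \e^{(\alpha-1)d_k^2(D,D')\alpha}\,\e^{(\alpha-1)D_\alpha(\M'(D)\dline\M'(D'))}$ with $\M'=(\M_1,\dots,\M_{k-1})_{\adapt}$. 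Applying the inductive hypothesis to bound the remaining Rényi divergence by $(\sum_{i=1}^{k-1}d_i^2(D,D'))\alpha$, taking logarithms, and dividing by $\alpha-1$ yields $D_\alpha(\M(D)\dline\M(D'))\le(\sum_{i=1}^k d_i^2(D,D'))\alpha$, as required. Note that no use of the weak triangle inequality is needed: the exponents add up exactly because the integral factorizes, and the quadratic scaling ($d^2$ rather than $d$) causes no difficulty since additivity of the $(\alpha-1)d_i^2\alpha$ terms is exactly what the factorization produces.

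I expect the corollary route to be entirely routine. In the direct route the main obstacle is purely measure-theoretic bookkeeping: securing a common dominating measure for the joint laws, disintegrating it into conditional densities indexed by the auxiliary input $\overs_k$, and justifying the Fubini exchange — together with the care needed to invoke the hypothesis in its ``for every $\overs_k$'' form, so that the bound on the inner exponential is uniform and can be pulled outside the outer integral before the inductive hypothesis is applied. Since this is precisely the content required to prove \Cref{th:ZeroConcentratedACTheoremVariableDomain}, the cleanest exposition is to do that work once there and derive \Cref{th:ZeroConcentratedASC} by the specialization $\D_i=\D$, $f_i=\id$ described above.
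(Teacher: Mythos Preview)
Your proposal is correct and matches the paper's proof exactly: the paper's entire argument is ``Direct from \Cref{th:ZeroConcentratedACTheoremVariableDomain} by taking $\D_i=\D$ and $f_i=\id$,'' which is precisely your one-line specialization. Your additional inductive sketch is also accurate and mirrors the paper's own proof of \Cref{th:ZeroConcentratedACTheoremVariableDomain}, so there is nothing to correct.
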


When $d_i=\rho_i (d^\U_\D)^2$ we recover the original composition bound $\sum^k_{i=1}\rho_i$ established for unbounded zCDP in~\cite{bun2016Concentrated}, which generalizes to all granularities. However, to the best of our knowledge, no analysis on the privacy loss has previously been performed for zCDP when mechanism $\M_i$ input disjoint subsets. Therefore, we give the two first results about how zCDP degrades when composed, similar to parallel composition:

\begin{theorem}[restate = THzeroconcentratedAPCvariabledomain, name = Zero-concentrated best bound for disjoint inputs]\label{th:ZeroConcentratedAPCVariableDomain}
    Let $\D$ be a database class and $\G$ a granularity over $\D$. Let $p$ be a $d^\G_\D$-compatible $k$-partitioning function such that
    $\Delta p_i\leq 1$,
    and $p^*_i = \id_{\overS_i}\times p_i$ (with $p^*_1=p_1$). For $i\in[k]$, let $\M^*_i\colon\overS_{i}\times\D_i\to\S_i$ be a mechanism such that $\M^*_i(\overs_{i},\cdot)\colon\D_i\to\S_i$ satisfies $\rho_i (d^\G_{\D_i})^2$-zCprivacy for any $\overs_{i}\in\overS_{i}$. Then mechanism $\M=(\M^*_1\circ p^*_1,\dots,\M^*_k\circ p^*_k)_{\adapt}$ is $\rho (d^\G_{\D})^2$-zCprivate with $\rho=\max_{i\in[k]} \rho_i$.
\end{theorem}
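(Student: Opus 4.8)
The plan is to obtain this from the zero‑concentrated AC theorem (\Cref{th:ZeroConcentratedACTheoremVariableDomain}) combined with the block‑additivity of the canonical metric $d^\G_\D$ under a compatible partition, the latter being the same ingredient that powers \Cref{th:GeneralizedIPCVariableDomain}. The one genuinely new point, relative to the pure‑DP case, is that squaring is super‑additive on $[0,\infty)$: since $d^2$‑zCprivacy scales quadratically under group privacy, we pay a sum of \emph{squares} rather than a square of a sum, and this only helps, so the clean $\rho=\max_i\rho_i$ bound survives.

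First I would observe that each $\M^*_i(\overs_i,\cdot)$ being $\rho_i(d^\G_{\D_i})^2$‑zCprivate is exactly $d_i^2$‑zCprivacy for the metric $d_i\coloneqq\sqrt{\rho_i}\,d^\G_{\D_i}$ on $\D_i$. Feeding $f_i=p_i$ into \Cref{th:ZeroConcentratedACTheoremVariableDomain} then gives, for every $D,D'\in\D$ and every $\alpha\in(1,\infty)$,
\[
    D_{\alpha}\big(\M(D)\dline\M(D')\big) \;\leq\; \alpha\sum_{i=1}^k \rho_i\big(d^\G_{\D_i}(p_i(D),p_i(D'))\big)^2 \;\leq\; \alpha\,\rho\sum_{i=1}^k\big(d^\G_{\D_i}(p_i(D),p_i(D'))\big)^2 ,
\]
using $\rho_i\leq\rho$ in the last step. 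It therefore remains to prove the purely metric inequality $\sum_{i=1}^k\big(d^\G_{\D_i}(p_i(D),p_i(D'))\big)^2\leq\big(d^\G_\D(D,D')\big)^2$.

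For that I would reuse the identity $\sum_{i=1}^k d^\G_{\D_i}(p_i(D),p_i(D'))=d^\G_\D(D,D')$, which holds under $d^\G_\D$‑compatibility together with $\Delta p_i\leq1$ (the discussion preceding \Cref{th:GeneralizedIPCVariableDomain}): along a shortest $\G$‑path from $D$ to $D'$, compatibility forces each step to alter exactly one block $p_i$, and $\Delta p_i\leq1$ moves that block by at most one $\G$‑step, whence $\sum_i d^\G_{\D_i}(p_i(D),p_i(D'))\leq d^\G_\D(D,D')$, with the reverse bound being the blockwise triangle inequality. Writing $a_i\coloneqq d^\G_{\D_i}(p_i(D),p_i(D'))\geq0$, the elementary bound $\sum_i a_i^2\leq(\sum_i a_i)^2$ (all cross terms are nonnegative) then yields $\sum_i a_i^2\leq(d^\G_\D(D,D'))^2$ when $d^\G_\D(D,D')<\infty$, and the case $d^\G_\D(D,D')=\infty$ is trivial. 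Substituting back, $D_\alpha(\M(D)\dline\M(D'))\leq\rho\,(d^\G_\D(D,D'))^2\,\alpha$ for all $D,D'\in\D$ and all $\alpha\in(1,\infty)$, i.e.\ $\M$ is $\rho(d^\G_\D)^2$‑zCprivate with $\rho=\max_{i\in[k]}\rho_i$. The main obstacle is this metric identity: it is where the two partition hypotheses are really needed, and one must be careful about non‑exhaustive partitions and about distances that are $\infty$; everything else is bookkeeping on top of \Cref{th:ZeroConcentratedACTheoremVariableDomain}.
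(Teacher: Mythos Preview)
Your argument is correct and takes a somewhat different route from the paper. The paper's proof reduces, via \Cref{th:d-zCprivacyToDP}, to checking the Rényi bound only at $\G$-neighboring pairs: compatibility forces all but one term $d^\G_{\D_j}(p_j(D),p_j(D'))$ to vanish there, and $\Delta p_j\leq1$ makes the surviving term at most $1$, so $d^2(D,D')\leq\rho_j\leq\rho$; the extension to arbitrary pairs then comes for free from the quadratic group-privacy built into \Cref{th:d-zCprivacyToDP}. You instead establish the inequality $\sum_i\big(d^\G_{\D_i}(p_i(D),p_i(D'))\big)^2\leq\big(d^\G_\D(D,D')\big)^2$ directly for all $D,D'$, using $\sum_i a_i\leq d^\G_\D(D,D')$ together with $\sum_i a_i^2\leq(\sum_i a_i)^2$. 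Both approaches are valid; the paper's is slightly shorter because it offloads the ``all pairs'' part to \Cref{th:d-zCprivacyToDP}, while yours is self-contained and makes the role of the $\ell_2$-versus-$\ell_1$ gain explicit.

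One minor correction: you assert the \emph{equality} $\sum_i d^\G_{\D_i}(p_i(D),p_i(D'))=d^\G_\D(D,D')$ and justify the reverse bound as a ``blockwise triangle inequality.'' That reverse direction is not generally true for non-exhaustive partitions (a $\G$-step that only touches data outside $\bigcup_i p_i(D)$ contributes to $d^\G_\D$ but to none of the $d^\G_{\D_i}$), and compatibility only guarantees that each step alters \emph{at most} one block, not exactly one. Fortunately your proof uses only the inequality $\sum_i d^\G_{\D_i}(p_i(D),p_i(D'))\leq d^\G_\D(D,D')$, which follows cleanly from compatibility and $\Delta p_i\leq1$ along a shortest chain, so the conclusion stands once you drop the unneeded equality claim.
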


For the common-domain setting, we find the analogous theorem (see \Cref{th:ZeroConcentratedAPCCommonDomain}).

\subsection{Extending to Gaussian DP}\label{sec:GDP}
Finally, we extend our results to Gaussian DP (GDP)~\cite{dong2022Gaussian}. GDP uses the hypothesis testing interpretation of DP to bound the privacy loss. This way, we understand that an attacker is trying to solve a hypothesis testing problem for two neighboring databases $D$ and $D'$ as~\cite{dong2022Gaussian}
\[
    \begin{cases}
        H_0\colon \text{The input database is $D$,}\\ 
        H_1\colon \text{The input database is $D'$.}
    \end{cases}
\]

Specifically, given an output $s$, an attacker will use a rejection rule $\phi$ to decide whether $D$ or $D'$ was the initial database. The difficulty in distinguishing between the two hypotheses is then described by the optimal trade-off between the \textit{type~I error} (i.e., rejecting $H_0$ when it is true) and the \textit{type~II error} (i.e., failing to reject $H_0$ when it is false). If $P$ and $Q$ are the distribution functions of $\M(D)$ and $\M(D')$ respectively, then the type~I and type~II errors are defined respectively as $\alpha_\phi \coloneqq \mathbb{E}_P[\phi]$ and $\beta_\phi\coloneqq1-\mathbb{E}_Q[\phi]$, given a rejection rule $0\leq\phi\leq1$. This motivates the definition of trade-off function~\cite{dong2022Gaussian}.

\begin{definition}[Trade-off function~\cite{dong2022Gaussian}]
    Let $P$ and $Q$ be two probability distributions on the same measurable space. A \textit{trade-off function} is defined as $T(P,Q)\colon[0,1]\to[0,1]$	such that
    \[
        T(P, Q)(\alpha) = \inf_{\phi}\{\beta_{\phi}\mid\alpha_{\phi}\leq\alpha\},
    \]
    where the infimum is taken over all (measurable) rejection rules $\phi$.
\end{definition}
A trade-off function $T(P,Q)(\alpha)$ represents the minimum achievable type~II error $\beta$ for a given level of type~I error $\alpha$. Note that the minimum $\beta_\phi$ can be achieved by the likelihood-ratio test, since it is the test with the highest \textit{power} (i.e., lowest type~II error for a prespecified type~I error~$\alpha$) according to the Neyman--Pearson lemma~\cite{lehmann2005Testing}. 
 The larger the trade-off function, the harder it is to distinguish between the two hypotheses. 
This idea of ``hard to distinguish'' leads us to the definition of Gaussian DP (GDP)~\cite{dong2022Gaussian}, which we directly define for any neighborhood notion:
\begin{definition}[Gaussian DP]
    Let $\mu\geq0$. A mechanism $\M$ with domain $\D$ is said to be \textit{$\G$ $\mu$-GDP} if, for all $\G$-neighboring $D,D'\in\D$, 
    \[
        T(\M(D),\M(D'))(\alpha)\geq T(\mathcal{N}(0,1),\mathcal{N}(\mu,1))(\alpha)
    \]
    for all $\alpha\in [0,1]$. We denote $G_\mu\coloneqq T(\mathcal{N}(0,1),\mathcal{N}(\mu,1))$.
\end{definition}
 
First, note that $T(\M(D),\M(D'))$ is the trade-off function of the distribution of $\M(D)$ and $\M(D')$ (by abuse of notation). GDP establishes that distinguishing between $\M(D)$ and $\M(D')$ is at least as hard as distinguishing between the normal distributions $\mathcal{N}(0,1)$ and $\mathcal{N}(\mu,1)$. By the Neyman--Pearson lemma, we can explicitly express $G_{\mu}$ as $G_{\mu}(\alpha) = \Phi( \Phi^{-1}(1- \alpha)-\mu)$ for all $\alpha\in[0,1]$, where $\Phi$ is the distribution function of $\mathcal{N}(0,1)$. Note that this trade-off function decreases with respect to $\mu$, i.e., $G_{\mu}\leq G_{\mu'}$ if $\mu\geq\mu'$. 

GDP satisfies a group privacy property that establishes that privacy degrades linearly with respect to the number of changes between the two databases~\cite{dong2022Gaussian}. Consequently, we use this property to define the $d_\D$-privacy adaptation of GDP:
\begin{definition}[$d_\D$-Gaussian privacy]
    Let $d_\D\colon\D^2\to[0,\infty]$ be a metric. A mechanism $\M$ with domain $\D$ is said to be \textit{$d_\D$-Gprivate} if, for all $D,D'\in\D$, 
    \[
        T(M(D),M(D')) \geq G_{d_\D(D,D')},
    \]
    where $G_\infty(\alpha)\coloneqq\lim_{\mu\to\infty}G_\mu(\alpha)=0$.
\end{definition}

Our definition of $d_\D$-Gprivacy generalizes the original notion of Gaussian DP:

\begin{theorem}[restate = PRdGDPtoGDP, name  = ]\label{th:d-gdpToDP}
    Let $\G$ be a granularity notion over the database class $\D$. Then, a mechanism $\M$ with domain $\D$ is $\mu d^\G_\D$-Gprivate if and only if it is $\G$ $\mu$-GDP.   
\end{theorem}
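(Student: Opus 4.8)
The plan is to mimic the structure of the proofs of the analogous statements for pure, approximate, and zero-concentrated $d_\D$-privacy (Theorems~\ref{th:d-privacyToDP}, \ref{th:Approximated-privacytoDP}, and~\ref{th:d-zCprivacyToDP}), exploiting that $d^\G_\D$ is the \emph{canonical} metric of $\G$ and that GDP degrades linearly under group privacy. For the forward direction, suppose $\M$ is $\mu d^\G_\D$-Gprivate. If $D\neigh_\G D'$ with $D\neq D'$, then $d^\G_\D(D,D')=1$ by the characterization of the canonical metric recalled before Theorem~\ref{th:d-privacyToDP}, so $T(\M(D),\M(D'))\geq G_{\mu\cdot 1}=G_\mu$, which is exactly $\G$ $\mu$-GDP; the degenerate case $D=D'$ gives $T(\M(D),\M(D))\geq G_0$, which holds trivially since $G_0$ is the trade-off function of two identical distributions and is therefore the pointwise-largest trade-off function.

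For the converse, assume $\M$ is $\G$ $\mu$-GDP and fix arbitrary $D,D'\in\D$. Let $n\coloneqq d^\G_\D(D,D')$. If $n=\infty$ there is nothing to prove since $G_\infty\equiv 0$ is the pointwise-smallest trade-off function. If $n<\infty$, unfold the definition of the canonical metric to get a path $D=D_0\neigh_\G D_1\neigh_\G\cdots\neigh_\G D_n=D'$ of length $n$ in $\D$. Along each edge $\M$ is $\G$ $\mu$-GDP, i.e.\ $T(\M(D_{j-1}),\M(D_j))\geq G_\mu$ for all $j\in[n]$. The plan is then to invoke the group-privacy property of GDP from~\cite{dong2022Gaussian}: chaining $n$ neighboring steps, each satisfying $\mu$-GDP, yields $T(\M(D),\M(D'))\geq G_{n\mu}=G_{\mu\,d^\G_\D(D,D')}$, which is precisely $\mu d^\G_\D$-Gprivacy.

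The one point that needs care is that the group-privacy statement of~\cite{dong2022Gaussian} is phrased for a fixed ``group size'' in the unbounded tabular setting, so I would either (i) cite it in the form ``if $T(\M(D_{j-1}),\M(D_j))\geq G_\mu$ for each edge of a length-$n$ chain then $T(\M(D_0),\M(D_n))\geq G_{n\mu}$'', which is exactly how group privacy is proved there (by composing the two-point trade-off inequalities via the fact that $G_\mu$ is closed under the relevant chaining, equivalently that $\mu$-GDP is preserved under iterated post-processing/composition along a path), or (ii) re-derive the chain inequality directly from the observation that $G_{\mu_1}$ chained with $G_{\mu_2}$ gives $G_{\mu_1+\mu_2}$, which follows from the additivity of the means of the Gaussian hypothesis-testing pair. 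This chaining along an abstract $\G$-path is the only step where one must check that nothing in the original argument used the tabular structure; it does not, since the argument only manipulates trade-off functions of output distributions. This is the main (and essentially the only) obstacle; everything else is the same bookkeeping as in Theorems~\ref{th:d-privacyToDP} and~\ref{th:d-zCprivacyToDP}.
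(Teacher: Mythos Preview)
Your proposal is correct and takes essentially the same approach as the paper. The paper carries out your option (ii) explicitly: it proceeds by induction on the path length, invokes Lemma~A.5 of~\cite{dong2019Gaussian} for the two-step chaining inequality $T(\M(D),\M(D'))\geq G_\mu(1-G_{\mu(k-1)}(\alpha))$, and then verifies directly from the formula $G_\mu(\alpha)=\Phi(\Phi^{-1}(1-\alpha)-\mu)$ that this equals $G_{\mu k}(\alpha)$.
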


We can now present the AC theorem for $d_\D$-Gprivacy.

\begin{theorem}[restate = THgaussianactheorem, name = Gaussian AC theorem]\label{th:GaussianACTheorem}
    Let $\D$ be a database class and, for all $i\in[k]$, let $(\D_i,d_{i})$ be a privacy space, and $f_i\colon\D\to\D_i$ a deterministic map and $f^*_i = \id_{\overS_i}\times f_i$ (with $f^*_1=f_1$). 
    
    For $i\in[k]$, let $\M^*_i\colon\overS_{i}\times\D_i\to\S_i$ be a mechanism such that $\M^*_i(\overs_{i},\cdot)\colon\D_i\to\S_i$ satisfies $d_{i}$-Gprivacy for any $\overs_{i}\in\overS_{i}$. 
    Then mechanism $\M=(\M^*_1\circ f^*_1,\dots,\M^*_k\circ f^*_k)_{\adapt}$ is $d_{\D}$-Gprivate with
    \[
        d_\D(D,D') \coloneqq \sqrt{\sum^k_{i=1} d_i(f_i(D),f_i(D'))^2}\quad \text{for all $D,D'\in\D$}.
    \]
\end{theorem}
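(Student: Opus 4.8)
The plan is to pass to the language of trade-off functions, reduce the statement to a pointwise claim for each pair of databases, and then invoke the adaptive composition behaviour of trade-off functions under the tensor product $\otimes$ of \cite{dong2022Gaussian}, specialised to the Gaussian family $G_\mu$.

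Fix $D,D'\in\D$ and set $\mu_i\coloneqq d_i(f_i(D),f_i(D'))$ for $i\in[k]$, so that $d_\D(D,D')=\mu\coloneqq\sqrt{\sum_{i=1}^k\mu_i^2}$; we must show $T(\M(D),\M(D'))\geq G_{\mu}$. I would first record the easy fact (to be verified separately) that $d_\D$ as defined is an extended pseudometric on $\D$: symmetry and $d_\D(D,D)=0$ are immediate, and the triangle inequality follows from Minkowski's inequality in $\ell^2$ applied to the vectors $(d_i(f_i(D),f_i(D'')))_{i\in[k]}$ together with the triangle inequalities of the $d_i$ and the monotonicity of the $\ell^2$-norm (the value $\infty$ being handled by the usual conventions); this is needed for ``$\M$ is $d_\D$-Gprivate'' to be well-posed.

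Next, unwinding \Cref{def:AdaptiveComposedMechanism}, conditioned on $(\MM_1(D),\dots,\MM_{i-1}(D))=\overs_i$ the $i$-th output $\MM_i(D)$ has law $\M^*_i(\overs_i,f_i(D))$ and, for $D'$, law $\M^*_i(\overs_i,f_i(D'))$. Since $\M^*_i(\overs_i,\cdot)$ is $d_i$-Gprivate, applying this at the pair $(f_i(D),f_i(D'))\in\D_i^2$ gives $T(\M^*_i(\overs_i,f_i(D)),\M^*_i(\overs_i,f_i(D')))\geq G_{\mu_i}$ for \emph{every} history $\overs_i\in\overS_i$. The crucial ingredient is then an adaptive composition lemma for trade-off functions in the style of \cite{dong2022Gaussian}: if $g_1,\dots,g_k$ are trade-off functions and, for each $i$ and each history $\overs_i$, the $i$-th conditional laws under $D$ and under $D'$ have trade-off at least $g_i$, then $T(\M(D),\M(D'))\geq g_1\otimes\cdots\otimes g_k$. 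Taking $g_i=G_{\mu_i}$ and iterating the Gaussian tensor identity $G_{a}\otimes G_{b}=G_{\sqrt{a^2+b^2}}$ gives $T(\M(D),\M(D'))\geq G_{\mu_1}\otimes\cdots\otimes G_{\mu_k}=G_{\mu}=G_{d_\D(D,D')}$, which is the claim. The independent case and the degenerate first stage $\overS_1=\varnothing$ are subsumed, since an independent mechanism is in particular constant in $\overs_i$.

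The main obstacle is this pointwise adaptive composition lemma. The GDP composition theorem of \cite{dong2022Gaussian} is stated for a single parameter that is uniform over all neighboring pairs, whereas here the distances $\mu_i=d_i(f_i(D),f_i(D'))$ genuinely depend both on the tested pair $(D,D')$ and on $i$; one therefore has to rerun its inductive proof (built on coupling a marginal distribution with a history-indexed family of conditionals, a post-processing step, and the monotonicity of $\otimes$ in each argument) while carrying a pair-dependent family of trade-off-function lower bounds, and check that the induction step loses nothing. Once that lemma and the Gaussian tensor identity are available, the remaining bookkeeping of the adaptive structure and the pseudometric verification are routine.
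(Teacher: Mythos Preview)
Your proposal is correct and follows essentially the same route as the paper: fix $D,D'$, verify that $d_\D$ is a metric via the $\ell^2$-norm, use that each conditional mechanism has trade-off at least $G_{\mu_i}$, apply the adaptive trade-off composition lemma from \cite{dong2019Gaussian} (the paper cites it as Lemma~C.1, used as in their Lemma~C.3), and finish with the Gaussian tensor identity $G_a\otimes G_b=G_{\sqrt{a^2+b^2}}$ (Proposition~D.1 there). The only cosmetic difference is that the paper runs an explicit induction on $k$, peeling off the last mechanism at each step, whereas you invoke the $k$-fold adaptive lemma directly; the pair-dependent bounds $\mu_i$ cause no additional difficulty since Lemma~C.1 is already a pointwise statement.
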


Note that unlike the AC theorem (\ref{th:ACTheoremVariableDomain}), $d_\D$ is not the sum of the distances (i.e., the $\ell_1$-norm), but actually the sum of the squares of the distances (i.e., the $\ell_2$-norm). Recall that $\norm{(d_1,\dots,d_k)}_2 \leq \norm{(d_1,\dots,d_k)}_1$. In this case, we can notice improvements in GDP to the composition results. We also see the same improvements in the common-domain counterpart.

\begin{theorem}[restate = THgaussianACtheoremcommondomain, name = Gaussian AC theorem for common domain]\label{th:GaussianACTheoremCommonDomain}
    For $i\in[k]$, let $(\D,d_i)$ be a privacy space, and let $f_i$ be a deterministic map over $\D$. 
    For $i\in[k]$, let $\M_i\colon\overS_{i}\times\D\to\S_i$ be a mechanism such that $\M_k(\overs_{i},\cdot)\colon\D\to\S_i$ satisfies $d_i$-Gprivacy and $f_i$-dependency for any $\overs_{i}\in\overS_{i}$.
    Then mechanism $\M=(\M_1,\dots,\M_k)_{\adapt}$ is $d_\D$-Gprivate* with $d_\D\coloneqq\sqrt{\sum^k_{i=1} (d_i^{f_i})^2}$.   
\end{theorem}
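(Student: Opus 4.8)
The plan is to deduce this statement from the already-established Gaussian AC theorem (\ref{th:GaussianACTheorem}) by exactly the same reasoning that turns the plain AC theorem~(\ref{th:ACTheoremVariableDomain}) into its common-domain version~(\ref{th:ACTheoremCommonDomain}), namely by invoking the minimum-privacy principle (\Cref{prop:MinimumPrivacy}). First I would observe that each $\M_i(\overs_i,\cdot)\colon\D\to\S_i$ is both $d_i$-Gprivate and $f_i$-dependent, so we may write $\M_i(\overs_i,\cdot)=\M^*_i(\overs_i,\cdot)\circ f_i$ for some $\M^*_i(\overs_i,\cdot)\colon f_i(\D)\to\S_i$. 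Applying \Cref{th:GaussianACTheorem} to the family $\{\M^*_i\}$ (with the maps $f_i$ and the same $f^*_i=\id_{\overS_i}\times f_i$), the adaptive composition $\M=(\M^*_1\circ f^*_1,\dots,\M^*_k\circ f^*_k)_{\adapt}$, which by construction equals $(\M_1,\dots,\M_k)_{\adapt}$, is $\tilde d_\D$-Gprivate with $\tilde d_\D(D,D')=\sqrt{\sum_{i=1}^k d_i(f_i(D),f_i(D'))^2}$.

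Next I would improve this bound pointwise using dependency. The key point is that an $f_i$-dependent, $d_i$-Gprivate mechanism $\M_i$ is in fact $d_i^{f_i}$-Gprivate*, which is the Gaussian analogue of \Cref{prop:MinimumPrivacy}: for any $D,D'\in\D$ and any $\tilde D,\tilde D'\in\D$ with $f_i(\tilde D)=f_i(D)$ and $f_i(\tilde D')=f_i(D')$, the output distributions satisfy $\M_i(D)=\M_i(\tilde D)$ and $\M_i(D')=\M_i(\tilde D')$, hence $T(\M_i(D),\M_i(D'))=T(\M_i(\tilde D),\M_i(\tilde D'))\geq G_{d_i(\tilde D,\tilde D')}$; taking the minimum over such $\tilde D,\tilde D'$ and using that $G_\mu$ is decreasing in $\mu$ gives $T(\M_i(D),\M_i(D'))\geq G_{d_i^{f_i}(D,D')}$. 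I would state and prove this as a Gaussian version of the minimum-privacy proposition (or cite it if the appendix already contains it), noting that $d_i^{f_i}$ need not be a metric, whence the ``Gprivate*'' terminology. Combining this with the Gaussian AC theorem applied directly to the $\M_i$ as mechanisms over the common domain $\D$ with effective metrics $d_i^{f_i}$ yields that $\M$ is $d_\D$-Gprivate* with $d_\D=\sqrt{\sum_{i=1}^k (d_i^{f_i})^2}$, as claimed.

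The only genuine subtlety — and the main obstacle — is making precise that the Gaussian AC theorem still applies when the per-coordinate ``metrics'' $d_i^{f_i}$ are merely extended pseudo-functions rather than genuine metrics (the triangle inequality may fail). I would handle this by checking that the proof of \Cref{th:GaussianACTheorem} never uses the triangle inequality of the $d_i$: it should only use the defining inequality $T(\M^*_i(\overs_i,\cdot)(D),\M^*_i(\overs_i,\cdot)(D'))\geq G_{d_i(\cdot,\cdot)}$ together with the Gaussian composition lemma for trade-off functions (the fact that $G_{\mu_1}\otimes\cdots\otimes G_{\mu_k}=G_{\sqrt{\mu_1^2+\cdots+\mu_k^2}}$ from~\cite{dong2022Gaussian}) and a conditioning/tower argument over the adaptive outputs $\overs_i$. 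Since those ingredients are insensitive to whether $d_i$ satisfies the triangle inequality, the same argument goes through verbatim with $d_i^{f_i}$ in place of $d_i$, and the $\ell_2$ aggregation is inherited directly. The remainder is the routine bookkeeping already present in the independent/adaptive common-domain proofs.
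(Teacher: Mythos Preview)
Your core approach matches the paper's exactly: prove a Gaussian analogue of \Cref{prop:MinimumPrivacy} (that $f_i$-dependency upgrades $d_i$-Gprivacy to $d_i^{f_i}$-Gprivacy*), then re-run the proof of \Cref{th:GaussianACTheorem} with $d_i^{f_i}$ in place of $d_i$, noting that the argument never uses the triangle inequality. The paper does precisely this, proving the Gaussian minimum-privacy step inline via \Cref{re:equalRandomElements} and the monotonicity of $\mu\mapsto G_\mu$.

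One small caution: your first paragraph's direct application of \Cref{th:GaussianACTheorem} to the family $\{\M^*_i\}$ is not quite licensed as stated, since that theorem requires each $\M^*_i(\overs_i,\cdot)$ to be $d_i$-Gprivate on $\D_i=f_i(\D)$, whereas your hypothesis gives $d_i$-Gprivacy only for $\M_i=\M^*_i\circ f_i$ on $\D$ (and $d_i$ is a metric on $\D$, not $\D_i$). This detour is harmless because you do not actually use its conclusion; your second and third paragraphs already contain the complete and correct argument, so you can simply delete the first paragraph.
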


As in the previous subsections, we recover the generalized ASC results when $f_i=\id$:

\begin{theorem}[restate = THgaussianASC, name = Generalized Gaussian ASC]\label{th:GaussianASC}
    Let $\D$ be a database class, and $d$ a metric defined in $\D$. For $i\in[k]$, let $\M^*_i\colon\overS_{i}\times\D_i\to\S_i$ be a mechanism such that $\M^*_i(\overs_{i},\cdot)\colon\D\to\S_i$ satisfies $d_i$-DP for any $\overs_{i}\in\overS_{i}$. 
    Then mechanism $\M=(\M^*_1,\dots,\M^*_k)_{\adapt}$ is $d_{\D}$-Gprivate with with $d_\D=\sqrt{d_1^2+\dots+d_k^2}$.
\end{theorem}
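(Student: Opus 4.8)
\emph{Proof plan.}\quad The plan is to obtain this statement as the no-preprocessing special case of the Gaussian AC theorem (\Cref{th:GaussianACTheorem}), taking $\D_i=\D$ and $f_i=\id_\D$ for every $i\in[k]$. First I would feed these choices into \Cref{th:GaussianACTheorem}: then $f^*_i=\id_{\overS_i}\times\id_\D=\id_{\overS_i\times\D}$, so each composed block $\M^*_i\circ f^*_i$ is just $\M^*_i$, and the adaptive-composed mechanism $(\M^*_1\circ f^*_1,\dots,\M^*_k\circ f^*_k)_{\adapt}$ is literally $\M=(\M^*_1,\dots,\M^*_k)_{\adapt}$. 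The hypothesis that each $\M^*_i(\overs_i,\cdot)$ is $d_i$-Gprivate for all $\overs_i\in\overS_i$ is precisely the hypothesis of \Cref{th:GaussianACTheorem} in this case, so its conclusion applies and gives that $\M$ is $d_\D$-Gprivate with $d_\D(D,D')=\sqrt{\sum_{i=1}^k d_i(\id_\D(D),\id_\D(D'))^2}=\sqrt{\sum_{i=1}^k d_i(D,D')^2}$, i.e.\ $d_\D=\sqrt{d_1^2+\dots+d_k^2}$ as claimed.

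For self-containedness I would also spell out the argument behind \Cref{th:GaussianACTheorem}, which is where the content lies. Fix $D,D'\in\D$ and set $\mu_i\coloneqq d_i(D,D')\in[0,\infty]$; the point is that $\mu_i$ does not depend on the auxiliary input $\overs_i$ because $d_i$ is a fixed metric. By $d_i$-Gprivacy, for every $\overs_i$ the trade-off function $T(\M^*_i(\overs_i,D),\M^*_i(\overs_i,D'))$ dominates $G_{\mu_i}$, uniformly in $\overs_i$. I would then invoke the adaptive composition theorem for trade-off functions of \citeauthor{dong2022Gaussian}~\cite{dong2022Gaussian}: when the conditional trade-off function at step $i$ is bounded below by the same $G_{\mu_i}$ whatever the earlier outputs are, the trade-off function of the adaptively composed mechanism between $D$ and $D'$ is bounded below by the tensor product $G_{\mu_1}\otimes\dots\otimes G_{\mu_k}$. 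Combining this with the Gaussian tensor identity $G_{\mu_1}\otimes\dots\otimes G_{\mu_k}=G_{\sqrt{\mu_1^2+\dots+\mu_k^2}}$ (also from~\cite{dong2022Gaussian}, with $G_\infty\equiv0$ absorbing any infinite distance) yields $T(\M(D),\M(D'))\geq G_{\sqrt{\sum_i\mu_i^2}}=G_{d_\D(D,D')}$. As $D,D'$ were arbitrary, $\M$ is $d_\D$-Gprivate.

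The main obstacle is the adaptivity: one has to be sure the tensor-product composition bound still holds when $\M^*_i$ is permitted to read $\M^*_1(D),\dots,\M^*_{i-1}(D)$. This is exactly what the hypothesis ``$\M^*_i(\overs_i,\cdot)$ is $d_i$-Gprivate for \emph{all} $\overs_i\in\overS_i$'' buys: a lower bound on the conditional trade-off functions that is uniform over the realized history, which is the precondition of the adaptive trade-off composition theorem. Once that uniform bound is available, the remaining steps — the $\ell_2$-aggregation of the per-step budgets and the collapse $f^*_i=\id$ — are routine.
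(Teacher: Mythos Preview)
Your proposal is correct and takes exactly the paper's approach: specialize the Gaussian AC theorem (\Cref{th:GaussianACTheorem}) by setting $\D_i=\D$ and $f_i=\id$, whereupon $f^*_i=\id$ and the conclusion reduces to $d_\D=\sqrt{d_1^2+\dots+d_k^2}$. The additional self-contained sketch of the Gaussian AC theorem you give is accurate but goes beyond what the paper does here---the paper's proof is a one-line reference to \Cref{th:GaussianACTheorem}.
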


Choosing $d_i=\mu_id^{\G}_{\D}$, we obtain from this theorem the already-known~\cite{dong2022Gaussian} sequential bound $\|(\mu_1,\dots,\mu_k)\|_2$.

For $d$-Gprivacy, as for the other notions, it is interesting to find cases where we can obtain better bounds than the sequential one using our result. We explore these cases in the following corollaries. For example, we can also obtain the best bound for when $f$ defines a partitioning function:

\begin{theorem}[restate = THgaussianparallel, name = ]\label{th:GaussianParallel}
      Let $\D$ be a database class, and let $p$ be $k$-partitioning function of $\D$ in $\D_i$ and $p^*_i = \id_{\overS_i}\times p_i$ (with $p^*_1=p_1$). Let $d^*$ be well-defined over $\D$ and $\D_i$. For $i\in[k]$, let $\M^*_i\colon\overS_{i}\times\D_i\to\S_i$ be a mechanism such that $\M^*_i(\overs_{i},\cdot)\colon\D_i\to\S_i$ satisfies $\mu_i d^*_{\D_i}$-Gprivacy. If $d^*$ commutes with $p$ then mechanism $\M=(\M^*_1\circ p^*_1,\dots,\M^*_k\circ p^*_k)_{\adapt}$ is $\Tilde{d}_{\D}$-Gprivate with 
      \begin{equation}\label{eq:GaussianParallel}
        \Tilde{d}_{\D}(D,D')\coloneqq\sqrt{\sum^k_{i=1} (\mu_id^*_{\D_i}(p_i(D),p_i(D')))^2}\leq \max_{i\in[k]} \mu_i d^*_\D(D,D').
      \end{equation}
\end{theorem}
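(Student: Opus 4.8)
The statement has two parts: (i) that $\M$ is $\Tilde{d}_{\D}$-Gprivate with $\Tilde{d}_{\D}$ the $\ell_2$-aggregate of the quantities $\mu_i d^*_{\D_i}(p_i(D),p_i(D'))$, and (ii) the pointwise bound $\Tilde{d}_{\D}\leq\max_{i\in[k]}\mu_i d^*_{\D}$. Part (i) will follow directly by specializing the Gaussian AC theorem~(\ref{th:GaussianACTheorem}); part (ii) is an elementary estimate combining the inequality $\|\cdot\|_2\leq\|\cdot\|_1$ with the commutativity of $d^*$ with the partition $p$.

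\emph{First part.} I would invoke \Cref{th:GaussianACTheorem} with the choices $f_i=p_i\colon\D\to\D_i$ (so that $f^*_i=\id_{\overS_i}\times p_i=p^*_i$, consistent with the $i=1$ convention) and $d_i\coloneqq\mu_i d^*_{\D_i}$. Since $d^*$ is well-defined over each $\D_i$, $d_i$ is a nonnegative scalar multiple of a metric and hence an extended pseudometric, so $(\D_i,d_i)$ is a privacy space; and by hypothesis $\M^*_i(\overs_i,\cdot)\colon\D_i\to\S_i$ is $\mu_i d^*_{\D_i}$-Gprivate, i.e.\ $d_i$-Gprivate, for every $\overs_i\in\overS_i$. \Cref{th:GaussianACTheorem} then yields that $\M=(\M^*_1\circ p^*_1,\dots,\M^*_k\circ p^*_k)_{\adapt}$ is $\Tilde{d}_{\D}$-Gprivate with $\Tilde{d}_{\D}(D,D')=\sqrt{\sum_{i=1}^k d_i(p_i(D),p_i(D'))^2}=\sqrt{\sum_{i=1}^k(\mu_i d^*_{\D_i}(p_i(D),p_i(D')))^2}$, which is exactly the quantity in \eqref{eq:GaussianParallel}.

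\emph{The bound.} Fix $D,D'\in\D$ and write $\mu\coloneqq\max_{i\in[k]}\mu_i$ and $a_i\coloneqq d^*_{\D_i}(p_i(D),p_i(D'))\in[0,\infty]$. If $d^*_{\D}(D,D')=\infty$ the inequality is immediate; and if some $a_i=\infty$, commutativity forces $\sum_{i=1}^k a_i=d^*_{\D}\bigl(\bigcup_{i=1}^k p_i(D),\bigcup_{i=1}^k p_i(D')\bigr)=\infty$, hence $d^*_{\D}(D,D')=\infty$ and the inequality again holds trivially. So assume all quantities finite. From $\mu_i\leq\mu$ we get $(\mu_i a_i)^2\leq\mu^2 a_i^2$, and since the $a_i$ are nonnegative, $\sum_{i=1}^k a_i^2\leq\bigl(\sum_{i=1}^k a_i\bigr)^2$; taking square roots,
\[
    \Tilde{d}_{\D}(D,D')=\sqrt{\sum_{i=1}^k(\mu_i a_i)^2}\leq\mu\sqrt{\sum_{i=1}^k a_i^2}\leq\mu\sum_{i=1}^k a_i.
\]
Finally, commutativity of $d^*$ with $p$ gives $\sum_{i=1}^k a_i=d^*_{\D}\bigl(\bigcup_{i=1}^k p_i(D),\bigcup_{i=1}^k p_i(D')\bigr)\leq d^*_{\D}(D,D')$, so $\Tilde{d}_{\D}(D,D')\leq\mu\,d^*_{\D}(D,D')=\max_{i\in[k]}\mu_i\,d^*_{\D}(D,D')$, as claimed.

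\emph{Main obstacle.} I do not expect a substantive obstacle: part (i) is a black-box application of the Gaussian AC theorem, and the bound in part (ii) is routine. The only points needing a little care are that the inequality $\|\cdot\|_2\leq\|\cdot\|_1$ is applied to the genuinely nonnegative distances $a_i$, and the bookkeeping around infinite distances, which is handled cleanly by the commutativity \emph{equality} $\sum_i a_i=d^*_{\D}\bigl(\bigcup_i p_i(D),\bigcup_i p_i(D')\bigr)$ (so that an infinite summand forces $d^*_{\D}(D,D')=\infty$).
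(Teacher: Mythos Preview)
Your proposal is correct and follows essentially the same route as the paper: invoke \Cref{th:GaussianACTheorem} with $f_i=p_i$ and $d_i=\mu_i d^*_{\D_i}$ to obtain the $\ell_2$-aggregate, then factor out $\max_i\mu_i$, use $\sum_i a_i^2\leq\bigl(\sum_i a_i\bigr)^2$ for nonnegative $a_i$, and apply commutativity of $d^*$ with $p$ to bound $\sum_i a_i\leq d^*_{\D}(D,D')$. Your explicit treatment of the infinite-distance cases is a welcome bit of extra care that the paper's proof leaves implicit.
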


Note that the inequality is in fact an equality when $d^*_{\D_i}(p_i(D),p_i(D'))=0$ for all but one $i\in[k]$. Therefore in some cases, the Gaussian AC theorem (\ref{th:GaussianACTheorem}) can give us a tighter bound than $\max_{i\in[k]}\mu_id^*_\D$. We see this in the following example:
\begin{example}\label{ex:ultra-parallel}
    Let $\D\subseteq\D_{\X}$, let $\D_i=\D_{\X_i}$ where $\{\X_i\}_{i\in[k]}$ defines a partition, and consider $d^\triangle$, which commutes with the previous partition (see \Cref{prop:commutative}).
    If $\M_i\colon\D_i\to\S_i$ are $d^{\triangle}_{\D_i}$-Gprivate, then mechanism $\M=(\M_1,\dots,\M_k)_{\adapt}$ is $\Tilde{d}_\D$-Gprivate with $\Tilde{d}_\D\leq d^{\triangle}_\D$.
    For instance,
    if $D=D'\backslash\{x_i,x_j\}$ with $x_i\in\X_i$ and $x_j\in\X_j$ ($i\neq j$), we have that $d^{\triangle}_\D(D,D')=2$, while
    \begin{gather*}
        \Tilde{d}_\D(D,D')=\sqrt{d^{\triangle}_{\D_i}(p_i(D),p_i(D'))^2+d^{\triangle}_{\D_j}(p_j(D),p_j(D'))^2}\\
        =\sqrt{|\{x_i\}|^2+|\{x_j\}|^2}=\sqrt{1+1}=\sqrt{2}<2.
    \end{gather*}
\end{example}

The Gaussian version of \Cref{th:GeneralizedAPCVariableDomain} also holds. However, in this case, a compatible partition implies $d^\G_{\D_i}(p_i(D),p_i(D'))=0$ for all but one $i\in[k]$, so the inequality in \Cref{eq:GaussianParallel} becomes an equality and the AC theorem does not provide a tighter bound.

\begin{theorem}[restate = THgaussianAPCvariabledomain, name = Gaussian best bound for disjoint inputs]\label{th:GaussianAPCVariableDomain}
    Let $\D$ be a database class and $\G$ a granularity over $\D$. Let $p$ be a $d^\G_\D$-compatible $k$-partitioning function such that
    $\Delta p_i\leq 1$,
    and $p^*_i = \id_{\overS_i}\times p_i$ (with $p^*_1=p_1$). 
    For $i\in[k]$, let $\M^*_i\colon\overS_{i}\times\D_i\to\S_i$ be a mechanism such that $\M^*_i(\overs_{i},\cdot)\colon\D_i\to\S_i$ satisfies $\mu_i d^\G_{\D_i}$-Gprivacy for any $\overs_{i}\in\overS_{i}$.
    Then mechanism $\M=(\M^*_1\circ p^*_1,\dots,\M^*_k\circ p^*_k)_{\adapt}$ is $\mu d^\G_{\D}$-Gprivate with $\mu=\max_{i\in[k]} \mu_i$.
\end{theorem}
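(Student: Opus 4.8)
The plan is to derive this statement as a corollary of the Gaussian AC theorem~(\ref{th:GaussianACTheorem}) applied to the partitioning maps, and then to compare the metric that theorem produces with $\mu d^\G_\D$. First I would instantiate \Cref{th:GaussianACTheorem} with $f_i \coloneqq p_i$ (so that $f^*_i = p^*_i$) and $d_i \coloneqq \mu_i d^\G_{\D_i}$; the latter is a genuine metric on $\D_i$ because, as in \Cref{sec:GeneralizationOfIndependentParallelComposition}, the hypothesis ``$\Delta p_i \le 1$'' already subsumes that $\G$ is well-defined over each $\D_i$. Since each $\M^*_i(\overs_i,\cdot)$ is $\mu_i d^\G_{\D_i}$-Gprivate for every $\overs_i\in\overS_i$, the Gaussian AC theorem yields that $\M=(\M^*_1\circ p^*_1,\dots,\M^*_k\circ p^*_k)_{\adapt}$ is $\tilde d_\D$-Gprivate, where
\[
    \tilde d_\D(D,D') \coloneqq \sqrt{\sum_{i=1}^k \mu_i^2\,\bigl(d^\G_{\D_i}(p_i(D),p_i(D'))\bigr)^2}.
\]

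The second step is to show $\tilde d_\D(D,D') \le \mu\, d^\G_\D(D,D')$ for all $D,D'\in\D$, where $\mu=\max_{i\in[k]}\mu_i$. Write $a_i\coloneqq d^\G_{\D_i}(p_i(D),p_i(D'))$. If $d^\G_\D(D,D')=\infty$ the conclusion is vacuous for this pair, since $G_\infty\equiv 0$ and trade-off functions are nonnegative, so assume $d^\G_\D(D,D')=n<\infty$. Here I would invoke the same fact used to prove \Cref{th:GeneralizedIPCVariableDomain}: for a $d^\G_\D$-compatible partitioning function with $\Delta p_i\le 1$ one has $\sum_{i=1}^k a_i = d^\G_\D(D,D')$ (along a $\G$-geodesic each step moves at most one block, by compatibility, and moves it by at most one $\G$-step, by $\Delta p_i\le1$; conversely concatenating the block-geodesics recovers a $\G$-path in $\D$). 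In particular every $a_i$ is finite, and using $\mu_i\le\mu$ together with $\norm{\,\cdot\,}_2\le\norm{\,\cdot\,}_1$ for nonnegative vectors,
\[
    \tilde d_\D(D,D') \;\le\; \mu\sqrt{\textstyle\sum_i a_i^2} \;\le\; \mu\sum_i a_i \;=\; \mu\, d^\G_\D(D,D').
\]

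Finally, since $G_t$ is non-increasing in $t$, the pointwise bound $\tilde d_\D\le \mu d^\G_\D$ gives $T(\M(D),\M(D'))\ge G_{\tilde d_\D(D,D')}\ge G_{\mu d^\G_\D(D,D')}$ for all $D,D'\in\D$, i.e.\ $\M$ is $\mu d^\G_\D$-Gprivate, which is the claim. I expect no deep obstacle here; the one step requiring genuine care — the rest being bookkeeping around \Cref{th:GaussianACTheorem} — is the comparison in the second step. Unlike the pure-DP analogue \Cref{th:GeneralizedIPCVariableDomain}, where the composed metric is literally the $\ell_1$-sum $\sum_i\mu_i a_i$ and the $\max$ bound is immediate, the Gaussian AC theorem outputs the $\ell_2$-aggregate $\sqrt{\sum_i\mu_i^2 a_i^2}$, so one must pass through $\norm{\,\cdot\,}_2\le\norm{\,\cdot\,}_1$ and the compatible-partition identity $\sum_i a_i = d^\G_\D(D,D')$; and the case $d^\G_\D(D,D')=\infty$ must be split off on its own, since there $G_\infty=0$ trivializes the claim but the identity need not hold.
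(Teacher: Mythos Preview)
Your argument is correct, but it takes a somewhat different and more laborious route than the paper's own proof. The paper first invokes \Cref{th:d-gdpToDP} to reduce the goal to checking $T(\M(D),\M(D'))\ge G_\mu$ only for $\G$-\emph{neighboring} $D,D'$. For such a pair, $d^\G_\D$-compatibility gives a single index $j$ with $p_i(D)=p_i(D')$ for all $i\neq j$, so every summand in $\tilde d_\D$ vanishes except the $j$th, and $\Delta p_j\le1$ bounds that one by $\mu_j\le\mu$. There is therefore no need for the $\ell_2\le\ell_1$ comparison or for the identity $\sum_i a_i=d^\G_\D(D,D')$ at general distances: the sum simply collapses to one term on neighbors. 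Your approach instead proves the pointwise metric inequality $\tilde d_\D\le\mu\,d^\G_\D$ for \emph{all} pairs; this is a stronger intermediate statement, and it buys you a proof that does not appeal to \Cref{th:d-gdpToDP}, at the cost of the extra chain/\,$\ell_2\!\le\!\ell_1$ step.

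Two small caveats. First, you assert the \emph{equality} $\sum_i a_i=d^\G_\D(D,D')$ and sketch a ``conversely'' direction (concatenating block-geodesics into a $\G$-path in $\D$). That reverse direction is not guaranteed by the hypotheses alone, and in any case you only use the inequality $\sum_i a_i\le d^\G_\D(D,D')$, which your geodesic argument (each $\G$-step moves at most one block by at most one) does establish; so the proof stands once you weaken the claim to $\le$. Second, your pointer to ``the same fact used to prove \Cref{th:GeneralizedIPCVariableDomain}'' is slightly off: the paper's proof of that theorem also reduces to $\G$-neighbors via \Cref{th:d-privacyToDP} and never proves the global identity you cite.
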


The common-domain setting of this theorem for GDP is analogous (see \Cref{th:GaussianAPCCommonDomain}).

\section{Post-Processing and Reciprocal Results}\label{sec:reciprocal}
Finally, we study post-processing in the privacy notions we have introduced that leads to reciprocal results. 
All the $d_\D$-privacy adaptations of DP notions we introduced, as well as $d_\D$-privacy, are robust to post-processing:

\begin{theorem}[restate = THpostprocessing, name = Post-processing]\label{th:Post-Processing}
    The privacy notions of $d_\D$-privacy, $(d_\D,\delta_\D)$-privacy, $d^2_\D$-zCprivacy and $d_\D$-Gprivacy are robust to post-processing. 
\end{theorem}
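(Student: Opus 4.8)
The plan is to reduce post-processing for each of the four notions to a data-processing inequality for the divergence (or functional) that characterises it. Throughout, let $\M\colon\D\to\S$ satisfy the relevant notion, let $g$ be an arbitrary post-processing step, which I model as a probability kernel $\kappa$ from $\S$ to the target space $\S'$ (the deterministic case being the special case where $\kappa(s,\cdot)$ is a point mass), and write $\mu_D$ for the distribution of $\M(D)$. Then for every measurable $S'\subseteq\S'$ one has $\Prob\{g(\M(D))\in S'\}=\int_{\S}\kappa(s,S')\,\diff\mu_D(s)$ by Fubini--Tonelli, and it suffices to push the defining inequality of $\M$ through this integral.

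For $d_\D$-privacy, the hypothesis says $\mu_D(A)\le\e^{d_\D(D,D')}\mu_{D'}(A)$ for all measurable $A$, i.e.\ $\e^{d_\D(D,D')}\mu_{D'}-\mu_D$ is a nonnegative measure (for $d_\D(D,D')<\infty$; the case $\infty$ being vacuous); integrating the nonnegative function $s\mapsto\kappa(s,S')$ against it gives a nonnegative number, which rearranges to $\Prob\{g(\M(D))\in S'\}\le\e^{d_\D(D,D')}\Prob\{g(\M(D'))\in S'\}$. For $(d_\D,\delta_\D)$-privacy I would use the layer-cake identity $\int_{\S}\kappa(s,S')\,\diff\mu_D(s)=\int_0^1\mu_D(\{s:\kappa(s,S')>t\})\,\diff t$ (valid since $0\le\kappa(\cdot,S')\le1$), bound each inner term by $\e^{d_\D(D,D')}\mu_{D'}(\{\kappa(\cdot,S')>t\})+\delta_\D(D,D')$, and integrate over $t\in[0,1]$; the $\delta$-term contributes exactly $\delta_\D(D,D')$ because the integration range has length one, yielding $\Prob\{g(\M(D))\in S'\}\le\e^{d_\D(D,D')}\Prob\{g(\M(D'))\in S'\}+\delta_\D(D,D')$. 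In both cases the parameter functions are unchanged.

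For $d^2_\D$-zCprivacy and $d_\D$-Gprivacy I would instead invoke the known data-processing properties of the underlying objects. The Rényi divergence satisfies $D_\alpha(\kappa\!\cdot\!P\dline\kappa\!\cdot\!Q)\le D_\alpha(P\dline Q)$ for every probability kernel $\kappa$ and all $\alpha\in(1,\infty)$ \cite{vanerven2014Renyi}, so from $D_\alpha(\M(D)\dline\M(D'))\le d^2_\D(D,D')\alpha$ we get $D_\alpha(g(\M(D))\dline g(\M(D')))\le d^2_\D(D,D')\alpha$, which is precisely $d^2_\D$-zCprivacy of $g\circ\M$; likewise $T(\kappa\!\cdot\!P,\kappa\!\cdot\!Q)\ge T(P,Q)$ pointwise for every kernel \cite{dong2022Gaussian}, so $T(g(\M(D)),g(\M(D')))\ge T(\M(D),\M(D'))\ge G_{d_\D(D,D')}$, i.e.\ $d_\D$-Gprivacy of $g\circ\M$. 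Again the parameters do not change.

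I expect the only real subtlety to be the measure-theoretic bookkeeping at the start: that a randomized post-processing algorithm is faithfully captured by a probability kernel, that $s\mapsto\kappa(s,S')$ is measurable, and that the Fubini--Tonelli interchange is licit. Once randomized post-processing is expressed in this integral form, the pure and approximate cases are immediate from the nonnegativity/monotonicity arguments above, and the zCDP and GDP cases are one-line consequences of the cited divergence data-processing inequalities.
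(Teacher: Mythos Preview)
Your proposal is correct and in fact proves a stronger statement than the paper does: you treat an arbitrary randomized post-processing step via a probability kernel, whereas the paper's proof explicitly restricts to deterministic $g$ and, for $d_\D$-privacy and $(d_\D,\delta_\D)$-privacy, simply uses the preimage identity $\Prob\{g(\M(D))\in S'\}=\Prob\{\M(D)\in g^{-1}(S')\}$ to reduce to the hypothesis on $\M$. For $d^2_\D$-zCprivacy and $d_\D$-Gprivacy both proofs are essentially identical, invoking the data-processing inequality for the R\'enyi divergence and the monotonicity of the trade-off function under post-processing, respectively. What your route buys is full generality (randomized $g$) at the cost of the extra measure-theoretic bookkeeping you already flag; what the paper's route buys is brevity, since for deterministic $g$ the pure and approximate cases are one-line consequences of the preimage identity and no signed-measure or layer-cake argument is needed.
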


Moreover, we obtain reciprocal results for the composition theorems for common domain for any privacy notion $\PN$ that is robust to post-processing. More precisely, \Cref{th:ICTheoremCommonDomain} has a reciprocal result.

\begin{theorem}[restate = THReciprocalIndependent, name = Reciprocal to the IC theorem (common domain)]\label{th:reciprocalindependent}
    Let $\PN$ be a privacy notion that is robust to post-processing. 
    For all $i\in[k]$, let $\M_i\colon\D\to\S_i$ be mutually independent randomized mechanisms. Let $\M=(\M_1,\dots,\M_k)_{\ind}$ be a mechanism that satisfies $\PN$. Then $\M_i$ must satisfy $\PN$ for all $i\in[k]$.
\end{theorem}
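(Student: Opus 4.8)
The plan is to exploit the elementary observation that each component mechanism $\M_i$ is itself a post-processing of the composed mechanism $\M$: it is obtained by projecting the output tuple onto its $i$-th coordinate. Once this is recognized, the statement follows immediately from the post-processing robustness of $\PN$.

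First I would recall the precise meaning of ``robust to post-processing'' (as used in \Cref{th:Post-Processing} and in the hypothesis on $\PN$): if a mechanism $\M\colon\D\to\S$ satisfies $\PN$ and $g$ is any (measurable) deterministic map with domain $\S$, then $g\circ\M$ satisfies $\PN$. I would also note that any reasonable privacy notion $\PN$ depends on a mechanism only through its family of input--output distributions $\{D\mapsto\mathrm{law}(\M(D))\}_{D\in\D}$, so two mechanisms inducing the same output distribution on every input either both satisfy $\PN$ or both do not.

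Next, fix $i\in[k]$ and let $\pi_i\colon\S_1\times\cdots\times\S_k\to\S_i$ be the canonical projection onto the $i$-th coordinate, a deterministic measurable map. By the definition of the independent-composed mechanism $\M=(\M_1,\dots,\M_k)_{\ind}$, for every $D\in\D$ the random element $\M(D)=(\M_1(D),\dots,\M_k(D))$ has $i$-th marginal distributed exactly as $\M_i(D)$; hence $\pi_i\circ\M$ and $\M_i$ induce the same family of output distributions, i.e., they are the same mechanism as far as $\PN$ is concerned. Applying post-processing robustness to $\M$ (which satisfies $\PN$ by hypothesis) and the map $\pi_i$, we conclude that $\pi_i\circ\M=\M_i$ satisfies $\PN$. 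Since $i\in[k]$ was arbitrary, every $\M_i$ satisfies $\PN$, which is exactly the claim.

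The main obstacle here is purely formal and light: one must make sure that ``satisfies $\PN$'' is genuinely a property of the input--output distributions (so that the equality in law between $\pi_i\circ\M$ and $\M_i$ transfers the property) and that the post-processing robustness guaranteed by \Cref{th:Post-Processing} applies to deterministic coordinate projections, which it does. Note that mutual independence of the $\M_i$ is not actually needed for this direction — the $i$-th marginal of $\M(D)$ is $\mathrm{law}(\M_i(D))$ regardless — it is assumed only to match the setting of \Cref{th:ICTheoremCommonDomain} and to make $(\M_1,\dots,\M_k)_{\ind}$ well defined. In particular, no structural feature of $d_\D$-privacy, $(d_\D,\delta_\D)$-privacy, $d^2_\D$-zCprivacy, or $d_\D$-Gprivacy is used beyond the common post-processing property they all enjoy.
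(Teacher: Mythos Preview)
Your proof is correct and is essentially identical to the paper's own argument: project the composed mechanism onto its $i$-th coordinate via $\pi_i$ and invoke post-processing robustness to conclude that $\M_i=\pi_i\circ\M$ satisfies $\PN$. Your additional remarks (that independence is not actually used in this direction, and that $\PN$ depends only on input--output laws) are valid observations beyond what the paper spells out, but the core argument is the same.
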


Even though it is not useful in constructing new mechanisms, this result makes it clear that we cannot obtain a $\PN$ mechanism by independently composing mechanisms that do not satisfy $\PN$, and can serve as a first check to ensure whether a mechanism satisfies $\PN$ or not. For instance, \Cref{ex:BoundedParallelFail} fails because $\M_i=\M^*_i\circ f_i$ do not satisfy $\PN$. Also, for the adaptive case, we have the following result:

\begin{theorem}[restate = THReciprocalAdaptive, name = ``Reciprocal'' to the AC theorem (common domain)]\label{th:reciprocaladaptive}
    Let $\PN$ be a privacy notion that is robust to post-processing. Let $\M_i\colon\overS_{i}\times\D\to\S_i$ for $i\in[k]$ be randomized mechanisms. Let $\M=(\M_1,\dots,\M_k)_{\adapt}$ be a mechanism satisfying $\PN$. Recall that by definition $\M(D)=(\MM_1(D),\dots,\MM_k(D))$ for all $D\in\D$, where $\MM_i(D)$ are defined recursively as $\MM_{i}(D)=\M_{i}(\MM_{i-1}(D),\dots,\MM_1(D),D)$ for $i\in[k]$. Then $\MM_i$ must  satisfy $\PN$ for all $i\in[k]$.
\end{theorem}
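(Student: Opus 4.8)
The plan is to realize each cumulative mechanism $\MM_i$ as a post-processing of the full adaptive composition $\M=(\M_1,\dots,\M_k)_{\adapt}$ and then invoke the assumed robustness of $\PN$ to post-processing, exactly as in the proof of \Cref{th:reciprocalindependent}. Fix $i\in[k]$. Recall from \Cref{def:AdaptiveComposedMechanism} that $\M(D)=(\MM_1(D),\dots,\MM_k(D))$ for every $D\in\D$, where the $\MM_j$ are defined recursively by $\MM_j(D)=\M_j(\MM_1(D),\dots,\MM_{j-1}(D),D)$. Let $\pi_i\colon\S_1\times\cdots\times\S_k\to\S_i$ be the deterministic map returning the $i$-th coordinate. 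Then by construction $\pi_i(\M(D))=\MM_i(D)$, so $\pi_i\circ\M$ and $\MM_i$ have the same output distribution on every $D\in\D$.

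Next I would note that $\pi_i$ is a deterministic (hence trivially randomized) algorithm on $\S_1\times\cdots\times\S_k$ whose behaviour does not depend on the input database, i.e., an admissible post-processing map. Therefore $\MM_i=\pi_i\circ\M$ is obtained from $\M$ purely by post-processing. Since $\M$ satisfies $\PN$ by hypothesis and $\PN$ is robust to post-processing, $\MM_i$ satisfies $\PN$; as $i\in[k]$ was arbitrary, every $\MM_i$ satisfies $\PN$, which is the claim. The only structural difference from the independent case is that in the adaptive setting the natural objects with domain $\D$ are the cumulative mechanisms $\MM_i$ rather than the component mechanisms $\M_i$, the latter taking auxiliary inputs from $\overS_i$ and thus not being mechanisms on $\D$ at all; this is why the statement is phrased in terms of the $\MM_i$.

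There is essentially no hard step here: the whole content is the identity $\pi_i\circ\M=\MM_i$ together with post-processing robustness. The one point deserving care is confirming that this identity holds at the level of output distributions — that the $i$-th coordinate of the output of the adaptive composition really is distributed exactly as $\MM_i(D)$ — which is immediate from the recursive definition in \Cref{def:AdaptiveComposedMechanism}, and that a deterministic coordinate projection qualifies as a legitimate data-independent post-processing map, which holds by the very definition of post-processing. For the concrete notions $d_\D$-privacy, $(d_\D,\delta_\D)$-privacy, $d^2_\D$-zCprivacy, and $d_\D$-Gprivacy, robustness to post-processing is supplied by \Cref{th:Post-Processing}, so the result specializes to each of them.
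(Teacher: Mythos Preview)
Your proof is correct and essentially identical to the paper's: fix $i$, take the coordinate projection $\pi_i$, observe $\MM_i=\pi_i\circ\M$, and apply robustness to post-processing. The additional remarks you include about why the statement is phrased for $\MM_i$ rather than $\M_i$ are helpful commentary but not part of the paper's (very terse) argument.
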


Note that this result tells us that all $\MM_i$ satisfy $\PN$, but this is not the exact reciprocal of \Cref{th:ACTheoremCommonDomain}. Given the same hypotheses, it is not necessarily true that $\M_i(\overs_i,\cdot)$ satisfy $\PN$ for all $\overs_{i}\in\overS_i$. 

Furthermore, no result for $\M^*_i$ can be generally stated. For example, in \Cref{re:ReciprocalPreprocessing}, we provide a case where $\M^*_i\circ f_i$ is free-lunch DP while $\M^*_i$ is not.

\section{Conclusions}\label{sec:conclusions}

In this paper, we study the composability properties of DP in the new settings of the literature, including new granularities and data domains.
We show that composability can be defined independently of the neighborhood definition. Our results can be used to directly obtain specific composition rules when new granularity notions (or metrics) are proposed,
without having to prove these same rules for each case. 

Moreover, our IC and AC theorems~(\ref{th:ICTheoremVariableDomain} and~\ref{th:ACTheoremVariableDomain}) are defined for $d_\D$-privacy. 
The notion of $d_\D$-privacy not only generalizes the original DP setting, but also provides more precise information about the protection given.
Therefore, we facilitate the computation of the final privacy guarantee of any composed mechanism over any desired data domain and even under mixed privacy requirements, which was not previously defined. 
In particular, we prove the existence of a significantly better bound to the privacy loss for bounded DP when the composed mechanisms are applied to disjoint databases (\Cref{th:BoundedParallel}).

Besides, we study particularly interesting composition settings in the literature such as the case in which each composed mechanism inputs the whole database or just disjoint subsets, and we compare them with the original sequential and parallel composition results.
Since the original parallel composition theorem~\cite{mcsherry2009privacy} does not generalize to all metrics, we also investigate the additional hypotheses necessary to obtain the best possible privacy loss when we work over a partitioned database. We provide the hypotheses under which we obtain the best bound and conclude that these conditions are easily satisfied for some metrics, such as $d^{\triangle}$; while others metrics only work for specific partitions, such as the bounded metric.

Furthermore, we extend our results to other DP-based privacy notions: namely, approximate DP, zero-concentrated DP, and Gaussian DP. To this end, we present $d_\D$-privacy variants that simultaneously include both the original definition and their group privacy property. Also, we provide the corresponding composition theorems for each of these notions.

Finally, we discuss reciprocal versions of the composition, which can be used to check when a mechanism fails to guarantee DP.

\paragraph*{Future work} In this paper, we limit ourselves to some DP-based notions that can be directly expressed with a metric. Extending our composition theorems to other DP-based semantic privacy notions, such as Rényi DP~\cite{mironov2017Renyi} or $f$-DP~\cite{dong2022Gaussian}, could be interesting future work. Moreover, it will be interesting to explore the advanced composition versions of the presented theorems for such semantic notions that allow advanced composition.

\section*{Acknowledgments}
Javier Parra-Arnau is a ``Ramón y Cajal'' fellow (ref.\ RYC2021-034256-I) funded by the MCIN\slash AEI\slash 10.13039\slash 501100011033 and the EU ``NextGenerationEU''/PRTR. This work was also supported by the COMPROMISE (PID2020-113795RB-C31) and MOBILYTICS (TED2021-129782B-I00) projects, funded by the same two institutions above.
The authors at KIT are supported by the KASTEL Security Research Labs (Topic 46.23 of Helmholtz Association) and EXC 2050/1 `CeTI' (ID 390696704), as well as the BMBF project ``PROPOLIS'' (16KIS1393K).

The authors also thank the inhouse textician at KASTEL Security Research Labs.

\Urlmuskip=0mu plus 1mu\relax
\printbibliography




\begin{appendices}

\section{List of Proofs and Additional Theorems}\label{sec:proofs}
In this section, we recompile all the mathematical proofs of the theorems, propositions, and corollaries stated in this paper, as well as additional supporting results.

\subsection*{Proofs of \Cref{sec:generalizingGranularityNotions}: \nameref*{sec:generalizingGranularityNotions}}

\begin{proposition}\label{prop:PrivacySpaceWellDefined}
    Let $\D$ be a database class and $\G$ a granularity notion over $\D$. Then the canonical metric $d^\G_\D$ is a well-defined extended metric.
\end{proposition}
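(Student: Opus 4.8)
The plan is to turn the informal phrase ``minimum number of $\G$-neighboring databases one must cross'' into a shortest-path definition on the graph whose vertices are the elements of $\D$ and whose edges are the pairs related by $\neigh_\G$, and then check the extended-metric axioms. First I would fix terminology: for $D,D'\in\D$, call a \emph{$\G$-path of length $n$ from $D$ to $D'$} a sequence $D=E_0,E_1,\dots,E_n=D'$ with every $E_j\in\D$ and $E_{j-1}\neigh_\G E_j$ for each $j\in[n]$, where $n$ ranges over the nonnegative integers (so a length-$0$ path exists only when $D=D'$). I would then set $d^\G_\D(D,D')\coloneqq\inf\{\,n\geq 0 \mid \text{there exists a $\G$-path of length $n$ from $D$ to $D'$}\,\}$, with the convention $\inf\varnothing=\infty$. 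This is well-defined as an element of $\{0,1,2,\dots\}\cup\{\infty\}$ because it is an infimum of a set of nonnegative integers; moreover, when this set is nonempty the infimum is attained, which is precisely what justifies speaking of ``the minimum number'' of neighbors to cross. Since the definition only refers to the pair $(D,D')$ and to finite sequences of elements of $\D$, no set-theoretic difficulty arises from $\D$ possibly being a proper class.

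Next I would verify the three axioms of an extended metric. \emph{Identity of indiscernibles}: the empty path gives $d^\G_\D(D,D)=0$; conversely, if $D\neq D'$ then every $\G$-path joining them has length at least $1$, so $d^\G_\D(D,D')\geq 1>0$. \emph{Symmetry}: because $\neigh_\G$ is a symmetric relation (\Cref{def:G-neighborhood}), reversing a $\G$-path $D=E_0,\dots,E_n=D'$ produces a $\G$-path $D'=E_n,\dots,E_0=D$ of the same length; hence the two sets of admissible lengths coincide and $d^\G_\D(D,D')=d^\G_\D(D',D)$. \emph{Triangle inequality} $d^\G_\D(D,D'')\leq d^\G_\D(D,D')+d^\G_\D(D',D'')$: if either term on the right is $\infty$ there is nothing to prove; otherwise pick $\G$-paths of lengths $m$ and $n$ realizing $d^\G_\D(D,D')$ and $d^\G_\D(D',D'')$ and concatenate them at $D'$, obtaining a $\G$-path of length $m+n$ from $D$ to $D''$, whence $d^\G_\D(D,D'')\leq m+n$; since $m$ and $n$ are the respective minima, the inequality follows.

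Finally I would record the normalization facts used elsewhere in the text as immediate consequences: $d^\G_\D(D,D')=1$ iff $D\neq D'$ and $D\neigh_\G D'$ (a path of length $1$ is exactly an edge), and $d^\G_\D(D,D')=\infty$ exactly when $D$ and $D'$ lie in different connected components of the neighborhood graph, so that connected privacy spaces are those for which $d^\G_\D$ is everywhere finite. There is no substantive obstacle in this proof; the only points needing a little care are the bookkeeping around the $\inf\varnothing=\infty$ convention (handling the ``infinite summand'' case of the triangle inequality before attempting to concatenate paths) and the remark that a nonempty set of nonnegative integers has a least element, which is what makes the word ``minimum'' in the definition legitimate.
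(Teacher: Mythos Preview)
Your proposal is correct and follows essentially the same approach as the paper's own proof: both formalize $d^\G_\D$ as the shortest-path distance on the graph with vertex set $\D$ and edges given by $\neigh_\G$, then verify the metric axioms via the length-$0$ path (identity), reversal of paths (symmetry, using that $\neigh_\G$ is symmetric), and concatenation of minimizing paths (triangle inequality). Your write-up is slightly more careful than the paper's on two minor points---the well-ordering remark justifying ``minimum'' in place of ``infimum,'' and the explicit handling of proper classes---but the argument is otherwise the same.
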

\begin{proof}
    The canonical metric $d^\G_\D\colon\D^2\to[0,\infty]$ is defined as the minimum number of neighboring databases in $\D$ you need to cross to to obtain $D$ from $D'$ (with $d^\G_\D(D,D')=\infty$ if it is not possible). More formally, we define a \textit{relational chain between elements $D,D'\in\D$} as an ordered finite sequence of $D_i\in\D$ such that $D_0\neigh_{\G}D_1\neigh_{\G}\cdots\neigh_{\G}D_n$ with $D=D_0$ and $D'=D_n$, and define $d^\G_\D(D,D')$ as the minimum length of any relation chain connecting $D$ and $D'$ (with $d^\G_\D(D,D')=\infty$ if no chain exists).

    We need to prove that $d^\G_\D$ is a well-defined extended metric. By construction, the image of $d^\G_\D$ is $[0,\infty]$, and $d^\G_\D(D,D')=0$ if and only if $D=D'$. Symmetry also follows from the fact that $\neigh_\G$ is a symmetric relation, i.e., any chain from $D$ to $D'$ can also be seen as a chain from $D'$ to $D$. 
    
    Finally, concatenating the chains gives us the triangle inequality. Let $D,D',D''\in\D$ such that $d^\G_\D(D,D')=m$ and $d^\G_\D(D',D'')=n$. The triangle inequality holds if $n=\infty$ or $m=\infty$, so suppose $n,m<\infty$. Then, by definition, there exists a relational chain of length $n$ connecting $D$ and $D'$, and a relational chain of length $m$ connecting $D'$ and $D''$. Joining the chains at $D'$ gives us a relational chain of length $n+m$. By definition of $d^\G_\D$, we obtain the triangle inequality $d^\G_\D(D,D'')\leq n+m = d^\G_\D(D,D')+d^\G_\D(D',D'')$.
    
    In conclusion, $d^\G_\D$ is a extended metric and $(\D,d^\G_{\D})$ is a metric space.\qedhere
\end{proof}

\THdprivacytoDP*
\begin{proof}
    First, we see that $\varepsilon d^\G_\D$-privacy implies $\G$ $\varepsilon$-DP. Suppose that $\M\colon\D\to\S$ is $\varepsilon d^\G_\D$-private. Then, for any $\G$-neighboring databases $D,D'\in\D$ and any measurable $S\subseteq\S$, we have that
    \[
        \Prob\{\M(D)\in S\} \leq \e^{\varepsilon d^\G_\D(D,D')}\Prob\{\M(D')\in S\}.
    \]
    
    By construction of the canonical metric, $d^\G_\D(D,D')=1$ since $D$ and $D'$ are $\G$-neighboring, and therefore $\M$ is $\G$ $\varepsilon$-DP.
    
    Now we prove the other implication. Suppose $\M\colon\D\to\S$ is $\G$ $\varepsilon$-DP. We want to see that, for all $D,D'\in\D$ and all measurable $S\subseteq\S$,
    \[
        \Prob\{\M(D)\in S\} \leq \e^{\varepsilon d^\G_\D(D,D')}\Prob\{\M(D')\in S\}.
    \]
    
    The result clearly holds if $d^\G_\D(D,D')=\infty$, so suppose $d^\G_\D(D,D')=n<\infty$. Since the distance is finite, there exists $D_0,\dots,D_{n}\in\D$, such that $D=D_0$, $D'=D_n$ and
    \[
        D_0 \neigh_{\G} D_1 \neigh_{\G} \cdots \neigh_{\G} D_{n-1} \neigh_{\G} D_n.
    \]
    
    Since $D_{i-1}$ and $D_i$ are $\G$-neighboring, for all measurable $S\subseteq\S$ and $i\in[n]$ we have that 
    \[
        \Prob\{\M(D_{i-1})\in S\} \leq \e^{\varepsilon}\Prob\{\M(D_i)\in S\}, 
    \]
    and, by applying the inequalities in order, we obtain
    \begin{align*}
        \Prob\{\M(D)\in S\} &\leq \e^{\varepsilon}\Prob\{\M(D_1)\in S\} \\
        &\leq \e^{2\varepsilon}\Prob\{\M(D_2)\in S\} \\
        &\leq \cdots \\
        &\leq \e^{n\varepsilon}\Prob\{\M(D')\in S\} \\
        &= \e^{\varepsilon d^\G_\D(D,D')}\Prob\{\M(D')\in S\}.
    \end{align*}
    
    In conclusion, $\M$ is $\varepsilon d^\G_\D$-private.
\end{proof}

\begin{remark}\label{RE:InducedVsIntrinsic}
    The \textit{induced metric} of $d\colon\D^2\to[0,\infty]$ to a subclass $\D'\subseteq\D$ is defined as the metric $d|_{\D'}$ such that $d|_{\D'}(D,D')=d(D,D')$ for all $D,D'\in\D'$. 
    
    Note that the induced metric of $d^\G_\D$ to the subclass $\D'\subseteq\D$ is not $d^\G_{\D'}$. Mathematically speaking, the $d^\G_\D$ is a \textit{intrinsic metric}~\cite{burago2022course}, i.e., defined as the infimum of the lengths of all paths from the first database to the second. However, the induced metric to $\D'$ is not necessarily the intrinsic metric over $\D'$~\cite{burago2022course}.
    Therefore, the distance between two databases in $\D'\subseteq\D$ can be different over $\D'$ and~$\D$.
    
    As an example, consider the original definition of DP (\Cref{def:firstDP}) in which the privacy space is $(\DX,d^\U_{\DX})$ with the unbounded metric $d^\U_{\DX}(D,D')=|D\triangle D'|$. However, note that $d^\U_{\D}(D,D')\neq|D\triangle D'|$ in general for $\D\subseteq\DX$, e.g., in the class of databases of size $N$, $\D\coloneqq\{D\in\DX\mid |D|=N\}$. Therefore, there exist $\D\subseteq\DX$ such that $d^\U_{\D}\neq d^\triangle_\D$, even though $d^\U_{\DX}=d^\triangle_{\DX}$.
\end{remark}

\begin{proposition}[Relation between granularities]\label{prop:GranRelation}
    Let $d^{\G_1}_\D$ and $d^{\G_2}_\D$ be two canonical metrics of granularities $\G_1$ and $\G_2$, such that 
    \[
        k = \TBD{\G_1}{\G_2}{\D} \coloneqq \max_{\substack{D,D'\in\D\\D\neigh_{\G_2}D'}}d^{\G_1}_\D(D,D')<\infty.
    \]
    
    Then, $d^{\G_1}_\D\leq kd^{\G_2}_\D$.
\end{proposition}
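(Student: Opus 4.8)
The plan is to show the pointwise inequality $d^{\G_1}_\D(D,D') \leq k\, d^{\G_2}_\D(D,D')$ for every pair $D,D'\in\D$ by using the characterization of the canonical metric $d^{\G_2}_\D$ via $\G_2$-relational chains (as formalized in the proof of \Cref{prop:PrivacySpaceWellDefined}) together with the triangle inequality for $d^{\G_1}_\D$. The argument splits into the usual two cases according to whether $d^{\G_2}_\D(D,D')$ is finite or infinite.

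First I would dispose of the case $d^{\G_2}_\D(D,D')=\infty$: then $k\,d^{\G_2}_\D(D,D')=\infty$ (note $k\geq 1$ whenever there exists at least one pair of distinct $\G_2$-neighbors, and if no such pair exists the claim is vacuous since $d^{\G_2}_\D$ is then $\infty$ off the diagonal and $0$ on it, where the inequality trivially holds), so the inequality $d^{\G_1}_\D(D,D')\leq\infty$ is immediate. Next, suppose $d^{\G_2}_\D(D,D')=n<\infty$. By the definition of the canonical metric, there is a $\G_2$-relational chain $D=D_0\neigh_{\G_2}D_1\neigh_{\G_2}\cdots\neigh_{\G_2}D_n=D'$ of length $n$. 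For each $i\in[n]$, the databases $D_{i-1}$ and $D_i$ are $\G_2$-neighboring, so by the definition of $k=\TBD{\G_1}{\G_2}{\D}$ as the maximum of $d^{\G_1}_\D$ over all $\G_2$-neighboring pairs, we have $d^{\G_1}_\D(D_{i-1},D_i)\leq k$.

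Applying the triangle inequality for the extended metric $d^{\G_1}_\D$ (valid by \Cref{prop:PrivacySpaceWellDefined}) along this chain yields
\[
    d^{\G_1}_\D(D,D') \;\leq\; \sum_{i=1}^{n} d^{\G_1}_\D(D_{i-1},D_i) \;\leq\; \sum_{i=1}^{n} k \;=\; kn \;=\; k\,d^{\G_2}_\D(D,D').
\]
Since $D,D'$ were arbitrary, this establishes $d^{\G_1}_\D\leq k\,d^{\G_2}_\D$ pointwise. I do not anticipate a genuine obstacle here; the only point requiring a moment of care is the bookkeeping in the degenerate sub-cases (no $\G_2$-neighbors at all, or $k=0$, which forces $\G_2$ to have no pair of distinct neighbors and hence $d^{\G_2}_\D$ to be $0/\infty$-valued), all of which are handled by the observation that the hypothesis $k<\infty$ plus the structure of canonical metrics makes the inequality trivial whenever the chain argument does not directly apply.
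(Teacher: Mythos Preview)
Your proposal is correct and follows essentially the same approach as the paper: both split on whether $d^{\G_2}_\D(D,D')$ is infinite, take a minimal $\G_2$-chain in the finite case, bound each link by $k$ via the definition of $\TBD{\G_1}{\G_2}{\D}$, and sum using the triangle inequality for $d^{\G_1}_\D$. Your extra remarks on the degenerate sub-cases ($k=0$, no $\G_2$-neighbors) are harmless refinements the paper omits.
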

\begin{proof}
    We need to see that $d^{\G_1}_\D(D,D')\leq kd^{\G_2}_\D(D,D')$ for all $D,D'\in\D$. If $d^{\G_2}_\D(D,D')=\infty$, then the result holds, so we consider $d^{\G_2}_\D(D,D')=n<\infty$.
    
    Since the distance is finite, there exists $D_0,\dots,D_{n}\in\D$, such that $D=D_0$, $D'=D_n$ and
    \[
        D_0 \neigh_{\G_2} D_1 \neigh_{\G_2} \cdots \neigh_{\G_2} D_{n-1} \neigh_{\G_2} D_n.
    \]
    
    Since $D_{i-1}$ and $D_i$ are $\G_2$-neighboring, $d^{\G_1}_\D(D_{i-1},D_i)\leq \dist_\D(\G_1,\G_2) = k$. Therefore, applying the triangle inequality with $d^{\G_1}_\D$ over the chain, we obtain
    \begin{gather*}
        d^{\G_1}_\D(D,D')\leq \sum^n_{i=1} d^{\G_1}_\D(D_{i-1},D_i) \leq \sum^n_{i=1} k = kn = kd^{\G_2}_\D(D,D'). \qedhere
    \end{gather*}
\end{proof}

\PRsensitivitycomposition*
\begin{proof}
    The result verifies if $\Delta f=\infty$ or $\Delta g=\infty$. Suppose then that the sensitivities are finite. By definition of sensitivity, we have for all $D,D'\in\D_1$,
    \begin{gather*}
        d_3((g\circ f)(D),(g\circ f)(D')) = d_3(g(f(D)),g(f(D')))
        \leq \Delta g\,d_2(f(D),f(D')) \leq \Delta g\,\Delta f\,d_1(D,D').
    \end{gather*}
    
    Therefore, since the sensitivity $\Delta(g\circ f)$ is defined as the smallest value such that the inequality holds, we have that $\Delta(g\circ f)\leq \Delta f\,\Delta g$.
\end{proof}

\subsection*{Proofs of \Cref{sec:IndependentComposition}: \nameref*{sec:IndependentComposition}}
 
\THictheoremvariabledomain*
\begin{proof}
    Note that $d_\D$ is a well-defined metric since it is the sum of metrics. Thus we only need to check that \Cref{eq:d-privacy} holds. 
    
    Let $D,D'\in\D$. Then, for all measurable $S_i\subseteq\S_i$, $i\in[k]$,
    \begin{align*}
        \Prob\{\M(D)\in(S_1,\dots,S_k) \}
        \overset{\textrm{(i)}}{=}{}& \prod_{i=1}^{k}\Prob\{\M_i(f_i(D))\in S_i\}\\
        \overset{\textrm{(ii)}}{\leq}{}&\prod_{i=1}^{k}\e^{d_i(f_i(D),f_i(D'))}\Prob\{\M_i(f_i(D'))\in S_i\}\\
        \overset{\textrm{(i)}}{=}{}&\e^{d_\D(D,D')}\Prob\{\M(D')\in S\},
    \end{align*}
    where $d_\D(D,D')=\sum^k_{i=1}d_{i}(f_i(D),f_i(D'))$ and
    \begin{enumerate}[(i)]
        \item is direct from the construction of $\M$, since $\M_i$ are mutually independent,
        \item uses the fact that $\M_i$ are $d_i$-private.
    \end{enumerate}
    
    This completes the proof. \qedhere
\end{proof}

\THgeneralizedISC*
\begin{proof}
    Direct from the IC theorem~(\ref{th:ICTheoremVariableDomain}) by taking $\D_i=\D$ and $f_i=\id$.
\end{proof}

\begin{proposition}\label{prop:commutative}
    Let $\D\subseteq\DX$. Any $k$-partitioning function $p$ of \Cref{ex:partition} is $d^\triangle_\D$-compatible.
\end{proposition}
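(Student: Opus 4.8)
The plan is to verify the commutativity condition of \Cref{sec:GeneralizationOfIndependentParallelComposition} for the metric $d^\triangle$ and the partition $p$ of \Cref{ex:partition} — which is the content of \Cref{prop:commutative} as it is used in the text — namely that $\sum_{i=1}^k d^\triangle_{\D_i}(p_i(D),p_i(D')) = d^\triangle_\D(\bigcup_i p_i(D),\bigcup_i p_i(D')) \le d^\triangle_\D(D,D')$ for all $D,D'\in\D$. Since $d^\triangle_\D(A,B)=|A\triangle B|$ on every subclass of $\DX$, and the multiset symmetric difference satisfies $m_{A\triangle B}(x)=|m_A(x)-m_B(x)|$ (which comes from $m_{A\cup B}=\max(m_A,m_B)$, $m_{A\cap B}=\min(m_A,m_B)$, and $A\triangle B=(A\cup B)\setminus(A\cap B)$), the whole statement collapses to an elementary count of multiplicities, done block by block.

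First I would establish the pointwise identity $m_{p_i(D)\triangle p_i(D')}(x)=\mathbf{1}_{\X_i}(x)\,m_{D\triangle D'}(x)$, which is immediate from $m_{p_i(D)}(x)=\mathbf{1}_{\X_i}(x)\,m_D(x)$ combined with the formula for $m_{A\triangle B}$ above (pull the common factor $\mathbf{1}_{\X_i}(x)$ out of the absolute value). Summing over $x\in\X$ then gives $d^\triangle_{\D_i}(p_i(D),p_i(D'))=|p_i(D)\triangle p_i(D')|=\sum_{x\in\X_i}m_{D\triangle D'}(x)$.

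Next, using that the blocks $\X_i$ are pairwise disjoint, summing the previous identity over $i\in[k]$ merges the blocks into a single sum over $\bigcup_i\X_i$: $\sum_{i=1}^k|p_i(D)\triangle p_i(D')|=\sum_{x\in\bigcup_i\X_i}m_{D\triangle D'}(x)\le\sum_{x\in\X}m_{D\triangle D'}(x)=|D\triangle D'|=d^\triangle_\D(D,D')$, with equality when the partition of $\X$ is exhaustive. Disjointness also gives $m_{\bigcup_i p_i(D)}(x)=\mathbf{1}_{\bigcup_i\X_i}(x)\,m_D(x)$ (the union's ``$\max$'' reduces to its single nonzero term), hence $|\bigcup_i p_i(D)\triangle\bigcup_i p_i(D')|=\sum_{x\in\bigcup_i\X_i}m_{D\triangle D'}(x)$ as well, so all three quantities in the commutativity condition coincide up to the final inequality.

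There is no genuinely hard step here; the only real care needed is the multiset bookkeeping — using $m_{A\triangle B}=|m_A-m_B|$ rather than a purely set-theoretic argument, and using disjointness of the $\X_i$ to interchange the sum over blocks with the union and symmetric difference. As a byproduct, restricting the same count to unbounded-neighboring $D,D'$ (so that $|D\triangle D'|=1$) recovers the fact that an added or removed record lies in exactly one block, i.e.\ $I_p(D,D')=1$ for all $D\neigh_\U D'$, which is the combinatorial observation behind the $d^\U_\D$-compatibility of these partitions used in \Cref{th:GeneralizedIPCVariableDomain}.
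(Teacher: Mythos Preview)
Your argument is correct and, at its core, rests on the same identity the paper uses, namely $m_{A\triangle B}(x)=|m_A(x)-m_B(x)|$. The organization differs, though. You work pointwise throughout: establish $m_{p_i(D)\triangle p_i(D')}(x)=\mathbf{1}_{\X_i}(x)\,m_{D\triangle D'}(x)$ directly from the definition of $p_i$, then sum over $x$ and over $i$, using disjointness of the $\X_i$ to collapse the double sum into $\sum_{x\in\bigcup_i\X_i}m_{D\triangle D'}(x)$. The paper instead first isolates and proves a multiset identity $(A\triangle A')\cup(B\triangle B')=(A\cup B)\triangle(A'\cup B')$ for pairwise-disjoint blocks (reducing it to $|m_A+m_B-m_{A'}-m_{B'}|=|m_A-m_{A'}|+|m_B-m_{B'}|$), extends it to $k$ blocks by induction, and only then reads off the cardinality equality and the final inequality via an auxiliary complement $B=D\setminus\bigcup_i p_i(D)$. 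Your route is shorter and avoids the induction; the paper's route has the minor advantage of exhibiting the underlying set identity explicitly, which may be reusable. Either way the substance is the same multiplicity bookkeeping, so neither approach is doing anything the other is not.
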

\begin{proof}
    Consider a partition of \Cref{ex:partition} and fix $D,D'\in\D$. Since $d^\triangle(D,D')=|D\triangle D'|$, we need to prove that 
    \begin{align}\label{eq:CommutativityProof}
        \sum^k_{i=1} |p_i(D) \triangle p_i(D')| =\bigg|\bigg(\bigcup_{i=1}^k p_i(D)\bigg) \triangle \bigg(\bigcup_{i=1}^k p_i(D')\bigg)\bigg| \leq |D\triangle D'|.
    \end{align}
    
    We prove first the equality, which corresponds to seeing that 
    \[
        \bigcup_{i=1}^k (p_i(D) \triangle p_i(D')) = \bigg(\bigcup_{i=1}^k p_i(D)\bigg) \triangle \bigg(\bigcup_{i=1}^k p_i(D')\bigg).
    \] 
    
    Since $p_i(D)$ and $p_j(D')$ are always disjoint for all $D,D'\in\D$ and $i\neq j$, it is sufficient to see that 
    \[
        (A\triangle A')\cup (B\triangle B')=(A\cup B)\triangle (A'\cup B')
    \]
    for any arbitrary multisets such that $A\cap (B\cup B')=\varnothing$ and $A'\cap (B\cup B')=\varnothing$ (this case then extends by induction to $k$ pairs of disjoint multisets). 
    We denote by $\mathcal{A}$, $\mathcal{A}'$, $\mathcal{B}$ and $\mathcal{B}'$ the underlying set of the multisets of $A$, $A'$, $B$ and $B'$, respectively. We use the multiset notation $A\coloneqq\scalar{\mathcal{A}, m_A}$ where $m_A(a)$ corresponds to the multiplicity of $a\in\mathcal{A}$ in~$A$. Under this notation we have for arbitrary multisets $A$ and $B$ that
    \begin{enumerate}[(i)]
        \item $A\cup B = \scalar{\mathcal{A}\cup\mathcal{B}, \max\{m_A,m_B\}}$. 
        \item $A\cup B = \scalar{\mathcal{A}\cup\mathcal{B}, m_A+m_B}$ when $A\cap B=\varnothing$. 
        \item $A\cap B = \scalar{\mathcal{A}\cap\mathcal{B}, \min\{m_A,m_B\}}$. 
        \item $A\backslash B = \scalar{\mathcal{A}, m_A-m_B}$ if $B\subseteq A$.
        \item $A\triangle B = \scalar{\mathcal{A}\cup\mathcal{B}, |m_A-m_B|}$ using (i), (iii) and (iv).
    \end{enumerate}
    
    Therefore, we have that
    \begin{align*}
        (A\cup B)\triangle (A'\cup B') \overset{\text{(ii)}}{=}{}& \langle\mathcal{A}\cup\mathcal{B},m_{A}+m_{B}\rangle\triangle\langle\mathcal{A'}\cup\mathcal{B'},m_{A'}+m_{B'}\rangle\\
        \overset{\text{(v)}}{=}{}& \langle\mathcal{A}\cup\mathcal{B}\cup\mathcal{A'}\cup\mathcal{B'},|m_{A}+m_{B}-m_{A'}-m_{B'}|\rangle,
    \end{align*}
    and
    \begin{align*}
        (A\triangle A')\cup (B\triangle B')
        \overset{\text{(v)}}{=}{}& \langle\mathcal{A}\cup\mathcal{A'},|m_{A}-m_{A'}|\rangle\cup\langle\mathcal{B}\cup\mathcal{B'},|m_{B}-m_{B'}|\rangle \\
        \overset{\text{(ii)}}{=}{}& \langle\mathcal{A}\cup\mathcal{A'}\cup\mathcal{B}\cup\mathcal{B'},|m_{A}-m_{A'}|+|m_{B}-m_{B'}|\rangle.
    \end{align*}

    Since $A\cap (B\cup B')=\varnothing$ and $A'\cap (B\cup B')=\varnothing$, we obtain that $|m_{A}+m_{B}-m_{A'}-m_{B'}|=|m_A-m_{A'}|+|m_B-m_{B'}|$. Therefore, 
    \[
        (A\triangle A')\cup (B\triangle B')=(A\cup B)\triangle (A'\cup B'),
    \]
    and by induction we obtain the equality of \Cref{eq:CommutativityProof}.

    The inequality in \Cref{eq:CommutativityProof} follows from the equality we just proved. Take $A=\bigcup_{i=1}^k p_i(D)$, $A'=\bigcup_{i=1}^k p_i(D')$, $B=D\backslash A$ and $B'=D'\backslash A'$, that obviously verify $A\cap B=\varnothing$ and $A'\cap B'=\varnothing$. Therefore, 
    \begin{align*}
        \sum^k_{i=1} |p_i(D) \triangle p_i(D')| + |B\triangle B'| &=\bigg|\bigg(\bigcup_{i=1}^k p_i(D)\bigg) \triangle \bigg(\bigcup_{i=1}^k p_i(D')\bigg)\bigg| + |B\triangle B'| \\ 
        &=|(A \triangle A') \cup (B\triangle B')| \\
        &=|(A \cup B) \triangle (A'\cup B')| \\
        &= |D\triangle D'|.
    \end{align*}

    Then, we obtain the inequality since $|B\triangle B'|\geq 0$. \qedhere
\end{proof}

\THgeneralizedIPCvariabledomain*
\begin{proof}
    From \Cref{th:d-privacyToDP}, it is equivalent to see that $\M$ is $\G$ $\varepsilon$-DP with $\varepsilon=\max_{i\in[k]}\varepsilon_i$, i.e., that for all $\G$-neighboring $D,D'\in\D$ and measurable $S\subseteq\S$,
    \[
        \Prob\{\M(D)\in S\} \leq \e^{\varepsilon}\Prob\{\M(D')\in S\}.
    \]
    
    Applying the IC theorem~(\ref{th:ICTheoremVariableDomain}), we obtain that $\M$ is $d$-private with
    \[
        d(D,D') = \sum^k_{i=1} \varepsilon_i d^\G_{\D_i}(p_i(D),p_i(D')).
    \]
    
    Now suppose that $D,D'\in\D$ are $\G$-neighboring. By definition of $d^\G_\D$-compatibility, there exist $j\in[k]$ such that $p_i(D)=p_i(D')$ for all $i\neq j$. Consequently, for all $i\neq j$, $d^\G_{\D_i}(p_i(D),p_i(D')) = 0$. Moreover, by preprocessing (\Cref{prop:preprocessing}), we have that $d^\G_{\D_j}(p_j(D),p_j(D'))\leq \Delta p_j d^\G_{\D}(D,D') \leq 1$ since $D\neigh_\G D'$ and $\Delta p_j\leq 1$. Therefore,
    \begin{gather*}
        d(D,D') = \sum^k_{i=1} \varepsilon_i d^\G_{\D_i}(p_i(D),p_i(D')) = \varepsilon_j d^\G_{\D_j}(p_j(D),p_j(D'))\leq \varepsilon_j.
    \end{gather*}
    
    Consequently, since $\M$ is $d$-private, for all measurable $S\subseteq\S$,
    \[
        \Prob\{\M(D)\in S\} \leq \e^{\varepsilon_j}\Prob\{\M(D')\in S\}.
    \]
    
    Since $j\in[k]$ depends on the choice of the $\G$-neighboring $D,D'\in\D$, it is sufficient to choose $\varepsilon=\max_{i\in[k]}\varepsilon_i$ to cover all cases. In conclusion, $\M$ is $\G$ $\varepsilon$-DP.
\end{proof}

\begin{proposition}[Compatible partitions for bounded DP]\label{prop:CompatiblePartitionBounded}
    If $p$ is a $k$-partitioning function, $k>1$, of \Cref{ex:partition}, then $p$ is not $d^\B_{\DX}$-compatible.
\end{proposition}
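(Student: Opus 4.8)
The plan is to contradict $d^\B_{\DX}$-compatibility head-on. That property requires $I_p(D,D')\le 1$ for every bounded-neighboring pair $D,D'\in\DX$, so it suffices to exhibit a single bounded-neighboring pair for which at least two of the pieces $p_i$ disagree, i.e.\ with $I_p(D,D')\ge 2$.

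Since $p$ is a $k$-partitioning function of \Cref{ex:partition} with $k>1$, it is induced by a partition $\{\X_i\}_{i\in[k]}$ of $\X$ into pairwise disjoint nonempty blocks, so I would fix two distinct blocks, say $\X_1$ and $\X_2$, and pick $x\in\X_1$ and $x'\in\X_2$; disjointness forces $x\neq x'$. Then I would take $D\coloneqq\{x\}$ and $D'\coloneqq\{x'\}$, both of which lie in $\DX$. Here $|D|=|D'|=1$ and $D\triangle D'=\{x,x'\}$, so $|D\triangle D'|=2$ and hence $D\neigh_\B D'$ by \Cref{bounded}. Computing the pieces via $m_{p_i(D)}(y)=\mathbf{1}_{\X_i}(y)\,m_D(y)$, I get $p_1(D)=\{x\}$ while $p_1(D')=\varnothing$ (because $x'\notin\X_1$), and symmetrically $p_2(D)=\varnothing$ while $p_2(D')=\{x'\}$. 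Thus $p_1(D)\neq p_1(D')$ and $p_2(D)\neq p_2(D')$, so $I_p(D,D')\ge 2>1$, which violates $d^\B_{\DX}$-compatibility; therefore $p$ is not $d^\B_{\DX}$-compatible.

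This argument uses nothing beyond unfolding the definitions, so I do not expect a genuine obstacle; the one point I would state carefully is the existence of two distinct nonempty blocks: a genuine partition into $k>1$ blocks has at least two of them nonempty (under the usual convention that partition blocks are nonempty), and this is exactly what fails for a degenerate $p$ that collapses to a $1$-partitioning function, for which the claim would indeed be false. If one prefers, the same witness works with $D=\{x\}\cup E$ and $D'=\{x'\}\cup E$ for any finite multiset $E$ with $x,x'\notin E$ (still bounded-neighboring, identical computation), but the singleton pair already suffices.
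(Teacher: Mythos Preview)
Your proof is correct and follows essentially the same approach as the paper: exhibit a single bounded-neighboring pair whose records fall in two different blocks, forcing two pieces of the partition to change. The paper uses the slightly larger witness $D=\{x_i,x_i,x_i\}$, $D'=\{x_i,x_i,x_j\}$, but your singleton pair $D=\{x\}$, $D'=\{x'\}$ works just as well and is a bit cleaner.
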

\begin{proof}
    Since the definition of compatible partition applies to any pair of neighboring databases, we just need to prove that there exists a pair of bounded-neighboring $D,D'\in\DX$ such that the condition of $d^\B_{\DX}$-compatibility is not satisfied.
    
    In particular, we take $x_j\in\X_j$ and $x_i\in\X_i$ with $i\neq j$, and we build the two following bounded-neighboring databases: $D=\{x_i,x_i,x_i\}$ and $D'=\{x_i,x_i,x_j\}$. Then, $p_i(D)=D \neq D\backslash\{x_i\}=p_i(D')$ and $p_j(D)=\varnothing\neq \{x_j\}=p_j(D')$. Therefore there exist more than one $r\in[k]$ (particularly two: $i,j$), such that $p_r(D)\neq p_r(D')$, and thus $p_r(D)\not\neigh_\B p_r(D')$. Therefore, $p$ is not $d^\B_{\DX}$-compatible.
\end{proof}

\begin{proposition}[Compatible order-based partitions for bounded DP]\label{prop:OrderBasedCompatiblePartitionBounded}
    Consider a database $D$ with ordered elements, i.e., every element $(n,x)\in D$ consists of a record value $x\in\X$ and an unique identifier $n\in[|D|]$. Let $\DX^{\mathrm{ord}}$ denote class of all such databases.
    
    Let $p$ be a $k$-partitioning function of $\N$, which induces a partition of the elements of $\D\subseteq\DX^{\mathrm{ord}}$ that divides the databases only taking the order into account, i.e., such that $p(n,x)=p(n,y)$ for all $x,y\in\X$. Then $p$ is $d_{\D}^{\B}$-compatible and $\Delta p_i\leq 1$ for all $i\in[k]$.
\end{proposition}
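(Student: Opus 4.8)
The plan is to unwind the definition of $\B$-neighbor inside $\DX^{\mathrm{ord}}$ and then to observe that a partition depending only on the position index cannot spread a single value change across two different parts.

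First I would put a $\B$-neighboring pair $D\neigh_\B D'$ into normal form. By \Cref{bounded}, $|D|=|D'|$ and $|D\triangle D'|=2$, so $D\backslash D'=\{(n,x)\}$ and $D'\backslash D=\{(n',x')\}$ for elements with $(n,x)\neq(n',x')$. The key point is that $n=n'$: the multiset $D\cap D'$ sits inside both $D$ and $D'$, hence its set of identifiers equals both $[|D|]\backslash\{n\}$ and $[|D'|]\backslash\{n'\}=[|D|]\backslash\{n'\}$; since $D\cap D'$ is a single fixed object, $n=n'=:n^*$, and then necessarily $x\neq x'$. So $D$ and $D'$ agree on every position except that position $n^*$ holds value $x$ in $D$ and $x'$ in $D'$.

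Next I would exploit that $p$ classifies an element $(n,x)$ solely through $p(n)$. Writing $j\coloneqq p(n^*)$ (well defined because $p(n^*,x)=p(n^*,y)$ for all $x,y\in\X$), for every $i\neq j$ the position $n^*$ contributes to neither $p_i(D)$ nor $p_i(D')$ while all other positions are shared, so $p_i(D)=p_i(D')$; and $(n^*,x)\in p_j(D)\backslash p_j(D')$ forces $p_j(D)\neq p_j(D')$. Hence $I_p(D,D')=1$ for every $\B$-neighboring pair, which is exactly $d^\B_\D$-compatibility. For the sensitivity bound I would use that, for canonical metrics, $\Delta p_i=\max_{D\neigh_\B D'} d^\B_{\D_i}(p_i(D),p_i(D'))$ (the remark following \Cref{prop:preprocessing}). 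For $i\neq j$ this maximand is $0$ by the previous sentence; for $i=j$, the databases $p_j(D)$ and $p_j(D')$ have the same size and $p_j(D)\triangle p_j(D')=\{(n^*,x),(n^*,x')\}$ has cardinality $2$, so they are $\B$-neighboring in $\D_j$ and $d^\B_{\D_j}(p_j(D),p_j(D'))=1$. Therefore $\Delta p_i\leq 1$ for all $i\in[k]$.

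The only step requiring genuine care is the identification $n=n'$ above, which rests on the uniqueness of the identifiers together with $|D|=|D'|$; without this, a single change could a priori touch two parts (this is precisely what goes wrong for the value-based partitions of \Cref{ex:partition}, cf.\ \Cref{prop:CompatiblePartitionBounded}). A purely formal point I would settle in one line is that $p_j(D)$ carries identifiers forming a subset rather than an initial segment of $\N$, so one either relabels them order-preservingly to land back in $\DX^{\mathrm{ord}}$, or simply works inside $\D_j=p_j(\D)$ with the $\B$-relation inherited from \Cref{bounded} (same size, symmetric difference of cardinality two); either way the cardinality computations above are unchanged.
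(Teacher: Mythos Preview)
Your proposal is correct and follows essentially the same route as the paper's proof: identify that a bounded-neighboring pair in $\DX^{\mathrm{ord}}$ differs at a single position $n^*$, observe that an order-based partition then changes at most the one part $j=p(n^*)$, and check that $p_j(D)\triangle p_j(D')=\{(n^*,x),(n^*,x')\}$ yields $\Delta p_j\leq 1$. Your version is in fact more careful than the paper's, which simply asserts the ``single position $n^*$'' normal form, whereas you supply the identifier-counting argument for $n=n'$; your closing remark about relabeling identifiers in $\D_j$ is also a nice clarification the paper leaves implicit.
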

\begin{proof}
    Due to the databases being ordered, two databases $D,D'\in\D$ are bounded neighboring if and only if we obtain one from the other by changing the record with identifier $n\in[|D|]=[|D'|]$.
    
    Let $D,D'\in\D$ be bounded neighboring databases. Since $p(n,x)=p(n,y)$ for all $n\in[k]$, there exists $j\in[k]$ such that $p_i(D)=p_i(D')$ for all $i\neq j$. In conclusion, $p$ is $d_{\D}^{\B}$-compatible.
    
    Moreover, $p_j(D)\triangle p_j(D')=\{(j,x),(j,y)\}$, so in particular $p_j(D)$ and $p_j(D')$ are also bounded neighboring. Therefore, 
    \[
        \Delta p_i \coloneqq \max_{D\neigh_\B D'} d^\B(p_i(D),p_i(D')) \leq 1
    \]
    for all $i\in[k]$. Since it holds independently of the choice of $D$ and $D'$, $\Delta p_i\leq 1$ for all $i\in[k]$. 
\end{proof}

\begin{remark}\label{re:equalRandomElements}
    Let $f$ be a deterministic map with domain $\D$, and let $\M$ with domain $\D$ be an $f$-dependent mechanism. If $f(D)=f(\tilde{D})$ for some $D,\tilde{D}\in\D$, then $\M(D)$ and $\M(\tilde{D})$ are equal random elements, i.e., $\Prob\{\M(D)\in S\}=\Prob\{\M(\tilde{D})\in S\}$ for all measurable $S\subseteq\S$. 
    
    This is because, by definition of $f$-dependency, there exists a mechanism $\M^*$ such that $\M=\M^*\circ f$. Therefore
    \begin{gather*}
        \Prob\{\M(D)\in S\} = \Prob\{\M^*(f(D))\in S\} = \Prob\{\M^*(f(\tilde{D}))\in S\} = \Prob\{\M(\tilde{D})\in S\}
    \end{gather*}
    for all measurable $S\subseteq\S$. 
\end{remark}

\PROPmininumprivacy*
\begin{proof}

    We fix $D,D'\in\D$ and choose $\tilde{D},\tilde{D}'\in\D$ such that $f(D)=f(\tilde{D})$, $f(D')=f(\tilde{D}')$, and $d_\D(\tilde{D},\tilde{D}')$ is minimum. In this case, $d_\D(\tilde{D},\tilde{D}')=d_\D^f(D,D')$. Then, by definition of $d_\D$-privacy and \Cref{re:equalRandomElements},
    \begin{gather*}
        \Prob\{\M(D)\in S\} = \Prob\{\M(\tilde{D})\in S\} 
        \leq \e^{d_\D(\tilde{D},\tilde{D}')} \Prob\{\M(\tilde{D}')\in S\} = \e^{d_\D^f(D,D')}\Prob\{\M(D')\in S\}
    \end{gather*}
    
    Since this holds for all $D,D'\in\D$  for all measurable $S\subseteq\S$, $\M$ is $d_\D^f$-private*.\qedhere 
\end{proof}

\THictheoremcommondomain*
\begin{proof}
    Applying \Cref{prop:MinimumPrivacy}, we obtain that $\M_i$ are $d_i^{f_i}$-private*. Then the result follows from an analogous proof of the IC theorem~(\ref{th:ICTheoremVariableDomain}).
\end{proof}

\THgeneralizedIPCcommondomain*
\begin{proof}
    From \Cref{th:d-privacyToDP}, it is equivalent to see that $\M$ is $\G$ $\varepsilon$-DP with $\varepsilon=\max_{i\in[k]}\varepsilon_i$, i.e., that for all $\G$-neighboring $D,D'\in\D$ and measurable $S\subseteq\S$,
    \[
        \Prob\{\M(D)\in S\} \leq \e^{\varepsilon}\Prob\{\M(D')\in S\}.
    \]
    
    Applying \Cref{th:ICTheoremCommonDomain}, we obtain that $\M$ is $d$-private* with
    \[
        d(D,D') = \sum^k_{i=1} (\varepsilon_i d^\G_\D)^{p_i}(D,D') = \sum^k_{i=1} \varepsilon_i d^{\G,p_i}_\D(D,D').
    \]
    
    Now suppose $D,D'\in\D$ are $\G$-neighboring. By definition of $d^\G_\D$-compatibility, there exist $j\in[k]$ such that $p_i(D)=p_i(D')$ for all $i\neq j$. Consequently, for all $i\neq j$, $d^{\G,p_i}_\D(D,D')\leq d^\G_\D(D,D) = 0$, since we can select $D$ as both $\tilde{D}$ and $\tilde{D}'$ in the definition (see \Cref{prop:MinimumPrivacy}). Therefore, 
    \begin{gather*}
        d(D,D') = \sum^k_{i=1} \varepsilon_i d^{\G,p_i}_\D(D,D') = \varepsilon_j d^{\G,p_j}_\D(D,D') \leq \varepsilon_j d^\G_\D(D,D') \leq \varepsilon_j,
    \end{gather*}
    where the last inequality comes from the fact that $D$ and $D'$ are $\G$-neighboring. Consequently, since $\M$ is $d$-private*, for all measurable $S\subseteq\S$,
    \[
        \Prob\{\M(D)\in S\} \leq \e^{\varepsilon_j}\Prob\{\M(D')\in S\}.
    \]
    
    Since $j\in[k]$ depends on the choice of the $\G$-neighboring $D,D'\in\D$, it is sufficient to choose $\varepsilon=\max_{i\in[k]}\varepsilon_i$ to cover all cases. In conclusion, $\M$ is $\G$ $\varepsilon$-DP.
\end{proof}

\COboundedparallel*
\begin{proof}
    From \Cref{th:d-privacyToDP}, it is equivalent to see that $\M$ is bounded $\varepsilon$-DP with $\varepsilon=\max_{i,j\in[k];\,i\neq j}(\varepsilon_i+\varepsilon_j)$, i.e., that for all bounded-neighboring $D,D'\in\D$ and measurable $S\subseteq\S$,
    \[
        \Prob\{\M(D)\in S\} \leq \e^{\varepsilon}\Prob\{\M(D')\in S\}.
    \]
    
    Applying \Cref{th:ICTheoremCommonDomain}, we obtain that $\M$ is $d$-private* with
    \[
        d(D,D') = \sum^k_{i=1} (\varepsilon_i d^{\B}_\D)^{p_i}(D,D') = \sum^k_{i=1} \varepsilon_i d^{\B,p_i}_\D(D,D').
    \]
    
    Now suppose $D,D'\in\D$ are bounded-neighboring. We know there exists $x\in D$ and $x'\in D'$ such that $D\triangle D'=\{x,x'\}$. Then, we have the following possibilities:
    \begin{enumerate}[(a)]
        \item $x,x'\in\X_j$ for a $j\in[k]$. This implies that $p_i(D)=p_i(D')$ for all $i\neq j$.
        \item $x\in\X_j$ and $x'\in\X_l$ for different $j,l\in[k]$. This implies that $p_i(D)=p_i(D')$ for all $i\neq j,l$.
        \item $x\in\X_j$ for $j\in[k]$ and $x'\not\in\X_l$ for any $l\in[k]$ (or vice-versa). This implies that $p_i(D)=p_i(D')$ for all $i\neq j$.
        \item $x,x'\not\in\X_l$ for any $l\in[k]$. Then $p_i(D)=p_i(D')$ for all $i\in[k]$.
    \end{enumerate} 
    
    In the worst case scenario, there are at most two subindices $j,l\in[k]$ such that $p_i(D)=p_i(D')$ for all $i\neq j,l$. For these subindices, $d^{\B,p_j}_\D(D,D'),d^{\B,p_l}_\D(D,D')\leq d^{\B}_\D(D,D')\leq1$, since $D$ and $D'$ are bounded-neighboring. Therefore, 
    \begin{gather*}
        d(D,D') = \sum^k_{i=1} \varepsilon_i d^{\B,p_i}_\D(D,D') \leq \max_{j,l\in[k];\,j\neq l} (\varepsilon_j+\varepsilon_l) = \varepsilon
    \end{gather*}
    for all bounded-neighboring $D,D'\in\D$. In conclusion, $\M$ is bounded $\varepsilon$-DP since it is $d$-private*.
\end{proof}

\begin{lemma}\label{LE:CorollaryBoundedUnbounded}
    Let $A,B\in\DX$ such that $|A|\leq |B|$ and $d^{\triangle}_{\DX}(A,B)=n$. Let $k=|B|-|A|$. Then, for any $\{x_i\}_{i\in[k]}\in\DX$, $C=A+\{x_i\}_{i\in[k]}$ verifies $d^{\B}_{\DX}(C,B)\leq n$ (where $+$ denotes the sum of multisets).
\end{lemma}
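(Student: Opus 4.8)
The plan is to reduce the claim to a statement about symmetric differences and then exploit that $C$ is obtained from $A$ by \emph{only adding} records. First I would record that $|C|=|A|+k=|B|$, so $C$ and $B$ have the same size; hence it is at least plausible (and will turn out true) that a finite chain of bounded-neighboring databases joins $C$ to $B$, and it suffices to exhibit one of length at most $n$.

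The key step is a monotonicity-of-agreement observation. For every $x\in\X$ one has $m_C(x)=m_A(x)+m_{\{x_i\}_{i\in[k]}}(x)\geq m_A(x)$, so $m_{C\cap B}(x)=\min\{m_C(x),m_B(x)\}\geq\min\{m_A(x),m_B(x)\}=m_{A\cap B}(x)$; thus $A\cap B\subseteq C\cap B$ and $|C\cap B|\geq|A\cap B|$. Since $|C|=|B|$, this yields $|B\backslash C|=|B|-|C\cap B|\leq|B|-|A\cap B|=|B\backslash A|\leq|A\triangle B|=n$, and moreover $|C\backslash B|=|B\backslash C|$ because these two multisets have equal total size.

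Next I would construct the chain explicitly. Put $t\coloneqq|C\backslash B|=|B\backslash C|\leq n$, list the elements of $C\backslash B$ as $a_1,\dots,a_t$ and those of $B\backslash C$ as $b_1,\dots,b_t$, both counted with multiplicity, and define $C_0\coloneqq C$ and $C_j\coloneqq(C_{j-1}\backslash\{a_j\})+\{b_j\}$ for $j\in[t]$. Since the supports of $C\backslash B$ and $B\backslash C$ are disjoint we have $a_j\neq b_j$ (so removing $a_j$ and adding $b_j$ changes exactly two multiplicities), and since at most $m_C(a_j)-1$ copies of $a_j$ have been removed before step $j$, the record $a_j$ is still present in $C_{j-1}$; hence $C_j$ is obtained from $C_{j-1}$ by changing one record: $|C_{j-1}|=|C_j|$ and $|C_{j-1}\triangle C_j|=2$, i.e., $C_{j-1}\neigh_\B C_j$. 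A direct computation (removing all the $a_i$'s amounts to passing from $C$ to $C\cap B$, then adding all the $b_i$'s to $B$) gives $C_t=B$. Therefore $(C_0,\dots,C_t)$ is a $\B$-chain of length $t\leq n$ joining $C$ and $B$, whence $d^{\B}_{\DX}(C,B)\leq t\leq n$.

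The only point requiring care, which I would flag as the ``main obstacle'', is verifying that the proposed chain is legitimate, namely that at each step the record $a_j$ to be changed is genuinely still present in $C_{j-1}$; this is exactly what the multiplicity bookkeeping above guarantees. Everything else is routine multiset arithmetic, with the real structural content lying in the inclusion $C\cap B\supseteq A\cap B$, which is precisely where the hypothesis that $C$ arises from $A$ by addition only, rather than by an arbitrary modification, enters.
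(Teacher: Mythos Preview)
Your proof is correct. Both your argument and the paper's construct the natural swap chain between $C$ and $B$ and bound its length by $n$; the difference lies only in how that bound is derived. The paper first isolates the equal-size case to record $d^{\B}_{\DX}(X,Y)=\tfrac12|X\triangle Y|$ whenever $|X|=|Y|$, then bounds $|C\triangle B|$ via the inclusion $C\triangle B\subseteq(A\triangle B)+\{x_i\}_{i\in[k]}$ together with a separate check that $k\leq n$, obtaining $|C\triangle B|\leq n+k\leq 2n$. You instead go straight through the monotonicity observation $A\cap B\subseteq C\cap B$ (which is exactly where ``$C$ arises from $A$ by addition only'' enters), giving $|B\backslash C|\leq|B\backslash A|\leq n$ in one stroke. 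Your route is a bit more economical and avoids the auxiliary inequality $k\leq n$; the paper's route has the minor side benefit of explicitly isolating the general formula $d^{\B}_{\DX}=\tfrac12 d^{\triangle}_{\DX}$ on equal-size multisets. The two computations are in fact equivalent: both reduce to the same intermediate bound $|B|-|A\cap B|$.
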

\begin{proof}
    Take $A,B\in\DX$ such that $r=|A|\leq |B|=s$ and $d^{\triangle}_{\DX}(A,B)=|A\triangle B|=n<\infty$. Observe that if $A\cap B = \{b_1,\dots,b_l\}$ with $0\leq l\leq r$, then we can express $A$ and $B$ as
    \begin{align*}
        B &= \{\overbrace{b_1,\dots,b_l}^{A\cap B},\overbrace{b_{l+1},\dots,b_r,b_{r+1},\dots,b_s}^{B\backslash (A\cap B)}\}, 
        \\
        A &= \{\underbrace{b_1,\dots,b_l}_{A\cap B},\underbrace{a_{l+1},\dots,a_r}_{A\backslash(A\cap B)}\}.
    \end{align*}

    In this case, note that 
    \begin{align*}
        A\triangle B = (A\backslash(A\cap B))\cup (B\backslash(A\cap B))
        =  \{b_{l+1},\dots,b_r,b_{r+1},\dots,b_s,a_{l+1},\dots,a_r\},
    \end{align*}
    which has size $n$ by hypothesis.
     
    Consider the case where $|A|=|B|$. Then $|A\backslash(A\cap B)|=|B\backslash(A\cap B)|$ and $n=d^{\triangle}_{\DX}(A,B)=2|A\backslash(A\cap B)|$ is even. 
    In particular, $A\triangle B$ has the same number of elements of $A$ and $B$, and therefore we can obtain $B$ from $A$ in $\frac{n}{2}$ bounded changes ($a_i\to b_i$ for $i\in\{l+1,\dots,r\}$). That is, $d^\B_{\DX}(A,B)=\frac{n}{2}$. Therefore, if $|A|=|B|$, the statement verifies taking $A=C$ (since $k=0$).

    Suppose now $|A|<|B|$ (where $k\coloneqq|B|-|A|$) and define $C=A+\{x_i\}_{i\in[k]}$ for arbitrary $x_i\in\X$, i.e., 
    \[
        C = \{\underbrace{b_1,\dots,b_l}_{A\cap B},\underbrace{a_{l+1},\dots,a_r}_{A\backslash(A\cap B)},x_1,\dots,x_k\}.
    \]

    In particular, $|B|=|C|$, so we can apply the previous case. Thus, we obtain that $m\coloneqq d^\triangle_{\DX}(C,B)$ is even and $d^\B_{\DX}(C,B)=\frac{m}{2}$. Furthermore, 
    \[
        C\triangle B \subseteq (A\triangle B) + \{x_i\}_{i\in[k]},
    \]
    so
    \[
        m = |C\triangle B| \leq |A\triangle B| + k = n+k.
    \]

    Note that $k\leq n$ since
    \begin{gather*}
        n-k = |A\triangle B| - (|B| - |A|) =  |A\backslash(A\cap B)| + |B\backslash(A\cap B)| - |B| + |A| \\
        = |A|-|A\cap B|+|B|-|A\cap B|-|B|+|A|
        = 2|A|-2|A\cap B|\geq 0.
    \end{gather*}

    Thus, $m\leq n+k \leq 2n$. In conclusion, $d^\B_{\DX}(C,B)\leq n$. Since the proof does not depend on the choice of $x_i$, the proof is complete. \qedhere
\end{proof}

\COdependencyunboundedbounded*
\begin{proof}
    For all $D,D'\in\D_{\X}$, we need to prove that 
    \[
        \min_{\substack{\tilde{D},\tilde{D}'\in\D \\ f(\tilde{D})=f(D)\\ f(\tilde{D}')=f(D')}} d^{\B}_{\D_{\X}}(\tilde{D},\tilde{D}')=\min\{d^{\B}_{\D_{\X}},|f(D)\triangle f(D')|\}.
    \]
    
    We have that $d^{\B,f}_{\D_{\X}}\leq d^{\B}_{\D_{\X}}$ by definition of minimum privacy. Therefore, we just need to prove that $d^{\B,f}_{\D_{\X}}\leq|f(D)\triangle f(D')|$ and we obtain the result.

    First, note that since $f(D)=D\cap{\mathcal{Y}}$, $f(f(D))=f(D)$ for all $D\in\DX$. Suppose without lost of generality that $|f(D)|\leq |f(D')|$, and let $k=|f(D')|-|f(D)|$. 
    We take $x\in\mathcal{X}\backslash\mathcal{Y}$ and define $C\coloneqq f(D)+\{x,\overset{(k)}{\ldots}\,,x\}$. We see it verifies $f(D)=f(C)$. Then, $d^{\B,f}_{\D_{\X}}(D,D')\leq d^{\B}_{\D_{\X}}(C,D')\leq |f(D)\triangle f(D')|$ by the definition of minimum privacy and \Cref{LE:CorollaryBoundedUnbounded}. \qedhere
\end{proof}

\subsection*{Proofs of \Cref{sec:GeneralizingAdaptive}: \nameref*{sec:GeneralizingAdaptive}}

\begin{remark}\label{re:probability stuff}
    For the proofs on the adaptive composition, we will need some basic probability results~\cite{siegristProbability}, which we will recompile in this remark. 
    
    Let $\M\colon\D\to\S$. As we mentioned earlier, $\M(D)$ for all $D\in\D$ are random elements (e.g., random variables, continuous or discrete; random vectors; random matrices). 
    Note that for every $\M(D)$ and measurable set $S\subseteq\S$, $P_{\M(D)}(S)=\Prob\{\M(D)\in S\}$ defines a measure. This can also be defined with an integral, i.e., 
    \[
        P_{\M(D)}(S) = \Prob\{\M(D)\in S\} = \int_S \diff{P_{\M(D)}},
    \]
    known as the \textit{Lebesgue--Stieltjes integral}. It can be evaluated over any \mbox{(Lebesgue--Stieltjes-)}integrable function $g\colon\S\to\R$ as $\int_S g\,\diff{P_{\M(D)}}$. This is also denoted as $\int_S g(s)\,\diff{P_{\M(D)}(s)}$, or as $\int_S g\,\diff{F_{\M(D)}}=\int_S g\,\diff{F_{\M(D)}}(s)$ with $F_{\M(D)}$ the distribution function of $\M(D)$. We will use the Lebesgue--Stieltjes integral because it allows us to generalize our results to any random element, such as discrete, continuous, and mixed random variables or random vectors. Specifically, the integral can be written as
    \[
        \int_S g\,\diff{P_{\M(D)}} = \sum_{s\in S} g(s)\Prob\{\M(D)=s\}
    \]
    if $\M(D)$ is a discrete random variable, and as
    \[
        \int_S g\,\diff{P_{\M(D)}} = \int_{S} g(s)\,f_{\M(D)}(s)\,\diff{s}
    \]
    if $\M(D)$ is a continuous random variable with density function $f_{\M(D)}$.
    
    Some of the well-known properties of the integrals that we will use in the proofs are linearity: for any integrable functions $f,g\colon\S\to\R$ and $\alpha,\beta\in\R$, 
    \[
        \int_S (\alpha f + \beta g) \,\diff{P_{\M(D)}} = \alpha\int_S f\,\diff{P_{\M(D)}} + \beta \int_S g\,\diff{P_{\M(D)}},
    \]
    and order: for any integrable functions $f,g\colon\S\to\R$ such that $f \leq g$,
    \[
        \int_S f \,\diff{P_{\M(D)}} \leq \int_S g \,\diff{P_{\M(D)}}.
    \]
    
    Additionally, from the probability properties, we have that
    \[
        \int_{\S} \diff{P_{\M(D)}} = \Prob\{\M(D)\in\S\} = 1,
    \]
    and the \textit{law of total probability}: for any event $A$,
    \[
        \Prob\{A\} = \int_{\S} \Prob\{A\mid \M(D)=s\}\,\diff{P_{\M(D)}(s)}.
    \]
    
    The last result that we will use concerns the sum of measures. Given two measures, $\mu$ and $\nu$, over the same measure space and $a,b\geq0$, we obtain that $a\mu+b\nu$ is also a measure over the same space. Extending to any $a,b\in\R$ gives us that $a\mu+b\nu$ is a signed measure. In either case, we have that
    \[
        \int_S g\,\diff(a\mu + b\nu) = a\int_S g\,\diff\mu + b\int_S g\,\diff\nu 
    \]
    for all measurable $S$ and integrable $g$.
\end{remark}

\begin{lemma}\label{lemma:InequalityIntegral}
    Let $\M\colon\D\to\S$ be a $d$-private mechanism. Then,
    \[
        \int_S g\,\diff{P_{\M(D)}} \leq \e^{d(D,D')}\int_S g\,\diff{P_{\M(D')}}
    \]
    for any integrable function~$g\colon\S\to[0,1]$.
\end{lemma}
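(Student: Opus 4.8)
The plan is to deduce the integral inequality directly from the set-wise inequality that \emph{is} the definition of $d$-privacy. If $d(D,D')=\infty$ the right-hand side equals $+\infty$ unless $\int_S g\,\diff P_{\M(D')}=0$, in which case the bound is vacuous, so assume $d(D,D')<\infty$. By \Cref{def:d-privacy} we have $P_{\M(D)}(A)\le \e^{d(D,D')}P_{\M(D')}(A)$ for every measurable $A\subseteq\S$; equivalently, the set function $\mu\coloneqq \e^{d(D,D')}P_{\M(D')}-P_{\M(D)}$ is non-negative on every measurable set. Since $P_{\M(D)}$ and $P_{\M(D')}$ are finite measures, $\mu$ is a finite signed measure (cf.\ \Cref{re:probability stuff}), and being non-negative on all sets it is in fact a finite non-negative measure on $\S$.

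Because $g$ is non-negative and bounded by $1$, it is integrable against every finite measure on $\S$, in particular against $P_{\M(D)}$, $P_{\M(D')}$, and $\mu$. The integral of a non-negative function against a non-negative measure is non-negative — immediate for simple functions $h=\sum_j c_j\mathbf{1}_{A_j}$ with $c_j\ge 0$, since $\int_S h\,\diff\mu=\sum_j c_j\,\mu(A_j\cap S)\ge 0$, and extended to $g\mathbf{1}_S$ by an increasing pointwise approximation — so $\int_S g\,\diff\mu\ge 0$. By linearity of the Lebesgue--Stieltjes integral in the measure (the last property recalled in \Cref{re:probability stuff}), $\int_S g\,\diff\mu=\e^{d(D,D')}\int_S g\,\diff P_{\M(D')}-\int_S g\,\diff P_{\M(D)}$, and combining these with $\int_S g\,\diff\mu\ge 0$ yields $\int_S g\,\diff P_{\M(D)}\le \e^{d(D,D')}\int_S g\,\diff P_{\M(D')}$. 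An alternative, measure-free route uses the layer-cake identity $g(s)=\int_0^\infty \mathbf{1}_{\{g(s)>t\}}\,\diff t$ together with Tonelli to reduce the claim to $d$-privacy applied to each superlevel set $\{s\in S : g(s)>t\}$ and integrated in $t$.

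The only step with any technical content is the observation that integration of a non-negative function against a non-negative measure is non-negative; this is a standard measure-theoretic fact, independent of the privacy content, and may be quoted rather than reproved. The hypothesis $g\le 1$ is used only to guarantee that all integrals involved are finite, so the argument in fact goes through for any non-negative integrable $g$.
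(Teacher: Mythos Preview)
Your proof is correct and follows essentially the same approach as the paper: form the signed measure $\e^{d(D,D')}P_{\M(D')}-P_{\M(D)}$ (the paper uses its negative, $\alpha\coloneqq P_{\M(D)}-\e^{d(D,D')}P_{\M(D')}\le 0$), then use non-negativity of $g$ together with linearity in the measure to conclude. Your observation that the operative hypothesis is $g\ge 0$ rather than $g\le 1$ is sharper than the paper's write-up, and the layer-cake alternative you mention is a genuine second route the paper does not pursue.
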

\begin{proof}
    Fix $D,D'\in\D$. The result is clear if $d(D,D')=\infty$, so we see the finite case. 
    
    Define the signed measure $\alpha \coloneqq P_{\M(D)}-\e^{d(D,D')} P_{\M(D')}$. Note that $\alpha\leq0$ because $\M$ is $d$-private. Then, we have (see \Cref{re:probability stuff}) that
    \[
        \int_S g\,\diff\alpha = \int_S g\,\diff{P_{\M(D)}} - \e^{d(D,D')}\int_S g\,\diff{P_{\M(D')}},  
    \]
    for all measurable $S\subseteq\S$ and any integrable function~$g\colon\S\to[0,1]$. Since $g\leq 1$, we have that
    \[
        \int_S g\,\diff\alpha \leq \int_S \diff\alpha = \alpha(S) \leq 0,
    \]
    and therefore
    \[
        \int_S g\,\diff{P_{\M(D)}} \leq \e^{d(D,D')}\int_S g\,\diff{P_{\M(D')}}. \qedhere
    \]
\end{proof}

\THactheoremvariabledomain*
\begin{proof}
    Note that $d_\D$ is a well-defined metric since it is the sum of metrics. 
    
    We prove the statement by induction over $k$. The result is trivial for $k=1$, and we consider the case $k=2$.
    
    Denote $\S\coloneqq\S_1\times\S_2$ and fix $D,D'\in\D$. For $i\in[2]$, denote $D_i=f_i(D)$, $D'_i=f_i(D')$, and $d_i\coloneqq d_i(f_i(D),f_i(D'))$ to simplify the notation. By the law of total probability (see \Cref{re:probability stuff}), we have for any measurable $S\subseteq\S$ that   
    \begin{align*}
        \Prob\{\M(D)\in S\}
        ={}& \Prob\{(\M^*_1(D_1),\M^*_2(\M^*_1(D_1),D_2))\in S\} \\
        ={}& \int_{\S_1}\Prob\{(\M^*_1(D_1),\M^*_2(\M^*_1(D_1),D_2))\in S \mid \M^*_1(D_1)=s_1\}\,\diff{P_{\M^*_1(D_1)}(s_1)} \\
        ={}& \int_{\S_1}\Prob\{(s_1,\M^*_2(s_1,D_2))\in S\}\,\diff{P_{\M^*_1(D_1)}(s_1)} \\
        \overset{\textrm{(i)}}{\leq}{}& \e^{d_1}\!\int_{\S_1}\Prob\{(s_1,\M^*_2(s_1,D_2))\in S\}\,\diff{P_{\M^*_1(D'_1)}(s_1)}\\
        \overset{\textrm{(ii)}}{\leq}{}&\e^{d_1}\!\int_{\S_1}\e^{d_2}\Prob\{(s_1,\M^*_2(s_1,D'_2))\in S\}\,\diff{P_{\M^*_1(D'_1)}(s_1)} \\
        ={}&\e^{d_1+d_2}\Prob\{\M(D')\in S\},
    \end{align*}
    where
    \begin{enumerate}[(i)]
        \item uses \Cref{lemma:InequalityIntegral},
        \item uses the fact that $\M^*_2$ is $d_2$-private.
    \end{enumerate}
    
    Taking $d_\D=d_1+d_2$ proves the case $k=2$. Now suppose the statement is true for $k-1$ fixed and we prove it for $k$. Consider the mechanism $\M'=(\M^*_1,\dots,\M^*_{k-1})_{\adapt}$ with domain $\D$. By the induction hypothesis, $\M'$ is $d'_\D$-private with
    \[
        d'_\D(D,D') =
        \sum_{i=1}^{k-1}
        d_i(f_i(D),f_i(D'))
    \]
    for all $D,D'\in\D$. Then, we have that 
    \begin{gather*}
        \M(D)=(\M'(D),\M^*_k(\M'(D),f(D)_k))
    \end{gather*}
    for all $D\in\D$. We can easily check that we are in the conditions of the case $k=2$ by taking $\M'$ as $\M^*_1$ and $\M_k$ as $\M^*_2$. Therefore, we obtain that $\M$ is $d_\D$-private with 
    \begin{gather*}
        d_\D(D,D') =
        \sum_{i=1}^{k}
        d_i(f_i(D),f_i(D'))
    \end{gather*}
    for all $D,D'\in\D$.\qedhere
\end{proof}

\THgeneralizedASC*
\begin{proof}
    Direct from \Cref{th:ACTheoremVariableDomain} by taking $\D_i=\D$ and $f_i=\id$.
\end{proof}

\THgeneralizedAPCvariabledomain*
\begin{proof}
    The proof is analogous to the proof of \Cref{th:GeneralizedIPCVariableDomain} using \Cref{th:ACTheoremVariableDomain} instead.
\end{proof}

\THactheoremcommondomain*
\begin{proof}
    Applying \Cref{prop:MinimumPrivacy}, we obtain that $\M_i$ are $d_i^{f_i}$-private*. Then the result follows from an analogous proof of \Cref{th:ACTheoremVariableDomain}.
\end{proof}

\THgeneralizedAPCcommondomain*
\begin{proof}
    The proof is analogous to the proof of \Cref{th:GeneralizedIPCCommonDomain} using \Cref{th:ACTheoremCommonDomain} instead.
\end{proof}

\subsection*{Proofs of \Cref{sec:ApproximateDP}: \nameref*{sec:ApproximateDP}}

\THapproximatedprivacytoDP*
\begin{proof}
    First, we see that $(\varepsilon d^\G_\D,\delta [d^{\G}_{\D}]_\varepsilon)$-privacy implies $\G$ $(\varepsilon,\delta)$-DP. Suppose that $\M\colon\D\to\S$ is $(\varepsilon d^\G_\D,\delta [d^{\G}_{\D}]_\varepsilon)$-private. Then, for any $\G$-neighboring databases $D,D'\in\D$ and any measurable $S\subseteq\S$, we have that
    \[
        \Prob\{\M(D)\in S\}
        \leq \e^{\varepsilon d^\G_\D(D,D')}\Prob\{\M(D')\in S\} + \delta [d^{\G}_{\D}]_\varepsilon(D,D').
    \]
    
    By construction of the canonical metric, we have that $d^\G_\D(D,D')=1$ and $[d^{\G}_{\D}]_\varepsilon(D,D')=\frac{1}{\e^\varepsilon-1}(\e^{\varepsilon }-1)=1$ since $D$ and $D'$ are $\G$-neighboring. Therefore $\M$ is $\G$ $(\varepsilon,\delta)$-DP.
    
    Now we prove the other implication. Suppose $\M\colon\D\to\S$ is $\G$ $(\varepsilon,\delta)$-DP. We want to see that, for all $D,D'\in\D$ and all measurable $S\subseteq\S$,
    \[
        \Prob\{\M(D)\in S\}
        \leq \e^{\varepsilon d^\G_\D(D,D')}\Prob\{\M(D')\in S\} + \delta [d^{\G}_{\D}]_\varepsilon(D,D').
    \]
    
    The result clearly holds if $d^\G_\D(D,D')=\infty$, so suppose $d^\G_\D(D,D')=n<\infty$. Since the distance is finite, there exists $D_0,\dots,D_{n}\in\D$, such that $D=D_0$, $D'=D_n$ and
    \[
        D_0 \neigh_{\G} D_1 \neigh_{\G} \cdots \neigh_{\G} D_{n-1} \neigh_{\G} D_n.
    \]
    
    Since $D_{i-1}$ and $D_i$ are $\G$-neighboring, for all measurable $S\subseteq\S$ and $i\in[n]$ we have that 
    \[
        \Prob\{\M(D_{i-1})\in S\} \leq \e^{\varepsilon}\Prob\{\M(D_i)\in S\}+\delta, 
    \]
    and, by applying the inequalities in order, we obtain
    \begin{align*}
        \Prob\{\M(D)\in S\}\leq{}& \e^{\varepsilon}\Prob\{\M(D_1)\in S\}+\delta\\
        \leq{}& \e^{\varepsilon}(\e^{\varepsilon}\Prob\{\M(D_2)\in S\}+\delta)+\delta\\
        ={}&\e^{2\varepsilon}\Prob\{\M(D_2)\in S\}+\e^{\varepsilon}\delta+\delta\\
        \leq{}&\cdots\\
        \leq{}&\e^{(n-1)\varepsilon}(\e^{\varepsilon}\Prob\{\M(D_k)\in S\}+\delta)+\sum_{i=0}^{n-2}\e^{i\varepsilon}\delta\\
        ={}&\e^{n\varepsilon}\Prob\{\M(D')\in S\}+\delta\sum_{i=0}^{n-1}\e^{i\varepsilon}.
    \end{align*}
    
    Note that the term $\sum^{n-1}_{i=0}\e^{i\varepsilon}$ is a finite geometric sum, which is known to have value $\frac{\e^{n\varepsilon}-1}{\e^{\varepsilon}-1}$. Substituting $n=d^\G_\D(D,D')$ into the resulting expression gives the desired inequality. In conclusion, $\M$ is $(\varepsilon d^\G_\D,\delta [d^{\G}_{\D}]_\varepsilon)$-private.
\end{proof}

\begin{lemma}\label{lemma:InequalityIntegralApproximate}
    Let $\M\colon\D\to\S$ be a $(d,\delta)$-private mechanism. Then,
    \[
        \int_S g\,\diff{P_{\M(D)}} \leq \e^{d(D,D')}\int_S g\,\diff{P_{\M(D')}} + \delta(D,D')
    \]
    for any integrable function~$g\colon\S\to[0,1]$.
\end{lemma}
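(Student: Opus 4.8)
The plan is to mimic the proof of \Cref{lemma:InequalityIntegral}, inserting the additive slack $\delta$ at the right place. I would fix $D,D'\in\D$. As in that lemma, the inequality is trivial when $d(D,D')=\infty$ or $\delta(D,D')=\infty$ (the left-hand side is at most $1$ since $g\leq1$), so I may assume both are finite. What remains to prove is that $\int_S g\,\diff{P_{\M(D)}}\leq\e^{d(D,D')}\int_S g\,\diff{P_{\M(D')}}+\delta(D,D')$ for every measurable $S\subseteq\S$ and every integrable $g\colon\S\to[0,1]$, where by $(d,\delta)$-privacy we know $P_{\M(D)}(A)\leq\e^{d(D,D')}P_{\M(D')}(A)+\delta(D,D')$ for every measurable $A$.

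The clean way I would proceed is through the layer-cake representation. Since $0\leq g\leq1$, we may write $g(s)=\int_0^1\mathbf{1}_{\{g>t\}}(s)\,\diff t$ pointwise, and exchanging the order of integration (Tonelli's theorem, the integrand being nonnegative) against the honest probability measures $P_{\M(D)}$ and $P_{\M(D')}$ gives $\int_S g\,\diff{P_{\M(D)}}=\int_0^1 P_{\M(D)}\bigl(S\cap\{g>t\}\bigr)\,\diff t$, and likewise for $D'$. Applying the $(d,\delta)$-privacy inequality to the measurable set $S\cap\{g>t\}$ for each $t$ and then integrating over $t\in[0,1]$ yields $\int_0^1 P_{\M(D)}(S\cap\{g>t\})\,\diff t\leq\e^{d(D,D')}\int_0^1 P_{\M(D')}(S\cap\{g>t\})\,\diff t+\delta(D,D')\int_0^1\diff t$; recognising the first term on the right as $\e^{d(D,D')}\int_S g\,\diff{P_{\M(D')}}$ and the second as $\delta(D,D')$ closes the argument. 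I would remark that this phrasing automatically covers discrete, continuous, and mixed random elements, precisely because everything stays at the level of the Lebesgue--Stieltjes integrals of \Cref{re:probability stuff}.

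The only place where care is needed --- and hence the main (mild) obstacle --- is that the one-line trick from \Cref{lemma:InequalityIntegral}, namely bounding $\int_S g\,\diff\alpha$ for the signed measure $\alpha\coloneqq P_{\M(D)}-\e^{d(D,D')}P_{\M(D')}$ by $\alpha(S)$, is no longer available: here $\alpha$ satisfies only $\alpha(A)\leq\delta(D,D')$ for all $A$, and $g\mapsto\int g\,\diff\alpha$ fails to be monotone once $\alpha$ has a negative part, so one must make sure the $\delta$ does not blow up when $g$ concentrates on the positive part of $\alpha$. The layer-cake argument sidesteps this entirely; if one prefers to stay closer to \Cref{lemma:InequalityIntegral}, the same conclusion follows by passing to the Jordan decomposition $\alpha=\alpha^+-\alpha^-$ with Hahn pair $(P,N)$ and using $\int_S g\,\diff\alpha\leq\int_S g\,\diff\alpha^+\leq\alpha^+(S)=\alpha(S\cap P)\leq\delta(D,D')$. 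Beyond this bookkeeping I expect no genuine difficulty, as the rest is routine measure theory parallel to \Cref{lemma:InequalityIntegral}.
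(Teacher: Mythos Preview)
Your proof is correct. The paper's own proof is a one-line pointer: ``essentially the same as in \Cref{lemma:InequalityIntegral}, but obtaining $\alpha\leq\delta(D,D')$,'' i.e.\ it stays with the signed measure $\alpha=P_{\M(D)}-\e^{d(D,D')}P_{\M(D')}$ and concludes $\int_S g\,\diff\alpha\leq\delta(D,D')$. Your secondary route via the Jordan decomposition is exactly what that sketch needs to be made precise, and you are right to flag that the monotonicity step ``$g\leq1$ so $\int_S g\,\diff\alpha\leq\alpha(S)$'' from \Cref{lemma:InequalityIntegral} is not literally available once $\alpha$ can have a positive part; one must pass through $\int_S g\,\diff\alpha\leq\alpha^+(S)=\alpha(S\cap P)\leq\delta(D,D')$.

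Your primary layer-cake argument is a genuinely different and arguably cleaner path: it never forms the signed measure at all, applies the $(d,\delta)$-privacy bound set-by-set to $S\cap\{g>t\}$, and picks up a single $\delta(D,D')$ because the $t$-integral has length one. Both approaches are short; yours has the advantage of avoiding the signed-measure subtlety entirely, while the paper's keeps the statement and proof visually parallel to the pure-$\varepsilon$ lemma.
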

\begin{proof}
    The proof is essentially the same as in \Cref{lemma:InequalityIntegral}, but obtaining $\alpha\leq\delta(D,D')$. 
\end{proof}

\THapproximateACtheoremvariabledomain*
\begin{proof}
    Note that $d_\D$ is a well-defined metric since it is the sum of metrics. In addition, $\delta_\D$ is also a well-defined function. 
    
    We prove the statement by induction over $k$. The result is trivial for $k=1$, and we consider case $k=2$.
    
    Denote $\S\coloneqq\S_1\times\S_2$ and fix $D,D'\in\D$. To simplify the notation, for any measurable $S\coloneqq(S_1,S_2)\subseteq\S$, we denote the probabilities by
    \begin{gather*}
        P_1(S_1)\coloneqq\Prob\{\M^*_1(f(D)_1)\in S_1\}, \\ P'_1(S_1)\coloneqq\Prob\{\M^*_1(f(D')_1)\in S_1\}, \\
        P|_{s_1}(S)\coloneqq\Prob\{(s_1,\M^*_2(s_1,f(D)_2))\in S\}, \\ P'|_{s_1}(S)\coloneqq\Prob\{(s_1,\M^*_2(s_1,f(D')_2))\in S\},
    \end{gather*}
    and $d_i\coloneqq d_i(f_i(D),f_i(D'))$ and $\delta_i\coloneqq \delta_i(f_i(D),f_i(D'))$.
    By definition of $d$-privacy and since probabilities are bounded by $1$, we obtain that
    \begin{align*}
        P|_{s_1}(S) \leq \min\{1,\e^{d_2}P'|_{s_1}(S) + \delta_2\} \leq \min\{1,\e^{d_2}P'|_{s_1}(S)\} + \delta_2.
    \end{align*}

    Then, by the law of total probability,
    \begin{align*}
        \Prob\{\M(D)\in S\}
        ={}&\Prob\{(\M^*_1(f_1(D)),\M^*_2(\M^*_1(f_1(D)),f_2(D)))\in S\}\\
        ={}& \int_{\S_1} P|_{s_1}(S)\,\diff{P_{1}(s_1)} \\
        \overset{\textrm{(i)}}{\leq}{}& \int_{\S_1} (\min\{1,\e^{d_2}P'|_{s_1}(S)\} + \delta_2)\,\diff{P_{1}(s_1)}\\
        ={}& \int_{\S_1} \min\{1,\e^{d_2}P'|_{s_1}(S)\}\,\diff{P_{1}(s_1)}  + \delta_2\int_{\S_1}\diff{P_{1}(s_1)}\\
        \overset{\textrm{(ii)}}{=}{}& \int_{\S_1} \min\{1,\e^{d_2}P'|_{s_1}(S)\}\,\diff{P_{1}(s_1)}  + \delta_2\\
        \overset{\textrm{(iii)}}{\leq}{}&\e^{d_1}\!\int_{\S_1}\min\{1,\e^{d_2}P'|_{s_1}(S)\}\,\diff{P'_{1}(s_1)}+\delta_1+\delta_2 \\
        \overset{\textrm{(iv)}}{\leq}{}&\e^{d_1}\!\int_{\S_1}\e^{d_2}P'|_{s_1}(S)\,\diff{P'_{1}(s_1)}+\delta_1+\delta_2 \\
        ={}&\e^{d_1+d_2}\Prob\{\M(D')\in S\} + \delta_1 + \delta_2,
    \end{align*}
    where
    \begin{enumerate}[(i)]
        \item uses the previous inequality,
        \item uses the fact that $\int_{\S_1}\diff{P_{1}(s_1)}=1$ (see \Cref{re:probability stuff}).
        \item uses \Cref{lemma:InequalityIntegralApproximate} since $\min\{1,\e^{d_2}P'|_{s_1}(S)\}\leq1$,
        \item uses the fact that $\min\{1,\e^{d_2}P'|_{s_1}(S)\}\leq\e^{d_2}P'|_{s_1}(S)$.
    \end{enumerate}
    
    This proves the result for $k=2$. Now suppose the statement is true for $k-1$ fixed and we prove it for $k$. Consider the mechanism $\M'=(\M^*_1,\dots,\M^*_{k-1})_{\adapt}$ with domain $\D$. By the induction hypothesis, $\M'$ is $(d'_\D,\delta'_\D)$-private with
    \begin{gather*}
        d'_\D(D,D') =
        \sum_{i=1}^{k-1}
        d_i(f_i(D),f_i(D')) \quad\text{and} \quad
        \delta'_\D(D,D') =
        \sum_{i=1}^{k-1}
        \delta_i(f_i(D),f_i(D')) 
    \end{gather*}
    for all $D,D'\in\D$. Then, we have that 
    \begin{gather*}
        \M(D)=(\M'(D),\M^*_k(\M'(D),f_k(D)))
    \end{gather*}
    for all $D\in\D$. We can easily check that we are in the conditions of the case $k=2$ by taking $\M'$ as $\M^*_1$ and $\M^*_k$ as $\M^*_2$. Therefore, we obtain that $\M$ is $(d_\D,\delta_\D)$-private as in the statement. This completes the proof.\qedhere
\end{proof}

\THapproximateASC*
\begin{proof}
    Direct from \Cref{th:ApproximateACTheoremVariableDomain} by taking $\D_i=\D$ and $f_i=\id$.
\end{proof}

\THapproximateAPCvariabledomain*
\begin{proof}
    From \Cref{th:Approximated-privacytoDP}, it is equivalent to see that $\M$ is $\G$ $(\varepsilon,\delta)$-DP with $\varepsilon=\max_{i\in[k]}\varepsilon_i$ and $\delta=\max_{i\in[k]}\delta_i$, i.e., that for all $\G$-neighboring $D,D'\in\D$ and measurable $S\subseteq\S$,
    \[
        \Prob\{\M(D)\in S\} \leq \e^{\varepsilon}\Prob\{\M(D')\in S\}+\delta.
    \]
    
    Applying \Cref{th:ApproximateACTheoremVariableDomain}, we obtain that $\M$ is $(d,\delta)$-private with
    \[
        d(D,D') = \sum^k_{i=1} \varepsilon_i d^\G_{\D_i}(p_i(D),p_i(D'))
    \quad
    \text{and}
    \quad
        \delta(D,D') = \sum^k_{i=1} \delta_i [d^\G_{\D_i}]_{\varepsilon_i}(p_i(D),p_i(D')).
    \]
    
    Now suppose $D,D'\in\D$ are $\G$-neighboring. By definition of $d^\G_\D$-compatibility, there exist $j\in[k]$ such that $p_i(D)=p_i(D')$ for all $i\neq j$. Consequently, for all $i\neq j$, $d^\G_{\D_i}(p_i(D),p_i(D')) = 0$ and $[d^{\G}_{\D_i}]_{\varepsilon_i}(p_i(D),p_i(D')) = 0$. Moreover, by preprocessing (\Cref{prop:preprocessing}), we have that $d^\G_{\D_j}(p_j(D),p_j(D'))\leq \Delta p_j d^\G_{\D}(D,D') \leq 1$ since $D\neigh_\G D'$ and $\Delta p_j\leq 1$. Hence, $[d^\G_{\D_j}]_{\varepsilon_j}(p_j(D),p_j(D'))\leq 1$ too. Therefore, 
    \begin{gather*}
        d(D,D') = \sum^k_{i=1} \varepsilon_i d^\G_{\D_i}(p_i(D),p_i(D')) = \varepsilon_j d^\G_{\D_j}(p_j(D),p_j(D'))\leq \varepsilon_j,
    \end{gather*}
    and
    \begin{gather*}
        \delta(D,D') = \sum^k_{i=1} \delta_i [d^\G_{\D_i}]_{\varepsilon_i}(p_i(D),p_i(D')) = \delta_j [d^{\G}_{\D_j}]_{\varepsilon_j}(p_j(D),p_j(D'))\leq \delta_j.
    \end{gather*}
    
    Consequently, since $\M$ is $(d,\delta)$-private, for all measurable $S\subseteq\S$,
    \[
        \Prob\{\M(D)\in S\} \leq \e^{\varepsilon_j}\Prob\{\M(D')\in S\}+\delta_j.
    \]
    
    Since $j\in[k]$ depends on the choice of the $\G$-neighboring $D,D'\in\D$, it is sufficient to choose $\varepsilon=\max_{i\in[k]}\varepsilon_i$ and $\delta=\max_{i\in[k]}\delta_i$ to cover all cases. In conclusion, $\M$ is $\G$ $(\varepsilon,\delta)$-DP.
\end{proof}

\begin{proposition}[Minimum privacy for approximate DP]\label{prop:MinimumPrivacyApproximate}
    Let $(\D,d_\D)$ be a privacy space, $\delta_\D\colon\D^2\to[0,\infty]$, let $f$ be a deterministic map with domain $\D$ such that $f(D)\subseteq D$ for all $D\in\D$, and let $\M\colon\D\to\S$ be a $(d_{\D},\delta_\D)$-private mechanism. If $\M$ is $f$-dependent, then $\M$ is $(d_{\D}^{f},\delta^f_\D)$-private* with
    \begin{align*}
        d_{\D}^f(D,D') \coloneqq \min_{\substack{\tilde{D},\tilde{D}'\in\D \\ f(\tilde{D})=f(D)\\ f(\tilde{D}')=f(D')}} d_\D(\tilde{D},\tilde{D}') \quad \text{and} \quad
        \delta_{\D}^f(D,D') \coloneqq \min_{\substack{\tilde{D},\tilde{D}'\in\D \\ d_\D(\tilde{D},\tilde{D}')=d_\D^f(D,D')}} \delta_\D(\tilde{D},\tilde{D}').
    \end{align*}
\end{proposition}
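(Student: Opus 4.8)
The plan is to adapt the proof of \Cref{prop:MinimumPrivacy} essentially verbatim, carrying the additive slack $\delta$ through every step. Fix $D,D'\in\D$. First I would select witnesses $\tilde D,\tilde D'\in\D$ realizing both minima simultaneously: among all pairs with $f(\tilde D)=f(D)$ and $f(\tilde D')=f(D')$, I pick those minimizing $d_\D(\tilde D,\tilde D')$ (so that $d_\D(\tilde D,\tilde D')=d_\D^f(D,D')$), and among those I pick one minimizing $\delta_\D(\tilde D,\tilde D')$ (so that $\delta_\D(\tilde D,\tilde D')=\delta_\D^f(D,D')$, reading the minimization in the definition of $\delta_\D^f$ as ranging over the $f$-admissible, $d$-optimal pairs). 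The pair $(D,D')$ itself lies in the feasible set, so it is nonempty, and since the definitions are stated with $\min$ I take the minima to be attained (otherwise the argument carries over with infima and a routine limiting argument). When $d_\D^f(D,D')=\infty$ the claimed inequality is trivial, so one may assume $d_\D^f(D,D')<\infty$.

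Next I would invoke $f$-dependency. Since $\M=\M^*\circ f$ for some $\M^*$, \Cref{re:equalRandomElements} gives that $\M(D)$ and $\M(\tilde D)$ are equal random elements, and likewise $\M(D')$ and $\M(\tilde D')$; that is, $\Prob\{\M(D)\in S\}=\Prob\{\M(\tilde D)\in S\}$ and $\Prob\{\M(D')\in S\}=\Prob\{\M(\tilde D')\in S\}$ for every measurable $S\subseteq\S$. Applying $(d_\D,\delta_\D)$-privacy of $\M$ to the pair $\tilde D,\tilde D'$ then yields
\begin{align*}
    \Prob\{\M(D)\in S\} = \Prob\{\M(\tilde D)\in S\}
    &\leq \e^{d_\D(\tilde D,\tilde D')}\Prob\{\M(\tilde D')\in S\} + \delta_\D(\tilde D,\tilde D') \\
    &= \e^{d_\D^f(D,D')}\Prob\{\M(D')\in S\} + \delta_\D^f(D,D').
\end{align*}
Since $D,D'$ and $S$ were arbitrary, this is exactly the assertion that $\M$ is $(d_\D^f,\delta_\D^f)$-private*, the asterisk recording that $d_\D^f$ and $\delta_\D^f$ need not be metrics (the triangle inequality may fail, as for $d_\D^f$ in \Cref{prop:MinimumPrivacy}).

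The only genuinely delicate point — and where I would spend a line or two being explicit — is the nested structure of the two minimizations: one must confirm that choosing a $d$-optimal $f$-admissible pair and then a $\delta$-optimal one among those is coherent with the displayed definitions, i.e.\ that the set of $f$-admissible witnesses attaining $d_\D^f(D,D')$ is precisely the index set over which $\delta_\D^f$ is minimized, and is nonempty whenever $d_\D^f(D,D')<\infty$. Everything else is mechanical. I would also remark that the hypothesis $f(D)\subseteq D$ is not used in this argument itself — it is relevant only in the applications, e.g.\ \Cref{co:DependencyUnboundedBounded} — so it is kept purely for uniformity with the surrounding results.
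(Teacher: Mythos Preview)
Your proposal is correct and follows exactly the paper's own argument: fix $D,D'$, choose witnesses $\tilde D,\tilde D'$ that first minimize $d_\D$ over $f$-admissible pairs and then minimize $\delta_\D$ among those, invoke \Cref{re:equalRandomElements} to replace $\M(D),\M(D')$ by $\M(\tilde D),\M(\tilde D')$, and apply $(d_\D,\delta_\D)$-privacy at $\tilde D,\tilde D'$. Your extra remarks on the nested minimization, the trivial $d_\D^f=\infty$ case, and the unused hypothesis $f(D)\subseteq D$ are all accurate observations that the paper's proof simply leaves implicit.
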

\begin{proof}
    We fix $D,D'\in\D$ and select $\tilde{D},\tilde{D}'\in\D$, such that $f(D)=f(\tilde{D})$, $f(D')=f(\tilde{D}')$, and $d_\D(\tilde{D},\tilde{D}')$ is minimum. If there are multiple $\tilde{D},\tilde{D}'\in\D$ that satisfy the criteria, we choose the one that minimizes $\delta_{\D}(\tilde{D},\tilde{D}')$.
    In this case, $d_\D(\tilde{D},\tilde{D}')=d_\D^f(D,D')$ and $\delta_\D(\tilde{D},\tilde{D}')=\delta_\D^f(D,D')$. Then, by definition of $(d_\D,\delta_\D)$-privacy and \Cref{re:equalRandomElements},
    \begin{align*}
        \Prob\{\M(D)\in S\} &= \Prob\{\M(\tilde{D})\in S\} \\
        &\leq \e^{d_\D(\tilde{D},\tilde{D}')} \Prob\{\M(\tilde{D}')\in S\} + \delta_\D(\tilde{D},\tilde{D}') \\
        &= \e^{d_\D^f(D,D')}\Prob\{\M(D')\in S\} + \delta_\D^f(D,D').
    \end{align*}
    
    Since this holds for all $D,D'\in\D$  for all measurable $S\subseteq\S$, $\M$ is $(d_\D^f,\delta_\D^f)$-private*.\qedhere 
\end{proof}

\THapproximateACtheoremcommondomain*
\begin{proof}
    Applying \Cref{prop:MinimumPrivacyApproximate}, we obtain that $\M_i$ are $(d_i^{f_i},\delta_i^{f_i})$-private*. Then the result follows from an analogous proof of \Cref{th:ApproximateACTheoremVariableDomain}.
\end{proof}

\begin{theorem}[Approximate best bound for disjoint inputs (common domain)]\label{th:approximateAPCCommonDomain}
    Let $\D$ be a database class and $\G$ a granularity over $\D$.
    For $i\in[k]$, let $\M_i\colon\overS_{i}\times\D\to\S_i$ be a mechanism such that $\M_k(\overs_{i},\cdot)\colon\D\to\S_i$ satisfies $(\varepsilon_i d^\G_{\D},\delta_i [d^\G_{\D}]_{\varepsilon_i})$-privacy and $p_i$-dependency for any $\overs_{i}\in\overS_{i}$.
    Then mechanism $\M=(\M_1,\dots,\M_k)_{\adapt}$ is $(\varepsilon d^\G_{\D},\delta [d^\G_\D]_\varepsilon)$-private with $\varepsilon=\max_{i\in[k]} \varepsilon_i$ and $\delta=\max_{i\in[k]} \delta_i$.
\end{theorem}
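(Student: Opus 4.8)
The plan is to transcribe the proof of \Cref{th:GeneralizedAPCCommonDomain} into the approximate setting, taking extra care with the $\delta$-component. (As in \Cref{th:GeneralizedAPCCommonDomain}, I assume $p$ is a $d^\G_\D$-compatible $k$-partitioning function; this is implicit in the ``$p_i$-dependency'' hypothesis.) By \Cref{th:Approximated-privacytoDP}, it suffices to show that $\M=(\M_1,\dots,\M_k)_{\adapt}$ is $\G$ $(\varepsilon,\delta)$-DP with $\varepsilon=\max_{i\in[k]}\varepsilon_i$ and $\delta=\max_{i\in[k]}\delta_i$, i.e.\ that $\Prob\{\M(D)\in S\}\leq \e^{\varepsilon}\Prob\{\M(D')\in S\}+\delta$ for every pair of $\G$-neighboring $D,D'\in\D$ and every measurable $S\subseteq\S$.

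First I would invoke \Cref{th:ApproximateACTheoremCommonDomain}: since each $\M_i(\overs_i,\cdot)$ is $\bigl(\varepsilon_i d^\G_\D,\ \delta_i[d^\G_\D]_{\varepsilon_i}\bigr)$-private and $p_i$-dependent, that theorem yields that $\M$ is $\bigl(\sum_{i=1}^k (\varepsilon_i d^\G_\D)^{p_i},\ \sum_{i=1}^k (\delta_i[d^\G_\D]_{\varepsilon_i})^{p_i}\bigr)$-private*, where the $p_i$ superscript denotes the minimum-privacy operations of \Cref{prop:MinimumPrivacyApproximate}. (Equivalently, one may first apply \Cref{prop:MinimumPrivacyApproximate} to each $\M_i(\overs_i,\cdot)$ and then run the composition argument of \Cref{th:ApproximateACTheoremVariableDomain}.)

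Next I would bound these two sums on an arbitrary $\G$-neighboring pair $D,D'$. By $d^\G_\D$-compatibility there is a single index $j\in[k]$ with $p_i(D)=p_i(D')$ for all $i\neq j$. For $i\neq j$, the choice $\tilde D=\tilde D'=D$ is feasible in the defining minimizations, so $(\varepsilon_i d^\G_\D)^{p_i}(D,D')=0$, and then, since $[d^\G_\D]_{\varepsilon_i}(D,D)=0$, also $(\delta_i[d^\G_\D]_{\varepsilon_i})^{p_i}(D,D')=0$. For $i=j$, the choice $\tilde D=D,\ \tilde D'=D'$ is feasible and $D\neigh_\G D'$, so $(\varepsilon_j d^\G_\D)^{p_j}(D,D')\leq \varepsilon_j d^\G_\D(D,D')=\varepsilon_j$; moreover any pair $\tilde D,\tilde D'$ attaining the nested minimum that defines $(\delta_j[d^\G_\D]_{\varepsilon_j})^{p_j}(D,D')$ satisfies $\varepsilon_j d^\G_\D(\tilde D,\tilde D')=(\varepsilon_j d^\G_\D)^{p_j}(D,D')\leq\varepsilon_j$, hence $d^\G_\D(\tilde D,\tilde D')\leq 1$, and since $[d^\G_\D]_{\varepsilon_j}(A,B)$ is nondecreasing in $d^\G_\D(A,B)$ and equals $1$ at distance $1$ we get $(\delta_j[d^\G_\D]_{\varepsilon_j})^{p_j}(D,D')=\delta_j[d^\G_\D]_{\varepsilon_j}(\tilde D,\tilde D')\leq\delta_j$. (If $\varepsilon_j=0$ the estimate collapses to the pure-DP argument of \Cref{th:GeneralizedIPCCommonDomain} under the convention $[d]_0=d$.) Summing over $i$ gives $\sum_i (\varepsilon_i d^\G_\D)^{p_i}(D,D')\leq\varepsilon_j\leq\varepsilon$ and $\sum_i (\delta_i[d^\G_\D]_{\varepsilon_i})^{p_i}(D,D')\leq\delta_j\leq\delta$, so the private* inequality for $\M$ specializes to $\Prob\{\M(D)\in S\}\leq\e^{\varepsilon}\Prob\{\M(D')\in S\}+\delta$, which is $\G$ $(\varepsilon,\delta)$-DP, and \Cref{th:Approximated-privacytoDP} finishes the proof.

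The only genuinely delicate point I anticipate is controlling the $\delta$-component: unlike $\varepsilon_i d^\G_\D$, the function $\delta_i[d^\G_\D]_{\varepsilon_i}$ is not linear in the distance, and $(\delta_i[d^\G_\D]_{\varepsilon_i})^{p_i}$ is defined through a two-stage minimization, so one must verify that the minimizing pair still lies at canonical distance at most $1$ before exploiting the monotonicity of $[\,\cdot\,]_{\varepsilon_i}$; everything else is a routine adaptation of the proof of \Cref{th:GeneralizedAPCCommonDomain}. A tempting alternative, deducing the result from \Cref{th:ApproximateAPCVariableDomain} by a preprocessing argument, is less clean here because \Cref{prop:preprocessing} is stated only for pure $d_\D$-privacy.
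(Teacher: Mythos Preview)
Your proposal is correct and follows essentially the same route as the paper: reduce to $\G$ $(\varepsilon,\delta)$-DP via \Cref{th:Approximated-privacytoDP}, apply \Cref{th:ApproximateACTheoremCommonDomain} to get the private* bound $\bigl(\sum_i(\varepsilon_i d^\G_\D)^{p_i},\sum_i(\delta_i[d^\G_\D]_{\varepsilon_i})^{p_i}\bigr)$, and then use $d^\G_\D$-compatibility on a neighboring pair to kill all terms but the $j$th. Your treatment of the $\delta$-term is in fact more explicit than the paper's: the paper simply rewrites $(\delta_i[d^\G_\D]_{\varepsilon_i})^{p_i}$ as $\delta_i[d^{\G,p_i}_\D]_{\varepsilon_i}$ and bounds it by $\delta_i[d^\G_\D]_{\varepsilon_i}\leq\delta_i$, implicitly using the same monotonicity you spell out via the minimizing pair having canonical distance at most~$1$.
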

\begin{proof}
    From \Cref{th:Approximated-privacytoDP}, it is equivalent to see that $\M$ is $\G$ $\varepsilon$-DP with $\varepsilon=\max_{i\in[k]}\varepsilon_i$, i.e., that for all $\G$-neighboring $D,D'\in\D$ and measurable $S\subseteq\S$,
    \[
        \Prob\{\M(D)\in S\} \leq \e^{\varepsilon}\Prob\{\M(D')\in S\}.
    \]
    
    Applying \Cref{th:ApproximateACTheoremCommonDomain}, we obtain that $\M$ is $d$-private* with
    \[
        d(D,D') = \sum^k_{i=1} (\varepsilon_i d^\G_\D)^{p_i}(D,D') = \sum^k_{i=1} \varepsilon_i d^{\G,p_i}_\D(D,D')
    \]
    and
    \[
        \delta(D,D') = \sum^k_{i=1} (\delta_i [d^\G_\D]_{\varepsilon_i})^{p_i}(D,D') = \sum^k_{i=1} \delta_i [d^{\G,p_i}_\D]_\varepsilon(D,D').
    \]
    
    Now suppose $D,D'\in\D$ are $\G$-neighboring. By definition of $d^\G_\D$-compatibility, there exist $j\in[k]$ such that $p_i(D)=p_i(D')$ for all $i\neq j$. Consequently, for all $i\neq j$, $d^{\G,p_i}_\D(D,D')\leq d^\G_\D(D,D) = 0$, since we can select $D$ as both $\tilde{D}$ and $\tilde{D}'$ in the definition (see \Cref{prop:MinimumPrivacyApproximate}). Consequently, $[d^{\G,p_i}_\D]_\varepsilon(D,D')=0$ for all $i\neq j$. Therefore, 
    \begin{gather*}
        d(D,D') = \sum^k_{i=1} \varepsilon_i d^{\G,p_i}_\D(D,D') = \varepsilon_j d^{\G,p_j}_\D(D,D') \leq \varepsilon_j d^\G_\D(D,D') \leq \varepsilon_j,
    \end{gather*}
    and
    \begin{gather*}
        \delta(D,D') = \sum^k_{i=1} \delta_i [d^{\G,p_i}_\D]_\varepsilon(D,D') = \delta_j [d^{\G,p_j}_\D]_\varepsilon(D,D') \leq \delta_j [d^{\G}_\D]_\varepsilon(D,D') \leq \delta_j,
    \end{gather*}
    where the last inequalities in both equations come from the fact that $D$ and $D'$ are $\G$-neighboring. Consequently, since $\M$ is $(d,\delta)$-private*, for all measurable $S\subseteq\S$,
    \[
        \Prob\{\M(D)\in S\} \leq \e^{\varepsilon_j}\Prob\{\M(D')\in S\}+\delta.
    \]
    
    Since $j\in[k]$ depends on the choice of the $\G$-neighboring $D,D'\in\D$, it is sufficient to select $\varepsilon=\max_{i\in[k]}\varepsilon_i$ and $\delta=\max_{i\in[k]}\delta_i$ to cover all cases. In conclusion, $\M$ is $\G$ $(\varepsilon,\delta)$-DP.
\end{proof}


\subsection*{Proofs of \Cref{sec:ExtendingTozCDP}: \nameref*{sec:ExtendingTozCDP}}

\THdzCprivacytoDP*
\begin{proof}
    First, we see that $\rho (d^\G_\D)^2$-zCprivacy implies $\G$ $\rho$-zCDP. Suppose that $\M\colon\D\to\S$ is $\rho (d^\G_\D)^2$-zCprivate. Then, for any $\G$-neighboring databases $D,D'\in\D$ and all $\alpha>1$, we have that
    \[
        D_{\alpha}(\M(D)\dline\M(D'))\leq \rho (d_{\D}^{\G})^2(D,D')\alpha.
    \]
    
    By construction of the canonical metric, $(d^\G_\D)^2(D,D')=1$ since $D$ and $D'$ are $\G$-neighboring, and therefore $\M$ is $\G$ $\rho$-zCDP.
    
    Now we prove the other implication. Suppose $\M\colon\D\to\S$ is $\G$ $\rho$-zCDP. We want to see that, for all $D,D'\in\D$ and all $\alpha>1$,
    \[
        D_{\alpha}(\M(D)\dline\M(D'))\leq \rho (d_{\D}^{\G})^2(D,D')\alpha.
    \]
    
    For this proof, we use the weak triangle inequality satisfied by the Rényi divergence:
    \[
        D_{\alpha}(P\dline Q)\leq \frac{k\alpha}{k\alpha-1}D_{\frac{k\alpha-1}{k-1}}(P\dline R)+ D_{k\alpha}(R\dline Q).
    \]
    
    We present the proof by induction on $d^{\G}_{\D}(D,D')$.
    For $d^{\G}_{\D}(D,D')=1$ we have $D\neigh_{\G}D'$ and therefore
    \[
        D_{\alpha}(\M(D)\dline\M(D'))\leq\rho\alpha= \rho\cdot 1^2\cdot\alpha.
    \]
    
     We now assume the statement holds for $d^{\G}_{\D}(D,D')=k-1$ and we prove for $d^{\G}_{\D}(D,D')=k$. Since $d^{\G}_{\D}(D,D')=k$, there exists $D_1,\dots,D_{k-1}\in\D$ such that
     \[
        D\neigh_{\G}D_1\neigh_{\G}\cdots\neigh_{\G}D_{k-1}\neigh_{\G}D'.
     \]
     
     Then, by induction hypothesis we have that $D_{\alpha}(\M(D)\dline\M(D_{k-1}))\leq\rho(k-1)^2\alpha$ and $D_{\alpha}(\M(D_{k-1})\dline\M(D'))\leq\rho\alpha$ for all $\alpha>1$. Applying now the weak triangle inequality,
     we have that for all $\alpha>1$:
     \begin{align*}
        D_{\alpha}(\M(D)\dline \M(D'))\leq{}& \frac{k\alpha}{k\alpha-1}D_{\frac{k\alpha-1}{k-1}}(\M(D)\dline \M(D_{k-1}))+D_{k\alpha}(\M(D_{k-1})\dline\M(D'))\\
        \leq{}& \frac{k\alpha}{k\alpha-1}\bigg(\rho (k-1)^2\frac{k\alpha-1}{k-1} \bigg)+\rho k\alpha\\
        ={}& \rho (k\alpha(k-1)+k\alpha) \\ 
        ={}& \rho(k^2\alpha-k\alpha+k\alpha)\\
        ={}& \rho k^2\alpha.
     \end{align*}
     
     This proves the result for all $d^\G_\D(D,D')\in\N$. Note that the case $d^\G_\D(D,D')=\infty$ holds trivially. \qedhere
\end{proof}

\THzeroconcentratedactheoremvariabledomain*
\begin{proof}
   Note that $d_\D$ is a well-defined metric since the square root of the sum of squared distances is still a distance (i.e., the $\ell_2$-norm) and
    \[
        d_\D(D,D') = \sqrt{\sum^k_{i=1} d^2_i(f_i(D),f_i(D'))}
    \]
    for all $D,D'\in\D$.
    
    Now, we need to prove, for all $\alpha\in(1,\infty)$ and for all $D,D'\in\D$, that
    \[
        D_{\alpha}(\M(D)\dline\M(D'))
        \leq {d}^2_{\D}(D,D')\alpha\coloneqq \sum^k_{i=1} d^2_i(f_i(D),f_i(D'))\alpha.
    \]
    
    We prove the result by induction over $k$. For $k=1$, the result is trivial. Therefore, fixing $k$, we suppose it is true for $k-1$ and we prove for $k$.
    
    Let $\overM=(\M^*_1\circ f^*_1, \dots, \M^*_{k-1}\circ f^*_{k-1})_\adapt$. By the induction hypothesis, for all $\alpha\in(1,\infty)$ and $D,D'\in\D$,
    \[
        D_\alpha(\overM(D)\dline\overM(D')) \leq \sum^{k-1}_{i=1} d^2_i(f_i(D),f_i(D'))\alpha.
    \]
    
    We can also rewrite $\M$ as a function of $\overM$ and $\M_k$ as
    \[
        \M(D) = (\overM(D),\M_k(\overM(D),f_k(D)))
    \]
    for all $D\in\D$.
    
    We fix $D,D'\in\D$. We denote the distributions of $\M(D)$, $\M(D')$, $\overM(D)$ and $\overM(D')$ respectively as $P$, $Q$, $\myol{P}$ and $\myol[2]{Q}$. For all $\overs\in\overS_{k}=\Range(\overM)$, we denote the distributions of $\M^*_k(\overs,f_k(D))$ and $\M^*_k(\overs,f_k(D'))$ respectively as $P|_{\overs}$ and $Q|_{\overs}$, which can be seen as conditioned distributions to $\overs$. We denote the corresponding density/probability mass functions with lowercase letters: $p$, $q$, $\myol[1]{p}$, $\myol[1]{q}$, $p|_{\overs}$ and $q|_{\overs}$. Let $\mu=(\myol{\mu},\mu_k)$ be the corresponding measure. Furthermore, we denote $d^2_k\coloneqq d^2_k(f_k(D),f_k(D'))$ and $\myol{d}^2 \coloneqq \sum^{k-1}_{i=1} d_i^2(f_i(D),f_i(D'))$ to simplify the notation.
    
    By definition of $\M$, we have for $\overs\in\overS_{k}$ and $s_k\in\S_k$, 
    \[
        p(\overs,s_k) = p|_{\overs}(s_k)\overp(\overs) \quad\text{and}\quad q(\overs,s_k) = q|_{\overs}(s_k)\overq(\overs).
    \]
    
    Then, for any measurable $S=(\myol{S},S_k)\subseteq\overS\times\S_k=\Range(\M)$, we obtain that
    \begin{align*}
        \exp((\alpha-1)D_{\alpha}(\M(D)\dline\M(D')))
        \overset{\textrm{(i)}}{=}{}& \int_{S}p(\overs,s_k)^{\alpha}q(\overs,s_k)^{1-\alpha}\,\diff{\mu(\overs,s_k)} \\
        ={}& \int_{S} \big(p|_{\overs}(s_k)\overp(\overs)\big)^{\alpha}\big(q|_{\overs}(s_k)\overq(\overs)\big)^{1-\alpha}\,\diff{\mu(\overs,s_k)} \\
        ={}& \int_{S} p|_{\overs}(s_k)^\alpha q|_{\overs}(s_k)^{1-\alpha}\overp(\overs)^\alpha\overq(\overs)^{1-\alpha}\,\diff{\mu(\overs,s_k)} \\
        ={}& \int_{\myol{S}}\overp(\overs)^\alpha\overq(\overs)^{1-\alpha}\bigg[\int_{S_k} p|_{\overs}(s_k)^\alpha q|_{\overs}(s_k)^{1-\alpha}\,\diff{\mu_k(s_k)}\bigg]\diff{\myol{\mu}(\overs)} \\
        \overset{\textrm{(i)}}{=}{}& \int_{\myol{S}}\overp(\overs)^\alpha\overq(\overs)^{1-\alpha}\exp((\alpha-1)D_\alpha(P|_{\overs}\dline Q|_{\overs}))\,\diff{\myol{\mu}(\overs)} \\
        \overset{\textrm{(ii)}}{\leq}{}& \int_{\myol{S}}\overp(\overs)^\alpha\overq(\overs)^{1-\alpha}\exp((\alpha-1)\alpha d_k^2)\,\diff{\myol{\mu}(\overs)} \\
        ={}& \exp((\alpha-1)\alpha d^2_k)\int_{\myol{S}}\overp(\overs)^\alpha\overq(\overs)^{1-\alpha}\,\diff{\myol{\mu}(\overs)} \\
        \overset{\textrm{(iii)}}{\leq}{}& \exp((\alpha-1)\alpha d_k^2)\exp((\alpha-1)\alpha\myol{d}^2)\\
        ={}& \exp((\alpha-1)\alpha(\myol{d}^2+d_k^2))\\
        ={}& \exp\!\bigg((\alpha-1)\alpha\sum^k_{i=1}d_i^2(f_i(D),f_i(D'))\bigg)
    \end{align*}
    where
    \begin{enumerate}[(i)]
        \item uses the definition of Rényi divergence,
        \item uses the fact that $\M_k(\overs,\cdot)$ is $d^2_k$-zCprivate,
        \item uses the definition of Rényi divergence and the fact that $\overM$ is $\myol{d}^2$-zCprivate.
    \end{enumerate}
    
    Therefore, by solving the equation, we obtain
    \[
        D_{\alpha}(\M(D)\dline\M(D')) \leq \sum^k_{i=1} d_i^2(f_i(D),f_i(D'))\alpha,
    \]
    completing the proof. \qedhere
\end{proof}

\begin{proposition}[Minimum privacy for zero-concentrated DP]\label{prop:MinimumPrivacyZeroConcentrated}
    Let $(\D,d_\D)$ be a privacy space, let $f$ be a deterministic map with domain $\D$ such that $f(D)\subseteq D$ for all $D\in\D$, and let $\M\colon\D\to\S$ be a $d_{\D}^2$-zCprivate mechanism. If $\M$ is $f$-dependent, then $\M$ is $(d_{\D}^{f})^2$-zCprivate* with
    \[
        (d_{\D}^f)^2(D,D') \coloneqq \min_{\substack{\tilde{D},\tilde{D}'\in\D \\ f(\tilde{D})=f(D)\\ f(\tilde{D}')=f(D')}} d^2_\D(\tilde{D},\tilde{D}').
    \]
\end{proposition}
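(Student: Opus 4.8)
The plan is to follow exactly the blueprint of \Cref{prop:MinimumPrivacy} (and its approximate analogue), replacing the output inequality ``$\Prob\{\M(\cdot)\in S\}\leq\ldots$'' by the Rényi-divergence inequality defining zCprivacy. The crucial observation is that $D_{\alpha}(\M(D)\dline\M(D'))$ depends only on the distributions of the random elements $\M(D)$ and $\M(D')$, not on $D,D'$ themselves. Since $\M$ is $f$-dependent, \Cref{re:equalRandomElements} gives an $\M^*$ with $\M=\M^*\circ f$, so whenever $f(\tilde D)=f(D)$ the random elements $\M(\tilde D)$ and $\M(D)$ have identical distributions, and likewise $\M(\tilde D')$ and $\M(D')$ when $f(\tilde D')=f(D')$. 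Hence, for every such pair $\tilde D,\tilde D'\in\D$ and every $\alpha\in(1,\infty)$,
\[
    D_{\alpha}(\M(D)\dline\M(D')) = D_{\alpha}(\M(\tilde D)\dline\M(\tilde D')).
\]

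Now fix $D,D'\in\D$. The feasible set $\{\tilde D\in\D\mid f(\tilde D)=f(D)\}$ is nonempty (it contains $D$), and likewise for $D'$, so we may pick $\tilde D,\tilde D'$ attaining the minimum in the definition of $(d_{\D}^{f})^2(D,D')$; note that, since $t\mapsto t^2$ is increasing on $[0,\infty]$, these are the same minimizers as for $d_{\D}^{f}$, and $d^2_\D(\tilde D,\tilde D')=(d_{\D}^{f})^2(D,D')$. Applying $d^2_{\D}$-zCprivacy of $\M$ to the pair $\tilde D,\tilde D'$ yields, for all $\alpha\in(1,\infty)$,
\[
    D_{\alpha}(\M(\tilde D)\dline\M(\tilde D')) \leq d^2_\D(\tilde D,\tilde D')\,\alpha = (d_{\D}^{f})^2(D,D')\,\alpha.
\]
Combining this with the distributional invariance above gives $D_{\alpha}(\M(D)\dline\M(D'))\leq (d_{\D}^{f})^2(D,D')\,\alpha$ for all $D,D'\in\D$ and all $\alpha\in(1,\infty)$, which is precisely $(d_{\D}^{f})^2$-zCprivacy*.

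There is essentially no serious obstacle here: the argument never chains divergences, so the failure of the triangle inequality for the Rényi divergence is irrelevant (and this is also why the conclusion is only a ``starred'' statement, as $d_{\D}^{f}$, hence $(d_{\D}^{f})^2$, need not be a metric, matching \Cref{prop:MinimumPrivacy}). The only points deserving a line of care are that the feasible minimization sets are nonempty (automatic) and that the minima are attained --- a point the paper already treats as granted in \Cref{prop:MinimumPrivacy,prop:MinimumPrivacyApproximate}; the hypothesis $f(D)\subseteq D$ plays no active role beyond situating $f$ as a sub-database map, exactly as in those statements.
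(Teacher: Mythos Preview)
Your proof is correct and follows essentially the same approach as the paper: fix $D,D'$, pick minimizers $\tilde D,\tilde D'$ with $f(\tilde D)=f(D)$ and $f(\tilde D')=f(D')$, invoke \Cref{re:equalRandomElements} to replace the distributions, and apply the $d^2_\D$-zCprivacy bound at $\tilde D,\tilde D'$. Your additional remarks on nonemptiness of the feasible sets and the monotonicity of $t\mapsto t^2$ are sound clarifications that the paper leaves implicit.
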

\begin{proof}
    
    We fix $D,D'\in\D$ and select $\tilde{D},\tilde{D}'\in\D$, such that $f(D)=f(\tilde{D})$, $f(D')=f(\tilde{D}')$, and $d_\D(\tilde{D},\tilde{D}')$ is minimum.
    In this case, $d_\D(\tilde{D},\tilde{D}')=d_\D^f(D,D')$. Then, by definition of $d^2_\D$-zCprivacy and \Cref{re:equalRandomElements},
    \begin{gather*}
        D_\alpha(\M(D)\dline\M(D')) 
        = D_\alpha(\M(\tilde{D})\dline\M(\tilde{D}')) 
        \leq d^2_\D(\tilde{D},\tilde{D}') = (d^f_\D)^2(D,D'). 
    \end{gather*}
    
    Since this holds for all $D,D'\in\D$, $\M$ is $(d_\D^f)^2$-zCprivate*.\qedhere 
\end{proof}

\THzeroconcentratedACtheoremcommondomain*
\begin{proof}
    Applying \Cref{prop:MinimumPrivacyZeroConcentrated}, we obtain that $\M_i$ are $(d_i^{f_i})^2$-zCprivate*. Then the result follows from an analogous proof of \Cref{th:ZeroConcentratedACTheoremVariableDomain}.
\end{proof}

\THzeroconcentratedASC*
\begin{proof}
    Direct from \Cref{th:ZeroConcentratedACTheoremVariableDomain} by taking $\D_i=\D$ and $f_i=\id$.
\end{proof}

\THzeroconcentratedAPCvariabledomain*
\begin{proof}
    From \Cref{th:d-zCprivacyToDP}, it is equivalent to see that $\M$ is $\G$ $\rho$-zCDP with $\rho=\max_{i\in[k]}\rho_i$, i.e., that for all $\G$-neighboring $D,D'\in\D$,
    \[
        D_\alpha(\M(D)\dline\M(D')) \leq \rho\alpha.
    \]
    
    Applying \Cref{th:ZeroConcentratedACTheoremVariableDomain}, we obtain that $\M$ is $d^2$-zCprivate with
    \[
        d^2(D,D') = \sum^k_{i=1} \rho_i (d^\G_{\D_i})^2(p_i(D),p_i(D')).
    \]
    
    Now suppose $D,D'\in\D$ are $\G$-neighboring. By definition of $d^\G_\D$-compatibility, there exist $j\in[k]$ such that $p_i(D)=p_i(D')$ for all $i\neq j$. Consequently, for all $i\neq j$, $d^\G_{\D_i}(p_i(D),p_i(D')) = 0$. Moreover, by preprocessing (\Cref{prop:preprocessing}), we have that $d^\G_{\D_j}(p_j(D),p_j(D'))\leq \Delta p_j d^\G_{\D}(D,D') \leq 1$ since $D\neigh_\G D'$ and $\Delta p_j\leq 1$. Therefore, 
    \begin{gather*}
        d^2(D,D') = \sum^k_{i=1} \rho_i (d^\G_{\D_i})^2(p_i(D),p_i(D')) = \rho_j (d^\G_{\D_j})^2(p_j(D),p_j(D'))\leq \rho_j.
    \end{gather*}
    
    Consequently, since $\M$ is $d^2$-zCprivate, 
    \[
        D_\alpha(\M(D)\dline\M(D')) \leq \rho_j\alpha.
    \]
    
    Since $j\in[k]$ depends on the choice of the $\G$-neighboring $D,D'\in\D$, it is sufficient to choose $\rho=\max_{i\in[k]}\rho_i$ to cover all cases. In conclusion, $\M$ is $\G$ $\rho$-zCDP.
\end{proof}

\begin{theorem}[Zero-concentrated best bound for disjoint inputs (common domain)]\label{th:ZeroConcentratedAPCCommonDomain}
    Let $\D$ be a database class and $\G$ a granularity over $\D$.
    For $i\in[k]$, let $\M_i\colon\overS_{i}\times\D\to\S_i$ be a mechanism such that $\M_k(\overs_{i},\cdot)\colon\D\to\S_i$ satisfies $\rho_i(d^\G_{\D})^2$-zCprivacy and $p_i$-dependency for any $\overs_{i}\in\overS_{i}$.
    Then mechanism $\M=(\M_1,\dots,\M_k)_{\adapt}$ is $\rho (d^\G_{\D})^2$-zCprivate with $\rho=\max_{i\in[k]} \rho_i$.  
\end{theorem}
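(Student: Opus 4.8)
The plan is to follow the same route as the proofs of \Cref{th:GeneralizedIPCCommonDomain} and \Cref{th:approximateAPCCommonDomain}: reduce the metric statement to a purely combinatorial bound on $\G$-neighbors via \Cref{th:d-zCprivacyToDP}, then read off the privacy level of $\M$ from the common-domain zCDP composition theorem and collapse the sum using $d^\G_\D$-compatibility. Concretely, by \Cref{th:d-zCprivacyToDP} it suffices to show that $\M$ is $\G$ $\rho$-zCDP with $\rho=\max_{i\in[k]}\rho_i$, i.e. that $D_{\alpha}(\M(D)\dline\M(D'))\leq\rho\alpha$ for every pair of $\G$-neighboring $D,D'\in\D$ and every $\alpha\in(1,\infty)$.

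First I would apply \Cref{th:ZeroConcentratedACTheoremCommonDomain} to the mechanisms $\M_i$, which (on every slice $\overs_i\in\overS_i$) are $\rho_i(d^\G_\D)^2$-zCprivate and $p_i$-dependent. Writing $d_i\coloneqq\sqrt{\rho_i}\,d^\G_\D$, so that $d_i^2=\rho_i(d^\G_\D)^2$, the theorem gives that $\M=(\M_1,\dots,\M_k)_{\adapt}$ is $d^2_\D$-zCprivate* with $d^2_\D=\sum^k_{i=1}(d_i^{p_i})^2$. Using \Cref{prop:MinimumPrivacyZeroConcentrated} and the monotonicity of squaring on $[0,\infty]$, one has $(d_i^{p_i})^2(D,D')=\rho_i\,\bigl(d^{\G,p_i}_\D(D,D')\bigr)^2$, where $d^{\G,p_i}_\D(D,D')=\min\{\,d^\G_\D(\tilde D,\tilde D')\mid p_i(\tilde D)=p_i(D),\ p_i(\tilde D')=p_i(D')\,\}$. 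Hence $d^2_\D(D,D')=\sum^k_{i=1}\rho_i\,\bigl(d^{\G,p_i}_\D(D,D')\bigr)^2$.

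The second step bounds this quantity on $\G$-neighbors. Fix $\G$-neighboring $D,D'\in\D$. By $d^\G_\D$-compatibility of $p$ there is a single index $j\in[k]$ with $p_i(D)=p_i(D')$ for all $i\neq j$; for such $i$ we may take $\tilde D=\tilde D'=D$ in the minimum, so $d^{\G,p_i}_\D(D,D')=0$, while $d^{\G,p_j}_\D(D,D')\leq d^\G_\D(D,D')=1$. Therefore $d^2_\D(D,D')\leq\rho_j\leq\rho$, and since $\M$ is $d^2_\D$-zCprivate* we get $D_{\alpha}(\M(D)\dline\M(D'))\leq d^2_\D(D,D')\,\alpha\leq\rho\alpha$. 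As $\rho$ is independent of the chosen neighbors, $\M$ is $\G$ $\rho$-zCDP, which by \Cref{th:d-zCprivacyToDP} is exactly $\rho(d^\G_\D)^2$-zCprivacy, as claimed.

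I expect no genuine analytic obstacle: the delicate part of zCDP composition — the quadratic group-privacy degradation of the R\'enyi divergence and its preservation under the recursive adaptive construction — is already encapsulated in \Cref{th:ZeroConcentratedACTheoremCommonDomain} and \Cref{prop:MinimumPrivacyZeroConcentrated}. The only point that needs care is that, in contrast with \Cref{th:ZeroConcentratedAPCVariableDomain}, no sensitivity hypothesis $\Delta p_i\leq 1$ is required here: because the $\M_i$ are assumed private directly over the common domain $\D$, the surviving term is controlled by the minimum-privacy distance $d^{\G,p_j}_\D\leq d^\G_\D$ \emph{on $\D$ itself}, so the behaviour of $p_j$ viewed as a map into $\D_j$ never enters the argument. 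Consistently with the analogous remark for \Cref{th:GeneralizedIPCCommonDomain}, one expects this theorem to also follow from the zCprivacy analogue of preprocessing applied to \Cref{th:ZeroConcentratedAPCVariableDomain}.
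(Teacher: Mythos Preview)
Your proposal is correct and follows essentially the same route as the paper's proof: reduce via \Cref{th:d-zCprivacyToDP} to checking $\G$-neighbors, invoke \Cref{th:ZeroConcentratedACTheoremCommonDomain} to obtain $d^2_\D=\sum_{i}\rho_i(d^{\G,p_i}_\D)^2$, and then use $d^\G_\D$-compatibility to collapse all but one term and bound the surviving one by $\rho_j\leq\rho$. Your additional remark explaining why the hypothesis $\Delta p_i\leq 1$ is not needed in the common-domain setting is accurate and matches the paper's earlier discussion after \Cref{th:GeneralizedIPCCommonDomain}.
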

\begin{proof}
    From \Cref{th:d-zCprivacyToDP}, it is equivalent to see that $\M$ is $\G$ $\rho$-zCDP with $\rho=\max_{i\in[k]}\rho_i$, i.e., that for all $\G$-neighboring $D,D'\in\D$ and measurable $S\subseteq\S$,
    \[
        D_\alpha(\M(D)\dline\M(D')) \leq \rho\alpha.
    \]
    
    Applying \Cref{th:ZeroConcentratedACTheoremCommonDomain}, we obtain that $\M$ is $d$-private* with
    \begin{align*}
        d^2(D,D') = \sum^k_{i=1} (\rho_i (d^\G_\D)^2)^{p_i}(D,D')
        = \sum^k_{i=1} \rho_i (d^{\G,p_i}_\D)^2(D,D').
    \end{align*}
    
    Now suppose $D,D'\in\D$ are $\G$-neighboring. By definition of $d^\G_\D$-compatibility, there exist $j\in[k]$ such that $p_i(D)=p_i(D')$ for all $i\neq j$. Consequently, for all $i\neq j$, $d^{\G,p_i}_\D(D,D')\leq d^\G_\D(D,D) = 0$, since we can select $D$ as both $\tilde{D}$ and $\tilde{D}'$ in the definition (see \Cref{prop:MinimumPrivacyZeroConcentrated}). Therefore, 
    \begin{gather*}
        d^2(D,D') = \sum^k_{i=1} \rho_i (d^{\G,p_i}_\D)^2(D,D')) = \rho_j (d^{\G,p_j}_\D)^2(D,D') \leq \rho_j (d^\G_\D)^2(D,D') \leq \rho_j,
    \end{gather*}
    where the last inequality comes from the fact that $D$ and $D'$ are $\G$-neighboring. Consequently, since $\M$ is $d^2$-zCprivate*, for all measurable $S\subseteq\S$,
    \[
        D_\alpha(\M(D)\dline\M(D')) \leq \rho_j\alpha.
    \]
    
    Since $j\in[k]$ depends on the choice of the $\G$-neighboring $D,D'\in\D$, it is sufficient to choose $\rho=\max_{i\in[k]}\rho_i$ to cover all cases. In conclusion, $\M$ is $\G$ $\rho$-zCDP.
\end{proof}


\subsection*{Proofs of \Cref{sec:GDP}: \nameref*{sec:GDP}}
\PRdGDPtoGDP*
\begin{proof}
    First, we see that $\mu d^\G_\D$-Gprivacy implies $\G$ $\mu$-GDP. Suppose that $\M\colon\D\to\S$ is $\mu d^\G_\D$-Gprivacy. Then, for any $\G$-neighboring databases $D,D'\in\D$, we have that
    \[
        T(\M(D),\M(D')) \geq G_{\mu d_{\D}^{\G}(D,D')}.
    \]
    
    By construction of the canonical metric, $d^\G_\D(D,D')=1$ since $D$ and $D'$ are $\G$-neighboring, and therefore $\M$ is $\G$ $\mu$-GDP.
    
    Now we prove the other implication. Suppose $\M\colon\D\to\S$ is $\G$ $\mu$-GDP. We want to see that for all $D,D'\in\D$
    \[
        T(\M(D),\M(D')) \geq G_{\mu d_{\D}^{\G}(D,D')}.
    \]
    
    We now prove this by induction over $d^{\G}_{\D}(D,D')$.
    For $d^{\G}_{\D}(D,D')=1$ we have $D\neigh_{\G}D'$ and thus
    \[
        T(\M(D),\M(D')) \geq G_{\mu}.
    \]
     
     We now assume that the statement holds for $d^{\G}_{\D}(D,D')=k-1$ and we prove for $d^{\G}_{\D}(D,D')=k$. Since $d^{\G}_{\D}(D,D')=k$, there exists $D_1,\dots,D_{k-1}\in\D$ such that
     \[
        D\neigh_{\G}D_1\neigh_{\G}\cdots\neigh_{\G}D_{k-1}\neigh_{\G}D'.
     \]
     
     By the induction hypothesis, we have that
     \[
        T(\M(D_{k-1}),\M(D')) \geq G_{\mu}
    \]
    and
    \[
        T(\M(D),\M(D_{k-1}))\geq G_{\mu(k-1)}.
    \]
    
    Then, by Lemma A.5 in~\cite{dong2019Gaussian}, we have that
    \[
     T(\M(D),\M(D')) \geq G_{\mu}(1-G_{\mu(k-1)}(\alpha)).
    \]
    
    Therefore, in conclusion, 
    \begin{align*}
        G_{\mu}(1-G_{\mu(k-1)}(\alpha)) &= \Phi(\Phi^{-1}(G_{\mu(k-1)}(\alpha)) - \mu) \\ &=\Phi(\Phi^{-1}(1-\alpha) -\mu-(1-k)\mu) \\
        &= G_{\mu+\mu(k-1)}(\alpha) \\
        &=G_{\mu k}(\alpha).
    \end{align*}
    
    This proves the result for all $d^\G_\D(D,D')\in\N$. Note that the case $d^\G_\D(D,D')=\infty$ holds trivially since $G_\infty \equiv 0$. \qedhere
\end{proof}

\begin{remark}[Tensor product~\cite{dong2022Gaussian}]\label{re:Tensor}
    The \textit{tensor product} of two trade-off functions, $f=T(P,Q)$ and $g=T(P',Q')$, is defined as
    \[
        f \otimes g \coloneqq T(P\times P',Q\times Q').
    \]
    
    The well-definition and the properties of the tensor product are proven in~\cite{dong2019Gaussian}. In our proofs, we will use that $\otimes$ is associative and commutative, and verifies $g\otimes f\geq g'\otimes f$ for all trade-off functions $f$ and $g\geq g'$.
\end{remark}

\THgaussianactheorem*
\begin{proof}
    Note that $d_\D$ is a well-defined metric since the square root of the sum of squared distances is still a distance (i.e., the $\ell_2$-norm).
    
    Now we need to prove for all $D,D'\in\D$ that
    \[
        T(\M(D),\M(D')) \geq G_{d_\D(D,D')}.
    \]
    
    We prove the result by induction over $k$. For $k=1$, the result is trivial. Therefore, fixing $k$, we suppose it is true for $k-1$ and we prove for $k$.
    
    Let $\overM=(\M^*_1\circ f^*_1, \dots, \M^*_{k-1}\circ f^*_{k-1})_\adapt$. By the induction hypothesis, for all $D,D'\in\D$,
    \[
        T(\overM(D),\overM(D')) \geq G_{\myol{d}(D,D')}
    \]
    with $\myol{d}=\sqrt{{d_1^2+\dots+d^2_{k-1}}}$. We can also rewrite $\M$ as a function of $\overM$ and $\M_k$ as
    \[
        \M(D) = (\overM(D),\M_k(\overM(D),f_k(D)))
    \]
    for all $D\in\D$.
    
    We fix $D,D'\in\D$. Since $\M_k(\overs_{k},\cdot)$ is $d_k$-Gprivate for all $\overs_{k}\in\overS_{k}$, we have that 
    \[
        T(\M(D),\M(D')) = T(\overM(D),\overM(D'))\otimes G_{d_k(D,D')}.
    \]
    
    This fact follows from Lemma C.1 in~\cite{dong2019Gaussian} as explained in their proof of Lemma C.3. Since $\overM$ is $\myol{d}$-private, we obtain 
    \[
        T(\M(D),\M(D')) \geq G_{\myol{d}(D,D')}\otimes G_{d_k(D,D')}.  
    \]
    by the properties of $\otimes$ (see \Cref{re:Tensor}). Finally, by Proposition D.1 in~\cite{dong2019Gaussian}, we obtain 
    \[
        G_{\myol{d}(D,D')}\otimes G_{d_k(D,D')} = G_{\sqrt{\myol{d}(D,D')^2+d_k(D,D')^2}}     
    \]
    where 
    \[
        \sqrt{\myol{d}(D,D')^2+d_k(D,D')^2}=\sqrt{\sum^k_{i=1}d_i(D,D')^2}. \qedhere
    \]
\end{proof}

\THgaussianACtheoremcommondomain*
\begin{proof}
    We just need to prove that $\M_i$ are $d_i^{f_i}$-private* and it follows from an analogous proof of \Cref{th:GaussianACTheorem}.

    By hypothesis, $\M_i$ are $d_i$-Gprivate, which means that for all $D,D'\in\D$, $T(\M_i(D),\M_i(D'))\geq G_{d_i(D,D')}$. Using \Cref{re:equalRandomElements}, we obtain that 
    \[
        T(\M_i(D),\M_i(D'))=T(\M_i(\Tilde{D}),\M_i(\tilde{D}'))
    \]
    for all $\tilde{D},\tilde{D}'\in\D$ such that $f(D)=f(\Tilde{D})$ and $f(D')=f(\tilde{D}')$. In conclusion, $T(\M_i(D),\M_i(D'))\geq G_{d_i^f(D,D')}$ with
    \[
        d^f_i(D,D') = \min_{\substack{\tilde{D},\tilde{D}'\in\D \\ f(\tilde{D})=f(D)\\ f(\tilde{D}')=f(D')}} d_i(\tilde{D},\tilde{D}'). \qedhere
    \]
\end{proof}

\THgaussianASC*
\begin{proof}
    Direct from \Cref{th:GaussianACTheorem} by taking $\D_i=\D$ and $f_i=\id$.
\end{proof}

\THgaussianparallel*
\begin{proof}
    By \Cref{th:GaussianACTheorem}, we have that $\M$ is $d_{\D}$-Gprivate for 
    \begin{align*}
        d_{\D}(D,D') =\sqrt{\sum^k_{i=1}\mu_i^2 d^*_{\D_i}(p_i(D),p_i(D'))^2} 
        \leq \max_{i\in[k]}\mu_i\sqrt{\sum^k_{i=1} d^*_{\D_i}(p_i(D),p_i(D'))^2}.
    \end{align*}
    
    Then, we have that
    \begin{align*}
     \sum^k_{i=1}d^*_{\D_i}(p_i(D),p_i(D'))^2\overset{\textrm{(i)}}{\leq}{}& \bigg(\sum^k_{i=1}d^*_{\D_i}(p_i(D),p_i(D'))\bigg)^2\\
       \overset{\textrm{(ii)}}{=}{}& d^*_{\D}\bigg(\bigcup^k_{i=1} p_i(D),\bigcup^k_{i=1} p_i(D')\bigg)^2 \\
       \overset{\textrm{(ii)}}{\leq}{}& d^*_{\D}(D,D')^2,
    \end{align*}
    where 
    \begin{enumerate}[(i)]
        \item comes from the fact that $\sum^k_{i=1}a_{i}^{2}\leq(\sum^k_{i=1}a_{i})^{2}$ for all $a_i\geq0$,
        \item is due to the commutativity of $d^*$ with respect to $p$.
    \end{enumerate}
    
    Since the square root is a monotonically increasing function, we have the result. \qedhere
\end{proof}

\THgaussianAPCvariabledomain*
\begin{proof}
   From \Cref{th:d-gdpToDP}, it is equivalent to see that $\M$ is $\G$ $\mu$-GDP with $\mu=\max_{i\in[k]}\mu_i$, i.e., that for all $\G$-neighboring $D,D'\in\D$,
    \[
        T(\M(D),\M(D')) \geq G_{\mu}
    \]
    
    Applying \Cref{th:GaussianACTheorem}, we obtain that $\M$ is $d$-Gprivate with
    \[
       d_{\D}(D,D')=\sqrt{\sum^k_{i=1}\mu_i^2 d_{\D_i}^{\G}(p_i(D),p_i(D'))^2}.
    \]
    
    Now suppose $D,D'\in\D$ are $\G$-neighboring. By definition of $d^\G_\D$-compatibility, there exist $j\in[k]$ such that $p_i(D)=p_i(D')$ for all $i\neq j$. Consequently, for all $i\neq j$, $d^\G_{\D_i}(p_i(D),p_i(D')) = 0$. Moreover, by preprocessing (\Cref{prop:preprocessing}), we have that $d^\G_{\D_j}(p_j(D),p_j(D'))\leq \Delta p_j d^\G_{\D}(D,D') \leq 1$ since $D\neigh_\G D'$ and $\Delta p_j\leq 1$. Therefore,
    \begin{gather*}
        d_{\D}(D,D') = \sqrt{\sum^k_{i=1} \mu_j^2 d_{\D_j}^{\G}(p_j(D),p_j(D'))^2} = \mu_j d_{\D_j}^{\G}(p_j(D),p_j(D'))\leq \mu_j.
    \end{gather*}

    Consequently, since $\M$ is $d$-Gprivate,
    \[
         T(\M(D),\M(D')) \geq G_{\mu_j}.
    \]
    
    Since $j\in[k]$ depends on the choice of the $\G$-neighboring $D,D'\in\D$, it is sufficient to choose $\mu=\max_{i\in[k]}\mu_i$ to cover all cases. In conclusion, $\M$ is $\G$ $\mu$-GDP.
\end{proof}

\begin{theorem}[Gaussian best bound for disjoint inputs (common domain)]\label{th:GaussianAPCCommonDomain}
    Let $\D$ be a database class and $\G$ a granularity over $\D$. Let $p$ be a $d^\G_\D$-compatible $k$-partitioning function.
    For $i\in[k]$, let $\M_i\colon\overS_{i}\times\D\to\S_i$ be a mechanism such that $\M_k(\overs_{i},\cdot)\colon\D\to\S_i$ satisfies $\mu_i d^\G_{\D}$-Gprivacy and $p_i$-dependency for any $\overs_{i}\in\overS_{i}$.
    Then mechanism $\M=(\M_1,\dots,\M_k)_{\adapt}$ is $\mu d^\G_{\D}$-Gprivate with $\mu=\max_{i\in[k]} \mu_i$.
\end{theorem}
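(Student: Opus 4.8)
The plan is to mirror the proof of \Cref{th:GaussianAPCVariableDomain}, but invoking the common-domain Gaussian composition result \Cref{th:GaussianACTheoremCommonDomain} in place of the Gaussian AC theorem \Cref{th:GaussianACTheorem}, and exploiting $d^\G_\D$-compatibility to annihilate all but one summand. By \Cref{th:d-gdpToDP}, it suffices to show that $\M$ is $\G$ $\mu$-GDP, that is, that $T(\M(D),\M(D'))\geq G_\mu$ for every pair of $\G$-neighboring $D,D'\in\D$.

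First I would apply \Cref{th:GaussianACTheoremCommonDomain} with $f_i=p_i$: since each $\M_i(\overs_i,\cdot)$ is $\mu_i d^\G_\D$-Gprivate and $p_i$-dependent, the adaptive-composed mechanism $\M$ is $d_\D$-Gprivate* with
\[
  d_\D(D,D') = \sqrt{\sum_{i=1}^k \bigl((\mu_i d^\G_\D)^{p_i}(D,D')\bigr)^2} = \sqrt{\sum_{i=1}^k \mu_i^2\, (d^{\G,p_i}_\D)^2(D,D')},
\]
writing $d^{\G,p_i}_\D=(d^\G_\D)^{p_i}$ for the minimum-privacy* distance as in the proof of \Cref{th:GeneralizedIPCCommonDomain}. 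Note that, exactly as in \Cref{th:GeneralizedIPCCommonDomain}, we do not need the extra hypothesis $\Delta p_i\leq 1$ that \Cref{th:GaussianAPCVariableDomain} required, because here the privacy guarantee is carried by $\M_i$ rather than by $\M^*_i$.

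Next I would bound $d_\D$ on $\G$-neighboring inputs. Fix $D\neigh_\G D'$. By $d^\G_\D$-compatibility there is at most one index $j\in[k]$ with $p_j(D)\neq p_j(D')$. For every $i\neq j$ we have $p_i(D)=p_i(D')$, so choosing $\tilde D=\tilde D'=D$ in the definition of the minimum privacy* (\Cref{prop:MinimumPrivacy}) yields $d^{\G,p_i}_\D(D,D')\leq d^\G_\D(D,D)=0$; for $i=j$, choosing $\tilde D=D$, $\tilde D'=D'$ yields $d^{\G,p_j}_\D(D,D')\leq d^\G_\D(D,D')=1$ since $D$ and $D'$ are $\G$-neighboring. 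Hence
\[
  d_\D(D,D') = \mu_j\, d^{\G,p_j}_\D(D,D') \leq \mu_j \leq \mu
\]
(and $d_\D(D,D')=0$ if no such $j$ exists). Since $G_{(\cdot)}$ is nonincreasing in its index, $T(\M(D),\M(D'))\geq G_{d_\D(D,D')}\geq G_\mu$, so $\M$ is $\G$ $\mu$-GDP, and by \Cref{th:d-gdpToDP}, $\M$ is $\mu d^\G_\D$-Gprivate, as claimed.

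I expect the only delicate point to be the passage from the "starred" guarantee produced by \Cref{th:GaussianACTheoremCommonDomain} — where $d_\D$ need not be a genuine metric — back to the genuine $\mu d^\G_\D$-Gprivacy asserted in the statement; this is handled cleanly by routing through the DP-style characterization \Cref{th:d-gdpToDP}, exactly as for the analogous approximate and zero-concentrated results (\Cref{th:approximateAPCCommonDomain,th:ZeroConcentratedAPCCommonDomain}). Everything else is routine bookkeeping with the minimum-privacy* construction.
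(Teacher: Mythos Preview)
Your proof is correct and follows essentially the same approach as the paper: reduce via \Cref{th:d-gdpToDP} to the $\G$-neighboring case, apply \Cref{th:GaussianACTheoremCommonDomain} to obtain the starred guarantee $d_\D(D,D')=\sqrt{\sum_i\mu_i^2(d^{\G,p_i}_\D)^2(D,D')}$, and then use $d^\G_\D$-compatibility to kill all but one summand and bound it by $\mu_j\leq\mu$. Your added remarks about not needing $\Delta p_i\leq1$ and about routing through \Cref{th:d-gdpToDP} to recover a genuine metric guarantee are accurate and match the paper's handling of the analogous results.
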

\begin{proof}
    From \Cref{th:d-gdpToDP}, it is equivalent to see that $\M$ is $\G$ $\mu$-GDP with $\mu=\max_{i\in[k]}\mu_i$, i.e., that for all $\G$-neighboring $D,D'\in\D$,
    \[
        T(\M(D),\M(D')) \geq G_{\mu}.
    \]
    
    From \Cref{th:GaussianACTheoremCommonDomain}, we have that $\M$ is $d_{\D}$-Gprivate* with
    \[
        d_{\D}(D,D') = \sqrt{\sum^k_{i=1} \mu_i^2 d_{\D}^{\G,p_i}(D,D')^2}.
    \]
    
    Since $d^\G_\D$-compatible, there exist only one $j\in[k]$ such that $p_j(D)\neq p_j(D')$. Consequently, for all $i\neq j$, $d^{\G,p_i}_\D(D,D')\leq d^\G_\D(D,D) = 0$, since we can select $D$ as both $\tilde{D}$ and $\tilde{D}'$ in the definition (see \Cref{prop:MinimumPrivacy}). Therefore, 
    \begin{gather*}
        d_{\D}(D,D') = \sqrt{\sum^k_{i=1} \mu_i^2 d_{\D}^{\G,p_i}(D,D')^2}
        =\sqrt{\mu_j^2 d_{\D}^{\G,p_j}(D,D')^2+0}=\mu_j d_{\D}^{\G,p_j}(D,D')
        \leq \mu_j d_{\D}^{\G}(D,D')\leq \mu_j,
    \end{gather*}
    where the last inequality comes from the fact that $D\neigh_{\G} D'$. Since $j$ depends on the choice of $D$ and $D'$, it is sufficient to take $\mu=\max_{i\in[k]}\mu_i$ to cover all possible cases.
\end{proof}

\begin{corollary}\label{th:GaussianBoundedParallel}
    Let $p$ be a $k$-partitioning function of \Cref{ex:partition}. For all $i\in[k]$, let $\M_i\colon\D\to\S_i$ be mutually independent bounded $\mu_i$-Gprivacy mechanisms that are $p_i$-dependent.
    Then mechanism $\M=(\M_1,\dots,\M_k)_{\ind}$ with domain $\D$ is bounded $\mu$-GDP with $\mu=\max_{i,j\in[k];\,i\neq j} \sqrt{\mu_i^2+\mu_j^2}$.
\end{corollary}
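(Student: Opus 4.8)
The plan is to follow the proof of \Cref{th:BoundedParallel} almost verbatim, replacing the independent common-domain composition theorem~\ref{th:ICTheoremCommonDomain} by its Gaussian analogue, \Cref{th:GaussianACTheoremCommonDomain}. First I would invoke \Cref{th:d-gdpToDP}: showing $\M$ is bounded $\mu$-GDP is equivalent to showing $\M$ is $\mu d^\B_\D$-Gprivate, so it suffices to prove $T(\M(D),\M(D'))\geq G_\mu$ for every bounded-neighboring pair $D,D'\in\D$, where $\mu=\max_{i\neq j}\sqrt{\mu_i^2+\mu_j^2}$. Likewise, bounded $\mu_i$-Gprivacy of each $\M_i$ means $\mu_i d^\B_\D$-Gprivacy.

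Next I would apply \Cref{th:GaussianACTheoremCommonDomain} with $f_i=p_i$ and $d_i=\mu_i d^\B_\D$; this is legitimate because independent composition is the special case of adaptive composition in which each $\M_i$ is constant over $\overS_i$. The theorem gives that $\M=(\M_1,\dots,\M_k)_{\ind}$ is $d_\D$-Gprivate* with
\[
    d_\D(D,D') = \sqrt{\sum_{i=1}^k \mu_i^2\,(d^{\B,p_i}_\D)^2(D,D')},
\]
using $(\mu_i d^\B_\D)^{p_i}=\mu_i d^{\B,p_i}_\D$. Then I would fix a bounded-neighboring pair, so $D\triangle D'=\{x,x'\}$, and run through exactly the four cases of the proof of \Cref{th:BoundedParallel}: $x,x'$ lie in the same block $\X_j$; in distinct blocks $\X_j,\X_l$; one in some $\X_j$ and the other outside all blocks; or both outside all blocks. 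In every case at most two indices $j,l$ satisfy $p_r(D)\neq p_r(D')$, and for those $d^{\B,p_j}_\D(D,D'),\,d^{\B,p_l}_\D(D,D')\leq d^\B_\D(D,D')\leq 1$ since $D,D'$ are bounded-neighboring. Substituting into the $\ell_2$ sum yields $d_\D(D,D')\leq\sqrt{\mu_j^2+\mu_l^2}\leq\mu$. Finally, since $G_\mu$ is decreasing in $\mu$ and $\M$ is $d_\D$-Gprivate*, we get $T(\M(D),\M(D'))\geq G_{d_\D(D,D')}\geq G_\mu$, which is bounded $\mu$-GDP.

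The only step that differs from the $\varepsilon$-DP version—and the one to be careful about—is the aggregation: because \Cref{th:GaussianACTheoremCommonDomain} sums the \emph{squared} distances, the worst-case budget collapses to $\sqrt{\mu_j^2+\mu_l^2}$ rather than $\mu_j+\mu_l$, so one must confirm that no more than two partition blocks can be disturbed by a single bounded change, and observe that when only one block changes (cases where $x,x'$ lie in the same block or one lies outside all blocks) the bound $\mu_j\leq\sqrt{\mu_j^2+\mu_l^2}$ still holds for any $l\neq j$, assuming $k>1$. Everything else—the reduction via \Cref{th:d-gdpToDP} and the case analysis on the symmetric difference—is identical to \Cref{th:BoundedParallel}, so no genuine new obstacle arises.
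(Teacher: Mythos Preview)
Your proposal is correct and follows essentially the same approach as the paper's own proof: reduce via \Cref{th:d-gdpToDP}, apply \Cref{th:GaussianACTheoremCommonDomain} to obtain the $\ell_2$-aggregated bound $d_\D(D,D')=\sqrt{\sum_i \mu_i^2 (d^{\B,p_i}_\D)^2(D,D')}$, and then run the identical four-case analysis from \Cref{th:BoundedParallel} to conclude at most two blocks change. Your additional remark that the single-block cases are still covered by $\mu_j\leq\sqrt{\mu_j^2+\mu_l^2}$ (for $k>1$) makes explicit a point the paper leaves implicit.
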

\begin{proof}
    From \Cref{th:d-gdpToDP}, it is equivalent to see that $\M$ is bounded $\mu$-GDP ($\mu=\max_{i,j\in[k];\,i\neq j} \sqrt{\mu_i^2+\mu_j^2}$), i.e., that for all bounded-neighboring $D,D'\in\D$,
    \[
        T(\M(D),\M(D')) \geq G_{\mu}.
    \]
    
    Applying \Cref{th:GaussianACTheoremCommonDomain}, we obtain that $\M$ is $d_\D$-Gprivate* with
    \[
        d_{\D}(D,D') = \sqrt{\sum^k_{i=1} \mu_i^2 d_{\D}^{\B,p_i}(D,D')^2}.
    \]
    
    Now suppose $D,D'\in\D$ are bounded-neighboring. We know there exists $x\in D$ and $x'\in D'$ such that $D\triangle D'=\{x,x'\}$. Then, we have the following possibilities:
    \begin{enumerate}[(a)]
        \item $x,x'\in\X_j$ for a $j\in[k]$. This implies that $p_i(D)=p_i(D')$ for all $i\neq j$.
        \item $x\in\X_j$ and $x'\in\X_l$ for different $j,l\in[k]$. This implies that $p_i(D)=p_i(D')$ for all $i\neq j,l$.
        \item $x\in\X_j$ for $j\in[k]$ and $x'\not\in\X_l$ for any $l\in[k]$ (or vice-versa). This implies that $p_i(D)=p_i(D')$ for all $i\neq j$.
        \item $x,x'\not\in\X_l$ for any $l\in[k]$. Then $p_i(D)=p_i(D')$ for all $i\in[k]$.
    \end{enumerate} 
    
    In the worst case scenario, there are at most two subindices $j,l\in[k]$ such that $p_i(D)=p_i(D')$ for all $i\neq j,l$. For these subindices, $d^{\B,p_j}_\D(D,D'),d^{\B,p_l}_\D(D,D')\leq d^{\B}_\D(D,D')\leq1$, since $D$ and $D'$ are bounded-neighboring. Therefore, 
    \begin{gather*}
        d_\D(D,D') = \sqrt{\sum^k_{i=1} \mu_i^2 d^{\B,p_i}_\D(D,D')^2} \leq \max_{j,l\in[k];\,j\neq l} \sqrt{\mu^2_j+\mu^2_l} = \mu
    \end{gather*}
    for all bounded-neighboring $D,D'\in\D$. In conclusion, $\M$ is bounded $\mu$-GDP since it is $d_\D$-Gprivate*.
\end{proof}

\subsection*{Proofs of \Cref{sec:reciprocal}: \nameref*{sec:reciprocal}}

\THpostprocessing*
\begin{proof}
    We need to prove that if $\M\colon\D\to\S$ is $d_\D$-private, then $g\circ\M\colon\D\to g(\S)$ is $d_\D$-private for all deterministic functions $g\colon\S\to g(\S)$; and analogously for $(d_\D,\delta_\D)$-privacy, $d^2_\D$-zCprivacy and $d_\D$-Gprivacy. By construction do note that $\Range(g\circ\M)=g(\Range(\M))\eqqcolon g(\S)$. 
    
    For $d_\D$-privacy and $(d_\D,\delta_\D)$-privacy, the proof follows directly from the fact that $\Prob\{\M(D)\in S\}=\Prob\{g(\M(D))\in g(S)\}$ for all measurable $S\subseteq\S$ and $D\in\D$. 
    
    From the properties of the Rényi divergence~\cite{bun2016Concentrated}, we have that 
    \[
        D_\alpha(g(\M(D))\dline g(\M(D')))\leq D_\alpha(\M(D)\dline\M(D'))
    \]
    for all $\alpha\in(1,\infty)$ and $D,D'\in\D$, which proves the result for $d^2_\D$-zCprivacy. 
    
    Finally, from Lemma 2.9 in~\cite{dong2019Gaussian}, we obtain the following inequality: 
    \[
        T(g(\M(D)),g(\M(D')))\geq T(\M(D),\M(D'))\geq d(D,D').
    \] 
    This proves the result for $d_\D$-Gprivacy. 
\end{proof}

\THReciprocalIndependent*
\begin{proof}
    Fix $i\in[k]$. Consider the deterministic projection to the $i$th coordinate $\pi_i$. In this case, $\M_i=\pi_i\circ\M$. Since $\M$ satisfies $\PN$ and $\PN$ is robust to post-processing, $\M_i$ satisfies $\PN$ too. 
\end{proof}

\THReciprocalAdaptive*
\begin{proof}
    Fix $i\in[k]$. Consider the deterministic projection to the $i$th coordinate $\pi_i$. In this case, $\MM_i=\pi_i\circ\M$. Since $\M$ satisfies $\PN$ and $\PN$ is robust to post-processing, $\MM_i$ satisfies $\PN$ too. 
\end{proof}

\end{appendices}

\end{document}